\newcommand{\binaryDistance}{d_b}
\newcommand{\dtv}{d_{\-{TV}}}
\newcommand{\dkl}[2]{D_{\-{KL}}\left(#1 \mid #2\right)}
\newcommand{\vbl}{\textnormal{vbl}}
\newcommand{\vblg}{\textnormal{vbl}_{\-{g}}}
\newcommand{\vblb}{\textnormal{vbl}_{\-{b}}}
\newcommand{\vbls}{\textnormal{vbl}^{\sigma}}
\newcommand{\vblsg}{\textnormal{vbl}^{\sigma}_{\-{g}}}
\newcommand{\vblsb}{\textnormal{vbl}^{\sigma}_{\-{b}}}
\newcommand{\degreeConstant}{p_{\textnormal{hd}}}
\newcommand{\badclauseConstant}{\epsilon_{\textnormal{bd}}}
\newcommand{\sbadvars}{\widetilde{V}_{\textnormal{bad}}}
\newcommand{\sbadclauses}{\widetilde{\+C}_{\textnormal{bad}}}
\newcommand{\sgoodvars}{\widetilde{V}_{\textnormal{good}}}
\newcommand{\sgoodclauses}{\widetilde{\+C}_{\textnormal{good}}}
\newcommand{\badvars}{V_{\textnormal{bad}}}
\newcommand{\goodvars}{V_{\textnormal{good}}}
\newcommand{\badclauses}{\+C_{\textnormal{bad}}}
\newcommand{\goodclauses}{\+C_{\textnormal{good}}}
\newcommand{\intersectionclause}{\+C_{\textnormal{intersect}}}
\newcommand{\goodlowerbound}{k_{\textnormal{gl}}}
\newcommand{\frozenclauses}[1][\sigma]{\+C^{#1}_{\textnormal{frozen}}}
\newcommand{\blockedclause}{\+C^{\sigma}_{\textnormal{blocked}}}
\newcommand{\frozenConstant}{\zeta_{\textnormal{frozen}}}
\newcommand{\frozenRho}{\rho_{\textnormal{frozen}}}
\newcommand{\frozenEta}{\eta_{\textnormal{frozen}}}
\newcommand{\aliveVars}{V^\sigma_{\textnormal{alive}}}
\newcommand{\true}{\textnormal{True}}
\newcommand{\false}{\textnormal{False}}
\newcommand{\badInterior}[1][\sigma]{\+C_{\textnormal{com}}^{#1}}
\newcommand{\currentComponent}{\+C^{\sigma}_{\textnormal{ext}}}
\newcommand{\nextVar}{\textnormal{NextVar}(\tau_S, c_0)}
\newcommand{\prunedBadInterior}[1][\sigma]{\overline{\+C}_{\textnormal{com}}^{#1}}
\newcommand{\revealedLowerbound}{k_{\textnormal{revealed}}}
\newcommand{\BetaConstant}{\beta_{\textnormal{ind}}}
\newcommand{\Pinning}{\textsf{Reveal}}
\newcommand{\iid}{i.i.d.\@ }
\title{Learning CNF formulas from uniform random solutions \\ in the local lemma regime}
\date{}
\author{Weiming Feng\thanks{School of Computing and Data Science, The University of Hong Kong.  Email: \texttt{wfeng@hku.hk}} 
\and  Xiongxin Yang\thanks{Department of Computer Science, University of California, Santa Barbara. Email: \texttt{xiongxinyang@ucsb.edu}}  
\and  Yixiao Yu\thanks{State Key Laboratory for Novel Software Technology, New Cornerstone Science Laboratory, Nanjing University. Emails: 
\texttt{yixiaoyu@smail.nju.edu.cn, zhangyiyao@smail.nju.edu.cn}}  
\and  Yiyao Zhang\footnotemark[3]}
\begin{document}

\maketitle

\thispagestyle{empty}
\setcounter{page}{0}

\begin{abstract}
  We study the problem of learning a $n$-variables $k$-CNF formula $\Phi$ from its \iid uniform random solutions, which is equivalent to learning a Boolean Markov random field (MRF) with $k$-wise hard constraints.
  Revisiting Valiant's algorithm (Commun.\@ ACM'84), we show that it can \emph{exactly} learn 
  (1) $k$-CNFs with bounded clause intersection size under \emph{Lov\'asz local lemma} type conditions, 
  from $O(\log n)$ samples;
  and (2) random $k$-CNFs near the \emph{satisfiability threshold}, 
  from $\widetilde{O}(n^{\exp(-\sqrt{k})})$ samples.
  These results significantly improve the previous $O(n^k)$ sample complexity.
  We further establish new information-theoretic lower bounds on sample complexity 
  for both exact and approximate learning from \iid uniform random solutions.
\end{abstract}

\thispagestyle{empty}
\newpage

\tableofcontents

\thispagestyle{empty}
\pagebreak

\setcounter{page}{1}
\section{Introduction}

The CNF (conjunctive normal form) formula is one of the most fundamental objects in computer science.
One canonical form of CNF formula is the $k$-CNF formula, which is defined by a set of $n$ Boolean variables $V = \{v_1, v_2, \ldots, v_n\}$ and a set of $m$ clauses $\+C = \set{c_1, c_2, \ldots, c_m}$.
The formula is a conjunction of all clauses in $\+C$ and each clause is a disjunction of $k$ distinct literals in $\set{v_i,\neg v_i \mid v_i \in V}$.
Given a $k$-CNF formula $\Phi = (V, \+C)$, 
a solution $X \in \{\true, \false\}^V$ is an assignment of all variables such that 
all clauses in $\+C$ are satisfied. 
Let $\mu_\Phi$ denote the uniform distribution over all satisfying assignments of $\Phi$.

In this paper, we study the problem of 
properly learning $k$-CNF formulas $\Phi$ from \emph{\iid uniform random solutions}.  
Given $T$ \iid samples drawn from $\mu_\Phi$, 
the goal of the learning algorithm is to construct a CNF formula $\widehat{\Phi}$ such that:
for \emph{exact learning},
$\widehat{\Phi}$ has the same set of satisfying assignments as $\Phi$, 
i.e. $\mu_\Phi = \mu_{\widehat{\Phi}}$;
or for \emph{approximate learning}, 
the total variation distance between $\mu_\Phi$ and $\mu_{\widehat{\Phi}}$ can be 
controlled by an error bound $\varepsilon > 0$.
The number of samples $T$ required by the algorithm is referred to as its \emph{sample complexity}, 
and its total running time as the \emph{computational complexity}.

This problem naturally arises in various domains, such as statistical physics and data science.  
A particularly notable motivation comes from the task of 
learning graphical models or equivalently, \emph{Markov random fields} (MRFs) 
with \emph{hard constraints} from \emph{\iid Gibbs samples}.  
Indeed, a CNF formula can be viewed as an MRF over a Boolean variable set $V$, 
where each clause imposes a $k$-wise hard constraint on its variables.  
The uniform distribution $\mu_\Phi$ corresponds to the \emph{Gibbs distribution} induced by this MRF.

In 1984, 
Valiant introduced the framework of probably approximately correct (PAC) learning~\cite{Valiant84} 
and showed that the concept class of $k$-CNF formulas is PAC-learnable 
via a very simple and classical learning algorithm based on the elimination of inconsistent clauses.

\begin{tcolorbox}[
  title=Valiant's Algorithm~\cite{Valiant84},
  colback=white,
  colframe=black,
  coltitle=black,
  colbacktitle=gray!20,
  fonttitle=\bfseries,
  boxrule=1pt,
  arc=3pt
]
\textbf{Input}: number of variables $n$, clause size $k$, $T$ \iid samples $X_1, \ldots, X_T$ from $\mu_\Phi$.
\begin{itemize}[leftmargin=0.3cm]
  \item Let $\widehat{\Phi} = (V, \+C)$ be a CNF formula containing all $2^k \cdot \binom{n}{k}$ possible size-$k$ clauses.
  \item For each clause $c \in \+C$ defined on a $k$-variable set $\vbl(c) \subseteq V$, if there exists a sample $X_i$ for $i \in [T]$ such that $X_i(\vbl(c))$ violates $c$, then remove $c$, i.e., $\+C \gets \+C \setminus \set{c}$.
  \item Return the CNF formula $\widehat{\Phi} = (V, \+C)$.
\end{itemize}
\end{tcolorbox}
\noindent
In the context of learning $k$-CNF formulas from uniform random solutions, 
the proof of PAC-learnability of $k$-CNFs in~\cite{Valiant84} also implies the following approximate learning result.

\begin{theorem}[\cite{Valiant84}, Theorem A]\label{thm:valiant}
  Let $k \geq 2$ be a constant integer. For any $\varepsilon > 0$ and $\delta > 0$, Valiant's algorithm approximately (within total variation distance error at most $\varepsilon$) learns any satisfiable $k$-CNF formula from \iid uniform solutions with probability at least $1 - \delta$ in sample complexity $T = O_k(\frac{n^k+\log (1 / \delta)}{\varepsilon})$ and computational complexity $O_k(n^k T)$.
\end{theorem}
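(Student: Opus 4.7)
The plan is to analyze Valiant's algorithm via the standard consistent-hypothesis union bound from PAC learning, adapted to the sampling model where examples are uniform satisfying assignments. Three simple observations set the stage. First, \emph{every clause of $\Phi$ survives}: each $X_i \sim \mu_\Phi$ is by definition a satisfying assignment of $\Phi$, so no original clause is ever violated, and hence the output $\widehat{\Phi}$ is always a clause-wise superset of $\Phi$. This gives the containment $S(\widehat{\Phi}) \subseteq S(\Phi)$ between the two sets of satisfying assignments. Second, since both distributions are uniform on nested supports, the TV distance collapses to
\[
\dtv(\mu_\Phi, \mu_{\widehat{\Phi}}) \;=\; 1 - \frac{|S(\widehat{\Phi})|}{|S(\Phi)|} \;=\; \Pr_{X \sim \mu_\Phi}\!\left[X \notin S(\widehat{\Phi})\right],
\]
so it suffices to bound this probability by $\varepsilon$ with probability $1 - \delta$. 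Third, by construction of the algorithm every sample $X_i$ lies in $S(\widehat{\Phi})$, i.e., the output is always \emph{consistent} with the data.

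With these in hand, I would apply a textbook union bound. Let $M := 2^k\binom{n}{k} = O_k(n^k)$ be the size of the candidate clause pool, and call a formula $\Psi$ drawn from this pool \emph{bad} if $\Pr_{X \sim \mu_\Phi}[X \notin S(\Psi)] > \varepsilon$. Since the algorithmic output is necessarily a clause-wise superset of $\Phi$, it suffices to union-bound over the at most $2^M$ such supersets. For any fixed bad $\Psi \supseteq \Phi$, the nesting $S(\Psi) \subseteq S(\Phi)$ gives
\[
\Pr\!\left[\text{all $T$ samples lie in }S(\Psi)\right] \;=\; \left(\tfrac{|S(\Psi)|}{|S(\Phi)|}\right)^{T} \;\leq\; (1-\varepsilon)^T \;\leq\; e^{-\varepsilon T}.
\]
If the algorithm outputs a bad $\widehat{\Phi}$, then this $\widehat{\Phi}$ is itself a bad consistent formula, so the failure probability is at most $2^{M} e^{-\varepsilon T}$. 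Setting this $\leq \delta$ yields $T = O_k\!\left(\tfrac{n^k + \log(1/\delta)}{\varepsilon}\right)$, matching the claim. The computational complexity is immediate: for each of the $O_k(n^k)$ candidate clauses, we scan all $T$ samples once, giving total time $O_k(n^k T)$.

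There is no genuine technical obstacle here; the only subtle design choice is to union-bound over the $2^M$ bad \emph{hypotheses} rather than over the $O_k(n^k)$ individual clauses one might first reach for. A clause-level argument of the form ``any clause with violation probability above $\varepsilon/M$ is unlikely to survive'' would introduce an unnecessary $\log M = O_k(\log n)$ factor, whereas the hypothesis-level union bound absorbs the $n^k$ term additively inside the bracket, yielding the sharper bound stated in the theorem.
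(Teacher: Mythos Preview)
The paper does not actually prove this theorem; it is stated as a citation to Valiant's original 1984 result, so there is no in-paper proof to compare against. Your argument is the standard PAC consistency bound (Occam-style), correctly adapted to the setting where examples are uniform satisfying assignments, and it is sound: the clause-superset containment $S(\widehat{\Phi}) \subseteq S(\Phi)$, the TV-distance identity for nested uniform supports, and the hypothesis-level union bound over the $2^M$ candidate supersets all go through as you describe. Your closing remark about why a clause-level union bound would lose a logarithmic factor is also accurate.
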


After Valiant's work, 
there has been significant progress on the PAC-learning of Boolean formulas 
(e.g., DNF formulas~\cite{Bshouty96,TaruiT99,KlivansOS04,sellieLearningRandomMonotone2008,sellieExactLearningRandom2009,AlmanNPS25} 
and decision trees~\cite{EhrenfeuchtH89,MehtaR02,BlancLQT22}), but the results for CNF formulas are limited except for some specific classes of CNF formulas~\cite{AngluinFP92,AriasBT17,HermoO20}.
For the research on learning MRFs, 
many works focused on MRFs 
with \emph{soft} constraints~\cite{ChowL68,KargerS01, BreslerMS13,Bresler15,vuffrayInteractionScreeningEfficient2016,KlivansM17,HamiltonKM17,WuSD19,gaitondeUnifiedApproachLearning2024,chandrasekaranLearningSherringtonKirkpatrickModel2025} 
and MRFs with \emph{pair-wise} hard constraints~\cite{BreslerGS14a,BlancaCSV20}.
However, beyond Valiant's classical work~\cite{Valiant84}, 
we are not aware of any result for properly learning general $k$-CNF formulas from uniform random solutions. 
Moreover,
existing MRF learning algorithms do not directly extend to this setting; 
see \Cref{sec:obstacles-in-applying-previous-mrf-learning-algorithms} for a discussion of the technical challenges.

The classical result in \Cref{thm:valiant} applies to all satisfiable $k$-CNF formulas.
However, its sample complexity is prohibitively large, as $k$ appears in the exponent of $n$.
We revisit Valiant's algorithm and release its power for two natural and important classes of CNF formulas:
CNF formulas satisfying a Lov\'asz local lemma type condition 
and random CNF formulas near the satisfiability threshold. 
For both cases,
we show that the required number of samples can be significantly reduced compared to the general setting.
We remark that our results tackle the problem of \emph{exactly} learning CNF formulas, which is more challenging than approximate learning.
In addition,
we establish new information-theoretic lower bounds on the sample complexity
for learning CNF formulas satisfying the local lemma condition.

\subsection{Our results: Learning CNF formulas in the local lemma regime}

We consider the following class of CNF formulas with degree and intersection constraints.

\begin{definition}[($k,d,s$)-CNF formula]
  Let $k, d, s$ be three positive constant integers. 
  A CNF formula $\Phi = (V, \+C)$ is said to be a ($k,d,s$)-CNF formula if 
  every clause $c_i \in \+C$ contains exactly $k$ variables which are denoted as $\vbl(c_i)$,
  each variable $x \in V$ appears in at most $d$ different clauses, 
  and for any two distinct clauses $c_i, c_j \in \+C$ share at most $s$ variables, 
  i.e., $|\vbl\tuple{c_i} \cap \vbl\tuple{c_j}| \leq s$.

  In particular,
  when $s = k$, there are no constraints on the size of the intersection of two clauses, 
  we denote $(k,d,k)$-CNF formulas as $(k,d)$-CNF formulas.
\end{definition}

The problem of learning ($k,d,s$)-CNF formulas can be formulated as follows. 

\begin{problem}\label{problem:learning-cnf} Learning a ($k,d,s$)-CNF $\Phi = (V,\+C)$ formula from \iid uniform solutions.
  \begin{itemize}
    \item \textbf{Input}: Number of variables $n$, parameters $k,d,s$, a confidence parameter $\delta > 0$, an error bound $\varepsilon > 0$, and $T = T(n, k, d, s, \varepsilon, \delta)$ \iid uniform random solutions $X_1,\ldots,X_T$ from $\mu_\Phi$.
    \item \textbf{Output}: The output satisfies the following requirements with probability at least $1 - \delta$:
          \begin{itemize}
            \item For exact learning ($\varepsilon = 0$), output the CNF formula $\widehat{\Phi}$ such that $\mu_\Phi = \mu_{\widehat{\Phi}}$\footnote{Two CNF formulas $\Phi$ and $\widehat{\Phi}$ may have different set of clauses but they have the same set of satisfying assignments.}.
            \item For approximate learning ($\varepsilon > 0$), output a CNF formula $\widehat{\Phi}$ such that the total variation distance between $\mu_\Phi$ and $\mu_{\widehat{\Phi}}$ is at most $\varepsilon$.
            
          \end{itemize}
  \end{itemize}
  \end{problem}

We study \Cref{problem:learning-cnf} when the input CNF formula satisfies a Lov\'asz local lemma type condition.
The local lemma~\cite{Locallemma75} is a classical condition in combinatorics to guarantee the existence of certain combinatorial objects. For the $(k,d)$-CNF formula $\Phi$, the local lemma condition says if 
\begin{align*}
k \geq \log d + \log k + \log e = \log d + o(k),
\end{align*}
where $\log$ denotes $\log_2$, then the formula $\Phi$ must have a satisfying assignment. Later on, the local lemma was widely used in theoretical computer science, including construction algorithms for constraint satisfaction problems~\cite{MoserT10} and sampling algorithms for CNF formulas~\cite{Moitra19,fengFastSamplingCounting2021,he2021perfect, jain2021sampling, Feng0Y21,JainPV21,HeWY22,0011WY23,WangY24}. These algorithms can serve as the oracle for generating the \iid solutions of CNF formulas.

Our result discovers that for CNF formulas satisfying 
some local lemma type conditions $k = \Omega(\log d)$, 
the size of the intersection between two clauses, the parameter $s$, 
plays a crucial role in the sample complexity of learning CNF formulas. 
With a proper bound on the intersection size, 
Valiant's algorithm achieves the optimal $\Theta(\log n)$ sample complexity. 
Without the bounds of intersection size, 
the learning problem requires at least a polynomial in $n$ number of samples.

\subsubsection{CNF formulas with bounded intersection size}
We now give our results for the \emph{exact} learning of CNF formulas. 
The following result considers CNF formulas with \emph{sublinear} size intersection $s=o(k)$.

\begin{restatable}[]{theorem}{learningSublinearIntersection}
  \label{thm:main-bounded}
  Let $\eta \in (0, 1)$ be a constant.
  For any integers $k,d,s$ satisfying $s = k^{1-\eta}$ and  $k\ge \log d+ o(k) + O_\eta(1)$, Valiant's algorithm exactly learns any $(k,d,s)$-CNF formula from \iid uniform solutions with probability at least $1 - \delta$ with sample complexity $T = O_{k,\eta}(\log\frac{n}{\delta})$ and computational complexity $O_{k,\eta}(n^k \log \frac{n}{\delta})$.
\end{restatable}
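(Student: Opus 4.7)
The plan is to reduce the theorem to a single quantitative statement and then apply a standard union bound. Call a $k$-clause $c$ \emph{implied} by $\Phi$ if every satisfying assignment of $\Phi$ also satisfies $c$, and \emph{spurious} otherwise. Valiant's algorithm trivially retains every implied clause, since no $\mu_\Phi$-sample can falsify one. Therefore it suffices to show that every spurious clause is removed with high probability: if so, $\widehat{\Phi}$ has exactly the same set of implied $k$-clauses as $\Phi$, hence the same set of satisfying assignments, yielding $\mu_{\widehat{\Phi}} = \mu_\Phi$. The key quantitative ingredient is a \emph{marginal lower bound}: for every $k$-variable set $S \subseteq V$ and every $\sigma \in \{\true,\false\}^S$ that extends to a satisfying assignment of $\Phi$, one has $\mu_\Phi(X(S) = \sigma) \geq c_{k,\eta}$ for some constant $c_{k,\eta} > 0$ depending only on $k$ and $\eta$.

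Granting the marginal lower bound, the rest is immediate. For any spurious $c$, let $\sigma_c \in \{\true,\false\}^{\vbl(c)}$ be its unique falsifying assignment; by spuriousness $\sigma_c$ is extendable, so $\Pr_{X \sim \mu_\Phi}[X \text{ violates } c] = \mu_\Phi(X(\vbl(c)) = \sigma_c) \geq c_{k,\eta}$. The probability that $c$ survives all $T$ \iid samples is at most $(1 - c_{k,\eta})^T \leq \exp(-c_{k,\eta} T)$, and a union bound over the $2^k\binom{n}{k} = O_k(n^k)$ candidate $k$-clauses yields success probability at least $1 - \delta$ as soon as $T = O_{k,\eta}(\log(n/\delta))$. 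The stated computational complexity $O_{k,\eta}(n^k \log(n/\delta))$ follows from checking each candidate clause against each sample in $O(k)$ time.

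The marginal lower bound is where the hypothesis $s = k^{1-\eta} = o(k)$ is used, and it is the main obstacle. Let $\nu$ denote the uniform product measure on $\{\true,\false\}^V$ and $B$ the event that all clauses of $\Phi$ are satisfied, so $\mu_\Phi = \nu(\cdot \mid B)$ and $\mu_\Phi(X(S) = \sigma) \geq 2^{-k} \cdot \Pr_\nu(B \mid X(S) = \sigma)$ (using $\Pr_\nu(B) \leq 1$). Conditioning on $X(S) = \sigma$ reduces $\Phi$ to a formula in which each clause $c'$ either is unaffected (when $\vbl(c') \cap S = \emptyset$), becomes vacuously satisfied, or shortens to a clause on $\vbl(c') \setminus S$ of width at least $k - s$; the reduced dependency graph still has maximum degree at most $kd$. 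The asymmetric Lov\'asz local lemma applied to the reduced formula therefore requires $e \cdot 2^{-(k-s)} \cdot (kd+1) \leq 1$, i.e.\ $k - s \geq \log d + \log k + O(1)$, which is supplied (with an $O_\eta(1)$ margin) by the hypothesis $k \geq \log d + o(k) + O_\eta(1)$ together with $s = o(k)$. The subtle point is that the naive LLL lower bound of the form $\prod_{c'}(1 - ep_{c'})$ degrades as $2^{-\Omega(m)}$ and is useless for a truly constant $c_{k,\eta}$; I would spend the $O_\eta(1)$ slack to invoke a quantitative local-uniformity estimate of the form standard in the LLL-regime CNF sampling literature cited in the introduction, which delivers $\Pr_\nu(B \mid X(S) = \sigma) / \Pr_\nu(B) = \Omega_{k,\eta}(1)$ rather than mere positivity.
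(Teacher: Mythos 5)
Your reduction to a uniform marginal lower bound and the subsequent union bound over the $O_k(n^k)$ candidate clauses is exactly the paper's first step (resilience implies the stated sample complexity), and that part is fine. The gap is inside the marginal lower bound, at the sentence claiming each clause ``shortens to a clause on $\vbl(c')\setminus S$ of width at least $k-s$.'' The parameter $s$ bounds the intersection of two clauses \emph{of} $\Phi$; it gives no bound on how much a clause $c'\in\+C$ can overlap the candidate set $S$, which is an arbitrary $k$-subset of $V$ (the candidate clause is precisely one \emph{not} in $\+C$). A clause of $\Phi$ may share $k-1$ variables with $S$, so after conditioning on $X(S)=\sigma$ it can shrink to width $1$, and the LLL condition $k-s\gtrsim \log d+\log k$ you impose on the reduced formula simply fails. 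The same failure defeats your fallback of invoking a standard local-uniformity estimate along a variable-by-variable pinning of $S$: that estimate requires the local lemma condition to hold for every intermediate simplified formula, which is exactly what breaks once some clause has lost almost all of its variables to $S$. This is not a removable technicality: the paper's lower-bound gadgets show that when clauses may overlap a candidate clause in $k-1$ variables, the conditional probability of the forbidden assignment can be $\exp(-\Omega(n))$, so any correct proof must exploit the pairwise bound $s$ among the clauses of $\+C$ in some nontrivial way.

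What is missing is precisely the paper's treatment of the heavily overlapping clauses. Using the pairwise bound $s=k^{1-\eta}$ and an inclusion--exclusion count, at most $k^{\eta/2}$ clauses of $\+C$ can share more than $\tfrac{3}{2}k^{1-\eta/2}$ variables with $S$. One first pins at most $k^{\eta/2}$ variables \emph{outside} $S$ to satisfy all of these clauses (the existence of such a pinning sequence needs its own argument, and the degenerate case $\vbl(c')=\vbl(c^*)$ must be handled separately); each such pinning succeeds with probability at least $1-\tfrac{1}{2}e^{1/k}$ by local uniformity, since during this phase every clause retains at least $k-k^{\eta/2}$ unpinned variables. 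Only then are the $k$ variables of $S$ revealed one at a time; every surviving clause now loses at most $\tfrac{3}{2}k^{1-\eta/2}+k^{\eta/2}=o(k)$ variables in total, so the local lemma condition $k\ge \log d+o(k)$ persists throughout and the per-step conditional marginals stay bounded away from zero, giving $\theta=\Omega_{k,\eta}(1)$. Without this extra mechanism your argument does not go through as written.
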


The above theorem shows that for CNF formulas with sublinear intersection $s = k^{1-\eta}$, 
under the near-optimal (up to $o(k)$ additive term) 
local lemma condition $k \gtrsim \log d$, 
Valiant's algorithm can learn the CNF formula \emph{exactly} with logarithmic sample complexity.
An important class of CNF formulas is the \emph{linear} $k$-CNF formulas, 
where the intersection size between any two clauses is at most 1. 
We have the following corollary for exactly learning linear $k$-CNF formulas.

\begin{corollary}[Linear CNF formulas]\label{cor:linear-cnf-lower-bound}
  For $k \geq \log d + o(k)$, 
  the result in \Cref{thm:main-bounded} holds for linear $(k,d)$-CNF formulas 
  with sample complexity $T = O_{k}(\log\frac{n}{\delta})$ 
  and computational complexity $O_{k}(n^k \log \frac{n}{\delta})$.
\end{corollary}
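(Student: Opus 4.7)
The plan is to derive \Cref{cor:linear-cnf-lower-bound} as a direct specialization of \Cref{thm:main-bounded}, exploiting the observation that every linear $(k,d)$-CNF formula is trivially a $(k,d,s)$-CNF formula with $s=1$. Since \Cref{thm:main-bounded} is proved for a sublinear intersection bound of the form $s = k^{1-\eta}$ with $\eta \in (0,1)$, one only needs to check that the choice $s=1$ fits inside this regime for a suitable fixed $\eta$, and that the remaining hypotheses and big-$O$ bounds translate correctly.

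First I would fix an arbitrary constant, say $\eta = 1/2$. For any integer $k \geq 1$ we then have $k^{1-\eta} = \sqrt{k} \geq 1$, so any two clauses $c_i,c_j$ in a linear CNF satisfy $|\vbl(c_i)\cap \vbl(c_j)| \leq 1 \leq k^{1-\eta}$; hence the input formula $\Phi$ qualifies as a $(k,d,k^{1-\eta})$-CNF formula in the sense of \Cref{thm:main-bounded}. Next I would verify the local lemma hypothesis: \Cref{thm:main-bounded} requires $k \geq \log d + o(k) + O_\eta(1)$, but with $\eta$ now frozen as a universal constant the term $O_\eta(1)$ becomes an absolute constant, which is absorbed into the $o(k)$ slack. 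This matches exactly the assumption $k \geq \log d + o(k)$ in the corollary.

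Finally, I would invoke \Cref{thm:main-bounded} to obtain that Valiant's algorithm exactly learns $\Phi$ with probability at least $1-\delta$ from $T = O_{k,\eta}(\log(n/\delta))$ samples and in time $O_{k,\eta}(n^k \log(n/\delta))$. Since $\eta = 1/2$ is a universal constant, both $O_{k,\eta}(\cdot)$ bounds collapse to $O_k(\cdot)$, yielding precisely the sample and computational complexity claimed in \Cref{cor:linear-cnf-lower-bound}.

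There is no real obstacle in this argument; the content of the corollary is simply the instantiation $s=1$ of a strictly more general result, and the only thing demanding attention is the bookkeeping when promoting the free parameter $\eta$ of \Cref{thm:main-bounded} to a fixed universal constant so that the dependence on $\eta$ disappears from the stated complexities.
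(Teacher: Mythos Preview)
Your proposal is correct and is exactly the intended derivation: the paper states the corollary immediately after \Cref{thm:main-bounded} without a separate proof, treating it as the straightforward specialization $s=1\le k^{1-\eta}$ for a fixed $\eta$, which collapses the $O_\eta(1)$ and $O_{k,\eta}(\cdot)$ dependencies into absolute and $O_k(\cdot)$ constants respectively.
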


Our next theorem shows that the $O(\log n)$ sample complexity is \emph{tight} for exact learning $k$-CNF formulas with sublinear intersection. In fact, the hard instance satisfies $d = 1$ and $s = 0$, which means even if all clauses are disjoint, $\Omega(\log n)$ sample complexity is required.

\begin{restatable}[]{theorem}{lowerBoundSublinearIntersection}
  \label{theorem:lower-bound-simple-thm}
  Let $k \geq 2$ be a constant integer. Any algorithm that exactly learns an $n$-variable $(k,1,0)$-CNF formula from \iid uniform solutions with probability at least $\frac{1}{3}$ requires $\Omega_k(\log n)$ samples.
\end{restatable}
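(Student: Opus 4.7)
The plan is a Yao-style minimax lower bound built on a family of disjoint-clause CNFs. Partition the variable set into $N := \lfloor n/k \rfloor$ disjoint blocks $G_1, \ldots, G_N$ of size $k$, and for each $B = (B_1,\ldots,B_N) \in \{0,1\}^N$ let $\Phi_B$ contain the clause $v_{j,1} \vee \cdots \vee v_{j,k}$ on block $G_j$ if and only if $B_j = 1$. Every $\Phi_B$ is a $(k,1,0)$-CNF, and different choices of $B$ yield distinct solution sets (they disagree on whether the all-false assignment is legal on some block), so exact learning is equivalent to exactly recovering the bit vector $B$.

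Place the uniform prior on $B \in \{0,1\}^N$. Under $\mu_{\Phi_B}$ the blocks are mutually independent within each sample, samples are i.i.d., and the prior is a product; hence the joint distribution of $(B, X_1, \ldots, X_T)$ factorizes as a product across blocks. As a result, the Bayes-optimal estimator decomposes coordinate-wise, and its overall success rate equals $\prod_j p_j$, where $p_j$ is the single-block success probability. On a fixed block, the two hypotheses are uniform on $\{0,1\}^k$ (if $B_j = 0$) and uniform on $\{0,1\}^k \setminus \{0^k\}$ (if $B_j = 1$); conditioning on the event that $0^k$ is never observed in the $T$ samples, the two distributions coincide. The MAP rule is therefore to output $\hat{B}_j = 0$ iff some sample shows $0^k$ on $G_j$, which gives $p_j = 1 - \tfrac{1}{2}(1-2^{-k})^T$.

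The Bayes success rate is thus $\bigl(1 - \tfrac{1}{2}(1-2^{-k})^T\bigr)^N$. Using $\ln(1-x) \le -x$, this quantity drops below $1/3$ as soon as $N(1-2^{-k})^T \ge 2\ln 3$, which happens whenever $T \le c \cdot 2^k \log N$ for a sufficiently small absolute constant $c$; since $N = \Theta(n/k)$, this is exactly the regime $T = o_k(\log n)$. Yao's minimax principle then converts this averaged statement into a worst-case one: no (possibly randomized) algorithm can succeed with probability at least $1/3$ on every $\Phi_B$ in this regime, giving the claimed $\Omega_k(\log n)$ lower bound.

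The only delicate point is the coordinate-wise factorization of the Bayes estimator, which relies on both the prior and the per-block conditional sample distributions being product measures across the disjoint blocks. This is precisely what lifts the argument from the naive two-hypothesis bound (which would yield only $\Omega_k(1)$) to a $\log n$ lower bound of coupon-collector flavor: one needs to witness the rare pattern $0^k$ on every block whose indicator is $0$, and the probability of doing so in all $N$ blocks simultaneously is vanishingly small unless $T = \Omega_k(\log N)$.
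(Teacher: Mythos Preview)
Your proof is correct and takes a genuinely different route from the paper's. The paper's hard family is built from $k-1$ uniformly random permutations that determine how $n/k$ disjoint clauses thread through $k$ groups of variables, and the lower bound is extracted via Fano's inequality with the crude mutual-information estimate $I(X_i;\Phi)\le H(X_i)\le n\ln 2$. Your construction instead uses $N=\lfloor n/k\rfloor$ independent Bernoulli indicators for whether a fixed clause sits on each block, and then computes the Bayes-optimal success probability exactly by exploiting the full product structure (prior, likelihood, and posterior all factor across blocks). Your argument is more elementary---no information-theoretic inequalities, just a direct MAP calculation and a coupon-collector style product bound---and it also yields a sharper $k$-dependence, $T=\Omega(2^k\log n)$, whereas the paper's Fano argument gives only $T=\Omega_k(\log n)$ with a constant that does not grow with $k$. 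The one step worth stating a bit more explicitly is why the MAP rule picks $\hat B_j=1$ when no $0^k$ is seen: the conditional data distributions coincide, so the indicator ``was $0^k$ observed'' is a sufficient statistic, and the posterior odds then reduce to the prior odds times $(1-2^{-k})^{-T}>1$.
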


We then consider CNF formulas with \emph{linear} size intersection, where two clauses share $s = O(k)$ variables. We show the following result for Valiant's algorithm on exactly learning CNF formulas.

\begin{restatable}[]{theorem}{learningLinearIntersection}
  \label{thm:main-linear-bounded}
  Let $\zeta \in (0, 1)$ be a constant.
  For any integers $k,d,s$ satisfying $s = \zeta k$ and $k \geq C \log d + o(k) + O_\zeta(1)$, where
  \begin{align*}
    C \triangleq
    \begin{cases}
      \frac{1}{1 - \sqrt{2 \zeta}},
       & \zeta \in (0, 3 - 2\sqrt{2}),
      \\ \frac{2}{1 - \zeta}, & \zeta \in [3 - 2\sqrt{2}, 1),
    \end{cases}
  \end{align*}
  Valiant's algorithm exactly learns any $(k,d,s)$-CNF formula from \iid uniform solutions with probability at least $1 - \delta$ with sample complexity $T = O_{k,\zeta}(\log\frac{n}{\delta})$ and computational complexity $O_{k,\zeta}(n^k \log \frac{n}{\delta})$.
\end{restatable}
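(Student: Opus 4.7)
The plan is to follow the template used for \Cref{thm:main-bounded}: reduce exact learning to a uniform lower bound on ``bad-marginal'' probabilities, and then apply a union bound. Valiant's algorithm outputs the set of all $2^k\binom{n}{k}$ candidate clauses that are consistent with every sample, so $\mu_\Phi=\mu_{\widehat{\Phi}}$ holds iff every surviving candidate $c$ is implied by $\Phi$. For a non-implied candidate $c$ on variable set $S=\vbl(c)$ with forbidden local assignment $y^* \in \{\true,\false\}^S$, the clause survives only if no sample $X_i$ hits $X_i(S)=y^*$. Hence it suffices to establish a uniform marginal lower bound
\[
\Pr_{X \sim \mu_\Phi}\bigl[X(S)=y^*\bigr] \;\geq\; 2^{-O_{k,\zeta}(1)}
\]
for every non-implied candidate $(c,y^*)$; a Chernoff/union-bound argument over the $O(n^k)$ candidates then delivers the sample bound $T=O_{k,\zeta}(\log(n/\delta))$ and, after evaluating clauses, the computational bound.

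The marginal lower bound is established via an LLL-style bad-component analysis. Starting from a sample and imposing the pinning $X(S)=y^*$, one runs a BFS-type exploration of clauses whose satisfaction could be disrupted, producing a connected sub-hypergraph $\badcomponent$ of $\+C$. A newly explored clause shares at most $s=\zeta k$ variables with previously explored ones, so it contributes $(1-\zeta)k$ fresh variables, each yielding an LLL-type probability factor of $\tfrac{1}{2}$. The number of admissible $\badcomponent$'s of size $t$ is controlled by a power of $d\binom{k}{\zeta k}$, and requiring summability over $t$ is exactly what produces a condition of the form $k \gtrsim C\log d$ together with a $\zeta$-dependent entropy term that gets absorbed into the $o(k)+O_\zeta(1)$ slack.

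The piecewise formula for $C$ comes from using two different encodings of $\badcomponent$, optimized separately. For small $\zeta$, a $2$-tree (Moser--Tardos-style) witness charges the overlap entropy $\binom{k}{\zeta k}\le 2^{kH(\zeta)}$ twice per step against the $(1-\zeta)k$ fresh variables, and a direct optimization yields the threshold $C=1/(1-\sqrt{2\zeta})$. For larger $\zeta$ this double-charging is lossy, so one instead amortizes overlap across adjacent clause pairs, producing the linear threshold $C=2/(1-\zeta)$. The two expressions coincide at $\zeta=3-2\sqrt 2$, which is precisely the crossover where the sharper bound switches.

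The principal obstacle is the large-$\zeta$ regime: when $\zeta$ is close to $1$, the bad-component hypergraph is dense with high per-edge overlap, the LLL decay per explored clause is weak, and a naive union bound over $\badcomponent$ diverges. Choosing the right witness encoding and showing that its entropy only costs the $2/(1-\zeta)$ blow-up in $C$ (rather than something growing like $H(\zeta)/(1-\zeta)$) is the new technical content of this theorem; once the marginal lower bound is in hand, the reduction of the first paragraph closes the argument.
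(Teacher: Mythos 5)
Your first paragraph (reduce exact learning to a uniform $\Omega_{k,\zeta}(1)$ lower bound on the forbidden-assignment marginals of non-implied candidate clauses, then union-bound over the $O(n^k)$ candidates) is exactly the paper's route via the resilience property and \Cref{pps:resilience}, and that part is fine. The gap is in how you propose to prove the marginal lower bound. Your BFS/witness argument rests on the claim that a newly explored clause ``shares at most $s=\zeta k$ variables with previously explored ones, so it contributes $(1-\zeta)k$ fresh variables.'' The hypothesis only bounds \emph{pairwise} intersections: a new clause can meet the \emph{union} of $t$ previously explored clauses in up to $\min(k, t\zeta k)$ variables, so for constant $\zeta$ the per-step decay factor $2^{-(1-\zeta)k}$ is unavailable once the component has more than about $1/\zeta$ clauses, and the component-counting sum you want to make summable does not close. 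The paper avoids exploring a growing component altogether: it uses a set-family intersection bound (\Cref{lem:subset-intersect-new}, \Cref{cor:bound-large-intersection}) to show that only $p$ clauses can share at least $\tfrac{k}{p}+\tfrac{p\zeta k}{2}$ variables with $\vbl(c^*)$, kills those few clauses by pinning at most $p$ variables \emph{outside} $\vbl(c^*)$ (\Cref{cond:pin-sequence}, whose pinning sequence requires its own inclusion--exclusion argument), and then pins $\vbl(c^*)$ one variable at a time, invoking the local uniformity lemma (\Cref{lem:local-uniformity}) at every step because every clause always retains at least $k-t_1-t_2$ unpinned variables.

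Relatedly, your explanation of the piecewise constant does not actually produce it. In the paper, $C=\tfrac{1}{1-\sqrt{2\zeta}}$ comes from optimizing the threshold in the intersection lemma: with $p=\sqrt{2/\zeta}$ the threshold is $t_1=\sqrt{2\zeta}\,k$ (AM--GM on $\tfrac{k}{p}+\tfrac{p\zeta k}{2}$), so after removing at most $\sqrt{2/\zeta}$ clauses the surviving clauses keep roughly $(1-\sqrt{2\zeta})k$ variables; and $C=\tfrac{2}{1-\zeta}$ comes from the separate observation that at most \emph{one} clause can share more than $\tfrac{1+\zeta}{2}k$ variables with $c^*$ (two such clauses would overlap in more than $\zeta k$ variables), leaving clause sizes about $\tfrac{1-\zeta}{2}k$. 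Neither mechanism is present in your sketch, and the mechanism you do describe --- charging the overlap entropy $\binom{k}{\zeta k}\le 2^{kH(\zeta)}$ twice per step against $(1-\zeta)k$ fresh variables --- would give a condition of the shape $k\bigl(1-\zeta-2H(\zeta)\bigr)\gtrsim \log d$, which is vacuous for most constant $\zeta$ (the coefficient is negative already for moderately small $\zeta$) and in any case contains no source for the $\sqrt{2\zeta}$ term. So the core of the theorem, the resilience bound of \Cref{lem:resilience-bounded-linear} with the stated $C(\zeta)$, is not established by the proposal as written.
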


The above theorem shows that for CNF formulas with linear intersection size $s = \zeta k$, 
under a relaxed local lemma condition $k \ge \Omega_\zeta(\log d) + o_\zeta(k)$, 
Valiant's algorithm can still exactly learn the formula using only $O(\log n)$ samples. 
It is worth noting that the constant $C(\zeta)$ satisfies $C(\zeta) \to 1$ as $\zeta \to 0$,  
indicating that our condition approaches the true local lemma regime when $\zeta$ is small.  
However, $C(\zeta) \to \infty$ as $\zeta \to 1$, 
which means the result no longer applies to CNF formulas whose clauses may arbitrarily intersect. 
Indeed, our next two lower bound results show that without any bound on the intersection size, 
the $O(\log n)$ sample complexity is information-theoretically impossible.

\subsubsection{CNF formulas without intersection size bound}

We establish two lower bound results showing that the assumption of bounded intersection size 
is \emph{necessary} for any learning algorithm to achieve \emph{logarithmic} sample complexity.

In particular, one can construct two CNF formulas $\Phi_1$ and $\Phi_2$ 
that both satisfy the local lemma condition but allow pairs of clauses to share too many variables, 
such that the total variation distance between $\mu_{\Phi_1}$ and $\mu_{\Phi_2}$ is at most $\exp(-\Omega(n))$. 
Hence, any exact learning algorithm would require \emph{exponentially} many samples 
to distinguish between $\Phi_1$ and $\Phi_2$.
We have the following lower bound result.

\begin{restatable}[]{theorem}{lowerExactExp}
  \label{proposition:lower-bound-general-thm}
  Let $k \geq 2$ be a constant integer. Any algorithm that exactly learns an $n$-variable $(k,k,k-1)$-CNF formula from \iid uniform solutions with probability $\frac{1}{3}$ requires $\exp(\Omega_k(n))$ samples.
\end{restatable}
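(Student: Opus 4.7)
The plan is to apply Le Cam's two-point method. Specifically, I will construct a pair of $(k,k,k-1)$-CNF formulas $\Phi_1,\Phi_2$ on the same $n$ variables whose induced distributions satisfy $\mu_{\Phi_1} \neq \mu_{\Phi_2}$ yet $\dtv(\mu_{\Phi_1},\mu_{\Phi_2}) \leq \exp(-\Omega_k(n))$. Any algorithm that exactly learns the formula must in particular distinguish these two induced distributions; applying Le Cam's inequality to the $T$-fold product distributions, success probability at least $\tfrac{1}{3}$ forces $T \cdot \dtv(\mu_{\Phi_1},\mu_{\Phi_2}) = \Omega(1)$, yielding $T = \exp(\Omega_k(n))$.

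For the construction of $\Phi_1$, I would use an \emph{implication chain}: for each $i \in \{k,k+1,\ldots,n\}$, include the $k$-clause
\[
c_i \;=\; \neg v_{i-k+1} \vee \neg v_{i-k+2} \vee \cdots \vee \neg v_{i-1} \vee v_i,
\]
which encodes $(v_{i-k+1} \wedge \cdots \wedge v_{i-1}) \Rightarrow v_i$. A direct check confirms each variable appears in at most $k$ such clauses and that $|\vbl(c_i) \cap \vbl(c_j)| = \max(0,k-|i-j|) \leq k-1$, so $\Phi_1$ is a valid $(k,k,k-1)$-CNF. The structural point is that, within $\Omega_{\Phi_1}$, conditioning on $v_1 = \cdots = v_{k-1} = \true$ iteratively forces $v_k = \true$, then $v_{k+1} = \true$, and so on, so the entire assignment is pinned to the all-true string $\mathbf{1}$. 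On the other hand, the subset of $\Omega_{\Phi_1}$ consisting of length-$n$ strings with no substring $\true^{k-1}$ is already exponentially large: this count satisfies the Fibonacci-type recurrence $g_n = g_{n-1} + \cdots + g_{n-k+1}$ (by conditioning on the position of the first $\false$), whose dominant root $\alpha_k$ (the largest real root of $x^{k-1} = x^{k-2} + \cdots + 1$) is strictly greater than $1$ for $k \geq 3$. Hence $|\Omega_{\Phi_1}| \geq \alpha_k^n$ and $\mu_{\Phi_1}(\mathbf{1}) = 1/|\Omega_{\Phi_1}| \leq \alpha_k^{-n}$.

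I would then define $\Phi_2 := \Phi_1 \cup \{c^*\}$, where
\[
c^* \;=\; \neg v_1 \vee \neg v_2 \vee \cdots \vee \neg v_{k-1} \vee \neg v_n.
\]
The boundary variables $v_1,\ldots,v_{k-1}$ have degrees $1,2,\ldots,k-1$ in $\Phi_1$ and $v_n$ has degree $1$, so inserting $c^*$ keeps every degree at most $k$; and a case check shows $|\vbl(c^*) \cap \vbl(c_i)| \leq k-1$ (the tight case is $i = k$, with intersection exactly $\{v_1,\ldots,v_{k-1}\}$), so $\Phi_2$ is also $(k,k,k-1)$-CNF. Within $\Omega_{\Phi_1}$, the clause $c^*$ is violated precisely on $\mathbf{1}$, which gives $\dtv(\mu_{\Phi_1},\mu_{\Phi_2}) = \mu_{\Phi_1}(\mathbf{1}) \leq \alpha_k^{-n}$, completing the two-point construction for $k \geq 3$. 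The main obstacle I anticipate is making the recurrence analysis quantitative in $k$ and separately handling the degenerate case $k = 2$, in which the implication chain collapses to only $n+1$ satisfying assignments and a different gadget is needed—still within the two-point framework—to retain an exponential lower bound on $|\Omega_{\Phi_1}|$ while keeping the bottleneck event exponentially rare.
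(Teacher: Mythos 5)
Your proposal is correct in substance and follows essentially the same strategy as the paper's proof: exhibit two $(k,k,k-1)$-CNF formulas that differ in a single extra clause $c^*$ whose forbidden assignment has positive but exponentially small probability under the larger formula (a deliberate violation of the resilience property), deduce $\dtv(\mu_{\Phi_1},\mu_{\Phi_2}) = \mu_{\Phi_1}(\mathbf{1}) = 1/|\Omega_{\Phi_1}| \le \exp(-\Omega_k(n))$, and conclude by indistinguishability of the $T$-fold product distributions. The only genuine difference is the gadget: the paper uses the layered unrestricted/restricted construction of \Cref{definition:lower-bound-gadgets} (which it also reuses for the approximate-learning lower bound of \Cref{theorem:lower-bound-general-thm}), whereas you use a sliding-window implication chain; your checks of the degree bound $d\le k$, the intersection bound $s\le k-1$, the forcing of the all-True assignment, and the identity $\dtv = \mu_{\Phi_1}(\mathbf{1})$ are all sound, and the run-avoidance count gives the needed exponential lower bound on $|\Omega_{\Phi_1}|$ for $k\ge 3$. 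One shared imprecision: with only two hypotheses, success probability $\tfrac13$ does not by itself contradict small total variation distance (an algorithm that guesses uniformly between the two candidates succeeds with probability $\tfrac12$ on each), so strictly one needs success probability above $\tfrac12$ or a larger hypothesis family; the paper's proof is loose in exactly the same way, so this is not a gap relative to the paper.

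The one real shortfall against the stated theorem is $k=2$, which you flag but do not resolve: your chain then collapses to the $n+1$ monotone-threshold solutions, so $\mu_{\Phi_1}(\mathbf{1})$ is only polynomially small and the bound degrades to $\Omega(n)$ samples. The paper's layered gadget is the kind of replacement you anticipate: at $k=2$ it decomposes into two interleaved constraint paths with alternating ``not both True''/``not both False'' edges, which still have exponentially many satisfying assignments while the restricted clause removes exactly one of them (although the quantitative bound $|\Omega_u|\ge 2^{(k-2)\ell}$ stated in \Cref{lemma:lower-bound-properties-gadgets} is itself vacuous at $k=2$, so even that route requires counting the $k=2$ gadget's solutions directly). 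To complete your writeup you would need to supply such a $k=2$ construction and carry out the quantitative recurrence estimate you sketch for $k\ge 3$; everything else is in place.
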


Combining the above lower bound result on exact learning 
with Valiant's algorithmic results on approximate learning (\Cref{thm:valiant}), 
we obtain a sharp separation between the sample complexities of exact and approximate learning CNF formulas.
While exact learning may require $\exp(\Omega(n))$ many samples, 
the approximate learning can be achieved with only $O(n^k)$ samples.

Furthermore, even for the problem of \emph{approximately} learning CNF formulas $\Phi$ 
with total variation distance error bound $\varepsilon$, 
we show that if two clauses in $\Phi$ share too many variables, 
then any approximate learning algorithm must require a \emph{polynomial} number of samples.

\begin{restatable}[]{theorem}{lowerApproxPoly}
  \label{theorem:lower-bound-general-thm}
  Fix a constant integer $k \geq 2$ and a constant error bound $\varepsilon_0 \in (0, \frac{1}{400 \cdot 2^k})$.
  Any algorithm that approximately learns an $n$-variable $(k,k,k-1)$-CNF formula 
  from \iid uniform solutions with total variation distance error at most $\varepsilon_0$ 
  and success probability $\frac{1}{3}$ requires at least 
  $\Omega_{k,\varepsilon_0}((\frac{n}{\log n})^{1 - \frac{2}{k}})$ samples.
\end{restatable}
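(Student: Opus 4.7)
The plan is to establish the bound via an information-theoretic multi-hypothesis testing argument. I would construct a family $\{\Phi_\theta\}_{\theta \in \Theta}$ of $(k,k,k-1)$-CNF formulas whose Gibbs distributions are pairwise $2\varepsilon_0$-separated in total variation; by the triangle inequality, any learner that outputs $\widehat{\mu}$ with $\dtv(\widehat{\mu}, \mu_{\Phi_\theta}) \leq \varepsilon_0$ must correctly identify $\theta$, so it suffices to show that $T$ \iid samples do not reliably distinguish the hypotheses unless $T = \Omega((n/\log n)^{1-2/k})$.

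For the construction, I would partition the $n$ variables into $m = \Theta(n/\log n)$ disjoint groups of size $\ell = \Theta(\log n)$, and on each group place one of two $(k,k,k-1)$ sub-formulas $\Phi^0_i, \Phi^1_i$ adapted from the overlapping-clauses template used in \Cref{proposition:lower-bound-general-thm}. The sub-formulas are tuned so that their Gibbs distributions have substantially overlapping supports and per-gadget total variation $\eta = \Theta((n/\log n)^{-(1-2/k)})$. Indexing hypotheses by $\theta \in \{0,1\}^m$ and restricting $\theta$ to a Gilbert--Varshamov code of relative minimum distance $\Theta(\varepsilon_0/(m\eta))$ yields $|C| = 2^{\Omega(m)}$ codewords whose induced product Gibbs distributions are pairwise separated by at least $2\varepsilon_0$ in total variation (via the standard product-distribution TV estimate $1 - (1-\eta)^d$ at Hamming distance $d$).

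For the distinguishability lower bound, the main subtlety is that distinct CNF distributions typically have non-nested supports, so pairwise KL divergences are often infinite. I would therefore replace KL by the squared Hellinger divergence, which stays finite and tensorizes: uniformly across the code,
\begin{equation*}
H^2(\mu_{\Phi_\theta}, \mu_{\Phi_{\theta'}}) \;\leq\; \sum_i H^2(\mu^{\theta_i}_i, \mu^{\theta'_i}_i) \;\leq\; m\eta.
\end{equation*}
A Hellinger-based multi-hypothesis testing bound (an Ingster-style chi-squared / Hellinger Fano, or Assouad's lemma with Hellinger) then forces $T \cdot m\eta \gtrsim \log|C| = \Omega(m)$, giving $T = \Omega(1/\eta) = \Omega((n/\log n)^{1-2/k})$, with constants depending on $k$ and $\varepsilon_0$.

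The main technical obstacle is the explicit gadget design. I need a pair of $(k,k,k-1)$-CNF sub-formulas on $\Theta(\log n)$ variables whose Gibbs distributions (i) overlap substantially so that Hellinger is finite and comparable to TV, and (ii) differ in total variation by exactly $\Theta((n/\log n)^{-(1-2/k)})$. The subconstant target TV, coupled with the $(k,k,k-1)$ restrictions on clause intersection and variable degree, constrains how the two sub-formulas may differ; producing the exact scaling requires careful counting of satisfying assignments for a $\Theta(\log n)$-variable block, likely via a parameterized extension of the overlapping-clauses construction behind \Cref{proposition:lower-bound-general-thm}. The exponent $1 - 2/k$ should emerge from balancing the $\Theta(\log n)$ gadget size against the number of clauses permitted by the $(k,k,k-1)$ constraints, with the $\log n$ factor in the denominator of the bound accounting for the gadget size.
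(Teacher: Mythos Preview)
Your high-level plan matches the paper's: a product of $m=\Theta(n/\log n)$ disjoint gadgets on $\Theta(\log n)$ variables each, indexed by $\{0,1\}^m$, followed by a Fano-type argument. The paper differs in two technical respects that simplify the execution. First, its gadget pair has \emph{nested} supports: the restricted gadget $\Phi^U_{\mathrm{res}}$ is obtained from the unrestricted one $\Phi^U_{\mathrm{un}}$ by adding a single clause, so $\Omega_r\subset\Omega_u$ with $|\Omega_u\setminus\Omega_r|=1$. This eliminates your worry about infinite KL; the paper bounds $I(X_1,\dots,X_T;K)$ by computing $\dkl{p_r}{\bar p}$ and $\dkl{p_u}{\bar p}$ against the mixture $\bar p=(p_r+p_u)/2$, both of which are finite and $\le 1-|\Omega_r|/|\Omega_u|\le 2^{-(k-2)\ell}$. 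No Hellinger detour is needed. Second, instead of passing to a Gilbert--Varshamov subcode, the paper applies a distance-based Fano inequality (\Cref{lemma:distance-fano-inequality}) directly on the full hypercube with the Hamming metric, using \Cref{lemma:lower-bound-tv-distance-binary-distance} to translate Hamming distance into a TV lower bound. The explicit gadget is a layered construction (\Cref{definition:lower-bound-gadgets}): $\ell$ layers of $k$ variables with overlapping clauses linking consecutive layers, and the exponent $1-2/k$ arises exactly because $2^{(k-2)\ell}\le|\Omega_u|\le 2^{k\ell}$ (the lower bound comes from fixing two variables per layer to satisfy every clause). Your approach would also work, but the nested-support design and distance-based Fano buy a cleaner argument.
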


Our upper and lower bound results show that the size of the intersection plays a crucial role 
in the sample complexity of learning CNF formulas in the local lemma regime.
The hard instance in the above theorem satisfies $d = k$; 
these CNF formulas satisfy a very strong local lemma condition because $k = d \gg \log d$. 
For these CNF formulas without intersection bound for clauses, 
even \emph{approximately} learning requires $\widetilde\Omega({n^{1 - \frac{2}{k}}})$ samples. 
The lower bound is close to linear in $n$ when $k$ is a large constant. 
However, if an intersection bound is assumed, 
Valiant's algorithm can \emph{exactly} learn the CNF formula using $O(\log n)$ samples 
under a mild local lemma condition $k = \Omega(\log d)$.

\subsection{Our results: Learning random CNF formulas near the satisfiability threshold}
The random CNF formula is a fundamental model in probability, physics, and computer science.
We use $\Phi = \Phi(k, n, m = \floor{\alpha n})$ to denote a random $k$-CNF formula on $n$ variables 
$V = \{v_1, \ldots, v_n\}$ and $m = \floor{\alpha n}$ random clauses $\+C = \set{c_1, \ldots, c_m}$.
Each clause of the formula is an independent disjunction of $k$ literals 
chosen uniformly and independently from $\set{v_1, \ldots, v_n, \neg v_1, \ldots, \neg v_n}$.
Note that each clause has exactly $k$ literals (repetitions allowed), 
and there are $(2n)^{km}$ possible formulas.
The parameter $\alpha \in \mathbb{R}^+$ is called the \emph{density} of the formula.
A fundamental problem for random CNF formulas is to determine a condition of the density $\alpha$ 
such that the formula is satisfiable with high probability.
Building on a long line of works~\cite{kirousis1998approximate,friedgut1999sharp,achlioptasAsymptoticOrderRandom2002,achlioptas2003threshold,coja2014asymptotic,ding2022satisfiability},
Ding, Sly, and Sun~\cite{ding2022satisfiability} answered this question 
and proved that there exists a sharp threshold 
$\alpha_{\star}(k)= 2^k \ln 2 - (1 + \ln 2) / 2 + o_k(1)$  such that
$$\forall \eps > 0, \quad \lim_{n \to \infty} \Pr{\Phi(k, n, m = \floor{\alpha n}) \text{ is satisfiable}} = \begin{cases}
    1 & \text{if } \alpha \le \alpha_{\star}(k) - \eps, \\
    0 & \text{if } \alpha \ge \alpha_{\star}(k) + \eps.
  \end{cases}$$

The random CNF formula shares some similarities with the CNF formulas in the local lemma regime.  
The above satisfiability condition can be rewritten as $k \geq \log \alpha + O(1)$, which is very similar to the local lemma condition $k \geq \log d + o(k)$ with the difference that the degree $d$ is replaced by the density (average degree) $\alpha$. 
It was discovered that some algorithmic techniques developed for CNF formulas in the local lemma regime can be extended to random CNF formulas~\cite{GalanisGGY21,heImprovedBoundsSampling2023, ChenGGGHMM24, chenCountingRandomSAT2025}. Recently, \cite{chenCountingRandomSAT2025} designed an algorithm for sampling uniform solutions of random CNF formulas near the satisfiability threshold. 

Inspired by this connection,
we further analyze Valiant's algorithm on the problem of exact learning random CNF formulas $\Phi = \Phi(k, n, m = \floor{\alpha n})$. The problem is formulated as follows.

\begin{problem}\label{problem:learning-random-cnf} Exact learning a random CNF formula from \iid uniform solutions.
\begin{itemize}
  \item \textbf{Input}: Parameters $n, k,\alpha$ of the random formula, a confidence parameter $\delta > 0$, and $T = T(n, k, \alpha, \delta)$ \iid uniform random solutions $X_1,\ldots,X_T$ from the distribution $\mu_\Phi$, where $\Phi = \Phi(k, n, m = \floor{\alpha n})$ is a random $n$-variable $k$-CNF formula with density $\alpha$.
  \item \textbf{Output}: A CNF formula $\widehat{\Phi}$ satisfies that,
   with probability at least $1 - o(\frac{1}{n})$ over the choice of $\Phi$, it holds that $\mu_\Phi = \mu_{\widehat{\Phi}}$ with probability at least $1 - \delta$, where the probability is taken over the randomness of $X_1,\ldots,X_T$ and the independent randomness $\+R$ inside the learning algorithm (assume $\+R = \emptyset$ if the learning algorithm is deterministic). Formally,
        \begin{align*}
          \Pr[\Phi]{ \Pr[X_1,\ldots,X_T,\+R]{ \mu_\Phi = \mu_{\widehat{\Phi}} } \ge 1 - \delta } \ge 1 - o\left(\frac{1}{n}\right).
        \end{align*}
\end{itemize}
\end{problem}

Note that Valiant's algorithm is deterministic and thus $\+R = \emptyset$ in our analysis. 
We prove the following result for Valiant's algorithm on exactly learning random CNF formulas.

\begin{restatable}[]{theorem}{learningRandom}
  \label{theorem:learning-random-cnf}
  Let $\alpha \in \mathbb{R}^+$ and $k \in \mathbb{N}$ be two constants satisfying $k \ge 10^5$, 
  $\alpha \le 2^{k - \widetilde{O}(k^{4/5})}$.
  For any $n \geq n_0(k,\alpha)$ sufficiently large, Valiant's algorithm solves \Cref{problem:learning-random-cnf} of exact learning with sample complexity $T = O_k(n^{\exp(-\sqrt{k})} \log \frac{n}{\delta})$ and computational complexity $O_k(n^{k + \exp(-\sqrt{k})} \log \frac{n}{\delta})$.
\end{restatable}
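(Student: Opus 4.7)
The plan is to reduce the exact learning guarantee to a uniform lower bound on the non-degenerate marginals of $\mu_\Phi$ on $k$-tuples of variables, and then import the structural analysis of random $k$-CNF near the satisfiability threshold developed in \cite{chenCountingRandomSAT2025}.

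\emph{Step 1 (reduction to a marginal lower bound).} Valiant's algorithm begins with all $2^k\binom{n}{k}$ possible $k$-clauses and removes a clause only when it is violated by some sample, so every clause of $\Phi$ itself survives — the samples lie in $\mathrm{Sol}(\Phi)$ — and hence $\mathrm{Sol}(\widehat\Phi)\subseteq\mathrm{Sol}(\Phi)$ deterministically. Consequently $\mu_{\widehat\Phi}=\mu_\Phi$ holds iff every \emph{non-implied} $k$-clause $c$, meaning one whose all-literals-false configuration $\tau_c\in\{\true,\false\}^{\vbl(c)}$ has $p_c:=\mu_\Phi(X(\vbl(c))=\tau_c)>0$, is violated by at least one sample. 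A standard Chernoff/union bound over the at most $(2n)^k$ candidate clauses shows that if, with probability $1-o(1/n)$ over $\Phi$, one can establish the uniform lower bound
\begin{align*}
p_{\min}\;:=\;\min_{\substack{S\subseteq V,\,|S|=k\\ \tau:\,\mu_\Phi(X(S)=\tau)>0}}\mu_\Phi(X(S)=\tau)\;\ge\;\Omega_k\!\left(n^{-\exp(-\sqrt{k})}\right),
\end{align*}
then $T=O_k(n^{\exp(-\sqrt{k})}\log(n/\delta))$ samples suffice to wipe out every non-implied clause with probability $\ge 1-\delta$. The running-time bound $O_k(n^{k+\exp(-\sqrt{k})}\log(n/\delta))$ then follows from the $O_k(n^k)$ candidate clauses times $T$ samples, reducing the entire task to proving the marginal lower bound.

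\emph{Step 2 (marginal lower bound via localized exploration).} Fix $S$ with $|S|=k$ and a feasible $\tau$. The approach is to expose $\Phi$ in a local-to-global fashion rooted at $S$: run a BFS in the clause-variable dependency hypergraph of $\Phi$ and stop at a ``good'' boundary separating an interior local instance $\Phi_S$ from the rest. Invoking the bad-component/blocked-variable tail bounds for random $k$-CNF under $\alpha\le 2^{k-\widetilde{O}(k^{4/5})}$ developed in \cite{chenCountingRandomSAT2025}, with probability $1-o(1/n^{k+1})$ over the choice of $\Phi$ the pruned interior $\Phi_S$ has size at most $\exp(-\sqrt{k})\log n+O_k(1)$; the $\exp(-\sqrt{k})$ rate arises from the slack $2^{-\widetilde{O}(k^{4/5})}$ between $\alpha$ and the LLL threshold $2^k$. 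Next, condition on a boundary assignment of $\Phi_S$ drawn from $\mu_\Phi$ and apply a spatial-mixing/coupling step to transfer the marginal of $\mu_\Phi$ on $S$ to a conditional marginal in the bounded-size local instance $\Phi_S$, preserving feasibility of $\tau$. Since $\Phi_S$ is a bounded CNF in the LLL regime, a local-lemma extension argument yields $\mu_\Phi(X(S)=\tau)\ge 2^{-|\Phi_S|}/\mathrm{poly}_k(1)=\Omega_k(n^{-\exp(-\sqrt{k})})$. A union bound over the $\binom{n}{k}$ choices of $S$ (absorbed into the $o(1/n)$ success probability of the $\Phi$-high-probability event) completes Step 2.

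\emph{Main obstacle.} The primary difficulty is ensuring the localized exploration is controlled \emph{uniformly} over all $k$-tuples so that the same bound on $p_{\min}$ holds simultaneously; this forces the bad-component tail probability to dominate the $n^k$ union-bound cost and pins down the exponent $\exp(-\sqrt{k})$ in the final sample complexity. A subtler point is arranging the spatial-mixing/coupling step so that it maps the marginal on $S$ under $\mu_\Phi$ to a conditional marginal in $\Phi_S$ while losing only an $O_k(1)$ multiplicative factor \emph{and} preserving the feasibility of $\tau$ (so the local-lemma extension inside $\Phi_S$ applies). Both issues should be resolvable by directly leveraging the coupling-towards-the-past and blocked-variable analyses of \cite{chenCountingRandomSAT2025}, which were designed to yield exactly this kind of component-size estimate and marginal regularity for random $k$-CNF near the satisfiability threshold.
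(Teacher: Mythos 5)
Your Step 1 is fine and matches the paper's reduction (it is essentially \Cref{pps:resilience}, up to the minor point that clauses of a random formula may have fewer than $k$ distinct variables and must be extended to size-$k$ clauses). The gap is in Step 2, and it is twofold. First, you treat the smallness of the local instance $\Phi_S$ as a high-probability property of the formula alone: ``with probability $1-o(1/n^{k+1})$ over the choice of $\Phi$ the pruned interior $\Phi_S$ has size at most $\exp(-\sqrt k)\log n+O_k(1)$.'' No such statement is available, and it is false as a deterministic/structural claim: at density $\alpha$ close to $2^k$ the dependency graph $G_\Phi$ is essentially one giant component and BFS balls around any $k$-tuple grow like $(k^2\alpha)^{\mathrm{radius}}$. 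Smallness of the relevant component only emerges \emph{after conditioning on revealed values of variables} that satisfy (and hence remove) most incident clauses; accordingly, the paper's tail bound (\Cref{lemma:random-cnf-bad-interior-tail-bound}, via the witness/edge-expansion argument) is a bound over the randomness of the sample $Y\sim\mu_\Phi^{\sigma^*_{\le i-1}}$, it only yields size $\le\log n$ (not $\exp(-\sqrt k)\log n$), and it holds with constant probability folded into a ``nice revealing result,'' not with probability $1-o(n^{-k-1})$ over $\Phi$ that could absorb a union bound over all $\binom nk$ tuples $S$. The exponent $\exp(-\sqrt k)$ in the theorem does not come from a component of size $\exp(-\sqrt k)\log n$; it comes from a component of size $\le\log n$ in which each clause is eliminated except with probability $2^{-\Omega(k^{4/5})}$, giving $(1-2^{-\Omega(k^{4/5})})^{\log n}\approx n^{-\exp(-\sqrt k)}$.

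Second, your final step asserts that ``since $\Phi_S$ is a bounded CNF in the LLL regime, a local-lemma extension argument yields $\mu_\Phi(X(S)=\tau)\ge 2^{-|\Phi_S|}/\mathrm{poly}_k(1)$.'' This is exactly where the local lemma is \emph{not} available: after the pre-revealing, the surviving clauses of the component may have only $o(k)$ unrevealed variables, and the unrevealed variables can be bad (degree $\Theta(\log n/\log\log n)$), so local uniformity breaks down entirely — this is the difficulty the paper explicitly flags. The paper's substitute is a new structural ingredient, the degree-one variable property (\Cref{property:degree-one-variable}, proved in \Cref{lemma:random-cnf-degree-one-variable}): in any subformula of at most $2\log n$ clauses there are clauses with $\ge \BetaConstant k$ variables of degree one, and such variables admit one-sided marginal bounds of $1/2$ under arbitrary conditioning, which drives the iterative elimination in \Cref{lemma:random-cnf-marginal-lower-bound-nice}. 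Your proposal contains no analogue of this ingredient (nor of the careful treatment of the adversarially fixed coordinates $\sigma^*_{\le i-1}$ via the modified bad set and the bounded-intersection bound on $\intersectionclause$), and the vague ``spatial-mixing/coupling'' transfer has no stated mechanism replacing the exact conditional-Gibbs property of the revealing process. As written, the argument does not go through.
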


Our result holds for random CNF formulas satisfying $k \geq \log \alpha + o(k)$, which is very close to the satisfiability threshold $k \geq \log \alpha + O(1)$. The coefficient of $\log \alpha$ is tight, but some $o(k) = \widetilde{O}(k^{4/5})$ additive terms are required. Compared to the $O(n^k)$ sample complexity in \Cref{thm:valiant}, we give a much better sample complexity $\widetilde{O}(n^{\exp(-\sqrt{k})})$, where the exponent goes to $0$ as $k$ becomes large. 
We remark that the exponent $\exp(-\sqrt{k})$ is not critical. One can improve it to $\exp(-k^c)$ for some $\frac{1}{2}< c < 1$ by a more careful analysis.
Compared to our sample complexity lower bounds in \Cref{proposition:lower-bound-general-thm} and \Cref{theorem:lower-bound-general-thm}, our result shows that typical random CNF formulas are significantly easier to learn than adversarial CNF formulas in the local lemma regime. Compared with the $O(\log n)$ sample complexity in \Cref{cor:linear-cnf-lower-bound} and \Cref{thm:main-linear-bounded}, our result for random CNF formulas requires more samples. The reason is that although the typical random CNF formula has some good structural properties (e.g., bounded average degree and bounded intersection size), it still can have many variables with \emph{unbounded degree}. Hence, we need to apply a different and more involved analysis for random CNF formulas. See technique overview in \Cref{sec:overview} for more details. 

\subsection{Related works and open problems}

\paragraph{Related works}

Despite the work discussed before, there are other related works on the problem of learning CNF formulas.
A line of work studied the problem of one-shot learning of CNF formulas. The problem considers CNF formulas with an \emph{external field}. The learning algorithm is required to cover the external field with $\emph{one}$ sample~\cite{daganLearningIsingModels2021,BhattacharyaRamanan/2021/ParameterEstimationUndirected, GalanisKalavasisKandiros/2024/LearningHardConstrained,GalanisGoldbergZhang/2025/OneShotLearning}. Recent work~\cite{ChauhanPanageas/2025/LearningIsingModels} extended the problem to learning the temperature of an Ising model truncated by a CNF formula.

Moreover, De, Diakonikolas, and Servedio~\cite{DeDS15} studied the problem of learning Boolean functions from the uniform distribution of satisfying assignments. Instead of CNF formulas, they considered linear threshold functions and DNF formulas. These functions are not defined by local hard constraints, which are very different from CNF formulas. Furthermore, Fotakis, Kalavasis, and Tzamos~\cite{FotakisKT22} studied the problem of estimating the parameters of $n$-dimensional Boolean product distributions, where samples are truncated by a set $S \subseteq \{0, 1\}^n$. Their algorithm is based on the membership oracle of $S$.

Additionally, several recent works have investigated learning MRFs from a wide variety of local Markov chains (e.g., Glauber dynamics) rather than from \iid samples~\cite{gaitondeUnifiedApproachLearning2024,gaitondeBypassingNoisyParity2025,gaitondeBetterModelsAlgorithms2025}. This approach circumvents the assumption of sample oracles that generate \iid samples from MRFs and overcomes the $n^{\Theta(k)}$ computational complexity barrier associated with learning from \iid samples.
However, the solution space of a CNF formula can be \emph{disconnected} under the moves of Glauber dynamics.
It would still be interesting to study the problem of learning CNF formulas from a suitable Markov chain dynamics.

\paragraph{Open problems}
We list some open problems for learning CNF formulas. 

\begin{itemize}
  \item \textbf{Tight trade-off in exact learning.}
  In \Cref{thm:main-linear-bounded}, we prove that exact learning of CNF formulas with intersection size $s = \zeta k$ is possible using $O_{\zeta,k}(\log n)$ samples, under a relaxed local lemma condition $k \ge C(\zeta) \cdot \log d + o_{\zeta}(k)$ for $\zeta \in (0,1)$. An important open problem is to determine the precise trade-off between the parameter $C(\zeta)$ in the local lemma condition and the sample complexity achievable by exact learning algorithms.
  \item \textbf{Approximate learning in the local lemma regime.}
  Our \Cref{theorem:lower-bound-general-thm} establishes a lower bound of $\widetilde{\Omega}_k(n^{1 - 2/k})$ samples for any approximate learning algorithm. It remains an interesting question to further strengthen this lower bound and to design an approximate learning algorithm whose sample complexity improves upon Valiant's classical $O(n^k)$ bound.
  \item \textbf{Learning random CNF formulas.}
  Our current results apply when the clause density satisfies $\alpha \le 2^{k - \widetilde{O}(k^{4/5})}$. A natural direction for future work is to extend this regime to $\alpha \le \tfrac{2^{k}}{\mathrm{poly}(k)}$. It would also be interesting to study whether the sample complexity $\widetilde{O}(n^{\exp(-\sqrt{k})})$ can be further reduced to sub-polynomial or even polylogarithmic in $n$.
\end{itemize}

\section{Technical overview}\label{sec:overview}
Let $\Phi = (V, \+C)$ be a $k$-CNF formula where every clause contains distinct $k$ literals.
The $k$-CNF formula is a canonical example of a Markov random field with hard constraints, 
where every clause poses a local hard constraint on $k$ variables. 
In the paper, we show that a very simple and natural marginal lower bound condition, 
denoted as the \emph{resilience property}, 
plays a crucial role in the sample complexity of proper learning $k$-CNF formulas. 
For any clause $c^*$ with $k$ variables $\vbl(c)$, 
only one assignment $\sigma^* \in \set{\text{True}, \text{False}}^k$ violates $c^*$. 
We call $\sigma^*$ the \emph{forbidden assignment} of $c^*$. 
The resilience property says that for any clause $c^* \notin \+C$, 
the probability that $X_{\vbl(c^*)} = \sigma^*$ is either $0$ or bounded away from $0$ by a certain quantity $\theta$ for a uniform random solution $X \sim \mu_{\Phi}$.
To cover the application of random CNF formulas, instead of $k$-CNF formulas, we state the definition for a slightly more general case where each clause contains \emph{at most} $k$ distinct variables.
\begin{definition}[$\theta$-resilience]\label{def:resilience}
  Given a parameter $\theta \in (0, 1)$,
  a CNF formula $\Phi = (V, \+C)$ with each clause containing at most $k$ variables is said to be \emph{$\theta$-resilient}
  if for any clause $c^*\notin \+C$ with $k$ variables and forbidden assignment $\sigma^*$,
  the probability that a uniform random solution $X$ of $\Phi$
  violates $c^*$ is either $0$ or at least $\theta$, i.e.,
  $$\Pr[X\sim\mu_{\Phi}]{X_{\vbl(c^*)} = \sigma^*} = 0 \text{ or } \Pr[X\sim\mu_{\Phi}]{X_{\vbl(c^*)} = \sigma^*} \ge \theta.$$
\end{definition}

This property appeared in previous work~\cite{BreslerGS14a,BlancaCSV20} 
on learning MRFs with \emph{pair-wise} hard constraints such as graph coloring 
and weighted independent set (hardcore model). 
We study the role of this property in both the algorithm and the hardness of learning CNF formulas.
\begin{itemize}
  \item On the algorithmic side, 
  it is straightforward to show that the $\theta$-resilient condition implies that 
  Valiant's algorithm can exactly learn CNF formulas with sample complexity $O(\frac{1}{\theta}\log n)$. 
  Our main contribution is to show that for the class of CNF formulas studied in this paper, 
  the resilience property \emph{can be established} with a \emph{large enough} $\theta$.
  Unlike the MRFs with pair-wise hard constraints considered in previous work, 
  the higher-order interactions make the resilience property much harder to establish. 
  We exploit the Lov\'asz local lemma condition and several structural properties of CNF formulas to 
  establish the desired resilience property.
  \item On the hardness side, consider a CNF formula $\Phi$ that lacks the $\theta$-resilience property, i.e., there exists a clause $c^*$ with forbidden assignment $\sigma^*$ such that $0 < \mathbb{P}_{X\sim\mu_{\Phi}}[X_{\vbl(c^*)} = \sigma^*] < \theta$.
  A simple observation gives an $\Omega(\frac{1}{\theta})$ sample complexity lower bound of exact learning. 
  We further show that the lack of the resilience property can also imply a sample complexity lower bound for 
  \emph{approximate} learning. 
  Furthermore, for $k$-CNF formulas satisfying the local lemma condition 
  but without a bound on the interaction size of two clauses, 
  we can construct a hard instance to make it lack the resilience property, which proves our hardness result.
\end{itemize}

\subsection{Sample complexity of Valiant's algorithm}
The following sample complexity bound for Valiant's algorithm is straightforward to establish.
\begin{proposition}\label{pps:resilience}
  For any satisfiable and $\theta$-resilient CNF formula $\Phi$ with each clause containing at most $k$ variables,
  Valiant's algorithm exactly learns $\Phi$
  from \iid uniform solutions with probability at least $1 - \delta$
  with sample complexity $T = O(\frac{k}{\theta}\log \frac{n}{\delta})$
  and computational complexity $O_k(n^k \cdot T)$.
\end{proposition}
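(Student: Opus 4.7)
The plan is to reduce exact learning to a union bound enabled by the $\theta$-resilience hypothesis. I would first classify every possible size-$k$ clause $c$ (with forbidden assignment $\sigma$) into three types with respect to $\mu_\Phi$: (i) clauses of $\+C$ itself, which every sample satisfies and hence are never removed; (ii) \emph{redundant} clauses with $\Pr_{X\sim\mu_\Phi}[X_{\vbl(c)}=\sigma]=0$, which may or may not be removed — their retention in $\widehat\Phi$ is harmless because no solution of $\Phi$ ever realizes $\sigma$, so $\widehat\Phi$ and $\Phi$ forbid exactly the same assignments on $\vbl(c)$; and (iii) \emph{bad} clauses with $\Pr_{X\sim\mu_\Phi}[X_{\vbl(c)}=\sigma]>0$, which cut off at least one solution of $\Phi$ and must be removed for $\mu_{\widehat\Phi}=\mu_\Phi$.

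Next I would observe that exact learning succeeds precisely when every bad clause is removed. By $\theta$-resilience, a bad clause has forbidden-assignment probability at least $\theta$ under $\mu_\Phi$, so a single i.i.d.\ sample violates (and hence removes) it with probability at least $\theta$. The probability that a fixed bad clause survives all $T$ samples is therefore at most $(1-\theta)^T \le e^{-\theta T}$. Taking a union bound over the at most $2^k\binom{n}{k}\le (2n)^k$ candidate clauses, the failure probability is at most $(2n)^k e^{-\theta T}$, which is bounded by $\delta$ as soon as
\begin{equation*}
T \;\ge\; \frac{k\log(2n) + \log(1/\delta)}{\theta} \;=\; O\!\left(\frac{k}{\theta}\log\frac{n}{\delta}\right),
\end{equation*}
yielding the claimed sample complexity. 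Combined with the previous paragraph, this shows $\mu_{\widehat\Phi}=\mu_\Phi$ with probability at least $1-\delta$.

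Finally, for the running-time bound I would note that Valiant's algorithm enumerates the $O(2^k n^k)$ candidate clauses, and for each one checks against the $T$ samples in $O(k)$ time per check, giving total computational cost $O_k(n^k T)$, as stated. There is no real obstacle here: the entire content is the observation that resilience converts the indistinguishability question into a concrete per-sample violation probability of at least $\theta$, after which the result follows from an elementary coupon-collector / union-bound argument, together with the benign role played by redundant clauses.
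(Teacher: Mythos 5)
Your core argument coincides with the paper's: by $\theta$-resilience, every size-$k$ candidate clause whose forbidden assignment has positive probability under $\mu_\Phi$ is violated by one sample with probability at least $\theta$; a union bound over the at most $(2n)^k$ candidates bounds the failure probability by $(2n)^k e^{-\theta T}\le\delta$ once $T\ge\frac{1}{\theta}\bigl(k\ln(2n)+\ln\frac1\delta\bigr)$; clauses with zero violation probability are harmless to retain; and the $O_k(n^kT)$ running time follows by direct counting. All of this matches the paper's proof, including the constants.

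There is one genuine gap relative to the statement as written. The proposition allows clauses of $\Phi$ with \emph{fewer} than $k$ variables (this generality is exactly what the random-CNF application requires), while Valiant's algorithm only enumerates size-$k$ clauses. Your type-(i) class, ``clauses of $\mathcal{C}$ itself,'' therefore does not cover short clauses, and the ``if'' direction of your claim that exact learning succeeds precisely when every bad clause is removed silently uses that every clause of $\Phi$ is retained in $\widehat\Phi$, so that every solution of $\widehat\Phi$ satisfies $\Phi$. When some $c\in\mathcal{C}$ has $|\vbl(c)|<k$, that containment is not established by your argument: you must show that $\widehat\Phi$ still enforces $c$. The paper handles this at the outset by replacing each size-$i$ clause with all of its $2^{k-i}\binom{n-i}{k-i}$ size-$k$ extensions, yielding an equivalent $k$-CNF; each extension is implied by $c$, hence never violated by any sample and never removed, and for any assignment violating $c$ one can choose an extension whose added literals that assignment falsifies, so the retained extensions jointly enforce $c$. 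Adding this short step completes your proof; without it, your trichotomy only proves the proposition when every clause of $\Phi$ has exactly $k$ variables. (A minor imprecision, not affecting correctness: your ``redundant'' clauses are in fact never removed, not merely ``may or may not be,'' since no sample ever realizes their forbidden assignment.)
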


\begin{proof}
  Note that since all clauses contain at most $k$ variables, for a clause with $i < k$ variables, we can extend the clause to a clause of size $k$ by adding $k - i$ distinct variables not in the clause, each with a literal of either polarity.
  Enumerating all possible extensions yields $2^{k - i} \binom{n - i}{k - i}$ size-$k$ clauses for each size-$i$ clause.
  After the extension, we obtain a $k$-CNF formula and denote the resulting set of clauses by $\+C$. We enumerate all possible size-$k$ clauses over $V$ in Valiant's algorithm. Fix a clause $c^*$. If $\oPr_{X \sim \mu_{\Phi}}[X_{\vbl(c^*)} = \sigma^*] = 0$, then it will never be eliminated by Valiant's algorithm. Otherwise, let $\+C^*$ be the set of all clauses such that for each $c^* \in \+C^*$, $\oPr_{X \sim \mu_{\Phi}}[X_{\vbl(c^*)} = \sigma^*] \ge \theta$. For any $c^* \in \+C^*$, let $\^I_{c^*}$ be the indicator that $c^*$ is not violated by any sample $X_i$ for $i \in [T]$.
  By the resilience property and independence of samples, we have
  \begin{equation*}
    \E{\^I_{c^*}} = \Pr{\^I_{c^*}=1} = \tuple{1-\Pr[X\sim\mu_{\Phi}]{X_{\vbl(c^*)} = \sigma^*}}^T \le \tuple{1-\theta}^T.
  \end{equation*}
  Since the number of clauses that do not appear in $\+C^*$ is at most $(2n)^k$,
  by Markov's inequality,
  \begin{equation*}
    \Pr{\sum_{c^* \in \+C^*} \^I_{c^*} \ge 1}
    \le \E{\sum_{c^* \notin \+C} \^I_{c^*}}
    \le (2n)^k \tuple{1-\theta}^T
    \le (2n)^k \exp\tuple{-T\theta}.
  \end{equation*}
  When $T\ge \frac{1}{\theta} \tuple{k\ln (2n)-\ln \delta}$, this probability is at most $\delta$,
  which means Valiant's algorithm eliminates all such clauses with probability at least $1 - \delta$.

  Now we argue that, suppose all clauses in $\+C^*$ are eliminated, then the output formula $\hat{\Phi}$ is equivalent to $\Phi$. Note that any clause in $\+C$ will never be eliminated and hence $\Phi$ is implied by $\hat{\Phi}$. On the other hand, for any solution $\sigma$ of $\Phi$, all clauses forbidding $\sigma$ are in $\+C^*$ and hence eliminated. Therefore, $\sigma$ is also a solution of $\hat{\Phi}$ and hence $\hat{\Phi}$ is implied by $\Phi$. This completes the proof.
\end{proof}

We next show how to establish the resilience property. Consider a $(k,d,s)$-CNF formula $\Phi = (V,\+C)$ with a local lemma condition. Fix an arbitrary clause $c^* \notin \+C$ with variable set $\vbl(c^*) = \set{v^*_1, \ldots, v^*_k}$ and forbidden assignment $\sigma^* = (\sigma^*_1, \ldots, \sigma^*_k)$. We show that $X_{\vbl(c^*)} = \sigma^*$ has a constant probability $\Omega(1)$ for a uniform random solution $X \sim \mu_{\Phi}$.

\subsubsection{Structured CNF formulas in the local lemma regime}

\paragraph{Local lemma and local uniformity}
The Lov\'asz local lemma guarantees a local uniformity property for the distribution $\mu_{\Phi}$ 
if the CNF formula satisfies a local lemma condition $k \gtrsim \log d$. 
Let $X \sim \mu_{\Phi}$. For any variable $v \in V$, 
the \emph{local uniformity property} (\Cref{lem:local-uniformity}) states that the marginal distribution of $X_v$ 
is close to the uniform distribution over $\set{\true, \false}$. 
This property was first observed in the algorithmic local lemma~\cite{HaeuplerSS11} and then 
widely used in local-lemma-based sampling and approximate counting algorithms~\cite{Moitra19}.

One natural idea is to establish the resilience property by 
recursively using the local uniformity property. 
Specifically, given $X \sim \mu_{\Phi}$, 
we first reveal the value of $X_{v_1^*}$. 
The local uniformity property guarantees that $\mathbb{P}[X_{v_1^*} = \sigma^*_1] \approx \frac{1}{2}$. 
Conditional on $X_{v_1^*} = \sigma^*_1$, 
we can simplify the CNF formula by removing the variable $v_1^*$ 
and all clauses satisfied by $v_1^*$. 
We then keep applying the same process to the next variable in the simplified formula.
However, 
this straightforward approach fails to establish the resilience property because of the following reasons.
\begin{itemize}
  \item The revealing process will keep removing variables 
  so that the number of unrevealed variables in a clause may become smaller than $\log d$ at some point. 
  Then, the local lemma condition breaks down, and the local uniformity property disappears.
  \item Indeed, in the proof of our lower bound result in \Cref{proposition:lower-bound-general-thm}, 
  for CNF formulas satisfying a very strong local lemma condition $k = d \gg \log d$ 
  but without a bound on the intersection size of two clauses, 
  we can construct a hard instance $\Phi$ to show that the resilience property fails. 
  Specifically, we can show that 
  $0 < \mathbb{P}_{X \sim \mu_{\Phi}}[X_{\vbl(c^*)} = \sigma^*] < \exp(-\Omega(n))$.
\end{itemize}

To prove the desired resilience property, we must use the local uniformity property combined with structural properties of the CNF formulas we are interested in. We carefully design a process to reveal the values of $X$ at some variables (including variables outside $\vbl(c^*)$) to guarantee that the local uniformity property holds with constant probability throughout the process.

\paragraph{CNF formulas with bounded intersection}
Here, we give a proof overview of \Cref{thm:main-bounded},
the case of sub-linear intersection size $s = o(k)$ with the local lemma condition $k \geq \log d + o(k)$.
The formal proof is given in \Cref{sec:sublinear-intersection}.
We will apply a similar analysis to the case of linear intersection size $s = \zeta k$ in \Cref{sec:linear-intersection},
which will prove \Cref{thm:main-linear-bounded}.

Fix a $(k,d,s)$-CNF formula $\Phi = (V, \+C)$ and a size-$k$ clause $c^*\notin\+C$ with forbidden assignment $\sigma^*$.
For clarity, we assume $\vbl(c^*) \neq \vbl(c)$ holds for all $c \in \+C$ here. 
Other corner cases will be handled in the formal proof (\Cref{lem:resilience-bounded-simple}).
To show the lower bound of the probability $\mathbb{P}_{X \sim \mu_{\Phi}}[X_{\vbl(c^*)} = \sigma^*] $,
we design a process to reveal the values of $X \sim \mu_\Phi$ at certain variables.

We first find a set of clauses $\tilde{\+C} \subseteq \+C$ such that 
each clause $c \in \tilde{\+C}$ shares at least $t$ variables with $c^*$, 
where $t = o(k)$ is a properly chosen threshold depending on $s$. 
By the bounded intersection assumption, 
every two clauses in $\tilde{\+C}$ share at most $s$ variables with each other.  
The bounded intersection allows a combinatorial argument of set families (\Cref{lem:subset-intersect-new}), 
with which we can show that the set $\tilde{\+C}$ contains at most $o(k)$ clauses. 
In summary, we establish the structural property that there are only $o(k)$ clauses 
that can share more than $o(k)$ number of variables with $c^*$.

Next, we show the revealing process on $X$ to establish the desired resilience property. The revealing process consists of the following two steps.
We first reveal $X_S$ on a subset of variables $S \subseteq V \setminus \vbl(c^*)$ outside $\vbl(c^*)$ such that $|S| = o(k)$ and with a constant probability, all
the clauses in $\tilde{\+C}$ are satisfied by $X_S$ (\Cref{cond:pin-sequence}).
The existence of such a set $S$ is once again guaranteed by the bounded intersection between two clauses and the constant probability bound is provided by the local uniformity property. 
Furthermore, since we only reveal $o(k)$ variables, 
the number of unrevealed variables in each clause is at least $k - o(k)$. 
Therefore, the local lemma condition always holds throughout the process.
Assume the above good event happens, 
in the second step, 
we reveal the value of $v_i^* \in \vbl(c^*)$ one by one from $i = 1$ to $k$. 
Note that all clauses in $\tilde{\+C}$ are satisfied in the first step, 
and we can remove them from the CNF formula. 
The remaining clauses contain at most $o(k)$ variables with $\vbl(c^*)$. 
Hence, during the second step, all clauses contain at least $k - o(k)$ unrevealed variables. 
By the local uniformity property, 
we can show that $X_{v_i^*} = \sigma^*_i$ with a constant probability for all $i \in [k]$. 
This establishes the desired resilience property with $\theta = \Omega_k(1)$.

\subsubsection{Random CNF formulas near the satisfiability threshold}
We now move to the case of random CNF formulas. We first give some quick observations about the structure of typical random CNF formulas. With high probability, each clause contains at least $k - 2$ variables and at most $k$ variables, and two clauses share at most $3$ variables with each other. Hence, it behaves like a linear $k$-CNF formula. However, we cannot use the same technique as above because although the average degree $\alpha \lesssim 2^{k}$ is small, the maximum degree $d \approx \frac{\log n}{\log \log n}$ is unbounded. Hence, the standard local lemma condition $k \gtrsim \log d$ does not hold. We need a more careful and involved analysis of random CNF formulas.

Fix a feasible configuration $\sigma^*$ in $\Lambda = \{v_1^*, \ldots, v_k^*\}$ in $\mu_{\Phi}$.
We show that $\mathbb{P}_{X\sim\mu_{\Phi}}[X_{\vbl(c^*)} = \sigma^*] \gtrsim n^{-\exp(-\sqrt{k})}$. 
Using the chain rule, the probability can be decomposed as follows
\begin{equation*}
  \Pr[X\sim\mu_{\Phi}]{X_{\vbl(c^*)} = \sigma^*} 
  = \prod_{i = 1}^k \Pr[X\sim\mu_{\Phi}]{X_{v_i^*} = \sigma_i^* \mid \forall j < i, X_{v_j^*} = \sigma_j^*} \defeq \prod_{i=1}^k p_i.
\end{equation*}
For each conditional probability $p_i$, it suffices to show that 
$p_i \gtrsim n^{-\exp(-k^{4/5})}$. 
Let $\pi$ denote the distribution $\mu_{\Phi}$ conditioned on 
$v_j^* = \sigma_j^*$ for all $j < i$, and let $X \sim \pi$. 
We then design a revealing process to obtain a lower bound 
on the probability that $X_{v_i^*} = \sigma_i^*$.

Our revealing process on $X$ consists of two steps: pre-revealing and conditional-revealing. 
At a high level, 
the pre-revealing step reveals $X$ on a subset $S$ where $v_i^* \notin S$, 
and with a constant probability, 
the pre-revealing result $X_S$ satisfies certain ``nice'' properties (see the definition in \Cref{def:nice-pinning}). 
The conditional-revealing step reveals the value of $v_i^*$ conditional on $X_S$. 
If $X_S$ is ``nice'', 
then $v_i^*$ takes the value $\sigma_i^*$ with probability 
at least $n^{-\exp(-k^{4/5})}$ in the conditional-revealing step.

\paragraph{Classify variables} 
Before describing the detailed revealing process, 
we classify all variables in $V$ into \emph{good} variables and \emph{bad} variables. 
The random formula contains high-degree variables whose degree is significantly larger than 
the average degree $\alpha$.
Furthermore, 
since we consider the conditional distribution $\pi$ instead of $\mu_{\Phi}$, 
the values of all $v_j^*$ for $j < i$ are fixed by an adversary. 
We will find all \emph{bad variables} that contain all high-degree variables, 
fixed value variables, and other variables that are significantly affected by them. 
The procedure in \Cref{alg:identify-bad} for finding bad variables is inspired by the 
previous works \cite{GalanisGGY21,heImprovedBoundsSampling2023,ChenGGGHMM24,chenCountingRandomSAT2025} 
on sampling random CNF formula solutions. 
Additionally, we need to use the \emph{bounded intersection} property (\Cref{lem:subset-intersect-new}) 
to control the effect of fixed value variables $v_1^*, \ldots, v_{i-1}^*$.

\paragraph{Pre-revealing step}
The first step is a standard ``BFS'' revealing process starting from $v_i^*$. 
We keep revealing values of some \emph{good variables} 
and removing all clauses that are satisfied by the current revealing results. 
The process stops once we can find some set of clauses $\+C' \subseteq \+C$ such that, 
conditional on the revealing results $X_S$, 
the distribution of $X_{v_i^*}$ depends only on variables and clauses in $\+C'$ 
but not on other variables and clauses.
See \Cref{sec:conditional-independence} for the formal analysis.
Roughly speaking, the revealing result $X_S$ is ``nice'' if
\begin{itemize}
  \item almost all clauses $c \in \+C'$ contains at least $2 k^{4/5}$ unrevealed (either good or bad) variables.
  \item the size of $\+C'$ is bounded by $\log n$.
\end{itemize}
The first item will be established in \Cref{sec:lower-bound-good-variables},
while the second item will be established in \Cref{sec:size-of-associated-component}.
Since the pre-revealing step only reveals \emph{good variables}, 
some local-lemma-based analysis can show that 
the revealing result $X_S$ is ``nice'' with a 
constant probability.

The formal analysis of the pre-revealing step is 
given in the proof of \Cref{lemma:random-cnf-marginal-lower-bound-nice-probability}.

\paragraph{Conditional-revealing step}
Our purpose now is to lower bound the probability of $v_i^*$ taking the value $\sigma_i^*$ conditional on a ``nice'' $X_S$. 
However, after the pre-revealing step, 
most of the good variables in $\+C'$ are revealed. 
Some clauses may only have $2 k^{4/5} = o(k)$ unrevealed variables. 
Some unrevealed variables can be the bad variables with an unbounded degree.
Hence, the local uniformity argument no longer works for analyzing the variable $v_i^*$ 
because the local lemma condition totally breaks down. 

We overcome this challenge by using the structural property of the clause set $\+C'$. 
We show the following property (\Cref{property:degree-one-variable}) for a typical random CNF formula.
For any subset $\widehat{\+C} \subseteq \+C$ of clauses with size  $2 \leq |\widehat{\+C}| < \log n$, 
one can always find two clauses $c_1,c_2 \in \+C$ such that 
\begin{align}\label{eq:degree-one-variable-property-overview}
\forall i \in \set{1, 2}, \quad  \Big | \vbl(c_i) \setminus 
  \cup_{c' \in \widehat{\+C} \setminus \set{c_i}} \vbl(c') \Big | \ge k - k ^{4/5}.
\end{align}
In words, if we consider the sub-formula induced by the clause set $\widehat{\+C}$, then both clauses $c_1$ and $c_2$ contain many degree-one variables. Suppose $c_i$ is a clause that \emph{forbids} some assignment $\tau \in \set{\true, \false}^{\vbl{(c_i)}}$. Let $S_i$ be the set of degree-one variables that only belong to $c_i$. These variables behave like variables in a \emph{monotone} CNF formula: under any condition, each $v \in S_i$ takes the satisfying value $\neg \tau(v)$ with probability at least $\frac{1}{2}$. Hence, if we reveal all variables in $S_i$, then $c_i$ is satisfied with probability at least $1 - (\frac{1}{2})^{|S_i|}$. Intuitively, the degree-one variable can prove a \emph{one-sided} marginal lower bound, which turns out to be enough for our analysis.

Back to the conditional-revealing step. Suppose $X_S$ is ``nice''. 
Using~\eqref{eq:degree-one-variable-property-overview}, 
even if many variables are revealed in the pre-revealing step, 
we can still find a clause $c \in \+C'$ such that 
$c$ contains at least $k^{4/5}$ \emph{unrevealed degree-one} variables. 
By revealing all these degree-one variables, $c$ is satisfied 
with probability at least $1 - (\frac{1}{2})^{k^{4/5}}$ and we can then remove $c$.
Note that~\eqref{eq:degree-one-variable-property-overview} holds for \emph{all} subsets of clauses 
with size at most $\log n$. 
We use this argument recursively to remove all clauses in $\+C'$ 
with probability at least $(1 - (\frac{1}{2})^{k^{4/5}})^{\log n} \approx n^{-\exp(-k^{4/5})}$. 
After that, $v_i^*$ becomes a isolated variable and it takes the value $\sigma_i^*$ with probability $\frac{1}{2}$. 

In the formal analysis, 
we need to pay some special attention if $v_1^*,v_2^*,\ldots,v_i^*$ are one of the degree-one variables. 
We may also need to deal with the last clause separately. 
The formal analysis of the conditional-revealing step is 
given in the proof of \Cref{lemma:random-cnf-marginal-lower-bound-nice}.

Finally, 
we remark that~\eqref{eq:degree-one-variable-property-overview} is related to 
the locally tree-like property proved in~\cite{ChenGGGHMM24}. 
If clauses $\+C'$ form a tree, then due to the bounded intersection between clauses, 
one can find clauses $c_1,c_2 \in \+C'$. 
However, the locally tree-like property says that $\+C'$ is a tree with a constant number of extra clauses. 
We believe it is possible to derive our above proof from the locally tree-like property, 
but one would need to analyze the effect of these extra clauses very carefully, 
especially when $\+C'$ is reduced to a constant size. 
However, the property in~\eqref{eq:degree-one-variable-property-overview} 
provides a more direct route to the desired bound, resulting in a simpler proof.

\subsection{Lower bound of sample complexity}
The $\Omega(\log n)$ sample complexity lower bound in \Cref{theorem:lower-bound-simple-thm} can be established by Fano's inequality on $(k, 1, 0)$-CNF formulas with disjoint clauses. We remark that $\Omega(\log n)$ is standard for learning MRFs, which also appeared in~\cite{SanthanamW12,BreslerMS13}.   

For CNF formulas satisfying a strong local lemma condition $k = d \gg \log d$ but with large $s = k - 1$ intersections, in  \Cref{definition:lower-bound-gadgets}, we construct a $(k, k, k-1)$-CNF formula $\Phi = (V, \+C)$ with $k\ell$ variables which violates the resilience property. Specifically, there exists an $k$-variable clause $c^* \notin \+C$ with forbidden assignment $\sigma^*$ such that 
\begin{align}\label{eq:lower-bound-resilience-property-overview}
  0 < \mathbb{P}_{X \sim \mu_{\Phi}}[X_{\vbl(c^*)} = \sigma^*] = \exp(-\Theta_k(\ell)).
\end{align}

To obtain the lower bound of exact learning in \Cref{proposition:lower-bound-general-thm}, we simply take $k\ell = n$.
This implies that even distinguishing $\Phi$ from the perturbed formula $\Phi' = (V, C \cup \set{c^*})$ requires exponentially many samples, since their total variation distance $\dtv(\mu_{\Phi}, \mu_{\Phi'}) \le \exp(-\Omega_k(n))$.

Next, we sketch the proof of the lower bound of approximate learning in \Cref{theorem:lower-bound-general-thm}. 
We first provide some intuition.
Let $\varepsilon_0 > 0$ be the desired error bound, and let $M$ be an integer.
We use the above $(k, k, k-1)$-CNF formula $\Phi$ with $\ell  = \Theta_k(\log \frac{M}{\varepsilon_0})$ variables as a \emph{gadget} to construct a family $\+X$ of CNF formulas.
Each CNF formula $\Phi_{\text{hard}} = (V_{\text{hard}},\+C_{\text{hard}}) \in \+X$ contains $n = M\cdot k\ell =\Theta_k(M\log \frac{M}{\varepsilon_0})$ variables and $M$ disjoint set of clauses $\+C_1, \+C_2, \dots, \+C_M$ where $\+C_{\text{hard}} = \uplus_{i=1}^M \+C_i$. Each set of clauses $\+C_i$ either forms the gadget $\Phi$ or the gadget $\Phi'$, where $\Phi'$ is obtained from $\Phi$ by adding the clause $c^*$ in~\eqref{eq:lower-bound-resilience-property-overview}.
Hence, there are $2^M$ different CNF formulas in the family $\+X$.
Note that $\mu_{\Phi_{\text{hard}}}$ is a product distribution of $M$ independent components, the distribution on each  component is either $\mu_{\Phi}$ or $\mu_{\Phi'}$.
Using~\eqref{eq:lower-bound-resilience-property-overview}, we have $\dtv(\mu_{\Phi}, \mu_{\Phi'})\approx \exp(-\Theta_k(\ell))$.

Consider the following problem. Let $\mu_{\Phi_{\text{hard}}}$ be a CNF formula in $\+X$. Given \iid uniform solutions from $\mu_{\Phi_{\text{hard}}}$, the algorithm needs to learn a $\Phi_{\text{out}} \in \+X$ such that $\dtv(\mu_{\Phi_{\text{hard}}}, \mu_{\Phi_{\text{out}}}) \le \varepsilon_0$. We prove the information-theoretic lower bound on the sample complexity of this problem. The sample complexity lower bound can be easily extended to the case when the algorithm is allowed to output an arbitrary CNF formula. The intuition of our proof is based on the following two facts.
\begin{itemize}
  \item First, to approximately learn $\Phi_{\text{hard}}$, the algorithm needs to correctly learn at least a \emph{linear} portion of $M$ gadgets in the CNF formula $\Phi_{\text{hard}}$. Intuitively, to satisfy this property, the total variation distance $\dtv(\mu_{\Phi}, \mu_{\Phi'})$ between two types of gadgets should be \emph{large} enough. Otherwise, suppose the total variation distance $\dtv(\mu_{\Phi}, \mu_{\Phi'})$ is too small. Then all CNF formulas in $\+X$ are almost the same, and the approximate learning problem is trivial because the algorithm can output an arbitrary CNF formula in $\+X$.
  \item Next, even to learn a single gadget is $\Phi$ or $\Phi'$, the algorithm needs at least $\approx \frac{1}{\dtv(\mu_{\Phi}, \mu_{\Phi'})}$ samples to distinguish two types of gadgets. To obtain a better lower bound, we hope that the total variation distance $\dtv(\mu_{\Phi}, \mu_{\Phi'})$ is as \emph{small} as possible.
\end{itemize}
By setting  $\ell = \Theta_k(\log \frac{M}{\varepsilon_0})$, we can balance the above two constraints. Moreover, by properly choosing the constant (depending on $k$) hidden in $\Theta_k(\cdot)$, we can guarantee $\dtv(\mu_{\Phi}, \mu_{\Phi'})\approx (\frac{\varepsilon_0}{M})^{1 - o_k(1)}$. Recall that $n = \Theta_k(M \log\frac{M}{\varepsilon_0})$. Hence, we need roughly $\widetilde{\Omega}_{k, \varepsilon_0}(n^{1-o_k(1)})$ samples.

The above construction and analysis resemble \emph{Assouad's Lemma} in~\cite{MR777600}, which is often used to derive lower bounds on the sample complexity in the context of the \emph{minimax risk}. 
However, in our lower bound results, we consider learning CNF formulas with $\varepsilon_0$-error and $\frac{1}{3}$ success probability.
In \Cref{sec:sample-complexity-of-approx-learning-cnf-formulas-in-the-local-lemma-regime}, we formalize the above proof idea by analyzing the size of $\varepsilon_0$-balls of $\+X$ under the total variation distance metric and applying a distance-based variant of \emph{Fano's inequality} (\Cref{lemma:distance-fano-inequality}) to obtain the desired lower bound.

\subsection{Obstacles in applying previous MRF learning algorithms}\label{sec:obstacles-in-applying-previous-mrf-learning-algorithms}

Finally, 
we discuss some technical challenges in 
applying previous MRF learning algorithms to our setting, learning CNF formulas from \iid uniform solutions.

Bresler, Mossel, and Sly~\cite{BreslerMS13} proposed an algorithm to learn MRFs by enumerating all neighbors of each variable. Consider a $(k,d)$-CNF formula, for any $v \in V$, let $N(v)$ denote all neighbors $u$ of $v$ such that $\{u,v\} \subseteq \vbl(c)$ for some clause $c \in \+C$. Note that $N(v)$ is the \emph{Markov blanket} of $v$ and $|N(v)| \leq kd$. Their technique needs \emph{at least} the following condition. For each $u \in N(v)$, there is an assignment $\sigma$ on $N(v) \setminus \set{u}$ such that $\sigma$ occurs with a constant probability in $\mu_{\Phi}$ and conditional on $\sigma$, $u$ has a constant influence on $v$. 
As $\sigma$ can be a configuration of about $kd \approx k\cdot 2^k$ variables, verifying the constant probability lower bound for $\sigma$ seems more challenging than verifying our resilience property on $k$ variables. Furthermore, even if one can verify their condition, their algorithm runs in time at least $n^{O(kd)}$, but Valiant's algorithm runs in time $n^{O(k)}$. 
The maximum degree $d \approx 2^k$ in the local lemma regime and $d \approx \frac{\log n}{\log \log n}$ for random CNF formulas.

A faster algorithm based on the \emph{correlation decay} was also proposed in~\cite{BreslerMS13}. This algorithm requires that for two variables $u$ and $v$, their correlation is small if $u$ and $v$ are far away from each other in the underlying hypergraph of MRF, and their correlation is large if $u$ and $v$ are in the same clause. The correlation decay (weak spatial mixing) property indeed holds for CNF formulas in the local lemma regime~\cite{Moitra19}. However, we give a counterexample in \Cref{sec:counterexample-correlation-lower-bound} to show that the correlation between $u,v$ can be 0 even if they are in the same clause.

Bresler~\cite{Bresler15} proposed an algorithm to learn general Ising models. Later works improve the sample complexity and extend the result to MRFs~\cite{KlivansM17,HamiltonKM17,WuSD19}. However, these techniques work for MRFs with \emph{soft constraints} because they require a bound on the strength of local interactions, i.e. bounded width assumption. Also, all the above techniques require a bounded degree on the underlying graph of MRFs, which is not the case for random CNF formulas. Subsequent works~\cite{gaitondeUnifiedApproachLearning2024,chandrasekaranLearningSherringtonKirkpatrickModel2025} extend to learning the Sherrington-Kirkpatrick model, which is beyond the bounded width assumption. Their results rely on the concentration properties of the interaction matrix, which seems not applicable to CNF formulas. It is interesting to see (but not clear now) if these techniques can be generalized to the problems studied in this paper.

For MRFs with \emph{pair-wise} hard constraints, e.g., the hardcore model and graph coloring, \cite{BreslerGS14a,BlancaCSV20} proposed algorithms based on the resilience property. 
Verifying the resilience property for MRFs with pair-wise hard constraints is not challenging. 
However, CNF formulas are MRFs defined by \emph{high-order} local interactions, and we need new techniques to deal with them.

\section{Resilience of CNF formulas in the local lemma regime}\label{sec:resilience-bounded-intersection}

In this section, we establish the resilience property of $(k, d, s)$-CNF formulas.
We first prove that when the intersection size between any two clauses is sublinear in $k$,
the $(k, d, s)$-CNF formula is $O(1)$-resilient under the optimal local lemma condition,
that is, when $k \gtrsim \log d$. Formally,
\begin{lemma}\label{lem:resilience-bounded-sublinear}
  Let $\eta \in (0, 1)$ be a constant.
  For any integers $k,d,s$ satisfying $s = k^{1-\eta}$ and  $k\ge \log d+ o(k) + O_\eta(1)$ (in particular, $k \geq 2^{\frac{2}{\eta}}$),
  the $(k, d, s)$-CNF formula is $O_{k,\eta}(1)$-resilient.
\end{lemma}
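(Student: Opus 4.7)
The plan is to execute the two-stage revealing process sketched in the technical overview. Fix an arbitrary size-$k$ clause $c^* \notin \+C$ with $\vbl(c^*) = \{v_1^*, \ldots, v_k^*\}$ and forbidden assignment $\sigma^*$, and assume $\Pr_{X\sim\mu_\Phi}[X_{\vbl(c^*)} = \sigma^*] > 0$ (otherwise the first case of \Cref{def:resilience} holds trivially). I introduce a threshold $t$ with $s < t = o(k)$, say $t = \lceil k^{1-\eta/2} \rceil$ (the hypothesis $k \ge 2^{2/\eta}$ guarantees $t > s$), and call a clause $c \in \+C$ \emph{thick} if $|\vbl(c) \cap \vbl(c^*)| \ge t$, otherwise \emph{thin}. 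The sets $\vbl(c) \cap \vbl(c^*)$ ranging over thick $c$ form a family of $\ge t$-subsets of a $k$-set whose pairwise intersections have size $\le s$, so \Cref{lem:subset-intersect-new} (forthcoming) bounds the number of thick clauses by a function of $k,t,s$ that is $O_{k,\eta}(1)$.

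Stage one constructs a pinning set $S \subseteq V \setminus \vbl(c^*)$ together with a target $\tau \in \{\true,\false\}^S$ such that $X_S = \tau$ satisfies every thick clause. Each thick clause has at least $k - s = k - o(k)$ variables outside $\vbl(c^*)$ and there are only $O_{k,\eta}(1)$ thick clauses, so a greedy procedure picking one satisfying variable per thick clause (avoiding conflicts with already-pinned coordinates) yields $|S| = O_{k,\eta}(1)$. I then reveal the coordinates of $S$ one at a time and apply the local-uniformity consequence of the Lov\'asz local lemma (\Cref{lem:local-uniformity}) at each step: since at most $|S| = O_{k,\eta}(1)$ variables of any clause are ever pinned, every clause retains $\ge k - O_{k,\eta}(1) \ge \log d + o(k)$ free variables (by the hypothesis), so each conditional marginal is $\tfrac{1}{2} \pm o(1)$, and hence $\Pr[X_S = \tau] \ge (\tfrac{1}{2} - o(1))^{|S|} = \Omega_{k,\eta}(1)$.

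Stage two conditions on $\{X_S = \tau\}$ and reveals $v_1^*, \ldots, v_k^*$ in order. All thick clauses are already satisfied and may be removed; each surviving clause is thin and shares at most $t - 1$ variables with $\vbl(c^*)$, so at step $i$ any live clause has at least $k - (t-1) - |S| = k - o(k) \ge \log d + o(k)$ unrevealed variables, keeping the local lemma condition intact. \Cref{lem:local-uniformity} then yields $\Pr[X_{v_i^*} = \sigma_i^* \mid \text{past}] \ge \tfrac{1}{2} - o(1)$ at every step, and chaining gives $\Pr[X_{\vbl(c^*)} = \sigma^* \mid X_S = \tau] \ge (\tfrac{1}{2} - o(1))^k = \Omega_k(1)$. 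Multiplying the two stages produces $\theta = \Omega_{k,\eta}(1)$, as desired.

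The main obstacle is choosing the threshold $t$ so that two competing demands are simultaneously met: $t$ must be sufficiently larger than $s$ for the set-intersection lemma to collapse the thick family to constant size, yet small enough ($t = o(k)$) that—together with $|S| = O_{k,\eta}(1)$—the local lemma margin $k - t - |S| \ge \log d + o(k)$ survives throughout both revealing stages; the choice $t = \lceil k^{1-\eta/2} \rceil$ is natural but the exact constants in the intersection lemma will dictate how tight things can be made. A secondary technicality is the corner cases (e.g.\ a clause $c \in \+C$ with $\vbl(c) = \vbl(c^*)$, or situations where the greedy construction of $S$ must be adjusted because all outside variables of a thick clause are already pinned to conflicting values); these are handled separately as referenced by \Cref{lem:resilience-bounded-simple} in the overview.
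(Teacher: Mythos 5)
Your overall architecture (handle the $\vbl(c)=\vbl(c^*)$ corner case separately, bound the number of ``thick'' clauses via \Cref{lem:subset-intersect-new}, satisfy them by pinning $O_{k,\eta}(1)$ variables outside $\vbl(c^*)$, then reveal $v_1^*,\dots,v_k^*$ under local uniformity) is exactly the paper's strategy, and stages one and two are fine as far as they go. But there is a genuine gap at the single step the paper actually has to work for: the \emph{existence} of the pinning sequence. You justify the greedy construction of $S$ by asserting that ``each thick clause has at least $k-s$ variables outside $\vbl(c^*)$.'' This is false: the intersection bound $s$ applies only to pairs of clauses inside $\+C$, whereas $c^*\notin\+C$, so a thick clause may share up to $k-1$ variables with $c^*$ and have as few as \emph{one} variable outside $\vbl(c^*)$. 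Consequently the greedy can get stuck — the lone outside variable of the current thick clause may already be pinned to a value that fails to satisfy it — and you defer exactly this scenario to \Cref{lem:resilience-bounded-simple}, which only covers the case $\vbl(c)=\vbl(c^*)$ and says nothing about conflicting pins. The paper closes this hole with a specific combinatorial argument: process the thick clauses in increasing order of $|\vbl(c)\setminus\vbl(c^*)|$, and show by contradiction that if after pinning $r$ variables some clause has all its outside variables pinned unhelpfully, then that clause and all previously processed ones each have at least $k-r$ variables inside $\vbl(c^*)$; since any two such intersections have size at most $s=k^{1-\eta}$, inclusion--exclusion over $r+1$ of them forces more than $k$ distinct variables inside the $k$-set $\vbl(c^*)$, a contradiction (using $r+1\le k^{\eta/2}$ and $k\ge 2^{2/\eta}$). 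Without this (or an equivalent) argument, \Cref{cond:pin-sequence} is unverified and the proof does not go through.

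A secondary, easily repaired issue: with threshold $t=\lceil k^{1-\eta/2}\rceil$ the hypothesis of \Cref{lem:subset-intersect-new} cannot be met, since $\min_{p}\bigl(\tfrac{k}{p}+\tfrac{ps}{2}\bigr)=\sqrt{2ks}=\sqrt{2}\,k^{1-\eta/2}>t$; you need a slightly larger threshold (the paper takes $t_1=\tfrac{3}{2}k^{1-\eta/2}$ with $p=k^{\eta/2}$), which still keeps $t=o(k)$ and the bound of $k^{\eta/2}$ thick clauses, so the rest of your parameter bookkeeping survives.
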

\noindent
Combining this lemma with \Cref{pps:resilience},
it is straightforward to prove \Cref{thm:main-bounded}.

\learningSublinearIntersection*

Moreover, we show that if the local lemma condition is relaxed,
the $O(1)$-resilience still holds even when the intersection size is linear in $k$,
which directly implies \Cref{thm:main-linear-bounded}.
Formally,

\begin{lemma}\label{lem:resilience-bounded-linear}
  Let $\zeta \in (0, 1)$ be a constant.
  For any integers $k,d,s$ satisfying $s = \zeta k$ and $k \geq C \log d + O(C \log k +\frac{1}{\sqrt{\zeta}})$, where
  \begin{align}\label{eq:C-bounded-linear}
    C \triangleq
    \begin{cases}
      \frac{1}{1 - \sqrt{2 \zeta}},
       & \zeta \in (0, 3 - 2\sqrt{2}),
      \\ \frac{2}{1 - \zeta}, & \zeta \in [3 - 2\sqrt{2}, 1),
    \end{cases}
  \end{align}
  the $(k, d, s)$-CNF formula is $O_{k,\zeta}(1)$-resilient.
\end{lemma}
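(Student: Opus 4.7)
The plan is to adapt the two-step revealing strategy used in the proof of \Cref{lem:resilience-bounded-sublinear} (sketched in \Cref{sec:overview}) to the linear-intersection regime $s = \zeta k$. Fix a size-$k$ clause $c^* \notin \+C$ with forbidden assignment $\sigma^*$ and assume $\mathbb{P}_{X \sim \mu_\Phi}[X_{\vbl(c^*)} = \sigma^*] > 0$; the degenerate case $\vbl(c^*) = \vbl(c)$ for some $c \in \+C$ is handled separately as in the sublinear proof. Choose a threshold $t = \beta k$ for some $\beta = \beta(\zeta) \in (\zeta, 1)$ to be optimized later, and let $\tilde{\+C} = \{c \in \+C : |\vbl(c) \cap \vbl(c^*)| \geq t\}$.

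\textbf{Step 1 (pre-reveal).} Apply \Cref{lem:subset-intersect-new} to the set system $\{\vbl(c) \cap \vbl(c^*) : c \in \tilde{\+C}\}$, whose members have size $\geq t$ inside the $k$-element ground set $\vbl(c^*)$ and pairwise intersection $\leq s = \zeta k$, to conclude that $|\tilde{\+C}| = O_\zeta(1)$. For each $c \in \tilde{\+C}$ pick a witness variable $w_c \in \vbl(c) \setminus \vbl(c^*)$ (using the freedom $|\vbl(c) \setminus \vbl(c^*)| \geq (1-\beta)k$ to resolve conflicting literal requirements between different clauses), and let $S = \{w_c : c \in \tilde{\+C}\}$ with the associated assignment $\tau$ that satisfies every clause of $\tilde{\+C}$. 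Since $|S| = O_\zeta(1)$, iterating the local uniformity property (\Cref{lem:local-uniformity}) one variable at a time gives $\mathbb{P}_{X \sim \mu_\Phi}[X_S = \tau] = \Omega_{k,\zeta}(1)$: at each intermediate step only $O_\zeta(1)$ variables have been pinned, so the residual formula still satisfies the local lemma condition required to invoke local uniformity.

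\textbf{Step 2 (reveal $\vbl(c^*)$).} Conditional on $X_S = \tau$, every clause in $\tilde{\+C}$ is satisfied and may be dropped; every surviving clause $c'$ has $|\vbl(c') \cap \vbl(c^*)| < t$ by definition, and $|\vbl(c') \cap S| \leq s \cdot |\tilde{\+C}|/k \cdot k = O_\zeta(1)$ by combining the pairwise bound with $|\tilde{\+C}| = O_\zeta(1)$. Reveal $v_1^*, \ldots, v_k^*$ in turn. At every step each surviving clause retains at least $(1-\beta)k - O_\zeta(1)$ unrevealed variables, so the local lemma condition is preserved provided $k \geq \frac{1}{1-\beta}\log d + O(\log k) + O_\zeta(1)$, and local uniformity delivers $\mathbb{P}[X_{v_i^*} = \sigma_i^* \mid \mathrm{past}] \geq \tfrac{1}{2}(1 - o(1))$ at every step. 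Multiplying over the $k$ steps and combining with Step 1 gives $\mathbb{P}_{X \sim \mu_\Phi}[X_{\vbl(c^*)} = \sigma^*] \geq \Omega_{k,\zeta}(1)$.

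\textbf{Case analysis and the main obstacle.} Optimizing $\beta$ against the structural count of $|\tilde{\+C}|$ produces the piecewise constant $C(\zeta)$. In the small-$\zeta$ regime a Bonferroni-style inclusion--exclusion using $t|\tilde{\+C}| - \binom{|\tilde{\+C}|}{2} s \leq k$ suggests the optimal choice $\beta = \sqrt{2\zeta}$, giving the constant $C = \frac{1}{1-\sqrt{2\zeta}}$ valid for $\zeta \in (0, 3-2\sqrt{2})$. For $\zeta \in [3-2\sqrt{2}, 1)$ this choice becomes too aggressive ($\sqrt{2\zeta}$ approaches $1$), and one instead balances $t + |S| \leq (1+\zeta)k/2$ by taking $\beta = (1+\zeta)/2$ with a correspondingly larger (but still $O_\zeta(1)$) set $S$, yielding $C = \frac{2}{1-\zeta}$; the crossover $\zeta = 3-2\sqrt{2}$ is exactly the root of $(1+\zeta)/2 = \sqrt{2\zeta}$. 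The main technical obstacle I foresee is twofold: first, proving \Cref{lem:subset-intersect-new}-type bounds sharp enough that the $O_\zeta(1)$ overhead in the local lemma condition collapses into the stated additive term $O(C \log k + 1/\sqrt{\zeta})$ rather than a larger function of $\zeta$; and second, controlling $|\vbl(c') \cap S|$ and the pre-reveal probability in the large-$\zeta$ regime, where clauses can share up to $\zeta k$ variables pairwise, without eating into the $(1-\beta)k$ slack that the local uniformity analysis of Step 2 depends on.
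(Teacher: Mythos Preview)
Your overall strategy matches the paper's proof exactly: the two thresholds $\beta = \sqrt{2\zeta}$ (small $\zeta$) and $\beta = (1+\zeta)/2$ (large $\zeta$) are precisely what the paper uses, and they give the stated $C(\zeta)$. However, there are two concrete errors in your Step~1 and in your large-$\zeta$ analysis that you should fix.

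First, for $c \in \tilde{\+C}$ you have $|\vbl(c) \cap \vbl(c^*)| \geq \beta k$, hence $|\vbl(c) \setminus \vbl(c^*)| \leq (1-\beta)k$, the \emph{opposite} of what you wrote. A clause in $\tilde{\+C}$ may have as little as a single variable outside $\vbl(c^*)$, so the ``freedom'' you invoke to resolve literal conflicts does not exist. The paper's fix is to sort the clauses of $\tilde{\+C}$ by increasing $|\vbl(c)\setminus\vbl(c^*)|$ and pin one witness at a time; one shows by the same inclusion--exclusion count (with $S_i = \vbl(c_i)\cap\vbl(c^*)$ and the bound $|S_i\cap S_j|\leq s=\zeta k$) that whenever a clause is reached it still has an unpinned variable outside $\vbl(c^*)$. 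This is where the additive $O(1/\sqrt{\zeta})$ term comes from, since the contradiction step needs $r+1 \leq \sqrt{2/\zeta}$ and $k > \tfrac{2}{\sqrt{2\zeta}-\zeta}$.

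Second, your description of the large-$\zeta$ regime is muddled. With $t=\tfrac{1+\zeta}{2}k$ the key observation is a pigeonhole: two clauses each sharing more than $\tfrac{1+\zeta}{2}k$ variables with $c^*$ would share more than $\zeta k$ variables with each other, violating the intersection bound. Hence $|\tilde{\+C}|\leq 1$ and $|S|\leq 1$---the set is \emph{smaller} here, not ``correspondingly larger,'' and no combinatorial lemma or balancing of $t+|S|$ is needed. After pinning that single witness, every surviving clause retains at least $\tfrac{1-\zeta}{2}k-1$ variables, which directly gives $C=\tfrac{2}{1-\zeta}$.
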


\learningLinearIntersection*

In the rest of this section, 
we focus on proving \Cref{lem:resilience-bounded-sublinear} and \Cref{lem:resilience-bounded-linear}.

\subsection{Preliminaries of Lov\'asz local lemma}

Before we prove the resilience property of CNF formulas, we introduce some standard tools. 

Let $\+R = \{R_1,R_2,\ldots,R_k\}$ be a set of mutually independent random variables.
For any event $E$, we use $\vbl(E) \subseteq \+R$ to denote the set of random variables that $E$ depends on. Define a set of bad events $\+B = \{B_1,B_2,\ldots,B_m\}$. For any event $B \in \+B$, define the neighborhood of $B$ as $\Gamma(B) = \{B' \in \+B \,|\, B' \neq B \land \vbl(B') \cap \vbl(B) \neq \emptyset\}$. For any event $E \notin \+B$, similarly define $\Gamma(E) = \{B \in \+B \,|\, \vbl(B) \cap \vbl(E) \neq \emptyset\}$. Let $\Pr[\+R]{\cdot}$ denote the product distribution over $\+R$. We use the following version of Lov\'asz local lemma in~\cite{HaeuplerSS11}.

\begin{theorem}[\cite{HaeuplerSS11}]\label{thm:lovasz-local-lemma}
  If there exists a function $x: \+B \to (0, 1)$ such that for any $B \in \+B$,
  \begin{align*}
    \Pr[\+R]{B} \leq x(B) \prod_{B' \in \Gamma(B)} (1 - x(B')),
  \end{align*}
  then it holds that $\Pr[\+R]{\wedge_{B \in \+B} \bar{B}} \ge \prod_{B \in \+B} (1 - x(B)) > 0$.

  Moreover, for any event $E$, it holds that
  \begin{align*}
    \Pr[\+R]{E \,|\, \wedge_{B \in \+B} \bar{B}} \leq \Pr[\+R]{E} \cdot \prod_{B \in \Gamma(E)} (1 - x(B))^{-1}.
  \end{align*}
\end{theorem}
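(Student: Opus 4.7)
The plan is to prove Theorem~\ref{thm:lovasz-local-lemma} via the standard Erd\H{o}s--Spencer induction, refined in the style of~\cite{HaeuplerSS11} to yield the ``moreover'' extension for an arbitrary event $E$. The central inductive claim I would establish is: for every $S \subseteq \+B$ and every $B \in \+B \setminus S$,
$$\Pr[\+R]{B \mid \wedge_{B' \in S} \bar{B'}} \leq x(B),$$
proved by induction on $|S|$. The base case $S = \emptyset$ follows immediately from the hypothesis since each factor $1 - x(B') \in (0,1)$.

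For the inductive step I would split $S$ into $S_1 = S \cap \Gamma(B)$ and $S_2 = S \setminus \Gamma(B)$ and apply Bayes' rule:
$$\Pr[\+R]{B \mid \wedge_{B' \in S} \bar{B'}} = \frac{\Pr[\+R]{B \wedge \wedge_{B' \in S_1} \bar{B'} \mid \wedge_{B' \in S_2} \bar{B'}}}{\Pr[\+R]{\wedge_{B' \in S_1} \bar{B'} \mid \wedge_{B' \in S_2} \bar{B'}}}.$$
Since $\vbl(B)$ is disjoint from $\vbl(B')$ for every $B' \in S_2$ and the variables in $\+R$ are mutually independent, the numerator is bounded by $\Pr[\+R]{B} \leq x(B)\prod_{B' \in \Gamma(B)}(1 - x(B'))$. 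For the denominator I would expand $\wedge_{B' \in S_1}\bar{B'}$ by the chain rule one event at a time, apply the inductive hypothesis to each conditional factor (each conditioning set is a strict subset of $S$), and obtain the lower bound $\prod_{B' \in S_1}(1 - x(B')) \geq \prod_{B' \in \Gamma(B)}(1 - x(B'))$, where the second inequality uses $S_1 \subseteq \Gamma(B)$ together with $1 - x(B'') \leq 1$. Dividing yields the claim.

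From the inductive claim, the first conclusion $\Pr[\+R]{\wedge_{B \in \+B} \bar{B}} \geq \prod_{B \in \+B}(1 - x(B))$ drops out by writing the joint survival probability as a chain-rule product over an arbitrary enumeration of $\+B$ and applying the claim to each factor; strict positivity is automatic since each $x(B) < 1$. For the moreover part, I would apply exactly the same partition trick with $E$ in place of $B$: split $\+B$ into $\Gamma(E)$ and its complement, use mutual independence to bound the numerator $\Pr[\+R]{E \wedge \wedge_{B \in \Gamma(E)}\bar{B} \mid \wedge_{B \in \+B \setminus \Gamma(E)}\bar{B}}$ by $\Pr[\+R]{E}$, and use the main inductive claim via the chain rule to lower-bound the denominator by $\prod_{B \in \Gamma(E)}(1 - x(B))$.

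No step is genuinely hard; this is a textbook argument. The one place demanding care is the independence step inside the induction, where one must verify that conditioning on $\wedge_{B' \in S_2}\bar{B'}$ does not break the independence of $B$ from those events --- this holds because $\vbl(B)$ and $\bigcup_{B' \in S_2}\vbl(B')$ are disjoint subsets of a collection of mutually independent variables. The remaining bookkeeping is ensuring that each inductive invocation uses a strictly smaller conditioning set so that the recursion terminates.
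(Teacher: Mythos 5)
Your proposal is correct: it is the standard Erd\H{o}s--Lov\'asz induction together with the conditional-probability corollary, which is exactly how the bound is obtained in the cited source~\cite{HaeuplerSS11}; the paper itself does not reprove this theorem but imports it as a black box. The only bookkeeping worth making explicit is that the positivity of all conditioning events $\Pr[\+R]{\wedge_{B'\in S}\bar{B'}}>0$ should be carried along in the same induction (it follows from the same chain-rule product $\ge \prod(1-x(B'))>0$), which your final step already supplies.
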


For CNF formula $\Phi = (V, \+C)$, 
consider the product distribution $\+R$ that every variable takes True or False independently with probability $1/2$ and the bad events $\+B = \{B_c \,|\, c \in \+C\}$, 
where $B_c$ is the event that the clause $c$ is not satisfied. 
Using the Lov\'asz local lemma~\Cref{thm:lovasz-local-lemma}, 
the following local uniformity property for CNF formulas is well-known~\cite{Moitra19,fengFastSamplingCounting2021}.

\begin{lemma}[\cite{Moitra19,fengFastSamplingCounting2021}]\label{lem:local-uniformity}
  Let $\Phi =(V, \+C)$ be a CNF formula.
  Assume each clause contains at least $k_1$ variables and at most $k_2$ variables,
  and each variable belongs to at most $d$ clauses.
  For any $t \geq k_2$, if $2^{k_1} \ge 2\e dt$,
  then there exists a satisfying assignment for $\Phi$ and for any $v \in V$,
  \begin{align*}
    \max\set{\Pr[X\sim\mu_{\Phi}]{X_v=1}, \Pr[X\sim\mu_{\Phi}]{X_v=0}}
    \le \frac{1}{2} \exp\tuple{\frac{1}{t}}.
  \end{align*}
\end{lemma}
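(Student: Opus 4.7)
The plan is to apply the asymmetric Lov\'asz local lemma (\Cref{thm:lovasz-local-lemma}) to the natural product space: each variable in $V$ is independently set to $\true$ or $\false$ with probability $1/2$, and for each clause $c \in \+C$ we define the bad event $B_c$ that $c$ is violated. Since $c$ has a unique forbidden assignment on its variables, $\Pr[\+R]{B_c} = 2^{-|\vbl(c)|} \le 2^{-k_1}$. Two bad events $B_c, B_{c'}$ are dependent only when $\vbl(c) \cap \vbl(c') \neq \emptyset$, and since each of the at most $k_2$ variables in $\vbl(c)$ lies in at most $d$ clauses, we get the degree bound $|\Gamma(B_c)| \le k_2 d \le dt$.

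I would then assign a uniform weight $x(B_c) = x$ with $x$ chosen on the order of $\frac{1}{2dt}$. The first step is to verify the LLL inequality
\[
2^{-k_1} \;\le\; x\,(1-x)^{dt},
\]
which reduces to showing $2^{k_1}\ge 2edt$ after using $(1-x)^{dt}\ge e^{-xdt/(1-x)}$ and plugging in $x \approx \frac{1}{2dt}$; this is exactly the hypothesis, and it yields $\Pr[\+R]{\wedge_c \bar B_c} > 0$, which proves the existence of a satisfying assignment and lets us identify the conditional measure $\Pr[\+R]{\cdot \mid \wedge_c \bar B_c}$ with $\mu_\Phi$.

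For the marginal bound, I would take $E = \{X_v = 1\}$ (and symmetrically $\{X_v = 0\}$). Then $\Pr[\+R]{E} = \tfrac12$ and $|\Gamma(E)| \le d$, because $v$ appears in at most $d$ clauses. The ``moreover'' part of \Cref{thm:lovasz-local-lemma} gives
\[
\Pr[X\sim\mu_\Phi]{X_v = 1} \;\le\; \frac{1}{2} \prod_{B\in\Gamma(E)} (1-x(B))^{-1} \;\le\; \frac{1}{2}\,(1-x)^{-d}.
\]
It remains to check that the chosen $x$ satisfies $(1-x)^{-d} \le \exp(1/t)$, equivalently $-d\ln(1-x) \le 1/t$. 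With $x \le \frac{1}{2dt}$ and the estimate $-\ln(1-x) \le x + x^2 \le \frac{1}{dt}$ in the relevant regime, this holds, completing the argument.

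The only delicate point is threading a single constant $x$ through both inequalities: the LLL verification wants $x$ large enough that $x(1-x)^{dt}\ge 2^{-k_1}$, while the marginal tail wants $x$ small enough that $(1-x)^{-d}\le e^{1/t}$. The factor $2$ in the hypothesis $2^{k_1}\ge 2edt$ is exactly what gives enough slack to pick $x$ (something like $\frac{1}{2dt}$ or $\frac{1}{edt}$) satisfying both. Once this balance is set, everything else is a routine application of \Cref{thm:lovasz-local-lemma}.
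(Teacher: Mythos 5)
Your proposal is correct and follows exactly the route the paper intends: \Cref{lem:local-uniformity} is stated as a known consequence of \Cref{thm:lovasz-local-lemma} (cited from \cite{Moitra19,fengFastSamplingCounting2021}), and your argument—product measure with bad events $B_c$, uniform weights $x \approx \frac{1}{2dt}$ verified against $2^{k_1} \ge 2\mathrm{e}dt$, then the conditional-probability clause of \Cref{thm:lovasz-local-lemma} applied to $E=\{X_v=1\}$ with $|\Gamma(E)|\le d$—is precisely that standard derivation, with the bookkeeping ($|\Gamma(B_c)|\le k_2 d \le dt$ and $-d\ln(1-x)\le 1/t$) done correctly.
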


\subsection{A general approach to establish resilience property}

We first give a general approach to establish the resilience property in \Cref{def:resilience}.
Then we use this approach to establish \Cref{lem:resilience-bounded-simple} and \Cref{lem:resilience-bounded-condition} in \Cref{sec:sublinear-intersection} and \Cref{sec:linear-intersection}, respectively.

Let $c^*$ be the clause in \Cref{def:resilience},
whose variables are $\vbl(c^*) = \set{v^*_1, \ldots, v^*_k}$ and forbidden assignment is $\sigma^* = (\sigma^*_1, \ldots, \sigma^*_k)$.
Suppose we want to verify the resilience property of a CNF formula $\Phi=(V, \+C)$ with respect to $c^*$.
We start the proof by a simple case that there exists a clause $c' \in \+C$
such that $\vbl(c') = \vbl(c^*)$ but $c' \neq c^*$.

\begin{lemma}\label{lem:resilience-bounded-simple}
  Let $\Phi = (V, \+C)$ be a $(k, d, s)$-CNF formula satisfying $k\ge \log d+ \log k + \log (2\mathrm{e}) +s$,
  and $c^*\notin \+C$ be a clause.
  If there exists a clause $c' \in \+C$ with $\vbl(c') = \vbl(c^*)$ but $c' \neq c^*$,
  then it holds that
  \begin{equation*}
    \Pr[X\sim \mu_{\Phi}]{X_{\vbl(c^*)} = \sigma^*}
    \ge \tuple{1 - \frac{1}{2}\exp\tuple{\frac{1}{k}}}^k.
  \end{equation*}
\end{lemma}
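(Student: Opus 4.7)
The plan is to leverage the existence of the clause $c' \in \+C$ (same variable set as $c^*$ but different forbidden assignment $\tau^* \neq \sigma^*$) by passing to the smaller formula $\Phi_{-c'} \triangleq (V, \+C \setminus \set{c'})$, and then running a variable-by-variable chain-rule argument powered by the local-uniformity lemma. The first step is to note that $\mu_\Phi$ is exactly $\mu_{\Phi_{-c'}}$ conditioned on $\set{Y_{\vbl(c')} \neq \tau^*}$, and that $\set{Y_{\vbl(c^*)} = \sigma^*} \subseteq \set{Y_{\vbl(c')} \neq \tau^*}$ because $\vbl(c^*) = \vbl(c')$ and $\sigma^* \neq \tau^*$. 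Consequently
\begin{equation*}
  \Pr[X \sim \mu_\Phi]{X_{\vbl(c^*)} = \sigma^*} \;\geq\; \Pr[Y \sim \mu_{\Phi_{-c'}}]{Y_{\vbl(c^*)} = \sigma^*},
\end{equation*}
so it suffices to lower-bound the right-hand side.

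I would then write this right-hand side by the chain rule as $\prod_{i=1}^k p_i$, where $p_i \triangleq \Pr[Y]{Y_{v_i^*} = \sigma_i^* \mid Y_{v_j^*} = \sigma_j^*,\, j < i}$, and bound each $p_i$ using \Cref{lem:local-uniformity}. The conditional distribution defining $p_i$ is uniform over the satisfying assignments of the CNF obtained from $\Phi_{-c'}$ by substituting the $i-1$ revealed values. Since every clause of $\Phi_{-c'}$ shares at most $s$ variables with $\vbl(c^*)$, every surviving clause in this reduced formula still contains at least $k - s$ variables, while the maximum degree remains at most $d$. The hypothesis $k \geq \log d + \log k + \log(2\mathrm{e}) + s$ is precisely the condition $2^{k-s} \geq 2\mathrm{e} d k$ required to invoke \Cref{lem:local-uniformity} with $k_1 = k - s$, $k_2 = k$, $t = k$. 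This gives $\Pr[Y]{Y_{v_i^*} = a \mid \cdots} \leq \tfrac{1}{2}\exp(1/k)$ for either value $a \in \set{\true,\false}$, hence $p_i \geq 1 - \tfrac{1}{2}\exp(1/k)$. Multiplying over $i$ delivers the advertised bound $\tuple{1 - \tfrac{1}{2}\exp(1/k)}^k$.

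The only step that genuinely requires care is the reduction to $\Phi_{-c'}$; without it one would have to reveal variables inside a formula still containing $c'$, and the clause $c'$—whose variable set equals $\vbl(c^*)$—would shrink to length $k - (i-1)$ along the chain rule, dropping below the local-lemma threshold well before $i$ reaches $k$. Once $c'$ is excised, every clause touching $\vbl(c^*)$ shares at most $s$ variables with it, so the local lemma condition persists uniformly throughout the chain-rule recursion and the local-uniformity bound applies at every step.
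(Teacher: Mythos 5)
Your proposal is correct and yields exactly the stated bound; its engine is the same as the paper's (chain rule over the $k$ variables of $c^*$ together with \Cref{lem:local-uniformity}, driven by the observation that every clause of $\+C$ other than $c'$ shares at most $s$ variables with $\vbl(c')=\vbl(c^*)$, so clause sizes never drop below $k-s$ and the hypothesis is precisely $2^{k-s}\ge 2\mathrm{e}dk$), but you neutralize $c'$ by a different device. The paper keeps $c'$ in the formula: since the forbidden assignment of $c'$ differs from $\sigma^*$, some coordinate of $\sigma^*$ satisfies $c'$, so after reordering it pins $v_1^*$ first — invoking \Cref{lem:local-uniformity} once on the original formula, where every clause still has $k$ variables — and once $c'$ is thereby satisfied and removed, the remaining $k-1$ pinnings proceed exactly as in your chain rule. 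You instead delete $c'$ and use the monotonicity step $\Pr[X\sim \mu_{\Phi}]{X_{\vbl(c^*)} = \sigma^*} \ge \Pr[Y\sim \mu_{\Phi_{-c'}}]{Y_{\vbl(c^*)} = \sigma^*}$, valid because the target event forces $c'$ to be satisfied and $\mu_{\Phi}$ is $\mu_{\Phi_{-c'}}$ conditioned on that satisfaction (both distributions are well defined since the local lemma condition makes $\Phi$, and hence $\Phi_{-c'}$, satisfiable). Your route buys a slightly cleaner argument: no WLOG choice of a satisfying coordinate and a single uniform clause-size bound $k-s$ throughout; the paper's route stays within $\mu_{\Phi}$ and needs no auxiliary formula. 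Both give $(1-\frac{1}{2}\exp(1/k))^{k}$ under the same hypothesis.
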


\begin{proof}
  In this case, $\sigma^*$ is a satisfying assignment of the clause $c'$ and there exists a variable,
  say $v^*_1$ without loss of generality,
  such that $c'$ is satisfied when $X_{v^*_1} = \sigma^*_1$.
  Since each clause in $\+C$ contains exactly $k$ variables,
  the marginal lower bound
  $\Pr[X\sim \mu_{\Phi}]{X_{v_1^*} = \sigma^*_1} \ge 1-\exp\tuple{k^{-1}}/2$
  follows from \Cref{lem:local-uniformity} with $2^k \ge 2\e dk$.
  Conditioning on the event that $X_{v^*_1} = \sigma^*_1$,
  the CNF formula can be simplified by
  removing clauses that have been satisfied
  and removing $v^*_1$ from clauses containing $v^*_1$ with forbidden value $\sigma^*_1$.
  We then pin $v^*_i$ with $\sigma^*_i$ from $i=2$ to $k$.
  Since any other clause in $\+C$ shares at most $s$ variable with $c'$
  (and therefore also with $c^*$),
  the size of each clause is always at least $k - s$ and at most $k$
  during the simplification process.
  By \Cref{lem:local-uniformity} with $2^{k-s} \ge 2\e dk$,
  \begin{equation*}
    \Pr[X\sim \mu_{\Phi}]{ X_{v^*_i} = \sigma^*_i \mid \sigma^*_{\le i - 1}}
    \ge 1 - \frac{1}{2}\exp\tuple{\frac{1}{k}},
  \end{equation*}
  where $\sigma^*_{\le i - 1} = \bigwedge_{j=1}^{i-1} (X_{v^*_j} = \sigma^*_j)$.
  Note that if $k\ge \log d+ \log k + \log (2\mathrm{e}) +s$, then the conditions $2^{k-s} \ge 2\e dk$ are always satisfied,
  which completes the proof.
\end{proof}

Assuming that $\vbl(c) \neq \vbl(c^*)$ holds for all $c \in \+C$,
our next strategy is to eliminate clauses that have a ``large'' intersection with $c^*$,
which are ``rare,''
by pinning certain variables outside $\vbl(c^*)$ to satisfy them.
The intuition is that when we sequentially pin the variables $c_i^*$ to the values $\sigma_i^*$ for $i = 1, \dots, k$,
all remaining clauses share only ``few'' variables with $c^*$,
allowing us to control the clause lengths during the simplification process.
We formalize this idea as the following condition.
\begin{condition}\label{cond:pin-sequence}
  Let $t_1$ and $t_2$ be two positive integers. Assume the following conditions hold for CNF formula $\Phi=(V, \+C)$ and a clause $c^*\notin \+C$.
  Let $\widetilde{\+C}\triangleq \set{c \in \+C \mid \abs{\vbl(c) \cap \vbl(c^*)} \ge t_1 }$. Then
  \begin{itemize}
    \item the size of $\widetilde{\+C}$ is at most $t_2$;
    \item there exists a sequence of variables
          $u_1, u_2, \ldots, u_\ell\notin \vbl(c^*)$ together with a sequence of values $\tau_1, \tau_2, \ldots, \tau_\ell \in \{\text{True}, \text{False}\}$, where $\ell \le |\widetilde{\+C}|$,
          such that pinning $u_i$ with $\tau_i$ for all $1 \leq i \leq \ell$ satisfies all the clauses in $\widetilde{\+C}$.
  \end{itemize}
\end{condition}

\begin{lemma}\label{lem:resilience-bounded-condition}
  Assume that \Cref{cond:pin-sequence} holds with parameters $t_1$ and $t_2$.
  For any integer $k,d,s\ge 1$,
  satisfying that
  $k \geq \log d + \log k + \log (2\mathrm{e}) + t_1+t_2$,
  the $(k, d, s)$-CNF formula is $O_{k,t_2}(1)$-resilient.
\end{lemma}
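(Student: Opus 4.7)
The plan is to extend the sequential-pinning argument of \Cref{lem:resilience-bounded-simple} by inserting a preliminary phase that handles the (few) clauses with large overlap with $c^*$. Writing $\vbl(c^*) = \set{v_1^*, \ldots, v_k^*}$ with forbidden assignment $(\sigma_1^*, \ldots, \sigma_k^*)$, and letting $u_1, \ldots, u_\ell$ with $\ell \le t_2$ be the auxiliary variables supplied by \Cref{cond:pin-sequence}, I will lower bound
$$\Pr[X\sim\mu_\Phi]{X_{\vbl(c^*)}=\sigma^*}\ \ge\ \prod_{i=1}^{\ell}\Pr[X\sim\mu_\Phi]{X_{u_i}=\tau_i \mid X_{u_j}=\tau_j,\ j<i}\cdot \prod_{i=1}^{k}\Pr[X\sim\mu_\Phi]{X_{v_i^*}=\sigma_i^* \mid \tau_{\le\ell},\sigma_{<i}^*},$$
and apply \Cref{lem:local-uniformity} with $t=k$ to each of the $\ell + k$ conditional factors. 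As a preliminary observation, \Cref{cond:pin-sequence} implicitly rules out any $c' \in \+C$ with $\vbl(c') = \vbl(c^*)$, since such a $c'$ would lie in $\widetilde{\+C}$ yet cannot be satisfied by pinning variables outside $\vbl(c^*)$; so that case does not arise here (and is in any event covered by \Cref{lem:resilience-bounded-simple}).

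The bulk of the work is to verify that each conditional formula satisfies the local-lemma precondition $2^{k'} \ge 2\mathrm{e}dk$ on its minimum clause length $k'$. In the first phase, after $i-1 \le t_2-1$ pinnings from among the $u_j$'s, every surviving clause (in $\widetilde{\+C}$ or outside it) has lost at most $t_2 - 1$ variables and so retains length at least $k - t_2 + 1$. In the second phase, \Cref{cond:pin-sequence} guarantees that all clauses in $\widetilde{\+C}$ are already satisfied and can be discarded; the remaining clauses each share at most $t_1 - 1$ variables with $\vbl(c^*)$, so because the phase-one losses (from outside $\vbl(c^*)$) and the phase-two losses (from inside $\vbl(c^*)$) occur on disjoint variable sets, each surviving clause retains length at least $k - t_2 - (t_1 - 1) \ge k - t_1 - t_2 + 1$. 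The hypothesis $k \ge \log d + \log k + \log(2\mathrm{e}) + t_1 + t_2$ then yields $2^{k'} \ge 2\mathrm{e}dk$ uniformly across both phases, so \Cref{lem:local-uniformity} delivers a per-step lower bound of $1 - \frac{1}{2}\exp(1/k)$. Multiplying over the $\ell + k \le t_2 + k$ steps gives
$$\Pr[X\sim\mu_\Phi]{X_{\vbl(c^*)}=\sigma^*}\ \ge\ \tuple{1 - \tfrac{1}{2}\exp(1/k)}^{k+t_2},$$
which is a constant depending only on $k$ and $t_2$, establishing the desired $O_{k,t_2}(1)$-resilience.

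The main subtlety I anticipate is the clause-length accounting in the first phase: a clause $c \in \widetilde{\+C}$ may shrink substantially during phase one before a later pinning satisfies it, and its length must be controlled at every intermediate step, not only at the phase boundary. The uniform bound $k - t_2 + 1$ resolves this cleanly because it treats $\widetilde{\+C}$ and $\+C \setminus \widetilde{\+C}$ on equal footing and depends only on the total number of prior pinnings rather than on which clauses have already been satisfied. A minor technicality worth noting is that the chain-rule decomposition above is well-defined: the two-sided marginal bound of \Cref{lem:local-uniformity} forces every conditioning event to have probability at least $1 - \frac{1}{2}\exp(1/k) > 0$ under the conditional distribution at the relevant step, so no factor ever conditions on a zero-probability event, and the resulting lower bound is meaningful.
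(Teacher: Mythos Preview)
Your proposal is correct and follows essentially the same two-phase pinning argument as the paper's proof: first pin the $u_i$'s to eliminate $\widetilde{\+C}$, then pin the $v_i^*$'s, applying \Cref{lem:local-uniformity} at every step with the clause-length lower bounds $k-t_2$ and $k-t_1-t_2$ respectively. Your bookkeeping is in fact slightly sharper (you track $t_1-1$ and $t_2-1$ where the paper uses $t_1$ and $t_2$), and your observation that \Cref{cond:pin-sequence} implicitly excludes the $\vbl(c')=\vbl(c^*)$ case is a nice addition, but none of this changes the structure of the argument.
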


\begin{proof}
  First, we pin $u_i$ with the value $\tau_i$ from $i=1$ to $\ell$ one by one.
  After each pinning, the CNF formula can be simplified,
  and the size of each clause is always at least
  $k - t_2$ (because $|\widetilde{\+C}| \le t_2$) and at most $k$
  during the simplification process.
  By \Cref{lem:local-uniformity} with $2^{k-t_2} \ge 2\e dk$,
  \begin{equation*}
    \Pr[X\sim \mu_{\Phi}]{\bigwedge_{i=1}^\ell \tuple{X_{u_i} = \tau_i}}
    \ge \tuple{1 - \frac{1}{2}\exp\tuple{\frac{1}{k}}}^\ell
    \ge \tuple{1 - \frac{1}{2}\exp\tuple{\frac{1}{k}}}^{t_2}.
  \end{equation*}
  Conditioning on the this pinning, all clauses in $\widetilde{\+C}$ are satisfied (hence, removed) and
  the remaining clauses shares at most $t_1$ variables with $c^*$ due to the definition of $\widetilde{\+C}$.
  Therefore, while pinning $v^*_i$ with $\sigma^*_i$ from $i=1$ to $k$,
  the size of each clause is always at least $k - t_1-t_2$ and at most $k$.
  By \Cref{lem:local-uniformity} with $2^{k - t_1-t_2} \ge 2\e dk$,
  we have
  \begin{equation*}
    \Pr[X\sim \mu_{\Phi}]{X_{\vbl\tuple{c^*}} = \sigma^*
      \mid \bigwedge_{i=1}^\ell \tuple{X_{u_i} = \tau_i}}
    \ge \tuple{1 - \frac{1}{2}\exp\tuple{\frac{1}{k}}}^{k}.
  \end{equation*}
  To satisfy the condition of \Cref{lem:local-uniformity}, it suffices to assume $k \geq \log d + \log k + \log (2\mathrm{e}) + t_1+t_2$.
  Combining the two lower bounds, $\Phi$ is $\tuple{1 - \frac{1}{2}\exp\tuple{\frac{1}{k}}}^{k+t_2}$-resilient, completing the proof.
\end{proof}

\Cref{lem:resilience-bounded-simple} and \Cref{lem:resilience-bounded-condition} provide a general approach to establish resilience property.
To use \Cref{lem:resilience-bounded-condition}, we need to verify \Cref{cond:pin-sequence}.
The following lemma provides useful structural properties to verify the condition.
Intuitively, the lemma says that for a set family, if every set in the family is large and the intersection of any two sets in the family is small, then the number of sets in the family is small.

\begin{lemma}\label{lem:subset-intersect-new}
  Let $p \ge 1$ and $q \leq k$
  such that $\frac{k}{p} + \frac{pq}{2} \le k$.
  For any set family $\+S \subseteq 2^{[k]}$ with ground set $[k]$ satisfying that
  $\abs{S} \ge \frac{k}{p} + \frac{pq}{2}$ for any $S \in \+S$ and
  $\abs{S \cap S'} \le q$ for any $S, S' \in \+S$,
  it holds that $\abs{\+S} \le p$.
\end{lemma}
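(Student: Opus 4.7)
I would prove \Cref{lem:subset-intersect-new} by contradiction: set $L = k/p + pq/2$ and $m = |\+S|$, assume $m \ge p+1$, and derive a contradiction. The key tool is a one-parameter strengthening of the order-$2$ Bonferroni inequality, obtained via random subsampling. For any $t \in [0,1]$, include each $S \in \+S$ in a random subfamily $\+T$ independently with probability $t$. Applying Bonferroni to $\bigl|\bigcup_{S \in \+T} S\bigr|$, taking expectations, and using the deterministic bound $\bigl|\bigcup_{S \in \+T} S\bigr| \le k$ yields
\begin{equation*}
  t\, m L \;-\; t^2 \binom{m}{2} q \;\le\; k \qquad \text{for every } t \in [0,1].
\end{equation*}
The two instantiations I would use are $t = 1$ (plain Bonferroni) and $t^\star = L/((m-1)q)$, the unconstrained maximizer of the left-hand side, which is a valid choice precisely when $t^\star \le 1$.

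The degenerate case $q = 0$ is immediate: the sets become pairwise disjoint of size $\ge k/p$, so $m \le p$. Assuming $q > 0$, I would split on whether $t^\star \le 1$. When $t^\star \le 1$, plugging $t = t^\star$ gives $m L^2 \le 2k(m-1)q$; but AM--GM applied to the very definition of $L$ gives $L^2 = (k/p + pq/2)^2 \ge 4 \cdot (k/p)(pq/2) = 2kq$, so $m L^2 \ge 2kqm > 2kq(m-1)$, a contradiction. When $t^\star > 1$, writing $r := m - p \ge 1$, the sub-case condition rearranges to $q < 2k / \bigl(p(p + 2r - 2)\bigr)$, and the $t = 1$ Bonferroni bound simplifies to $(p+r)k/p - (p+r)(r-1)q/2 \le k$. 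For $r = 1$ the $q$-term vanishes and this reads $(p+1)/p \le 1$, impossible; for $r \ge 2$ it forces $q \ge 2kr / \bigl(p(p+r)(r-1)\bigr)$. Combining the two bounds on $q$ in this sub-case forces $r(p+2r-2) < (p+r)(r-1)$, which rearranges to $r^2 - r + p < 0$---false for all integers $p \ge 1$, $r \ge 2$.

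The main obstacle is locating the right strengthening of Bonferroni. The $t = 1$ slice of the displayed inequality---plain Bonferroni---already handles $m = p + 1$, but leaves a window in which $m \ge p + 2$ is a priori consistent with the hypotheses; in particular, for $p \ge 3$ the Bonferroni bound alone does not suffice. The random-subsampling trick interpolates between the trivial disjoint-size count as $t \to 0$ and Bonferroni at $t = 1$, and the AM--GM identity $L^2 \ge 2kq$---a direct consequence of how $L$ is defined as $k/p + pq/2$---precisely closes the gap that plain Bonferroni leaves open. The remaining manipulations are routine algebra, and the hypothesis $L \le k$ is used only to ensure that sets of size $\ge L$ fit inside $[k]$ at all (otherwise $\+S = \emptyset$ trivially).
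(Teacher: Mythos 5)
Your subsampled--Bonferroni argument is sound, but it takes a genuinely different and heavier route than the paper. The paper never applies the order-$2$ bound to the whole family: it simply passes to a subfamily $\mathcal{S}'\subseteq\mathcal{S}$ of size exactly $p+1$ (size $\lceil p\rceil$ when $p$ is not an integer) and computes $\sum_{S\in\mathcal{S}'}|S|-\sum_{\text{pairs}}|S\cap S'|\ \ge\ (p+1)\bigl(\tfrac{k}{p}+\tfrac{pq}{2}\bigr)-\binom{p+1}{2}q=\tfrac{(p+1)k}{p}>k$, contradicting $|\bigcup_{S\in\mathcal{S}'}S|\le k$. So the ``window'' you describe for $p\ge 3$ and $m\ge p+2$ exists only because you feed the entire family to Bonferroni; truncating to a size-$(p+1)$ subfamily closes it with no subsampling, no optimization over $t$, and no case analysis. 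That said, your steps check out: the inequality $tmL-t^2\binom{m}{2}q\le k$ for $t\in[0,1]$ is valid, $t=t^\star$ gives $mL^2\le 2k(m-1)q$, AM--GM gives $L^2\ge 2kq$ (so case $t^\star\le 1$ is impossible), the $t=1$ bound simplifies as you state, and the final combination $r^2-r+p<0$ is indeed absurd. Your interpolation is in effect a fractional version of ``choose the right subfamily size,'' and the AM--GM identity nicely explains why the threshold $k/p+pq/2$ is the natural one; the paper's choice $j=p+1$ just does this optimization by hand in one line.

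There is one genuine gap as written: you implicitly treat $p$ (hence $r=m-p$) as an integer --- your contradiction hypothesis is $m\ge p+1$ and your case split is ``$r=1$'' versus ``integers $r\ge 2$.'' The lemma only assumes $p\ge 1$ real, and the paper applies it with non-integer $p$ (e.g.\ $p=k^{\eta/2}$, $p=\sqrt{2/\zeta}$, and $p=(2k^{4/5}-\sqrt{4k^{8/5}-6k})/3$), handling this case separately in its own proof via a subfamily of size $\lceil p\rceil$. For non-integer $p$, ruling out $m\ge p+1$ only yields $|\mathcal{S}|\le\lceil p\rceil$, not $|\mathcal{S}|\le p$: the configuration $m=\lceil p\rceil$ is untouched by your argument. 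The repair stays inside your framework: assume $m>p$ and set $r=m-p>0$ real. If $r\le 1$, the $t=1$ bound reads $(p+r)\tfrac{k}{p}-\tfrac{(p+r)(r-1)q}{2}\le k$, and since $r-1\le 0$ the subtracted term is nonpositive, forcing $(p+r)\tfrac{k}{p}\le k$, impossible; if $r>1$, your two bounds on $q$ combine exactly as before to $r^2-r+p<0$, false for all real $r>1$ and $p\ge 1$ (the case $t^\star\le 1$ never used integrality). With that adjustment the proposal is a complete, correct proof of the stated lemma.
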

\begin{proof}
  We first assume that $p$ is an integer. Suppose by contradiction that $\abs{\+S} \ge p+1$.
  Consider a sub-set-family $\+S' \subseteq \+S$ with $\abs{\+S'} = p+1$.
  On one hand, we have $\abs{\bigcup_{S\in \+S'} S} \le k$,
  since $\bigcup_{S\in \+S'} S \subseteq [k]$.
  On the other hand,
  \begin{equation*}
    \sum_{S_i\in \+S'} \abs{S_i} - \sum_{S,S'\in \+S'} \abs{S \cap S'}
    \ge (p + 1) \tuple{\frac{k}{p} + \frac{pq}{2}} - \binom{p + 1}{2} q = \frac{(p+1)k}{p} > k.
  \end{equation*}
  However, by the inclusion-exclusion principle,
  $k \geq \abs{\bigcup_{S\in \+S'} S} \ge \sum_{S\in \+S'} \abs{S} - \sum_{S,S'\in \+S'} \abs{S \cap S'}$,
  which yields a contradiction.
  The case when $p$ is not an integer can be proved similarly by considering a sub-set-family $\+S'$ of size $|\+S'| = \ceil{p}$.
  The contradiction follows since
  $$\sum_{S_i\in \+S'} \abs{S_i} - \sum_{S,S'\in \+S'} \abs{S \cap S'}
    \ge \ceil{p} \tuple{\frac{k}{p} + \frac{pq}{2}} - \binom{\ceil{p}}{2} q > \frac{\ceil{p}k}{p} > k.$$
  Combining the two cases completes the proof.
\end{proof}

\begin{corollary}\label{cor:bound-large-intersection}
  Given a $(k,d,s)$-CNF formula $\Phi=(V, \+C)$ and a clause $c^*$ with $|\vbl(c^*)| = k$,
  let $p \ge 1$ such that $\frac{k}{p} + \frac{ps}{2} \le k$ and
  $\widetilde{\+C}\triangleq \{c \in \+C \vert \abs{\vbl(c) \cap \vbl(c^*)} \ge \frac{k}{p} + \frac{ps}{2} \}$,
  it holds that $|\widetilde{\+C}| \le p$.
\end{corollary}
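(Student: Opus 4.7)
The plan is to deduce the corollary from \Cref{lem:subset-intersect-new} by defining, for each clause $c \in \widetilde{\+C}$, the trace $S_c \triangleq \vbl(c) \cap \vbl(c^*)$ and viewing $\{S_c\}_{c \in \widetilde{\+C}}$ as a set family over the ground set $\vbl(c^*)$, which we identify with $[k]$ since $|\vbl(c^*)| = k$.

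The bulk of the verification is routine. By the definition of $\widetilde{\+C}$, every $S_c$ has size at least $\frac{k}{p} + \frac{ps}{2}$. For any two distinct clauses $c, c' \in \widetilde{\+C}$, the definition of a $(k,d,s)$-CNF formula gives $|\vbl(c) \cap \vbl(c')| \le s$, hence
\begin{equation*}
  |S_c \cap S_{c'}| = |\vbl(c) \cap \vbl(c') \cap \vbl(c^*)| \le |\vbl(c) \cap \vbl(c')| \le s.
\end{equation*}
So the pairwise-intersection hypothesis of \Cref{lem:subset-intersect-new} is satisfied with $q = s$.

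The only subtle point, which I will handle first, is that \Cref{lem:subset-intersect-new} is stated for a set family (i.e., distinct subsets of $[k]$), so I must confirm the map $c \mapsto S_c$ is injective on $\widetilde{\+C}$. If $c \neq c'$ but $S_c = S_{c'}$, then on one hand $|S_c| \ge \frac{k}{p} + \frac{ps}{2} \ge \sqrt{2ks} \ge s$ by AM-GM (using $s \le k$, which may be assumed without loss of generality since clauses have only $k$ variables), while on the other hand $|S_c| = |S_c \cap S_{c'}| \le s$; this forces $|S_c| = s$ and, chasing the inequalities, contradicts the strict gap for any $p \ge 1$ with $\frac{k}{p}+\frac{ps}{2}$ strictly above $s$ whenever $s < k$ (the degenerate case $s = k$ makes the hypothesis $\frac{k}{p}+\frac{ps}{2} \le k$ force $p = 1$, for which the conclusion $|\widetilde{\+C}| \le 1$ is trivial). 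Hence the $S_c$ are pairwise distinct.

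Having established that $\+S \triangleq \{S_c : c \in \widetilde{\+C}\} \subseteq 2^{[k]}$ is a genuine set family with $|\+S| = |\widetilde{\+C}|$, minimum element size $\frac{k}{p} + \frac{ps}{2}$, and pairwise intersection at most $s$, \Cref{lem:subset-intersect-new} (applied with $q = s$, whose hypothesis $\frac{k}{p}+\frac{pq}{2} \le k$ is exactly the assumption of the corollary) immediately yields $|\+S| \le p$ and therefore $|\widetilde{\+C}| \le p$. The only place requiring any care is the injectivity argument sketched above; everything else is just unwinding definitions.
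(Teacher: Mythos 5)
Your proposal is correct and takes essentially the same route as the paper's proof: form the trace family $\{\vbl(c)\cap\vbl(c^*) : c\in\widetilde{\+C}\}$ over the ground set $\vbl(c^*)$, verify the traces are pairwise distinct because the threshold $\frac{k}{p}+\frac{ps}{2}$ exceeds $s$, and then invoke \Cref{lem:subset-intersect-new} with $q=s$; the paper proves the threshold inequality by splitting into $p\ge 2$ and $p\in[1,2)$, whereas you use AM--GM, an inessential difference. The only blemish is your $s=k$ aside: there the hypothesis $\frac{k}{p}+\frac{ps}{2}\le k$ admits no $p\ge 1$ at all (by AM--GM the left-hand side is at least $k\sqrt{2}$), so it does not ``force $p=1$''; but since that case is vacuous this slip does not affect correctness.
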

\begin{proof}
  Define the set
  \begin{equation*}
    \+S = \set{ \vbl(c)\cap \vbl(c^*) \mid c \in \+C \land \abs{\vbl(c)\cap \vbl(c^*)} \ge \frac{k}{p} + \frac{ps}{2} }.
  \end{equation*}
  To prove the lemma, we need to show that $|\+S| = |\widetilde{\+C}|$. We claim that for any $c \in \widetilde{\+C}$, the set $\vbl(c)\cap \vbl(c^*)$ are distinct. Since any two clauses in $\+C$ share at most $s$ variables, it suffices to show that $\frac{k}{p} + \frac{ps}{2} > s$. The inequality holds trivially when $p \geq 2$. Assume $p \in [1,2)$. The inequality is equivalent to $s < \frac{k}{p(1-p/2)}$, which holds because $s \leq k$ and $0 < p(1-p/2) \leq \frac{1}{2}$ when $p \in [1,2)$.
\end{proof}

In the following, we give a detailed analysis
for sublinear and linear intersection, respectively.

\subsection{Sublinear intersection with the local lemma condition} \label{sec:sublinear-intersection}

In this subsection,
we establish the resilience property of $(k, d, s)$-CNF formulas
with sublinear intersection, i.e., $s=o(k)$,
under the optimal local lemma condition.

\begin{proof}[Proof of \Cref{lem:resilience-bounded-sublinear}]
  If there exists a clause $c' \in \+C$ such that $\vbl(c') = \vbl(c^*)$ but $c' \neq c^*$,
  applying \Cref{lem:resilience-bounded-simple} with $s=k^{1-\eta}$,
  we have the $(k, d, s)$-CNF formula is $O_{k}(1)$-resilient
  since $k\ge \log d+ \log k + \log (2\mathrm{e}) +k^{1-\eta}$ holds.
  In the following, we only need to check~\Cref{cond:pin-sequence}.

  Applying \Cref{cor:bound-large-intersection} with $p=k^{\frac{\eta}{2}}$ and $s=k^{1-\eta}$,
  it holds that there are at most $k^{\frac{\eta}{2}}$ clauses in $\+C$ share
  at least $\frac{3}{2}k^{1-\frac{\eta}{2}}$ variables with $c^*$,
  for any $k$ satisfying that $\frac{3}{2}k^{1-\frac{\eta}{2}} < k$.
  Note that the condition $\frac{3}{2}k^{1-\frac{\eta}{2}} < k$ holds since $k > 2^{\frac{2}{\eta}}=O_\eta(1)$.
  Thus, \Cref{cond:pin-sequence} holds with $t_1=\frac{3}{2}k^{1-\frac{\eta}{2}}$ and $t_2=k^{\frac{\eta}{2}}$ if we assume the pinning sequence $(u_i,\tau_i)_{i=1}^\ell$ exists.
  By \Cref{lem:resilience-bounded-condition},
  since $k\ge \log d+\log k + \log (2\mathrm{e})+\frac{3}{2}k^{1-\frac{\eta}{2}}+k^{\frac{\eta}{2}}$ holds,
  the $(k, d, s)$-CNF formula is $O_{k,\eta}(1)$-resilient.

  The remaining task is to prove the existence of the pinning sequence.
  The process to find the pinning sequence is as follows:
  Sort the clauses $c \in \widetilde{\+C} = \{c \in \+C: \abs{\vbl(c) \cap \vbl(c^*)} \ge \frac{3}{2}k^{1-\frac{\eta}{2}} \}$ in the increasing order
  by the number of variables $|\vbl(c) \setminus \vbl(c^*)|$ that are not in $\vbl(c^*)$ (break ties arbitrarily).
  Say the ordering is $c_1, c_2, \ldots, c_t$, where $t = |\widetilde{\+C}|$.
  Since $\vbl(c) \neq \vbl(c^*)$ holds for all $c \in \+C$,
  the first clause $c_1$ must contain a variable $u_1\notin \vbl(c^*)$
  (pick an arbitrary variable if there are multiple)
  and we pin it with value $\tau_1$ that can satisfy the clause $c_1$.
  Suppose we have processed the clause $c_i$.
  We find the smallest $j > i$ such that the clause $c_j$ is not satisfied by the previous pinned variables.
  We claim that there must exists an unpinned variable $u_j$ such that $u_j \in \vbl(c_j) \setminus \vbl(c^*)$
  and we pin $u_j$ with value $\tau_j$ that can satisfy the clause $c_j$.
  Repeating this process until all clauses in $\widetilde{\+C}$ are satisfied.
  It is easy to see that the number of pinned variables is at most
  $ |\widetilde{\+C}|$.

  We now prove the existence of $u_j$ by contradiction.
  Suppose after pinning $1 \leq r < \lfloor k^{\frac{\eta}{2}} \rfloor$ variables,
  we need to process a clause $c_j$ but its unpinned variables are all in $\vbl(c^*)$, where $j  > r$.
  Thus, $c_j$ has at least $k - r$ variables in $\vbl(c^*)$
  and so does each of the previous clause $\{c_1, c_2, \ldots, c_{j-1}\}$ in the sequence due to the sorting.
  Note that $r \leq |\widetilde{\+C}| \leq k^{\frac{\eta}{2}} < k$ holds. Hence, $k-r \geq 0$.
  Let $S_i = \vbl(c_i) \cap \vbl(c^*)$.
  Consider a subset $\+C' \subseteq \set{c_1, \ldots, c_j}$ of size $r+1$.
  On the one hand, we have $\abs{\bigcup_{i:c_i\in\+C'} S_i} \le k$ by the definition of $S_i$.
  On the other hand, since $|S_i \cap S_{i'}| \leq s =  k^{1 - \eta}$ holds for any $i, i' \in [j]$, we have
  \begin{align*}
    \sum_{i:c_i\in\+C'} \abs{S_i} - \sum_{i<i':c_i,c_i'\in\+C'} \abs{S_i \cap S_{i'}}
     & \ge (r+1)(k - r) - \binom{r+1}{2} \cdot k^{1 - \eta} \\
     & = k + r \tuple{k - \frac{r + 1}{2} k^{1 - \eta} - (r + 1)} >k,
  \end{align*}
  where the last inequality holds because $r+1\le k^{\frac{\eta}{2}}$ and $k > 2^{\frac{2}{\eta}}$.
  However, by the inclusion-exclusion principle,
  $\abs{\bigcup_{i:c_i\in\+C'} S_i}
    \ge \sum_{i:c_i\in\+C'} \abs{S_i} - \sum_{i<i':c_i,c_i'\in\+C'} \abs{S_i \cap S_{i'}} > k$,
  which yields a contradiction.

  Finally, we put all the conditions together to obtain $k \geq 2^{\frac{2}{\eta}}$ and
  \begin{align*}
    k\ge \log d+\log k + \log (2\mathrm{e})+\frac{3}{2}k^{1-\frac{\eta}{2}}+k^{\frac{\eta}{2}} = \log d + O(k^{1 - \frac{\eta}{2}}) . & \qedhere
  \end{align*}
\end{proof}

\subsection{Linear intersection with relaxed local lemma conditions} \label{sec:linear-intersection}

In this subsection, we show how to relax the local lemma condition
so that the resilience property of $(k, d, s)$-CNF formulas holds
even if the size of intersection $s$ between clauses is linear in $k$.

\begin{proof}[Proof of \Cref{lem:resilience-bounded-linear}]
  If there exists a clause $c' \in \+C$ such that $\vbl(c') = \vbl(c^*)$ but $c' \neq c^*$,
  applying \Cref{lem:resilience-bounded-simple} with $s=\zeta k$,
  we have the $(k, d, s)$-CNF formula is $O_{k}(1)$-resilient
  when $k\ge \log d+ \log k + \log (2\mathrm{e}) +\zeta k$ holds,
  which is guaranteed by the condition $k \geq C \log d + o(k)$ and $C\ge \frac{1}{1-\zeta}$ in \Cref{lem:resilience-bounded-linear}.
  In the following, we assume that $\vbl(c) \neq \vbl(c^*)$ holds for any $c \in \+C$.

  We start with a simple analysis which works for all $\zeta \in (0, 1)$. Then, we give an improved analysis for $\zeta \in (0, \frac{1}{2})$. Assume $\zeta \in (0, 1)$.
  Observe that there is at most one clause that shares more than $\frac{1+\zeta}{2}k$ variables with $c^*$,
  since otherwise there exist two clauses sharing more than $\zeta k$ variables with each other.
  If such a clause does exist, denote it as $c_0$ and
  we pin a variable $u$ in $\vbl(c_0)\setminus \vbl(c^*)$ with value $\tau$ that can satisfy $c_0$ (such $u$ exists due to $\vbl(c_0) \neq \vbl(c^*)$).
  By \Cref{lem:local-uniformity} with $2^{k} \ge 2\e dk$,
  we have
  \begin{equation*}
    \Pr[X\sim \mu_{\Phi}]{X_u=\tau}
    \ge \tuple{1 - \frac{1}{2}\exp\tuple{\frac{1}{k}}}.
  \end{equation*}
  Conditioning on this pinning,
  the remaining clauses share at most $\frac{1+\zeta}{2}k$ variables with $c^*$.
  Therefore, while pinning $v^*_i$ with $\sigma^*_i$ from $i=1$ to $k$,
  the size of each clause is always at least $k - \frac{1+\zeta}{2}k -1$ and at most $k$.
  By \Cref{lem:local-uniformity} with $2^{k - \frac{1+\zeta}{2}k -1} \ge 2\e dk$,
  we have
  \begin{equation*}
    \Pr[X\sim \mu_{\Phi}]{X_{\vbl\tuple{c^*}} = \sigma^*
      \mid \bigwedge_{i=1}^\ell \tuple{X_{u_i} = \tau_i}}
    \ge \tuple{1 - \frac{1}{2}\exp\tuple{\frac{1}{k}}}^{k}.
  \end{equation*}
  Combining these two lower bounds,
  we have $\Phi$ is $\tuple{1 - \frac{1}{2}\exp\tuple{\frac{1}{k}}}^{k+1}$-resilient.
  To make all the conditions hold during the application of \Cref{lem:local-uniformity},
  we need $k \ge \frac{2}{1 - \zeta} \log d + \frac{2}{1-\zeta}\log (4\mathrm{e}k)$.

  Now, we show how to improve the local lemma condition for small $\zeta\in (0, \frac{1}{2})$
  using \Cref{cond:pin-sequence} and \Cref{lem:resilience-bounded-condition}.
  Applying \Cref{cor:bound-large-intersection}
  with $p=\sqrt{2 / \zeta}$ and $s=\zeta k$,
  it holds that there are at most $\sqrt{2 / \zeta}$ clauses in $\+C$ sharing at least
  $\sqrt{2\zeta}k$ variables with $c^*$.
  Thus, \Cref{cor:bound-large-intersection} holds with $t_1=\sqrt{2\zeta}k$ and $t_2=\sqrt{2 / \zeta}$ if we assume the pinning sequence $(u_i,\tau_i)_{i=1}^{\ell}$ exists.
  By \Cref{lem:resilience-bounded-condition},
  the $(k, d, s)$-CNF formula is $O_{k,\zeta}(1)$-resilient if $k\ge \log d+\log(2\mathrm{e}k)+\sqrt{2\zeta}k+\sqrt{2 / \zeta}$ holds.

  The remaining task is to prove the existence of the pinning sequence.
  The argument is the same as the proof of \Cref{lem:resilience-bounded-sublinear}.
  The only difference is how to show the contradiction.
  Now, we have
  \begin{align*}
    \sum_{i:c_i\in\+C'} \abs{S_i} - \sum_{i<i':c_i,c_i'\in\+C'} \abs{S_i \cap S_{i'}}
     & \ge (r+1)(k - r) - \binom{r+1}{2} \cdot \zeta k               \\
     & = k + r \tuple{\tuple{1-\frac{\zeta(r+1)}{2}}k - (r + 1)} >k,
  \end{align*}
  where the last inequality holds since $r + 1 \le \sqrt{2 / \zeta}$, $\zeta\cdot \sqrt{2 / \zeta} < 2$ and $k > \frac{2}{\sqrt{2 \zeta} - \zeta}$.

  Finally, we put all the conditions together. Note that $ \frac{1}{1-\zeta} \leq \frac{1}{1-\sqrt{2\zeta}}\le \frac{2}{1-\zeta}$ holds for $\zeta\in (0, 3-2\sqrt{2})$. Recall $C$ is defined in \eqref{eq:C-bounded-linear}. Hence, the final condition is $k > \frac{2}{\sqrt{2\zeta} - \zeta}$ and
  \begin{align*}
    k\ge C \log d + C \log(4\mathrm{e}k) + \frac{2 + \sqrt{2}}{\sqrt{\zeta}}.
  \end{align*}
  Note that both $\frac{2}{\sqrt{2\zeta} - \zeta}$ and $\frac{2 + \sqrt{2}}{\sqrt{\zeta}}$ are at most $O(\frac{1}{\sqrt{\zeta}})$. We have
  \begin{align*}
    k \geq C \log d + O\left(C \log k +\frac{1}{\sqrt{\zeta}}\right). & \qedhere
  \end{align*}
\end{proof}
\section{Resilience of random CNF formulas}

In this section, we establish the resilience property for random CNF formulas and prove \Cref{theorem:learning-random-cnf}.
We first give a formal definition of random CNF formulas.
\begin{definition}[Random $k$-CNF formulas]
  \label{def:random-cnf}
  A random $k$-CNF formula $\Phi(k, n, m)$ with $n$ variables and $m \triangleq \floor{\alpha n}$ clauses is generated by selecting $m$ clauses independently and uniformly at random from all possible clauses over $n$ variables, where $\alpha > 0$ is the density of the formula.
  \begin{itemize}
    \item The variable set is defined as $V = \set{v_1, v_2, \ldots, v_n}$.
    \item Each clause is generated independently as a disjunction of $k$ literals, where each literal is sampled uniformly at random with replacement from the set of all $2n$ possible literals
    We denote by $\+C_{\Phi}$ the set of clauses in the formula $\Phi$.
    \item We use $H_{\Phi}$ to denote the hypergraph associated with formula $\Phi$, where each variable is a vertex and each clause is a hyperedge connecting at most $k$ vertices.
    \item We use $G_{\Phi}$ to denote the line graph of $\Phi$, i.e., each vertex in $G_{\Phi}$ corresponds to a clause in $\Phi$ and there is an edge between two vertices in $G_{\Phi}$ if and only if the corresponding two clauses share at least one variable.
    For a clause $c \in \+C$, let $N(c) \triangleq \set{c^\prime \in \+C \mid \vbl(c) \cap \vbl(c^\prime) \neq \emptyset}$
    be the set of neighbors of $c$ in $G_{\Phi}$.
    For a subset of clauses $\+C' \subseteq \+C$, 
    let $N(\+C') \triangleq \bigcup_{c \in \+C'} N(c)\setminus \+C'$ 
    denote all the one-step neighbors of clauses in $\+C'$ excluding clauses in itself.
  \end{itemize}
\end{definition}
\begin{remark}
  Every clause is a disjunction of literals, where every literal is $x$ or $\neg x$ for some variable $x$. For example, the clause $x_1 \lor \neg x_2 \lor x_3$ has variables $\set{x_1, x_2, x_3}$ and literals set $\set{x_1, \neg x_2, x_3}$.
  We use $\vbl(c)$ to denote the set of variables that appear in clause $c$ and $\deg(v)$ to denote the degree of variable $v$ in formula $\Phi$, i.e., the number of clauses that contain variable $v$.

  Note that repetitions of variables in clauses are allowed. Accordingly, we extend the standard definition of a $k$-CNF formula and continue to refer to each generated random formula as a $k$-CNF formula even if some clauses contain fewer than $k$ distinct variables.
\end{remark}

Our proof is represented in the following roadmap.
\begin{itemize}
    \item We first list some properties of CNF formulas. A CNF formula is said to be well-behaved (\Cref{def:well-behaved-random-cnf}) if it satisfies these properties. We show that with high probability, random CNF formulas are well-behaved~ in \Cref{lemma:well-behaved-random-cnf}.
    \item Next, for any \emph{fixed} well-behaved CNF formula $\Phi$, we show that it satisfies the resilience property with a large enough $\theta$ (\Cref{lemma:valiant-algorithm-marginal-lower-bound}), and Valiant's algorithm can learn $\Phi$ exactly with desired sample and computational complexities.
\end{itemize}

\subsection{Good properties and well-behaved CNF formulas}

\subsubsection{Good properties of CNF formulas}

We first list some properties of random CNF formulas, 
for which a random CNF formula with high probability satisfies.
The first two properties bound the minimum size of each clause and the 
maximum intersection size between any two clauses.
\begin{property}[Bounded clause size]
  \label{property:clause-size}
  Let $\Phi = (V, \+C)$ be a $k$-CNF formula.
  For each clause $c \in \+C$, $|\vbl(c)| \ge k - 2$.
\end{property}
\begin{property}[Bounded intersection]
  \label{property:clause-intersection}
  Let $\Phi = (V, \+C)$ be a $k$-CNF formula.
  For every two distinct clauses $c, c' \in \+C$, $\abs{\vbl(c) \cap \vbl(c')} \le 3$.
\end{property}

By \Cref{property:clause-intersection}, every two clauses share at most $3$ variables. 
So, at first glance, 
it appears that one could apply an argument similar to that 
for bounded intersection CNF formulas in \Cref{sec:resilience-bounded-intersection} 
to establish the resilience property for random CNF formulas. 
However, a more careful analysis is required. 
This is because the analysis in \Cref{sec:resilience-bounded-intersection} also requires a local lemma condition 
but in a random CNF formula $\Phi$, 
some variables may have a very large degree depending on $n$. 
In fact, with high probability, the maximum degree of variables in $\Phi$ is $\Theta(\log n)$.
\begin{fact}[{\cite[Lemma A.1]{chenCountingRandomSAT2025}}]
  \label{prop:random-cnf-variable-degree}
  With probability $1 - o(1/n)$ over the random $k$-CNF formula $\Phi = \Phi(k, n, m)$, 
  with density $\alpha$, the maximum degree of variables in $\Phi$ is at most $6 \log n + 4 k \alpha$.
\end{fact}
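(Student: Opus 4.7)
The plan is to bound $\deg(v)$ for a single variable $v$ via a multiplicative Chernoff bound, then apply a union bound over the $n$ variables. Fix $v \in V$. For each $i \in [m]$, let $Y_i$ be the indicator of the event $v \in \vbl(c_i)$. Since the $k$ literals of clause $c_i$ are independent uniform draws from the $2n$ possible literals, a single literal equals $v$ or $\neg v$ with probability exactly $1/n$, so by a union bound over the $k$ literals $\mathbb{P}[Y_i = 1] \leq k/n$. The clauses are generated independently, so $\deg(v) = \sum_{i=1}^{m} Y_i$ is a sum of independent Bernoulli random variables with $\mu := \mathbb{E}[\deg(v)] \leq mk/n \leq k\alpha$.

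I would then invoke the standard multiplicative tail bound $\mathbb{P}[X \geq R] \leq (e\mu/R)^{R}$, valid for $R \geq \mu$. Setting $R := 6\log n + 4k\alpha$, we have $R \geq 4k\alpha \geq 4\mu$, so $e\mu/R \leq e/4 < 1$, and therefore
\begin{align*}
\mathbb{P}\left[\deg(v) \geq R\right]
  \leq (e/4)^R
  \leq (e/4)^{6\log_2 n}
  = n^{6 \log_2(e/4)}
  \leq n^{-3},
\end{align*}
where we used $\log_2(e/4) = \log_2 e - 2 < -1/2$. A union bound over the $n$ variables then gives $\mathbb{P}[\max_v \deg(v) \geq R] \leq n \cdot n^{-3} = n^{-2} = o(1/n)$, which is the claimed bound.

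The proof is essentially a routine Chernoff-plus-union-bound calculation, and there is no substantive obstacle. The only point requiring care is selecting the constants $(6, 4)$ so that (i) the coefficient on $k\alpha$ is strictly larger than $e$, ensuring $e\mu/R$ stays bounded away from $1$ uniformly in $n$, and (ii) the coefficient on $\log n$ is large enough relative to that choice for $(e/4)^{a\log_2 n}$ to beat the $n$ factor from the union bound. The values $(6, 4)$ provide ample slack; any pair $(a, b)$ with $b > e$ and $a\log_2(b/e) > 2$ would equally suffice, but the stated constants keep the statement clean and make the subsequent applications (e.g., establishing $d \lesssim \log n / \log \log n$ in regimes with $k\alpha = O(\log n / \log\log n)$) immediate.
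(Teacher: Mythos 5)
Your proof is correct. Note that the paper does not prove this statement itself --- it is imported as a fact from \cite[Lemma A.1]{chenCountingRandomSAT2025} --- and your argument (per-variable Chernoff bound with $\mu \le k\alpha$, the tail estimate $(\mathrm{e}\mu/R)^R \le (\mathrm{e}/4)^{6\log n} \le n^{-3}$, then a union bound over the $n$ variables) is exactly the standard calculation behind that cited lemma, with the constants verified correctly.
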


Analyzing the effect of high-degree variables is a challenging problem. 
We need to understand how the high degree vertices are distributed in the hypergraph $H_\Phi$ and 
how they affect the distribution of other variables.
A similar challenge arose in the previous works
\cite{heImprovedBoundsSampling2023,  chenCountingRandomSAT2025} to design sampling algorithms 
for random CNF formulas.
To tackle this challenge, previous works introduced a procedure 
\textsf{IdentifyBad}$(\Phi, \degreeConstant, \badclauseConstant)$
to find a set $\sbadvars$ of bad variables and a set $\sbadclauses$ of bad clauses.
Intuitively, 
$\sbadvars$ and $\sbadclauses$ are the set of variables and clauses 
that are significantly affected by the high degree variables.
To introduce this procedure, we define two thresholds.
Define a threshold $ \degreeConstant \alpha$ for high-degree variables, 
where $\degreeConstant$ is a constant to be determined.
Define another threshold $\badclauseConstant k$ to identify clauses 
that are significantly affected by high-degree variables, 
where $\badclauseConstant$ is another constant to be determined.
The procedure is given in \Cref{alg:identify-bad}. 
We remark that this procedure is only for analysis and will not be implemented in the learning algorithm.

\vspace{0.2cm}

\begin{algorithm}[H]
  \caption{\textsf{IdentifyBad}$(\Phi, \degreeConstant, \badclauseConstant)$ 
  \cite{heImprovedBoundsSampling2023, chenCountingRandomSAT2025}}\label{alg:identify-bad}
  \SetKwInOut{Input}{Input}
  \SetKwInOut{Output}{Output}
  \Input{a CNF $\Phi=(V, \+C)$, thresholds $\degreeConstant$ and $\badclauseConstant$;}
  \Output{a set of bad vertices $\sbadvars \subseteq V$ and a set of bad clauses $\sbadclauses \subseteq \+C$;}
  Initialize $\sbadvars \gets \set{v \in V \mid \deg(v) > \degreeConstant \alpha}$ 
  and $\sbadclauses = \emptyset$\;
  \While{$\exists c\in\+C \setminus\sbadclauses$ such that 
  $\abs{\vbl(c) \cap \sbadvars} > \badclauseConstant k$\label{line:identify-bad-while}}{
    Update $\sbadvars \gets\sbadvars \cup \vbl(c)$ and $\sbadclauses \gets \sbadclauses\cup\{c\}$;
  }
  \KwRet{$\sbadvars$ and $\sbadclauses$;}
\end{algorithm}

\vspace{0.2cm}

\Cref{alg:identify-bad} is a \emph{deterministic} procedure. In Line~\ref{line:identify-bad-while}, 
if there are multiple choices of $c$, 
we choose an arbitrary one (say, the smallest $c$ according to some ordering).
Define the set of good variables $\sgoodvars$ and the set of good clauses 
$\sgoodclauses$ as $\sgoodvars = V \setminus \sbadvars$ and $\sgoodclauses = \+C \setminus \sbadclauses$, 
respectively. The following observations are direct consequences of the above procedure.
\begin{observation}
  \label{observationL:good_variable_degree}
  For every good variable $v \in \sgoodvars$, it holds that $\deg(v) \le \degreeConstant \alpha$.
\end{observation}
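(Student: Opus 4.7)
The plan is to read off this claim directly from the structure of \Cref{alg:identify-bad}. First I would note that the initialization line sets $\sbadvars \gets \set{v \in V \mid \deg(v) > \degreeConstant \alpha}$, so at the very start of the procedure every variable of degree exceeding $\degreeConstant \alpha$ is already placed into $\sbadvars$. Next, I would observe that the while loop is monotone in $\sbadvars$: each iteration only executes the update $\sbadvars \gets \sbadvars \cup \vbl(c)$, which can only add variables to $\sbadvars$ and never remove them. Consequently the set $\sbadvars$ returned at termination is a superset of the initial set.

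From these two facts the claim is immediate: if $v \in \sgoodvars = V \setminus \sbadvars$ at the end of the procedure, then in particular $v$ was not in the initial $\sbadvars$, and hence $\deg(v) \le \degreeConstant \alpha$ by the definition of the initial set. I would write this as a one-line argument chaining ``initialization captures all high-degree variables'' with ``the while loop is monotone non-decreasing.''

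There is essentially no obstacle here; the only thing to be slightly careful about is to emphasize that $\deg(v)$ refers to the degree in the original formula $\Phi$, which is fixed throughout the run of \textsf{IdentifyBad} (the algorithm does not modify $\Phi$ or the degree function, it only partitions $V$ and $\+C$). Any downstream lemmas that want to exploit a uniform degree bound on $\sgoodvars$ — for instance, to apply the local lemma on the sub-formula induced by the good clauses — can then invoke this observation as a black box.
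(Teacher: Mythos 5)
Your argument is correct and matches the paper's reasoning: the paper states this observation as a direct consequence of \Cref{alg:identify-bad}, relying exactly on the facts you cite — that the initialization already places every variable with $\deg(v) > \degreeConstant\alpha$ into $\sbadvars$ and that the while loop only ever adds variables, so any $v \in \sgoodvars$ must have had degree at most $\degreeConstant\alpha$ in the (unchanged) formula $\Phi$. Nothing further is needed.
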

\begin{observation}
  \label{observationL:good_clause_variable_count}
  For every good clause $c \in \sgoodclauses$, 
  it holds that $(1 - \badclauseConstant) k \le \abs{\vbl(c) \cap \sgoodvars} \le k$.
\end{observation}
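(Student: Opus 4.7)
The plan is to verify both inequalities directly from the specification of \textsf{IdentifyBad}$(\Phi, \degreeConstant, \badclauseConstant)$ in \Cref{alg:identify-bad}; no probabilistic reasoning or appeal to any structural property of $\Phi$ is needed, since the observation is a purely syntactic consequence of how the algorithm terminates.

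For the upper bound, I simply note that $\vbl(c) \cap \sgoodvars \subseteq \vbl(c)$ and each clause is defined as a disjunction of $k$ literals, so $|\vbl(c)| \le k$, giving $|\vbl(c) \cap \sgoodvars| \le k$ immediately. For the lower bound, the key step is to invoke the termination condition of the \textbf{while} loop on Line~\ref{line:identify-bad-while}. Since $c \in \sgoodclauses = \+C \setminus \sbadclauses$, the clause $c$ was never added to $\sbadclauses$; in particular, at termination the guard $|\vbl(c) \cap \sbadvars| > \badclauseConstant k$ must fail for $c$, i.e.\ $|\vbl(c) \cap \sbadvars| \le \badclauseConstant k$. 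Writing the disjoint decomposition $\vbl(c) = (\vbl(c) \cap \sgoodvars) \sqcup (\vbl(c) \cap \sbadvars)$ and using $|\vbl(c)| = k$, I then obtain
\begin{equation*}
  |\vbl(c) \cap \sgoodvars| = |\vbl(c)| - |\vbl(c) \cap \sbadvars| \ge k - \badclauseConstant k = (1 - \badclauseConstant) k,
\end{equation*}
which is exactly the claimed lower bound.

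The only point that deserves care is the monotonicity of the loop: as iterations proceed, $\sbadvars$ can only grow, so a clause that once satisfied the $\le \badclauseConstant k$ bound might later be promoted into $\sbadclauses$. This is not a real obstacle, because the inequality we need concerns the \emph{final} value of $\sbadvars$ returned by the algorithm, and the loop exits precisely when every clause remaining outside $\sbadclauses$ already meets that bound. If one wished to drop the tacit assumption $|\vbl(c)| = k$ and accommodate the relaxed clause sizes from \Cref{property:clause-size}, the same argument would lose only a trivial additive constant, which is absorbed by the constants chosen later in the analysis. Overall, I do not anticipate any genuine difficulty here; the observation is essentially a restatement of the termination criterion of \Cref{alg:identify-bad}.
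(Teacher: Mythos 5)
Your proof is correct and is exactly the argument the paper intends: the paper states this observation without proof as a ``direct consequence'' of \Cref{alg:identify-bad}, and your derivation from the termination condition of the \textbf{while} loop (plus the trivial upper bound $|\vbl(c)|\le k$) is that intended argument. Your closing remark about clauses with fewer than $k$ distinct variables is also the right caveat, and it is indeed absorbed later, e.g.\ in the slack of \Cref{obs:goodclauses-bounds}.
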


The following property shows that for any connected clause sets with size at least $\log n$, 
it contains at most a linear fraction of bad clauses. 
Intuitively, it says that the high degree variables cannot make too many clauses bad for a fixed set of clauses. 
\begin{property}[Bounded bad clauses]
  \label{property:bounded-bad-clauses}
  Let $\Phi = (V, \+C)$ be a $k$-CNF formula and $\degreeConstant, \badclauseConstant, \eta$ be parameters.
  For any $\+C' \subseteq \+C$ of size $\abs{\+C'} \ge \log n$ connected in $G_\Phi$, it holds that
  $$\abs{\+C' \cap \sbadclauses} 
  \le \frac{12k^5}{(1-\eta)(\badclauseConstant - \eta)\degreeConstant} \abs{\+C'}.$$
\end{property}

We need a few other properties of random CNF formulas.
The following property is a standard bound on the ``growth rate'' of connected sets of clauses.
\begin{property}[Bounded growth rate]
  \label{property:bounded-growth-rate}
  Let $\Phi = (V, \+C)$ be a $k$-CNF formula.
  For every clause $c \in \+C$ and $\ell \ge 1$, 
  there are at most $n^3(\mathrm{e}k^2\alpha)^\ell$ many connected sets of clauses in $G_\Phi$ that 
  contain $c$ and have size $\ell$.
\end{property}
The next ``edge expansion'' property is also standard for random CNF formulas. 
We remark that this is a slightly stronger property than the one in 
{\cite[Property 3.5]{chenCountingRandomSAT2025}}. 
Roughly speaking, the property says a large subset of clauses should contain many distinct variables. 
\begin{property}[Edge expansion]
  \label{property:edge-expansion}
  Let $\Phi = (V, \+C)$ be a $k$-CNF formula and $\rho \in (0,1), \eta \in (0,1), B \geq 1$ be parameters.
  We say the CNF formula $\Phi$ satisfies the $(\rho, \eta, B)$-edge expansion property 
  if for any $\ell \le \rho \abs{\+C}$, any $\ell$ clauses $c_1, c_2, \ldots, c_\ell \in \+C$, 
  and any variable sets $S_1, S_2, \ldots, S_\ell$ satisfying that 
  $\forall i\in [\ell]$, $S_i\subseteq \vbl(c_i)$ and
  $\abs{S_i} \ge B$,
  it holds that
  \[\abs{\bigcup_{i\in [\ell]} S_i} > (1 - \eta)\cdot B\cdot \ell. \]
\end{property}

Finally, we introduce a novel structural property that 
characterizes the presence of clauses with degree-one variables in any small derived subformula.
Intuitively, this property ensures that within every small collection of clauses $\+C' \subseteq \+C$, 
there exist some clauses $c \in \+C'$ such that $c$ contains many variables $v$ 
such that $v$ only appears in $c$ but not in other clauses in $\+C' \setminus \set{c}$.
\begin{property}[Degree-one variable property]
  \label{property:degree-one-variable}
  Let $\Phi = (V, \+C)$ be a $k$-CNF formula and $\BetaConstant$ be a parameter in $(0, 1)$.
  We say the CNF formula $\Phi$ satisfies the 
  degree-one variable property with parameter $\beta$ 
  if for any subset of clauses $\+C' \subseteq \+C$ with size at least $2$ and at most $2 \log n$, 
  there exist at least two different clauses $c_1 ,c_2 \in \+C'$ such that
  $$\forall i \in \set{1, 2}, \quad  
  \abs{\vbl(c_i) \setminus \bigcup_{c' \in \+C' \setminus \set{c_i}} \vbl(c')} \ge \BetaConstant k.$$
\end{property}

\subsubsection{Well-behaved CNF formulas}

We now define well-behaved CNF formulas and then show that with high probability,
random CNF formulas we are interested in are well-behaved.
In \Cref{def:well-behaved-random-cnf}, 
$\degreeConstant, \badclauseConstant, \rho,\eta,\beta$ are parameters that appear  
in the definitions of good properties and $\zeta \in (0,1 - \badclauseConstant)$ is a new parameter.
\begin{definition}[Well-behaved random CNF formulas]
  \label{def:well-behaved-random-cnf}
  A $k$-CNF formula $\Phi = (V, \+C)$ is said to be $(k, \alpha, \degreeConstant, \badclauseConstant, \eta, \rho, \zeta, \beta)$-well-behaved if
  \begin{itemize}
    \item $\Phi$ satisfies \Cref{property:clause-size} (Bounded clause size);
    \item $\Phi$ satisfies \Cref{property:clause-intersection} (Bounded intersection);
    \item $\Phi$ satisfies \Cref{property:bounded-bad-clauses} (Bounded bad clauses)
    with parameters $\degreeConstant, \badclauseConstant, \eta$ when $\alpha > 1 / k^3$ (this property is not required when $\alpha \le 1 / k^3$);
    \item $\Phi$ satisifes \Cref{property:bounded-growth-rate} (Bounded growth rate);
    \item $\Phi$ satisifes \Cref{property:edge-expansion} (Edge expansion) 
    with two sets of parameters $(\rho, \eta, B_1)$ and $(\rho, \eta, B_2)$ 
    where $B_1 = k - 2$, $B_2 = (1 - \badclauseConstant - \zeta)k - 5 k^{4 / 5}$;
    \item $\Phi$ satisifes \Cref{property:degree-one-variable} (Degree-one variable property) with parameter $\BetaConstant$.
  \end{itemize}
\end{definition}

We set these parameters as follows:
\begin{equation}\label{eq:parameters}
\begin{aligned}
  \degreeConstant &= 12 k^{7}, \quad
  \badclauseConstant = k^{-1/5}, \quad
  \frozenEta = k^{-2/5}, \\[2pt]
  \frozenRho &= 2^{-k}, \quad
  \frozenConstant = 2 k^{-1/5}, \quad
  \BetaConstant = 1 - k^{-1/5}.
\end{aligned}
\end{equation}
The following lemma shows that with high probability,
random CNF formulas are well-behaved under the above parameter settings.
\begin{lemma}
  \label{lemma:well-behaved-random-cnf}
  Let $k \ge 10^5$ and $\alpha \le \frac{2^{k - 30 k^{4 / 5} \log k}}{2^{10} \e^2 k^8}$ be two constants.
  For any $n \geq n_0(\alpha, k)$, with probability $1 - o(1 / n)$, 
  the random formula $\Phi = \Phi(k, n, m)$ 
  is $(k, \alpha, \degreeConstant, \badclauseConstant, \frozenEta, 
  \frozenRho, \frozenConstant, \BetaConstant)$-well-behaved for parameters defined in \eqref{eq:parameters}.
\end{lemma}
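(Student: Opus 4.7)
The plan is to verify each of the six properties in \Cref{def:well-behaved-random-cnf} with failure probability $o(1/n)$ under the parameter setting~\eqref{eq:parameters}, and then take a union bound. Properties 1--5 follow from standard random $k$-CNF arguments (Property~3 via the \textsf{IdentifyBad} analysis of~\cite{chenCountingRandomSAT2025}), while Property~6 (the degree-one variable property) is novel and will be the main technical focus.

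For Properties 1 and 2, I would use direct first-moment calculations: the probability that a single random clause has fewer than $k-2$ distinct variables is $O(k^6/n^3)$ (needing at least three literal collisions), so a union bound over $m=\alpha n$ clauses gives $O(k^6/n^2) = o(1/n)$; two fixed clauses share at least $4$ variables with probability $O(k^8/n^4)$, so a union bound over $\binom{m}{2}$ pairs gives $o(1/n)$. For Property~3, I would invoke the analysis of \textsf{IdentifyBad} in~\cite{chenCountingRandomSAT2025} directly, which controls the cascading growth of bad clauses started from the (rare) high-degree variables. For Property~4, I would use a standard tree-encoding argument: the expected number of rooted spanning-tree encodings of size $\ell$ rooted at a fixed clause is at most $(\ell-1)!\,(\alpha k^2)^{\ell-1}$, and Markov's inequality absorbed by the $n^3$ slack gives the per-clause deterministic bound. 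For Property~5, for either value of $B$, I would fix a tuple $(c_1, \ldots, c_\ell, S_1, \ldots, S_\ell)$ with $|S_i| \ge B$, choose the target union $U$ of size at most $(1-\eta)B\ell$, and use $\Pr[S_i \subseteq U] \le \binom{k}{B}(|U|/n)^B$; summing with the density condition $\alpha \le 2^{k - \widetilde O(k^{4/5})}$ yields $o(1/n)$.

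The novel piece is Property~6. By \Cref{property:clause-size} and \Cref{property:clause-intersection}, for any $\+C'$ with $|\+C'|=\ell$ and any $c \in \+C'$, the ``private'' count $p_c := \bigl|\vbl(c) \setminus \bigcup_{c' \in \+C' \setminus \{c\}} \vbl(c')\bigr|$ satisfies $p_c \ge k - 2 - 3(\ell-1)$, so Property~6 holds trivially whenever $\ell \le (k^{4/5} - 2)/3 + 1$. For the remaining range $\ell \in \bigl[(k^{4/5}-2)/3 + 1,\, 2\log n\bigr]$, failure of Property~6 requires at least $\ell - 1$ clauses in $\+C'$ with $p_c < \BetaConstant k$; each such clause must share at least $k^{4/5} - O(1)$ variables with the union of the others, which by bounded intersection forces it to be adjacent in $G_\Phi[\+C']$ to at least $\lceil k^{4/5}/3\rceil - O(1)$ other clauses. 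Summing degrees yields $|E(G_\Phi[\+C'])| \ge M := \ell k^{4/5}/12$. I would then upper bound the expected number of triples $(\+C', E, \text{witnesses})$ where $E \subseteq \binom{\+C'}{2}$ has $|E| = M$ and each pair in $E$ is equipped with a shared witness variable. Since clauses are independent and a single clause contains any specific variable with probability at most $k/n$, this expectation is at most
\begin{equation*}
  \binom{m}{\ell}\binom{\binom{\ell}{2}}{M}\, n^M \,(k/n)^{2M}
  \;\le\; \Bigl(\tfrac{e\alpha n}{\ell}\Bigr)^{\!\ell} \Bigl(\tfrac{6e\, \ell\, k^{6/5}}{n}\Bigr)^{\!M}.
\end{equation*}

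The main obstacle is showing this sum is $o(1/n)$ uniformly in $\ell$. Taking logarithms, the dominating term is $(\ell - M)\log n$, and since $M - \ell \ge \ell k^{4/5}/24$ for $k \ge 10^5$, each fixed $\ell$ contributes polynomial decay in $n$. The remaining additive terms of order $O(M\log\log n + M\log k + \ell\log\alpha)$ are dominated provided $n \ge 2^{O(k^{4/5}\log k)}$, which is precisely the threshold $n_0(k,\alpha)$ in the lemma statement. Summing the resulting geometric bound over $\ell \in [(k^{4/5}-2)/3 + 1, 2\log n]$ gives $o(1/n)$, and a final union bound with Properties 1--5 completes the proof.
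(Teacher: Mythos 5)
Your plan for Properties 1--5 matches the paper (direct first-moment bounds for clause size and pairwise intersection, citing the \textsf{IdentifyBad} analysis of \cite{chenCountingRandomSAT2025} for bounded bad clauses, a standard encoding argument for bounded growth rate, and a first-moment bound for edge expansion), so the real question is your argument for \Cref{property:degree-one-variable}, and there the first-moment computation has a genuine gap. You count triples $(\+C', E, \text{witnesses})$ where each pair in $E$ is equipped with a shared variable, and you charge probability $(k/n)^{2}$ per witnessed edge, giving $(k/n)^{2M}$ overall. But the events you are paying for are indexed by \emph{clause--variable incidences}, not by edges: if several edges incident to the same clause $c$ carry the same witness variable, the probability that $c$ contains all its required witnesses is $(k/n)^{|W_c|}$ where $W_c$ is the set of \emph{distinct} witnesses at $c$, which can be far larger than $(k/n)^{\deg_E(c)}$. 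Concretely, take $\ell$ clauses all containing one common variable $v$ and otherwise disjoint: this yields $M=\binom{\ell}{2}$ witnessed edges, the true probability of the configuration is of order $(k/n)^{\ell}$, and the expected number of such structures is of order $n\,(\alpha k)^{\ell}/\ell!$, which for constant $\ell$ grows linearly in $n$ --- whereas your claimed bound $\bigl(\tfrac{e\alpha n}{\ell}\bigr)^{\ell}\bigl(\tfrac{6e\,\ell\,k^{6/5}}{n}\bigr)^{M}$ is $o(1)$ once $M>\ell$. (Such configurations do not actually violate the property, but your first moment counts them, so the union bound as written does not close.) The inequality fails precisely in the regime where a few variables are shared by many clauses, which is exactly the regime the property is about.

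The fix is to account by repeated variable occurrences rather than by adjacency edges, which is what the paper does: writing the $N=rk$ literal slots of $\+C'$ and letting $R_v$ be the multiplicity of $v$, failure of the property forces $\sum_v R_v\cdot[R_v\ge 2]\ge (r-1)(1-\BetaConstant)k$, and encoding the collision pattern by a forest on $[N]$ one pays a factor $1/n$ per excess occurrence and at most $\binom{\binom{N}{2}}{t}$ for the forest, giving a per-$\+C'$ bound of roughly $(2ekr/((1-\BetaConstant)n))^{(1-\BetaConstant)kr/4}$ that survives the union bound over all $\binom{m}{r}$ subsets for every $2\le r\le 2\log n$ (your ``trivial small-$\ell$ regime'' is then unnecessary, and no lower bound on $n$ beyond $n_0(k,\alpha)$ is needed). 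Your degree-count reduction (each deficient clause is adjacent to at least roughly $k^{4/5}/3$ others) is fine, but it should feed into this occurrence-counting, not into an edge-with-witness moment. A secondary slip: for \Cref{property:bounded-growth-rate} your stated expectation $(\ell-1)!\,(\alpha k^2)^{\ell-1}$ carries a spurious factorial; the standard encoding gives roughly $4^{\ell}(\alpha k^2)^{\ell-1}$ (ordered choices of clauses already absorb the $(\ell-1)!$), and with the factorial the bound $n^3(\mathrm{e}k^2\alpha)^{\ell}$ would fail for $\ell$ comparable to $\frozenRho\alpha n$, which is a range the paper actually uses.
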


\Cref{lemma:well-behaved-random-cnf} is proved by verifying the 
all properties hold with probability at least $1 - o(1/n)$. 
Most properties can be verified by either straightforward union bounds 
or the techniques in previous works. 
The only non-trivial property is the new degree-one variable property, 
which is proved in \Cref{lemma:random-cnf-degree-one-variable}. 
The verification of other properties together 
with the proof of \Cref{lemma:well-behaved-random-cnf} 
is deferred to \Cref{sec:proof-well-behavedness-random-cnf}.

\begin{lemma}
  \label{lemma:random-cnf-degree-one-variable}
  For any fixed $k$ and $\alpha$, let $\beta \in (0, 1)$.
  If $k > 8 / (1 - \beta)$ and $n$ are sufficiently large,
  for a random $k$-CNF formula $\Phi = \Phi(k, n, m)$, with high probability, 
  the following holds: every subset of clauses $\+C' \subseteq \+C_{\Phi}$ with size 
  at least $2$ and at most $2\log n$ satisfies that 
  there exist at least two different clauses $c_1 ,c_2 \in \+C'$ such that
  $$\forall i \in \set{1, 2}, \quad  \abs{\vbl(c_i) \setminus 
  \bigcup_{c' \in \+C' \setminus \set{c_i}} \vbl(c')} \ge \beta k.$$
\end{lemma}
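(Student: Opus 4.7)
The plan is a union bound over all size-$\ell$ subsets of $\+C_\Phi$ for each $\ell \in \set{2,\ldots,2\log n}$. Fix such a subset $\+C' = \set{c_1,\ldots,c_\ell}$; I will upper bound the probability that $\+C'$ violates the degree-one property and then sum the resulting bounds.

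First, a standard birthday-paradox bound shows that with probability $1-o(1/n)$ every $c \in \+C_\Phi$ satisfies $|\vbl(c)| \ge k-2$, so I condition on this event. Under this conditioning, failure of the property on a fixed $\+C'$ forces at least $\ell - 1$ of its clauses $c_i$ to satisfy $|\vbl(c_i) \cap \bigcup_{j\neq i}\vbl(c_j)| \ge (1-\beta)k - 2$. A double-counting argument (writing $t_v$ for the number of clauses in $\+C'$ that contain the variable $v$) converts these shared-incidence bounds into a lower bound on the pairwise-intersection total:
\[
\sum_{i=1}^\ell \bigl|\vbl(c_i) \cap \bigcup_{j\neq i}\vbl(c_j)\bigr|
\;=\; \sum_{v:\,t_v\ge 2} t_v
\;\le\; 2\sum_v \binom{t_v}{2}
\;=\; 2\sum_{i<j}|\vbl(c_i)\cap\vbl(c_j)|,
\]
so failure forces $Y := \sum_{i<j}|\vbl(c_i)\cap\vbl(c_j)| \ge Y_0 := \lceil (\ell-1)((1-\beta)k-2)/2\rceil$.

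The next step is to bound $\Pr[Y \ge Y_0]$ for the fixed subset. Since each literal is an independent uniform element, each of the $\binom{\ell}{2}k^2$ cross-clause literal-position pairs matches with probability $1/n$, and the factorial-moment estimate $\Pr[Y \ge Y_0] \le \mathbb{E}[\binom{Y}{Y_0}]$ reduces to summing $n^{-(V(T)-C(T))}$ over size-$Y_0$ sets $T$ of literal-pair events, where $V(T), C(T)$ are the vertex and component counts of the induced ``match graph'' on literal positions; the dominant contribution comes from matching-like $T$ and yields $\Pr[Y \ge Y_0] \le (C\ell k/n)^{Y_0}$ for a constant $C = C(\beta)$. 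Combining this with the $\binom{m}{\ell}\cdot\ell \le \ell(\alpha n)^\ell$ choices of subset and designated ``good'' clause gives a per-$\ell$ bound of $\ell\cdot\alpha^\ell\cdot(C\ell k)^{Y_0}\cdot n^{\ell-Y_0}$. Under $k > 8/(1-\beta)$ we have $(1-\beta)k - 2 > 6$, hence $Y_0 > 3(\ell-1)$, giving $\ell - Y_0 \le -2$ for every $\ell \ge 2$; a direct calculation shows the dominant term ($\ell = 2$) is $O(n^{-2})$, and summing over $\ell \in \set{2,\ldots,2\log n}$ gives $o(1/n)$.

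The step I expect to require the most care is the bound $\Pr[Y\ge Y_0] \le (C\ell k/n)^{Y_0}$: the individual match events are not independent, because distinct literal-position pairs can share a literal position, so the joint probability of a set of $Y_0$ such events drops more slowly than $(1/n)^{Y_0}$ when the associated match graph has cycles. The standard remedy is to split the sum over $T$ by the cycle rank of the match graph: matching (forest) configurations contribute exactly $(1/n)^{Y_0}$ and dominate in count, while configurations with $r$ extra cycles have joint probability $(1/n)^{Y_0 - r}$ but correspondingly fewer free position choices, so their total contribution is absorbed into the constant $C(\beta)$. The numerical threshold $k > 8/(1-\beta)$ is exactly what ensures $Y_0 \ge 3\ell - 2 \ge \ell + 2$ for $\ell \ge 2$, balancing the $(\alpha n)^\ell$ subset count against the per-subset probability decay.
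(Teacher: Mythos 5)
Your reduction to the necessary condition $Y \ge Y_0$ is fine, but the tail bound $\mathbb{P}[Y \ge Y_0] \le (C\ell k/n)^{Y_0}$ --- the step you yourself flag as delicate --- is false, and the cycle-rank splitting you sketch cannot rescue it. The joint probability of a set $T$ of cross-clause match events is $n^{-(V(T)-C(T))}$, so clique-like configurations are far cheaper than their edge count suggests: if the first positions of $t$ of the clauses in $\mathcal{C}'$ all carry the same variable (an event of probability exactly $n^{-(t-1)}$), then $Y \ge \binom{t}{2}$. Taking $t \approx \sqrt{2Y_0}$, which is possible once $\ell \gtrsim (1-\beta)k$ and hence well inside the required range $\ell \le 2\log n$, gives $\mathbb{P}[Y \ge Y_0] \ge n^{-O(\sqrt{Y_0})} \gg (C\ell k/n)^{Y_0}$; correspondingly, in the factorial-moment sum the high-cycle-rank (clique-heavy) $T$'s dominate rather than being ``absorbed into the constant.'' Worse, this is not merely a fixable estimate: with the true magnitude of $\mathbb{P}[Y\ge Y_0]$, the union bound over the roughly $(\alpha n)^{\ell}$ subsets diverges for $\ell = \Theta(\log n)$, so the event $\{Y \ge Y_0\}$ is genuinely too weak a surrogate for the bad event. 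The loss is introduced exactly where you pass from $\sum_{v: t_v \ge 2} t_v$ to $2\sum_v \binom{t_v}{2}$: a variable of multiplicity $t_v$ is weighted quadratically in the pairwise-intersection count, while its probability cost is only $n^{-(t_v-1)}$, linear in the multiplicity.

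The repair is to stay with the linear-in-multiplicity quantity, which is what the paper's proof does. The bad event forces $\sum_{v: t_v \ge 2} t_v \ge (\ell-1)(1-\beta)k$ when multiplicities are counted over literal positions (this also removes any need to condition on $|\vbl(c)| \ge k-2$), hence the number of coincidences $\sum_{v: t_v\ge 2}(t_v - 1)$ --- equivalently $\ell k$ minus the number of distinct variables, i.e.\ the edge count of a spanning forest of the equality graph on the $\ell k$ positions --- is at least $t \approx (1-\beta)(\ell-1)k/2$. Each forest edge costs a factor $1/n$, the number of forests with $i$ edges on $\ell k$ vertices is at most $\binom{\binom{\ell k}{2}}{i}$, and this yields a per-subset bound of order $\bigl(O(k\ell/n)\bigr)^{(1-\beta)(\ell-1)k/2}$, after which the union bound over subsets closes precisely under the hypothesis $k > 8/(1-\beta)$.
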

\begin{proof}
  For any subset of clauses $\+C' \subseteq \+C$ of size $2\le r\le 2 \log n$, 
  we define the bad event $\+B_{\+C'}$ as the event that 
  at least $r - 1$ clauses $c \in \+C'$ satisfy 
  $\abs{\vbl(c) \setminus \bigcup_{c' \in \+C' \setminus \set{c}} \vbl(c')} < \beta k$.
  Denote the variables in $\bigcup_{c \in \+C'} \vbl(c)$ by $x_1, \ldots, x_N$  (repetitions allowed),
  where $N = rk$.
  For $i \in [n]$, let $R_i$ be the number of occurrences of $v_i$ in $x_1, \ldots, x_N$.
    
  Construct a simple graph $H$ on the vertex set $[N]$ by connecting vertices $i$ and $j$ 
  iff $x_i = x_j$. 
  Each connected component of $H$ is either an isolated vertex or a clique of size $R_v \ge 2$ for some $v$. 
  Fix a total order on $[N]$, and perform the following procedure to construct a new graph $F_H$: 
  visit each clique according to the total order. 
  Within each clique, remove all edges and deterministically select a spanning tree 
  by choosing the smallest vertex in the component as the root
  and connecting every other vertex in the clique to it. 
  As a result, $F_H$ is a forest on $[N]$ with 
  \begin{equation*}
    \abs{E(F_H)} = \sum_{i = 1}^n (R_{v_i}- 1) \cdot \^1[R_{v_i} \ge 2] 
    \ge \frac{1}{2}\sum_{i = 1}^n R_{v_i} \cdot \^1[R_{v_i} \ge 2]
  \end{equation*}
  Note that if the event $\+B_{\+C'}$ occurs, then
  $\sum_{i = 1}^n R_{v_i} \cdot \^1[R_{v_i} \ge 2] \ge (r - 1) \cdot (1 - \beta)k$,
  since each clause in $c \in \+C'$ satisfying 
  $\abs{\vbl(c) \setminus \bigcup_{c' \in \+C' \setminus \set{c}} \vbl(c')} < \beta k$ 
  contains at least $(1-\beta)k$ variables 
  that appear at least twice in $\set{x_1, \ldots, x_N}$.
  Let $t = \ceil{(1 - \beta)(r - 1)k / 2}$.
  Then,
  \begin{equation*}
    \Pr{\+B_{\+C'}} \le \Pr{\abs{E(F_H)} \ge t}=\sum_{i=t}^{N-1}\Pr{\abs{E(F_H)} = i}.
  \end{equation*}
    
  Recall that each variable is drawn independently and uniformly from $\{v_1, \ldots, v_n\}$, 
  ignoring the sign of the variable. 
  Hence, there are $n^N$ possible assignments of variables in $\{v_1, \ldots, v_n\}$ to the $x_i$'s. 
  Since the number of connected components in $F_H$ is $N - \abs{E(F_H)}$, 
  there are at most $n^{N - \abs{E(F_H)}}$ distinct ways to assign variables to these components.
  Moreover, the number of forests with $N$ vertices with $\abs{E(F_H)}$ edges is at most 
  $\binom{\binom{N}{2}}{\abs{E(F_H)}}$.
  
  Therefore, for sufficiently large $n$ such that $\binom{N}{2} / [(t+1)n] \le 1/2$,
  we have
  \begin{align*}
    \Pr{\+B_{\+C'}} 
    &\le \sum_{i=t}^{N-1} \binom{\binom{N}{2}}{i} \cdot\frac{ n^{N - i}}{n^N}
    \le \binom{\binom{N}{2}}{t} \cdot n^{-t}
    \sum_{i=t}^{N-1} \tuple{\frac{\binom{N}{2}}{(t+1)n}}^{i-t}\\
    &\le 2\cdot \binom{\binom{N}{2}}{t}\cdot n^{-t}
    \le 2\cdot \tuple{\frac{\mathrm{e} N^2}{2tn}}^t \\
    &\le 2\tuple{\frac{\mathrm{e}}{1-\beta}\cdot \frac{2k r}{n}}^{k' r/2}
  \end{align*}
  where $k' = (1-\beta)k/2$.
  Taking the union bound over all subset $\+C' \subseteq \+C$ with size at most $2 \log n$, 
  we have
  \begin{align*}
    \Pr{\bigcup_{\substack{\+C' \subseteq \+C, \\ \abs{\+C'} \le 2 \log n}} \+B_{\+C'}} 
    &\le 2\sum_{r = 2}^{\floor{2 \log n}} \binom{m}{r} 
    \tuple{\frac{\mathrm{e}}{1 - \beta} \cdot \frac{2kr}{n}}^{k' r/2}
    \le 2\sum_{r = 2}^{\floor{2 \log n}}
    \tuple{\frac{\e\alpha n}{r}\cdot \tuple{\frac{\mathrm{e}}{1 - \beta} \cdot \frac{2kr}{n}}^{k'/2}}^{r}\\
    &=2\sum_{r = 2}^{\floor{2 \log n}} \tuple{C_0\cdot \tuple{\frac{r}{n}}^{k'/2 - 1}}^{r}
    =o\tuple{\frac{1}{n}},
  \end{align*}
  where $C_0 = \mathrm{e} \alpha (2 \mathrm{e} k / (1 - \beta))^{k' / 2}$ 
  is a constant depending only on $k, \alpha, \beta$,
  and the last equation follows from $k'/2-1 > 1$.
  This completes the proof.
\end{proof}

\subsection{Resilience property for well-behaved CNF formulas}

\subsubsection{Learning well-behaved CNF formulas (Proof of \Cref{theorem:learning-random-cnf})}

With \Cref{lemma:well-behaved-random-cnf}, 
we now focus on a \emph{fixed} well-behaved CNF formula $\Phi$.
We show that given $T$ independent samples from $\mu_{\Phi}$, 
Valiant's Algorithm outputs a random CNF formula $\widehat{\Phi}$ 
such that $\mu_{\Phi} = \mu_{\widehat{\Phi}}$ with high probability when $T = O_k(n^{\exp(-\sqrt{k})})$, 
where the probability is taken over the randomness of the $T$ samples. 
Note that the output formula $\widehat{\Phi}$ 
must be a $k$-CNF formula but the random CNF formula $\Phi$ 
may have clauses of size less than $k$. 
However, we can still show that two CNF formulas have the same set of satisfying assignments 
because every clause of size $k - i$ can be simulated by a set of clauses of size $k$.

To prove the sample complexity of Valiant's algorithm, the key is to verify the following resilience property for CNF formulas that are well-behaved.

\begin{lemma}\label{lemma:valiant-algorithm-marginal-lower-bound}
  Let $k \ge 10^5$, $\alpha \le \frac{2^{k - 30 k^{4 / 5} \log k}}{2^{10} \e^2 k^8}$ be two constants. For any $n \geq n_0(k,\alpha)$,
  any fixed CNF formula $\Phi$ with $n$ variables and $m = \floor{\alpha n}$ clauses that is 
  $(k, \alpha, \degreeConstant, \badclauseConstant, \frozenEta, \frozenRho, \frozenConstant, \BetaConstant)$-well-behaved, 
  where parameters are set in~\eqref{eq:parameters}, 
  $\Phi$ is $\theta$-resilient with $\theta = \frac{1}{8^k} \cdot n^{-\exp(-\sqrt{k})}$.
\end{lemma}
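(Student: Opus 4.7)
The plan is to fix an arbitrary clause $c^*\notin\+C$ with $\vbl(c^*)=\{v_1^*,\ldots,v_k^*\}$ and forbidden assignment $\sigma^*=(\sigma_1^*,\ldots,\sigma_k^*)$, assume $\Pr[X\sim\mu_\Phi]{X_{\vbl(c^*)}=\sigma^*}>0$, and expand by the chain rule:
\[
\Pr[X\sim\mu_\Phi]{X_{\vbl(c^*)}=\sigma^*}=\prod_{i=1}^k p_i,\qquad p_i\triangleq\Pr[X\sim\mu_\Phi]{X_{v_i^*}=\sigma_i^*\mid X_{v_j^*}=\sigma_j^*\ \forall j<i}.
\]
It then suffices to prove $p_i\ge\tfrac{1}{8}\cdot n^{-\exp(-k^{4/5})}$ for every $i\in[k]$: the product gives $\prod_i p_i\ge 8^{-k}\cdot n^{-k\exp(-k^{4/5})}$, and since $k\ge 10^5$ forces $k\exp(-k^{4/5})\le\exp(-\sqrt{k})$, the claimed resilience bound $\theta=8^{-k}\cdot n^{-\exp(-\sqrt{k})}$ follows.

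To lower-bound a single $p_i$, write $\pi$ for the distribution $\mu_\Phi$ conditioned on $X_{v_j^*}=\sigma_j^*$ for all $j<i$, and follow the two-stage scheme outlined in \Cref{sec:overview}. Before starting, I would run \textsf{IdentifyBad} on $\Phi$ to split variables and clauses into good and bad pieces, thereby localising the effect of high-degree variables and of the values pinned on $v_1^*,\ldots,v_{i-1}^*$. In the \emph{pre-revealing} stage I would perform a BFS exploration in $G_\Phi$ starting from the clauses touching $v_i^*$, sequentially revealing the values of good variables that are already satisfied by the growing partial assignment $X_S\sim\pi|_S$, and halting once the frontier forms a clause set $\+C'$ whose removal $d$-separates $v_i^*$ from the rest of the formula under $\pi$. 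Using \Cref{property:bounded-growth-rate}, \Cref{property:edge-expansion}, and \Cref{property:bounded-bad-clauses}, together with local uniformity (\Cref{lem:local-uniformity}) applied on the sub-formula restricted to good variables (where every clause has length $\ge(1-\badclauseConstant)k$ and we are comfortably in the local-lemma regime), I would argue that with constant probability $X_S$ is \emph{nice} in the sense that (i)~$|\+C'|\le\log n$ and (ii)~every clause $c\in\+C'$ still has at least $2k^{4/5}$ unrevealed variables. This is the content earmarked for \Cref{lemma:random-cnf-marginal-lower-bound-nice-probability}.

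In the \emph{conditional-revealing} stage I would condition on a nice $X_S$ and peel $\+C'$ clause by clause using \Cref{property:degree-one-variable}: at each peeling step, apply the property to the current residual clause set $\widehat{\+C}\subseteq\+C'$ (whose size lies in $[2,\log n]$) to obtain a clause $c\in\widehat{\+C}$ with at least $\BetaConstant k=(1-k^{-1/5})k$ degree-one variables with respect to $\widehat{\+C}$; after excising at most $k$ exceptional variables (the $v_j^*$'s and any variables already revealed in the pre-revealing stage), at least $k^{4/5}$ unrevealed degree-one variables remain, each of which independently satisfies $c$ with probability $\tfrac12$. Hence $c$ can be satisfied and removed with probability at least $1-2^{-k^{4/5}}$; iterating over at most $\log n$ clauses yields cumulative probability $\ge(1-2^{-k^{4/5}})^{\log n}\ge\tfrac12\cdot n^{-\exp(-k^{4/5})}$ for $k$ large, after which $v_i^*$ is isolated in the residual formula and takes $\sigma_i^*$ with probability $\ge\tfrac12$; the constant success probability of the pre-revealing stage is absorbed into the remaining $\tfrac18$ factor. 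This realises the plan earmarked for \Cref{lemma:random-cnf-marginal-lower-bound-nice}. The main technical obstacle will be the pre-revealing stage: balancing the stopping rule so that $\+C'$ is simultaneously small enough ($\le\log n$) and ``clean'' enough (each clause retains $\ge 2k^{4/5}$ unrevealed variables) requires combining edge-expansion and bounded-growth-rate arguments while coping with the non-product conditioning induced by $v_1^*,\ldots,v_{i-1}^*$; a further delicate point is to verify that the conditional-revealing stage never runs out of fresh degree-one variables even after discounting these prescribed vertices.
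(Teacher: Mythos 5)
Your route is the same as the paper's: chain rule over the $k$ coordinates of $c^*$, \textsf{IdentifyBad} augmented with extra bad clauses to absorb high-degree variables and the pinned $v_1^*,\dots,v_{i-1}^*$, a pre-revealing stage whose outcome is ``nice'' with probability at least a constant, and a conditional stage that peels the residual component clause by clause via \Cref{property:degree-one-variable} — this is exactly the decomposition into \Cref{lemma:random-cnf-marginal-lower-bound-nice-probability} and \Cref{lemma:random-cnf-marginal-lower-bound-nice}, and the per-coordinate bookkeeping ($\tfrac18$ from niceness, $(1-2^{-\Omega(k^{4/5})})^{\log n}$ from peeling, conversion to $n^{-\exp(-\sqrt k)}$ at the end) matches the paper's.

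There is, however, one genuine gap in the plan as stated. Your niceness requirement (ii) — \emph{every} clause of $\+C'$ retains at least $2k^{4/5}$ unrevealed variables — cannot be guaranteed, even with small probability. Since $c^*\notin\+C$, well-behavedness does not limit how many variables a clause of $\+C$ shares with $c^*$; a clause whose variable set is (almost) contained in $\{v_1^*,\dots,v_i^*\}$ and which is not satisfied by $\sigma^*_{\le i-1}$ is left with only a handful of unrevealed variables no matter how the pre-revealing is designed, and it necessarily lies in the component of $v_i^*$. The paper's \Cref{def:nice-pinning} therefore only demands the length bound for all but one exceptional clause (uniqueness of the exception comes from the pairwise intersection bound $\le 3$ \emph{within} $\+C$), and the conditional stage then needs a dedicated case analysis for the last clause $c^\sharp$: if its only unrevealed variable is $v_i^*$, one must invoke the feasibility of $\sigma^*$ (\Cref{condition:clause-no-forbid}, i.e.\ your standing assumption that the probability is positive) to conclude that it is satisfied by $\sigma_i^*$; without this, your final step ``$v_i^*$ is isolated and takes $\sigma_i^*$ with probability $\ge\tfrac12$'' can fail, since the surviving clause could a priori force $v_i^*\neq\sigma_i^*$. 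Relatedly, your count in the peeling step (``excise at most $k$ exceptional variables from the $\ge\BetaConstant k$ degree-one variables'') does not by itself yield $k^{4/5}$ fresh degree-one variables, because a clause may have up to $k-2k^{4/5}$ variables already revealed; the correct accounting, as in the paper, intersects the $\ge 2k^{4/5}-1$ unrevealed variables guaranteed by the alive-variable invariant (\Cref{obs:random-ksat-frozen-clause}) with the complement of the at most $k^{4/5}$ variables of degree larger than one. Both issues are repairable, and the paper's handling of $\intersectionclause$, the exceptional clause, and \Cref{condition:clause-no-forbid} is precisely the missing piece.
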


With \Cref{lemma:valiant-algorithm-marginal-lower-bound}, we can prove \Cref{theorem:learning-random-cnf}.

\learningRandom*

\begin{proof}
  By \Cref{lemma:well-behaved-random-cnf}, with probability $1 - o(1 / n)$ over the random formula $\Phi = \Phi(k, n, m = \floor{\alpha n})$, $\Phi$ is $(k, \alpha, \degreeConstant, \badclauseConstant, \frozenEta, \frozenRho, \frozenConstant, \BetaConstant)$-well-behaved.
  Assume that $\Phi = (V, \+C)$ is a well-behaved random CNF formula. 
  By \Cref{lemma:valiant-algorithm-marginal-lower-bound}, 
  $\Phi$ is $\theta$-resilient with $\theta = \frac{1}{8^k} \cdot n^{\exp(-\sqrt{k})}$.
  For each clause $c \in \+C$ with size less than $k$ (specifically, size $k-1$ or $k-2$), we can add all clauses of size $k$ that contain all literals in $c$ into $\+C$ without changing the set of satisfying assignments, which results in a $k$-CNF formula $\Phi'$ with at most $O_k(mn^2)$ clauses. Note that $\Phi$ and $\Phi'$ are equivalent in the sense that they have the same set of satisfying assignments.
  We can imagine that Valiant's algorithm learns the CNF formula $\Phi'$. 
  Combining the fact that $\Phi'$ is $\theta$-resilient and \Cref{pps:resilience}, the theorem follows.
\end{proof}

Thus, we only need to verify the resilience property for well-behaved CNF formulas.

\subsubsection{Verifying resilience property (Proof of \Cref{lemma:valiant-algorithm-marginal-lower-bound})}

Suppose we want to verify the resilience property of a fixed well-behaved CNF formula $\Phi=(V, \+C)$ 
with respect to a clause $c^*$ with $\vbl(c^*) = \set{v_1^*, \ldots, v_k^*}$ 
and forbidden assignment $\sigma^* = (\sigma_1^*, \ldots, \sigma_k^*) \in \set{\true, \false}^k$.
If there exists a clause $c \in \+C$ such that 
$c$ forbids a partial assignment on $\vbl(c) \subseteq \vbl(c^*)$ 
that is consistent with $\sigma^*$, 
then it follows directly that 
$\Pr[X\sim\mu_{\Phi}]{X_{\vbl(c^*)} = \sigma^*} = 0$.
Therefore, we assume that the fixed clause $c^*$ satisfies the following condition:
\begin{condition}
  \label{condition:clause-no-forbid}
  There does not exist a clause $c \in \+C$ such that $c$ forbids a partial assignment 
  on $\vbl(c) \subseteq \vbl(c^*)$ that is consistent with $\sigma^*$.
\end{condition}

Basically, the proof of the resilience property in \Cref{lemma:valiant-algorithm-marginal-lower-bound} 
follows from the chain rule of conditional probabilities.
For simplicity of notation, 
we use $\sigma^*_{\le i}$ to denote the pinning $\set[1 \leq j \leq i]{X_{v_j^*} = \sigma_j^*}$. 
By the chain rule of conditional probabilities, we have
\begin{equation*}
  \Pr[X\sim\mu_{\Phi}]{X_{\vbl(c^*)} = \sigma^*} 
  = \prod_{i = 1}^k \Pr[X\sim\mu_{\Phi}]{X_{v_i^*} = \sigma_i^* \mid \sigma^*_{\le i - 1}}.
\end{equation*}
If we can establish a lower bound for each conditional probability in the product, 
then we can consequently derive a lower bound for the marginal probability. 
We now state the following lemma.
\begin{lemma}
  \label{lemma:term-lower-bound}
  For any $i \in [k]$, it holds that
  \begin{equation}
    \label{eq:marginal-lower-bound}
    \Pr[X\sim\mu_{\Phi}]{X_{v_i^*} = \sigma_i^* \mid \sigma^*_{\le i - 1}} 
    \ge \frac{1}{8} \cdot \tp{1 - \tp{\frac{1}{2}}^{\frozenConstant k / 2 - 2}}^{\log n}.
  \end{equation}
\end{lemma}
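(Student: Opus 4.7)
The plan is to decompose the claimed conditional probability via the two-step revealing procedure sketched in \Cref{sec:overview} and then invoke the two forthcoming auxiliary lemmas (\Cref{lemma:random-cnf-marginal-lower-bound-nice-probability} and \Cref{lemma:random-cnf-marginal-lower-bound-nice}) for the two steps. Let $\pi$ denote the distribution $\mu_{\Phi}$ conditioned on $\sigma^*_{\le i - 1}$. First, I would run a BFS-style \emph{pre-revealing} process starting from $v_i^*$ that only reveals good variables and removes clauses already satisfied by the current partial assignment, stopping once the connected component $\+C' \subseteq \+C$ of $v_i^*$ in the residual hypergraph stabilises. Writing $S$ for the revealed set (so that $v_i^* \notin S$) and $\tau_S$ for the revealed values, the conditional-independence argument developed in \Cref{sec:conditional-independence} ensures that under $\pi(\cdot \mid X_S = \tau_S)$ the law of $X_{v_i^*}$ depends only on $\+C'$ and the variables unrevealed inside it.

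Call a realisation $\tau_S$ \emph{nice} if $|\+C'| \le \log n$ and all but a controlled number of clauses in $\+C'$ contain at least $2k^{4/5}$ unrevealed variables. \Cref{lemma:random-cnf-marginal-lower-bound-nice-probability} will give $\Pr[\pi]{X_S \text{ is nice}} \ge c_1$ for an explicit constant $c_1$, via local uniformity for good variables combined with \Cref{property:bounded-bad-clauses}, \Cref{property:bounded-growth-rate}, and \Cref{property:edge-expansion}. \Cref{lemma:random-cnf-marginal-lower-bound-nice} will then yield, for every nice $\tau_S$,
\[
\Pr[\pi]{X_{v_i^*} = \sigma_i^* \mid X_S = \tau_S}
\;\ge\; c_2\,\tp{1 - \tp{\tfrac{1}{2}}^{\frozenConstant k/2 - 2}}^{\log n},
\]
with constants satisfying $c_1 c_2 \ge \tfrac{1}{8}$, by iterating the degree-one variable property (\Cref{property:degree-one-variable}): in each round one locates a clause $c$ in the current subformula with at least $\BetaConstant k - (k - 2k^{4/5}) \ge \frozenConstant k/2 - 2$ unrevealed degree-one variables and satisfies $c$ by revealing those variables to their satisfying values, eliminating $c$ with probability $\ge 1 - (1/2)^{\frozenConstant k/2 - 2}$; at most $|\+C'| \le \log n$ rounds suffice, after which $v_i^*$ is isolated and takes the value $\sigma_i^*$ with probability $\tfrac{1}{2}$.

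Combining via the law of total probability yields
\begin{align*}
\Pr[\pi]{X_{v_i^*} = \sigma_i^*}
&\;\ge\; \Pr[\pi]{X_S \text{ nice}}\cdot \min_{\tau_S \text{ nice}} \Pr[\pi]{X_{v_i^*} = \sigma_i^* \mid X_S = \tau_S} \\
&\;\ge\; \tfrac{1}{8}\,\tp{1 - \tp{\tfrac{1}{2}}^{\frozenConstant k/2 - 2}}^{\log n},
\end{align*}
which is exactly~\eqref{eq:marginal-lower-bound}. The main obstacle is not this assembly but the two auxiliary lemmas themselves: one has to ensure that the niceness event is measurable with respect to $X_S$ alone so that the probabilities truly multiply, prove that the pre-revealing terminates while keeping most surviving clauses rich in unrevealed variables (an LLL-style argument restricted to good variables and exploiting the bounded-bad-clauses, growth-rate, and edge-expansion properties), and verify that $\BetaConstant, \frozenConstant, \badclauseConstant$ are tuned so that $\BetaConstant k - (k - 2k^{4/5})$ robustly dominates $\frozenConstant k/2 - 2$ throughout all $\log n$ rounds. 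Some additional bookkeeping will be needed when one of the already pinned variables $v_j^*$ with $j < i$ happens to lie among the degree-one variables of a chosen clause, which must be handled as a special sub-case before the final isolated-variable step.
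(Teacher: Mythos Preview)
Your proposal is correct and follows essentially the same route as the paper: decompose via a pre-revealing process, invoke \Cref{lemma:random-cnf-marginal-lower-bound-nice-probability} for the constant probability of a nice outcome and \Cref{lemma:random-cnf-marginal-lower-bound-nice} for the conditional lower bound given niceness, and multiply. The measurability concern you flag (that niceness must depend only on the revealed values so that the probabilities factor) is precisely what the paper formalises as the \emph{conditional Gibbs} property of the deterministic function $\Pinning$, which guarantees that conditioning on the revealing result $(\Lambda,\tau_\Lambda)$ is the same as conditioning on $Y_\Lambda=\tau_\Lambda$; with that in hand the paper's summation over $\+P_{\text{nice}}$ is exactly your $\min$-times-probability bound, with $c_1=\tfrac12$ and $c_2=\tfrac14$.
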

\noindent 
Recall that we set $\frozenConstant = 2 k^{-1/5}$ in \eqref{eq:parameters}.
Therefore,
assuming the correctness of \Cref{lemma:term-lower-bound} immediately implies the following lower bound
for sufficiently large $n$:
\begin{align*}
  &\Pr[X\sim\mu_{\Phi}]{X_{\vbl(c^*)} = \sigma^*} \ge \tp{\frac{1}{8} \cdot \tp{1 - \tp{\frac{1}{2}}^{\frozenConstant k / 2 - 2}}^{\log n}}^k \\
  =& \tp{\frac{1}{8} \cdot \tp{1 - \tp{\frac{1}{2}}^{k^{4 / 5} - 2}}^{\log n}}^k \ge \frac{1}{8^k} \cdot n^{-8k \cdot 2^{-k^{4 / 5}}} \ge \frac{1}{8^k} \cdot n^{-\exp(-\sqrt{k})}.
\end{align*}
This proves \Cref{lemma:valiant-algorithm-marginal-lower-bound}.
In the following, we focus on proving \Cref{lemma:term-lower-bound} for a fixed $i \in [k]$.

\subsubsection{Lower bound of conditional probability (Proof of \Cref{lemma:term-lower-bound})}

Consider a \emph{deterministic} function
\begin{align*}
  \Pinning: \set{(\tau,v) \in \set{\true, \false}^V \times V} 
  \to \set{(\Lambda,\tau_\Lambda) \mid \Lambda \subseteq V}
\end{align*} 
such that given an assignment $\tau$ on $V$ and a variable $v$ in $V$, 
it outputs a subset $\Lambda \subseteq V$ and the pinning $\tau_\Lambda$ on $\Lambda$ projected from $\tau$. 
We call such a $(\Lambda, \tau_\Lambda)$ a \emph{revealing result}.

Based on the deterministic revealing process $\Pinning$, 
we define a \emph{random} process where we first sample a random $Y$ and then apply $\Pinning(Y, v_i^*)$ to 
get a random revealing result $(S, Y_S)$.
\begin{enumerate}
  \item Draw a \emph{random} solution $Y\sim\mu^{\sigma^*_{\le i - 1}}_{\Phi}$, 
  where $Y \in \{\true, \false\}^V$ is a random assignment from $\mu_{\Phi}$ 
  conditioned on the partial assignment on $\set{v_1^*, \ldots, v_{i-1}^*}$ is fixed as $\sigma^*_{\le i - 1}$.
  \item Output the \emph{random} revealing result $(S, Y_S)= \Pinning(Y, v_i^*)$.
\end{enumerate}
\noindent
Let $\+P$ be the collection of all possible revealing results generated by the above random process, i.e.,
\begin{equation}\label{eq:revealing-results}
  \+P 
  \triangleq \set{\, \Pinning(\tau, v_i^*) \mid \tau\in \supp\tp{\mu^{\sigma^*_{\le i - 1}}_{\Phi}}\,}.
\end{equation}

\begin{definition}[Conditional Gibbs revealing process]
  \label{def:conditional-gibbs-revealing-process}
  The function $\Pinning$ is said to be a \emph{conditional Gibbs revealing process} with respect to $ \pi = \mu^{\sigma^*_{\le i - 1}}_{\Phi}$ if it satisfies the following 
  properties.
  \begin{itemize}
    \item With probability $1$, $v_i^* \notin S$, $\{v_1^*, \ldots, v_{i-1}^*\} \subseteq S$ and $\tau_{v_j^*} = \sigma_j^*$ for all $1 \leq j \leq i-1$.
    \item Let $(S,Y_S) = \Pinning(Y, v_i^*)$, where $Y \sim \pi$. It holds that for any $(\Lambda,\tau_\Lambda) \in \+P$, conditional on $(S,Y_S) = (\Lambda, \tau_\Lambda)$, $Y_{V \setminus \Lambda}$ follows the law of $\pi$ conditional on the configuration on $\Lambda$ being fixed as $\tau_\Lambda$. Formally, for any $x \in \set{\true, \false}^{V \setminus \Lambda}$ on $V \setminus \Lambda$,
    \begin{equation}
      \label{eq:conditional-gibbs-property}
     \Pr[Y\sim\pi]{Y_{V \setminus \Lambda} = x \mid \Pinning(Y, v_i^*) 
      = (\Lambda, \tau_{\Lambda})} = \Pr[Y\sim\pi]{Y_{V \setminus \Lambda} = x \mid Y_{\Lambda} = \tau_{\Lambda}} = \mu_{V \setminus \Lambda, \Phi}^{\tau_\Lambda}(x),
    \end{equation}
    where the last equation holds due to the first property.
  \end{itemize} 
\end{definition}

The above property says that for $Y \sim \mu^{\sigma^*_{\le i - 1}}_{\Phi}$, 
suppose the revealing process reveals a pinning on a subset $S = \Lambda$ 
where $\{v_1^*, \ldots, v_{i-1}^*\} \subseteq \Lambda$, 
then the unrevealed random assignment on $V \setminus \Lambda$ follows 
the law of $\mu^{\sigma^*_{\le i - 1}}_{\Phi}$ conditional on the revealed pinning on $\Lambda$, 
in other words, $Y_{V \setminus \Lambda} \sim \mu^{\tau_{\Lambda}}_{V \setminus \Lambda, \Phi}$. 
The property is satisfied by many natural revealing processes. 
For instance, we can reveal the values of variables in $Y$ one by one and always put any revealed variables 
into the subset $\Lambda$. 
The specific construction of $\Pinning$ that we use in the proof will be given later.

We define \emph{nice} revealing results as follows.
The nice revealing results enable us to establish a lower bound for the conditional probability in \Cref{lemma:term-lower-bound}.

\begin{definition}[Nice revealing result]
  \label{def:nice-pinning}
  A revealing result $(\Lambda, \tau_{\Lambda}) \in \+P$ is said to be \emph{nice} if the following conditions are satisfied.

  \begin{itemize}
    \item The variable $v_i^*$ has not been revealed; that is, $v_i^* \notin \Lambda$.
    \item All previously revealed variables are included, i.e., $\{v_1^*, \ldots, v_{i-1}^*\} \subseteq \Lambda$, and for each $1 \le j \le i-1$, it holds that $\tau_{v_j^*} = \sigma_j^*$.
    \item \label{item:nice-pinning-main} Let $\Phi'$ denote the CNF formula obtained from $\Phi$ by simplifying with respect to $\tau_{\Lambda}$,
    that is, by removing all variables in $\Lambda$ and all clauses satisfied by $\tau_{\Lambda}$.
    Then one of the following holds:
    \begin{enumerate}
      \item \label{item:nice-pinning-isolated} $v_i^*$ is an \textbf{isolated variable} in $\Phi'$, meaning that it does not appear in any clause of $\Phi'$; or
      \item $v_i^*$ is contained in some clause of $\Phi'$. 
      In this case, let $C'$ be the maximal connected component of the dependency graph $G_{\Phi'}$ such that $v_i^* \in \vbl(C')$.
      The following conditions must all be satisfied:
      \begin{enumerate}
        \item \label{item:nice-pinning-1} For all clauses $c \in C'$, except at most one clause, 
        it holds that $\abs{\vbl_{\Phi'}(c)} \ge \frozenConstant k - 1$;
        \item \label{item:nice-pinning-2} For the (at most one) exceptional clause $c' \in C'$ with $\abs{\vbl_{\Phi'}(c')} < \frozenConstant k - 1$;
        if $\vbl_{\Phi'}(c') = \{v_i^*\}$, the clause $c'$ is satisfied by $\sigma_i^*$;
        \item \label{item:nice-pinning-3} The size of $C'$ is bounded by $\abs{C'} \le \log n$.
      \end{enumerate}
    \end{enumerate}
  \end{itemize} 
\end{definition}

To prove the marginal lower bound for $v^*_i$, we establish the following two lemmas.
The first lemma says that, conditional on any nice revealing result, 
the conditional probability of $v_i^*$ being assigned to $\sigma_i^*$ is at least a constant.
The second lemma says there exists a revealing process such that with at least constant probability over $Y \sim \mu^{\sigma^*_{\le i - 1}}_{\Phi}$, the revealing result is nice.
Formally, assume that the CNF formula $\Phi$ is well-behaved and satisfies the conditions in \Cref{lemma:valiant-algorithm-marginal-lower-bound} and the fixed clause $c^*$ satisfies \Cref{condition:clause-no-forbid}. Then we have the following two lemmas.
\begin{lemma}
  \label{lemma:random-cnf-marginal-lower-bound-nice}
  For any nice revealing result $(\Lambda, \tau_{\Lambda})$, it holds that
  \begin{equation}
    \label{eq:nice-pinning-lower-bound}
    \Pr[Y\sim\mu^{\sigma^*_{\le i - 1}}_{\Phi}]{Y_{v_i^*} = \sigma_i^* \mid Y_{\Lambda} = \tau_{\Lambda}} 
    \ge \frac{1}{4} \cdot \tp{1 - \tp{\frac{1}{2}}^{\frozenConstant k / 2 - 2}}^{\log n}.
  \end{equation}
\end{lemma}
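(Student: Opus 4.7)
The plan is to split on whether $v_i^*$ is isolated in the simplified formula $\Phi'$ and, in the non-isolated case, run a telescoping argument on the component $C'$ using \Cref{property:degree-one-variable} to eliminate one clause at a time. When $v_i^*$ is isolated in $\Phi'$, the conditional Gibbs property of \Cref{def:conditional-gibbs-revealing-process} together with the factorization of the uniform distribution over satisfying assignments across components of $\Phi'$ gives $\Pr[Y\sim\mu^{\sigma^*_{\le i - 1}}_{\Phi}]{Y_{v_i^*}=\sigma_i^*\mid Y_\Lambda=\tau_\Lambda}=\tfrac12$, which already exceeds the required bound. So the interesting case is when $v_i^*$ lies in some clause; by niceness the component $C'$ of $v_i^*$ has size at most $\log n$ and contains at most one ``exceptional'' clause (of unrevealed-variable count below $\frozenConstant k-1$).

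In the interesting case I would maintain a nested sequence $\hat{\+C}_0=C'\supseteq \hat{\+C}_1\supseteq\cdots$ and, while $|\hat{\+C}_t|\ge 2$, apply \Cref{property:degree-one-variable} to $\hat{\+C}_t$ to obtain two distinct clauses $c_1,c_2\in\hat{\+C}_t$ each with at least $\BetaConstant k=(1-k^{-1/5})k$ variables that lie outside the union of the variables of the other clauses in $\hat{\+C}_t$. Since $C'$ has at most one exceptional clause, at least one of $c_1,c_2$, say $c_t$, is non-exceptional, and in $\Phi'$ it has at least $\frozenConstant k-1=2k^{4/5}-1$ unrevealed variables; hence the set $S_u^{(t)}$ of $c_t$'s degree-one-in-$\hat{\+C}_t$ variables that are neither in $\Lambda$ nor equal to $v_i^*$ has cardinality at least
\[
\BetaConstant k-(k-(\frozenConstant k-1))-1=(\BetaConstant+\frozenConstant-1)k-2=k^{4/5}-2=\frozenConstant k/2-2,
\]
matching the exponent in the target bound exactly. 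I then decompose the probability by the chain rule with respect to the batches of freshly revealed values and the events $E_t$ that $c_t$ is satisfied by them. Since the variables of $S_u^{(t)}$ are degree-one in $\hat{\+C}_t$, they appear in no other surviving clause of $\hat{\+C}_t$; combined with the conditional Gibbs property applied after restricting to the component $C'$ (which is all that matters for the marginal of $v_i^*$), the only way $E_t$ can fail is that every coordinate in $S_u^{(t)}$ takes the unique value that does not satisfy $c_t$. This yields $\Pr{E_t\mid E_1,\ldots,E_{t-1},Y_\Lambda=\tau_\Lambda}\ge 1-(1/2)^{|S_u^{(t)}|}\ge 1-(1/2)^{\frozenConstant k/2-2}$, and telescoping over the at most $|C'|-1\le \log n$ steps produces the $(1-(1/2)^{\frozenConstant k/2-2})^{\log n}$ factor of the lemma.

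The final ingredient is the base case, namely the last surviving clause $c_T$, which supplies the leading $\tfrac14$. I would split on the cases allowed by \Cref{def:nice-pinning}: if $v_i^*$ has become isolated after the iterative process then a single application of conditional Gibbs yields $\tfrac12$; if $c_T$ is the exceptional clause with only $v_i^*$ unrevealed then niceness forces $c_T$ to be satisfied by $\sigma_i^*$, so $Y_{v_i^*}=\sigma_i^*$ holds with probability $1$; and if $c_T$ still contains $v_i^*$ together with at least one other free unrevealed variable, a careful count recovers $\tfrac14$ by extracting one $\tfrac12$ from a companion variable whose revealed value satisfies $c_T$ (making $v_i^*$ free) and another $\tfrac12$ from the marginal of the now-free $v_i^*$. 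The main obstacle is precisely this base-case bookkeeping: one has to verify that in every branch permitted by \Cref{def:nice-pinning} the two $\tfrac12$ factors can be extracted without double-counting $v_i^*$, and that the conditional Gibbs argument at each intermediate step genuinely yields the clean $1-(1/2)^{|S_u^{(t)}|}$ probability despite variables potentially being constrained through already-satisfied clauses of $C'\setminus\hat{\+C}_t$.
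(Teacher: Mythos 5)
Your proposal is correct and follows essentially the same route as the paper's proof: the same isolated/non-isolated case split, the same iterative elimination of clauses of $C'$ driven by \Cref{property:degree-one-variable} (avoiding the single exceptional clause), the identical count $\BetaConstant k - (k-\frozenConstant k+1) - 1 = \frozenConstant k/2-2$ of usable unrevealed degree-one variables, the same one-sided ``monotone'' bound $1-(1/2)^{\frozenConstant k/2-2}$ per step telescoped over at most $\log n$ eliminations, and the same three-way analysis of the last clause (isolated, exceptional singleton forced to $\sigma_i^*$ by niceness, or a companion variable yielding the factor $\tfrac14$). The base-case bookkeeping you flag as the main obstacle is exactly what the paper's final case analysis carries out, and your outline of it is accurate.
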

\begin{lemma}
  \label{lemma:random-cnf-marginal-lower-bound-nice-probability}
  There exists a conditional Gibbs revealing process \emph{$\Pinning$} such that
  \begin{equation}
    \label{eq:nice-pinning-probability}
    \Pr[Y\sim\mu^{\sigma^*_{\le i - 1}}_{\Phi}]{
      \emph{\Pinning}(Y, v_i^*) \textnormal{ is nice}
    } \ge \frac{1}{2}.
  \end{equation}
\end{lemma}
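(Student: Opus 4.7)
The plan is to construct $\Pinning$ as a breadth-first exploration of the formula rooted at $v_i^*$. Initialize $\Lambda = \{v_1^*, \ldots, v_{i-1}^*\}$ with $\tau_\Lambda = \sigma^*_{\le i-1}$, and maintain a growing set $\currentComponent$ of ``frozen'' clauses. In each iteration, pick (in a canonical order determined by $(\Lambda, \tau_\Lambda)$) an unprocessed clause $c$ that lies in the BFS frontier — either $c$ contains $v_i^*$ or it shares an unrevealed good variable with some clause in $\currentComponent$ — and sequentially reveal the good variables of $c$. As soon as a revealed value satisfies $c$, drop $c$ and do not recurse through it; otherwise, once all of $c$'s good variables have been revealed without satisfying $c$, mark $c$ as frozen and continue the BFS through $c$'s unrevealed neighbors. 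Additionally, halt the exploration at any clause whose continued simplification would constrain $v_i^*$ to $\neg \sigma_i^*$. Every branching decision is a deterministic function of $(\Lambda, \tau_\Lambda)$, so the resulting map is a conditional Gibbs revealing process in the sense of \Cref{def:conditional-gibbs-revealing-process}: by the standard argument for sequential samplers, conditioned on the transcript of revealings the unrevealed coordinates remain distributed as $\mu^{\sigma^*_{\le i - 1}}_{\Phi}$ restricted to the pinning $\tau_\Lambda$.

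To establish \eqref{eq:nice-pinning-probability}, I would bound the failure probability for each niceness condition and union-bound. For the clause-size condition — that all but at most one clause of the component $C'$ has $\ge \frozenConstant k - 1$ unrevealed variables — the BFS is designed so that each frozen good clause retains at least its $\badclauseConstant k = k^{4/5}$ bad variables plus any good variables the BFS chose not to reveal; a suitable cap on the per-clause reveal budget (around $k - 2k^{4/5}$ reveals) leaves at least $\frozenConstant k - 1 \approx 2k^{4/5}$ unrevealed, while \Cref{property:bounded-bad-clauses} implies that at most one bad clause appears in any BFS trajectory of length $\ge \log n$. This is the content of \Cref{sec:lower-bound-good-variables}. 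For the component-size bound $|C'| \le \log n$, I would apply \Cref{property:bounded-growth-rate} to enumerate the possible connected clause sets of size $\ell$ anchored near $v_i^*$ and bound the probability that all $\ell$ clauses freeze by $(2^{-\Theta(k^{4/5})})^{\ell}$, obtained from applying local uniformity independently in each of the $\ell$ rounds of per-clause reveals — this is the content of \Cref{sec:size-of-associated-component}. The exceptional-unit-clause condition is handled by the built-in halt rule, whose failure probability is absorbed into the same $2^{-\Theta(k)}$ bounds. Summing the three contributions yields a failure probability at most $\tfrac{1}{2}$ for the parameters set in \eqref{eq:parameters}.

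The main obstacle I anticipate is calibrating the per-clause reveal budget so that both sub-lemmas close simultaneously. Revealing many good variables per clause raises the satisfaction probability and hence controls $|C'|$, but leaving $\ge \frozenConstant k - 1$ variables unrevealed caps the budget at about $k - 2k^{4/5}$. Showing that the resulting per-clause satisfaction probability of $1 - 2^{-\Theta(k^{4/5})}$ still beats the growth-rate factor $\mathrm{e}k^2\alpha$ in the regime $\alpha \le 2^{k - \widetilde{O}(k^{4/5})}$ is precisely the quantitative tradeoff captured by the parameter choices in \eqref{eq:parameters}, and it drives both the exponent $\widetilde{O}(k^{4/5})$ in the density assumption and the final $n^{\exp(-\sqrt{k})}$ resilience bound. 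A secondary subtlety is ensuring that the halt rule depends only on information in $(\Lambda, \tau_\Lambda)$, so that $\Pinning$ remains deterministic given its input and the conditional Gibbs property continues to hold.
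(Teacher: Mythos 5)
Your overall architecture (a deterministic revealing map that only exposes ``good'' variables, the conditional-Gibbs argument, and a growth-rate/union-bound tail estimate for the component around $v_i^*$) matches the paper's, but three concrete steps in your sketch do not close. First, your per-clause reveal budget only controls reveals made \emph{while processing that clause}; it does not prevent a clause in the final component from losing its unrevealed good variables as a side effect of processing its (possibly very many, since good variables have degree up to $\degreeConstant\alpha$) neighbors, so \Cref{item:nice-pinning-1} of \Cref{def:nice-pinning} is not guaranteed. The paper enforces this invariant globally through the notion of \emph{alive} variables (\Cref{definition:alive-variables} and \Cref{obs:random-ksat-frozen-clause}): a variable may be revealed only if every good clause containing it stays above the $\frozenConstant k$ threshold. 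Relatedly, your appeal to \Cref{property:bounded-bad-clauses} to get ``at most one bad clause per BFS trajectory'' misreads that property (it gives a small \emph{linear fraction}, not one), and the genuine ``at most one exceptional clause'' in \Cref{def:nice-pinning} is obtained differently: the variables of $c_0$ are declared bad and never revealed, and the unit-clause subcase is handled by \Cref{condition:clause-no-forbid}, so no ``halt rule'' (whose measurability you yourself flag) is needed at all.

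Second, your quantitative tail bound is mis-calibrated. You bound the freezing probability per clause by $2^{-\Theta(k^{4/5})}$ and hope this beats the growth-rate factor $\mathrm{e}k^2\alpha$; it cannot, since $\alpha$ may be as large as $2^{k-\widetilde{O}(k^{4/5})}$. The argument only works if each \emph{frozen} clause contributes a factor of roughly $2^{-(k-O(k^{4/5}))}$, i.e.\ nearly $k$ revealed forbidden-valued variables per frozen clause (the paper's $\revealedLowerbound$), and turning that into a valid union bound requires the machinery you omit: the component also contains blocked and bad clauses contributing no probability gain, so one needs the frozen/blocked/bad decomposition with \Cref{lemma:random-cnf-bounded-blocked-clauses} and \Cref{lemma:random-cnf-lowerbound-frozen-clauses}, the pruning step (\Cref{lemma:random-cnf-pruning-lemma}) to stay in the regime where edge expansion applies, \Cref{property:edge-expansion} to avoid double-counting revealed variables shared between frozen clauses, and the conditional local-uniformity bound of \Cref{lemma:local-uniformity-new} (conditioning away all bad variables) rather than applying local uniformity ``independently in each round'' of an adaptive process. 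As written, the proposal therefore has genuine gaps in both the construction of $\Pinning$ and the probability estimate.
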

\noindent
Assuming the correctness of these two lemmas, 
we can prove the lower bound \eqref{eq:marginal-lower-bound} 
in \Cref{lemma:term-lower-bound}.

\begin{proof}[Proof of \Cref{lemma:term-lower-bound}]
  Let $\Pinning$ be a conditional Gibbs revealing process in \Cref{lemma:random-cnf-marginal-lower-bound-nice-probability}.
  Let $\+P$ be the collection of all possible revealing results generated by $\Pinning$.
  Let $\+P_{\text{nice}} \subseteq \+P$ be the collection of nice revealing results.
  Let $Y$ be a random solution drawn from $\mu^{\sigma^*_{\le i - 1}}_{\Phi}$.
  By the law of total probability over the randomness of $Y$, we have
  \begin{align*}
    \Pr{Y_{v_i^*} = \sigma_i^* }
    &= \sum_{(\Lambda, \tau_{\Lambda}) \in \+P} 
    \Pr{Y_{v_i^*} = \sigma_i^* \mid \Pinning(Y, v_i^*) = (\Lambda, \tau_{\Lambda})} 
    \cdot \Pr{\Pinning(Y, v_i^*) = (\Lambda, \tau_{\Lambda})} \\
    &\overset{\eqref{eq:conditional-gibbs-property}}{=} \sum_{(\Lambda, \tau_{\Lambda}) \in \+P} 
    \Pr{Y_{v_i^*} = \sigma_i^* \mid Y_{\Lambda} = \tau_{\Lambda}} 
    \cdot \Pr{\Pinning(Y, v_i^*) = (\Lambda, \tau_{\Lambda})} \\
    &\ge \sum_{(\Lambda, \tau_{\Lambda}) \in \+P_{\text{nice}}} 
    \Pr{Y_{v_i^*} = \sigma_i^* \mid Y_{\Lambda} = \tau_{\Lambda}} 
    \cdot \Pr{\Pinning(Y, v_i^*) = (\Lambda, \tau_{\Lambda})} \\
    &\overset{\eqref{eq:nice-pinning-lower-bound}}{\ge} \frac{1}{4} \cdot \tp{1 - \tp{\frac{1}{2}}^{\frozenConstant k / 2 - 2}}^{\log n} \cdot \tp{\sum_{(\Lambda, \tau_{\Lambda}) \in \+P_{\text{nice}}} \Pr{\Pinning(Y, v_i^*) = (\Lambda, \tau_{\Lambda})}} \\
    &\overset{\eqref{eq:nice-pinning-probability}}{\ge} \frac{1}{8} \cdot \tp{1 - \tp{\frac{1}{2}}^{\frozenConstant k / 2 - 2}}^{\log n}.
  \end{align*}
  Since $\Pr[X\sim\mu_{\Phi}]{X_{v_i^*} = \sigma_i^* \mid \sigma^*_{\le i - 1}}=\Pr[Y\sim\mu^{\sigma^*_{\le i - 1}}_{\Phi}]{Y_{v_i^*} = \sigma_i^*}$,
  this proves the lemma.
\end{proof}

Our task is reduced to proving the two lemmas.
We now prove \Cref{lemma:random-cnf-marginal-lower-bound-nice}.
The main idea of our proof is to satisfy clauses in $C'$ one by one via pinning all degree-one variables 
(other than $v^*_i$) in these clauses to their satisfying values. 
Observe that all marginal probabilities involved in this process can be lower bounded by $1 / 2$ 
since the variable only appears in one clause, and we choose the satisfying value.
The reason that we can always find a clause with sufficient degree-one variables 
to proceed is due to the degree-one variable property of well-behaved CNF formulas 
and the definition of nice pinnings. 
Once all clauses in $C'$ are satisfied, 
the variable $v_i^*$ becomes a degree-one variable 
and we can pin it to $\sigma_i^*$ with probability $1 / 2$.
\begin{proof}[Proof of \Cref{lemma:random-cnf-marginal-lower-bound-nice}]
  Recall that $\Phi$ is a 
  $(k, \alpha, \degreeConstant, \badclauseConstant, \frozenEta, 
  \frozenRho, \frozenConstant, \BetaConstant)$-well-behaved CNF formula 
  and the fixed clause $c^*$ satisfies \Cref{condition:clause-no-forbid}.
  
  If $v_i^*$ is an isolated variable in $\Phi'$, then the marginal probability can be directly lower bounded by $1/2$ and the lemma follows.
  Thus, we focus on the case where $v_i^*$ is contained in some clauses of $\Phi'$.
  Let $(\Lambda, \tau_{\Lambda})$ be any nice revealing result and $\Phi'$ be the CNF formula simplified by $\tau_{\Lambda}$. 
  Since $v_i^* \notin \Lambda$, the variable $v_i^*$ must be in the simplified formula $\Phi'$.
  Let $C'$ be the maximal connected component in the dependency graph $G_{\Phi'}$ with $v_i^* \in \vbl_{\Phi'}(C')$.
  Let $c'$ be the (at most one) exceptional clause in $C'$ with $\abs{\vbl_{\Phi'}(c')} < \frozenConstant k - 1$. If no such clause exists, we simply ignore $c'$ in the following analysis.

  We first introduce a procedure for finding a sequence of variables to pin. \Cref{alg:iterative-elimination} takes as input the CNF formula $\Phi'$ and the variable $v_i^*$, and iteratively finds clauses in $C'$ with sufficiently many degree-one variables (other than $v_i^*$) to pin. In each iteration, it selects a clause $c^\circ$ with at least $\frozenConstant k / 2 - 1$ degree-one variables (other than $v_i^*$), collects these degree-one variables into a set $S_t$, and defines an assignment $\tau_{S_t}$ on $S_t$ that satisfies $c^\circ$. The clause $c^\circ$ is then removed from $\Phi'$. This process continues until all clauses in $C'$ are removed. The output of the algorithm is the sequences $(S_1, S_2, \ldots, S_T)$ and $(\tau_1, \tau_2, \ldots, \tau_T)$.
  We use $\deg_{\Phi'}(v)$ to denote the degree (number of clauses containing $v$) of variable $v$ in the formula $\Phi'$. 

  \vspace{0.2cm}

  \begin{algorithm}[H]
    \caption{\textsf{IterativeElimination}$(\Phi', v^*_i)$}
    \label{alg:iterative-elimination}

    \SetKwInOut{Input}{Input}
    \SetKwInOut{Output}{Output}

    \Input{a CNF formula $\Phi'$ with clause set $C'$, a variable $v^*_i$;}
    \Output{the sequences $(S_1, S_2, \ldots, S_T)$ and $(\tau_{S_1}, \tau_{S_2}, \ldots, \tau_{S_T})$;}

    $T \gets 0$;

    \While{$|C'| > 1$}{
        $T \gets T + 1$;
        
        Let $\mathcal{A} = \{\, c \in C': \left|\{\,x \in \vbl(c) : 
        \deg_{\Phi'}(x) = 1 \wedge x \neq v^*_i \,\}\right| \ge \frozenConstant k / 2 - 2\} \setminus \set{c'}$;
        \tcc{$c'$ is the exceptional small clause. Ignore if not exist.}

        Choose $c^\circ \in \mathcal{A}$ with the smallest index;

        Let $S_T \gets \{\, x \in \vbl(c^\circ) : \deg_{\Phi'}(x) = 1 \wedge x \neq v^*_i \,\}$;

        Let $\tau_T$ be the projection of the forbidden assignment of $c^\circ$ on $S_T$;

        Remove $c^\circ$ from $\Phi'$ and from $C'$;
    }

    \KwRet{$(S_1, S_2, \ldots, S_T)$ and $(\tau_1, \tau_2, \ldots, \tau_T)$;}
  \end{algorithm}
  \vspace{0.2cm}

  We first show that in each iteration of \Cref{alg:iterative-elimination}, 
  the set $\mathcal{A}$ is nonempty.
  By the definition of nice revealing results, 
  we have $\abs{C'} \le \log n$ (\Cref{item:nice-pinning-3} of \Cref{def:nice-pinning}).
  Therefore, during each iteration of the ``while'' loop, it holds that $2 \le \abs{C'} \le \log n$.
  Since \Cref{alg:iterative-elimination} never removes any variable from $\Phi'$,
  by \Cref{item:nice-pinning-1} of \Cref{def:nice-pinning},
  it holds that $\abs{\vbl_{\Phi'}(c)} \ge \frozenConstant k - 1$ 
  for any $c \in C' \setminus \set{c'}$ 
  throughout the entire process of \textsf{IterativeElimination}.
  Since $\Phi$ is well-behaved, 
  and letting $C'$ denote the subset of clauses specified in \Cref{property:degree-one-variable}, 
  there must exist a clause other than $c'$ that contains at least 
  $\BetaConstant k = k - k^{4/5}$ degree-one variables,
  and hence at most $k^{4/5}$ variables whose degree is larger than one.
  Consequently, this clause contains at least $\frozenConstant k - k^{4/5} - 1$ degree-one variables 
  with respect to $\Phi'$.
  Recalling that $\frozenConstant = 2k^{-1/5}$ in \eqref{eq:parameters}, 
  it contains at least $\frozenConstant k / 2 - 2$ degree-one variables other than $v_i^*$. 
  This confirms that $\mathcal{A} \neq \emptyset$.

  We next observe that once a variable is included in some $S_t$, 
  it no longer appears in any clause of the remaining formula, 
  since its degree in $\Phi'$ is one and the only clause containing it has already been removed. 
  Therefore, all subsets $S_t$ are disjoint.
  Furthermore, $v_i^*$ does not belong to any $S_t$ for all $t \in [T]$. 
  When only one clause remains in $C'$, the ``while'' loop terminates, 
  and we denote this remaining clause by $c^\sharp$.

  For clarity, we introduce some notation.  
  Let $\pi$ denote the distribution $\mu^{\sigma^*_{\le i-1}}_{\Phi}$.  
  For each $t \in [T]$, let $c_t^\circ$ denote the clause corresponding to the subset $S_t$.  
  Define $\+E_t$ as the event that there exists $w \in S_t$ such that $Y(w) \neq \tau_t(w)$,  
  which means that $Y$ satisfies the clause $c_t^\circ$ through some variables in $S_t$.
  By the chain rule of conditional probabilities,
  we have
  \begin{align*}
    &\Pr[Y\sim\pi]{Y_{v_i^*} = \sigma_i^* \mid Y_{\Lambda} = \tau_{\Lambda}} 
    \geq \Pr[Y\sim\pi]{\tp{Y_{v_i^*} = \sigma_i^*} \land \bigwedge_{t \in [T]} \+E_t \mid Y_{\Lambda} = \tau_{\Lambda}} \\
    =\,& \Pr[Y\sim\pi]{Y_{v_i^*} = \sigma_i^* \mid Y_{\Lambda} = \tau_{\Lambda}, \bigwedge_{t \in [T]} \+E_t}
    \cdot \prod_{t=1}^T \Pr[Y\sim\pi]{\+E_t \mid Y_{\Lambda} = \tau_{\Lambda}, \bigwedge_{j < t} \+E_j}.
  \end{align*}
  Note that since $\{v_1^*, \ldots, v_{i-1}^*\} \subseteq \Lambda$,  
  these variables are removed from $\Phi$. Moreover, for all $1 \le j \le i-1$, it holds that $\tau_{v_j^*} = \sigma_j^*$.  
  Therefore, all variables in the clause set $C'$ are free variables whose values remain unfixed under the distribution $\pi$.

  We lower bound the conditional probability of $\+E_t$'s as follows.
  For any $t \in [T]$, by the definition of the subset $S_t$ and the forbidden assignment $\tau_t$, 
  we claim that
  $$\Pr[Y \sim \pi]{Y_{S_t} = \tau_t \mid Y_{\Lambda} = \tau_{\Lambda}, \bigwedge_{j < t} \+E_j} 
  \le \tp{\frac{1}{2}}^{|S_t|}
  \le \tp{\frac{1}{2}}^{\frozenConstant k / 2 - 2}.$$
  The inequality holds because all variables in $S_t$ are degree-one variables in the formula obtained by 
  simplifying $\Phi$ with $\tau_{\Lambda}$ and removing all previous clauses $c_j^\circ$ for $j < t$.  
  Note that the condition $\bigwedge_{j < t} \+E_j$ ensures that all clauses $c_j^\circ$ for $j < t$ are satisfied.  
  For each degree-one variable $w \in S_t$, 
  the marginal probability that $w$ takes the forbidden value $\tau_t(w)$ is always at most $1/2$ under any conditioning.
  Combining this observation and the chain rule of conditional probabilities
  gives the first inequality.
  The second inequality follows from the fact that $|S_t| \ge \frozenConstant k / 2 - 2$. Hence
  \begin{align*}
    \Pr[Y\sim\pi]{\+E_t \mid Y_{\Lambda} = \tau_{\Lambda}, \bigwedge_{j < t} \+E_j} 
    \geq 1 - \tp{\frac{1}{2}}^{\frozenConstant k / 2 - 2}.
  \end{align*}

  Finally, we lower bound the conditional probability of $Y_{v_i^*} = \sigma_i^*$. 
  Given the condition $\bigwedge_{t \in [T]} \+E_t$, 
  all clauses except the last clause $c^\sharp$ are satisfied. 
  Furthermore, we can remove all variables in $\cup_{t \in [T]} S_t$,
  since they are no longer involved in any clause.
  The remaining formula $\Phi'$ consists of a single clause $c^\sharp$ and possibly several isolated variables.  
  We now analyze the marginal distribution of $v_i^*$ in this reduced formula, 
  considering the following three cases:
  \begin{itemize}
    \item If $v_i^* \notin \vbl_{\Phi'}(c^\sharp)$,  
    then $v_i^*$ is an isolated variable and assigned the value $\sigma_i^*$ with probability $\frac{1}{2}$.
    \item If $v_i^* \in \vbl_{\Phi'}(c^\sharp)$ and  
    $c^\sharp = c'$ is the exceptional clause in the definition of nice revealing results,  
    then there are two subcases:
    \begin{itemize}
      \item If $|\vbl_{\Phi'}(c^\sharp)| = 1$,  
      by \Cref{item:nice-pinning-2} of \Cref{def:nice-pinning}, $c'$ is satisfied by $\sigma_i^*$,  
      and thus $v_i^*$ is assigned the value $\sigma_i^*$ with probability $1$.
      \item If $|\vbl_{\Phi'}(c^\sharp)| > 1$,  
      we can pin another variable $w \neq v_i^*$ in $\vbl_{\Phi'}(c^\sharp)$ to satisfy $c^\sharp$,  
      which occurs with probability at least $\frac{1}{2}$.  
      Condition on this pinning, $v_i^*$ is assigned the value $\sigma_i^*$ with probability $\frac{1}{2}$.
      Thus, the probability that $v_i^*$ is assigned the value $\sigma_i^*$ is at least $\frac{1}{4}$.
    \end{itemize}
    \item If $v_i^* \in \vbl_{\Phi'}(c^\sharp)$ and $c^\sharp \neq c'$,  
    then $c^\sharp$ must contain at least $\frozenConstant k$ variables,  
    since no variable in $\bigcup_{t \in [T]} S_t$ belongs to $c^\sharp$.
    In particular, $|\vbl_{\Phi'}(c^\sharp)| > 1$.
    By the same argument as the previous case, 
    $v_i^*$ is assigned the value $\sigma_i^*$ with probability at least $\frac{1}{4}$.
  \end{itemize}
  Combining the above three cases, we have
  \begin{equation*}
    \Pr[Y\sim\pi]{Y_{v_i^*} = \sigma_i^* \mid Y_{\Lambda} = \tau_{\Lambda}} 
    \geq \frac{1}{4} \cdot \tp{1 - \tp{\frac{1}{2}}^{\frozenConstant k / 2 - 2}}^{\log n}. \qedhere
  \end{equation*}
\end{proof}

Now, the only thing left is to explicitly construct a conditional Gibbs revealing process $\Pinning$ 
that satisfies the desired property in \Cref{lemma:random-cnf-marginal-lower-bound-nice-probability}.

\subsection{Construction of revealing process}

In this subsection, we describe the revealing process $\Pinning$ (\Cref{alg:reveal}), 
which is used to prove \Cref{lemma:random-cnf-marginal-lower-bound-nice-probability}.  
We begin by considering two simple cases:  
(1) $\alpha < 1 / k^3$, or  
(2) there is no clause containing $v_i^*$ that remains unsatisfied under $\sigma^*_{\le i - 1}$.  
In either case, the revealing process simply returns the pinning $\sigma^*_{\le i - 1}$.

We then assume that $\alpha \ge 1 / k^3$ and there exists at least one clause 
containing $v_i^*$ that has not been satisfied by $\sigma^*_{\le i - 1}$ 
in the following analysis. 
Let $c_0$ denote the clause with the smallest index that contains $v_i^*$ 
and remains unsatisfied by $\sigma^*_{\le i - 1}$.  
Note that $c_0$ is fixed and does not depend on the randomness of the random process built upon $\Pinning$.  
We will consistently use $c_0$ to refer to this clause throughout the following analysis.

\paragraph{Modify bad variables}
Let $\Phi = (V, \+C)$ be a well-behaved CNF formula satisfying the condition in \Cref{lemma:well-behaved-random-cnf}.
Recall that $\sbadvars$ and $\sbadclauses$ are the output of 
$\textsf{IdentifyBad}(\Phi, \degreeConstant, \badclauseConstant)$, which is created by first adding high-degree variables and then recursively adding clauses that are significantly affected by high-degree variables.
Here, our goal is to analyze the conditional marginal probability $\oPr_{X \sim \mu_{\Phi}}[X_{v^*_i} = \sigma^*_i \vert \sigma^*_{\le i-1}]$, where the values of $v_j^*$ are fixed for $j \leq i - 1$. We need to take the effect of these variables into account.
Now, we slightly modify $\sbadclauses$ and $\sbadvars$ to obtain 
the final sets of bad variables and clauses that will be used in the analysis.
Define $\intersectionclause \subseteq \+C$ be the set of clauses that contain at least $2k^{4/5}$ variables in $\vbl\tp{c^*}$. 
We also regard these clauses as ``bad clauses'', helping us ensure that each clause contains a sufficient number of unrevealed variables after the revealing process.
We show that there are a small number of such clauses by providing an upper bound of $|\intersectionclause|$ using \Cref{cor:bound-large-intersection}.
Recall that any two clauses share at most $3$ variables, since $\Phi$ is well-behaved (by \Cref{property:clause-intersection}).  
We set the parameters in \Cref{cor:bound-large-intersection} as $q = 3$ and $p = (2k^{4/5} - \sqrt{4k^{8/5} - 6k})/3$, which is the smaller root of the equation $\frac{k}{p} + \frac{3p}{2} = 2k^{4/5}$.
By \Cref{cor:bound-large-intersection} and the fact that $p \ge 1$, $2k^{4 / 5} \le k$, let $\tilde{\+C} = \intersectionclause = \set{c \in \+C : \abs{\vbl(c) \cap \vbl(c^*)} \ge 2k^{4/5}}$ and we have the following bound on the size of $\intersectionclause$:
\begin{equation}\label{eq:bounded-intersection-clause}
  \abs{\intersectionclause} 
  \le \frac{2k^{4/5} - \sqrt{4k^{8/5} - 6k}}{3}
  \le k^{4/5} - 2,
\end{equation}
where the last inequality holds when $k \ge 5$.
Recall that $c_0$ denotes the clause with the smallest index that contains $v_i^*$ 
and remains unsatisfied under $\sigma^*_{\le i - 1}$.
Define
\begin{align}
  \label{eq:modify-bad}
  \begin{aligned}
    &\badvars \triangleq \sbadvars \cup \vbl(\intersectionclause) \cup \set{v^*_1, \ldots, v^*_{i - 1}}\cup \vbl(c_0),\quad
    &&\badclauses \triangleq \sbadclauses \cup \intersectionclause \cup \set{c_0},\\
    &\goodvars \triangleq V \setminus \badvars,\quad
    &&\goodclauses \triangleq \+C \setminus \badclauses,
  \end{aligned}
\end{align}
where $\vbl(\intersectionclause) = \bigcup_{c \in \intersectionclause} \vbl(c)$.
Compared to the original $\sbadvars$ and $\sbadclauses$ defined in \Cref{alg:identify-bad}, we further add $\intersectionclause$ and $c_0$ as the bad sets. Their variables are all treated as bad variables. Finally, all fixed variables $v^*_1, \ldots, v^*_{i - 1}$ are also treated as bad variables.

We introduce two notations to distinguish the good and bad variables appearing in a clause.  
For any clause $c \in \+C$, define
\begin{equation*}
    \vblg(c) \triangleq \vbl(c) \cap \goodvars 
    \quad \text{and} \quad
    \vblb(c) \triangleq \vbl(c) \cap \badvars.
\end{equation*}
We then have the following observation.
\begin{observation}\label{obs:goodclauses-bounds}
    For any $c \in \goodclauses$, it holds that
    $$\abs{\vblb(c)} \le \badclauseConstant k + 5k^{4 / 5} - 3, 
    \quad \abs{\vblg(c)} \ge (1 - \badclauseConstant) k - 5k^{4 / 5}.$$
    We denote $\goodlowerbound \triangleq (1 - \badclauseConstant) k - 5k^{4 / 5}$ as the lower bound of the number of good variables in any good clause.
  \end{observation}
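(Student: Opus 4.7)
The plan is a short bookkeeping argument: fix $c \in \goodclauses$ and bound the contribution to $\vblb(c)$ from each of the four sets whose union defines $\badvars$ in \eqref{eq:modify-bad}.

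First I would record that $c \in \goodclauses$ means $c \notin \sbadclauses$, $c \notin \intersectionclause$, and $c \neq c_0$. The termination condition of the while loop in \Cref{alg:identify-bad} (Line~\ref{line:identify-bad-while}) guarantees that every clause remaining outside $\sbadclauses$ satisfies $|\vbl(c) \cap \sbadvars| \leq \badclauseConstant k$, handling the first piece.

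Next I would use \Cref{property:clause-intersection} (any two distinct clauses of $\Phi$ share at most $3$ variables) together with the bound $|\intersectionclause| \leq k^{4/5} - 2$ from \eqref{eq:bounded-intersection-clause}. Since $c$ is distinct from every clause of $\intersectionclause$, a union bound gives
\[
|\vbl(c) \cap \vbl(\intersectionclause)| \;\leq\; \sum_{c' \in \intersectionclause} |\vbl(c) \cap \vbl(c')| \;\leq\; 3(k^{4/5} - 2).
\]
The same property applied to the single clause $c_0 \neq c$ yields $|\vbl(c) \cap \vbl(c_0)| \leq 3$. For the variables $\{v^*_1, \ldots, v^*_{i-1}\} \subseteq \vbl(c^*)$, the fact that $c \notin \intersectionclause$ forces $|\vbl(c) \cap \vbl(c^*)| \leq 2k^{4/5} - 1$, so $|\vbl(c) \cap \{v^*_1, \ldots, v^*_{i-1}\}| \leq 2k^{4/5} - 1$ as well.

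Summing the four estimates by the union bound gives
\[
|\vblb(c)| \;\leq\; \badclauseConstant k + 3(k^{4/5}-2) + 3 + (2k^{4/5}-1) \;=\; \badclauseConstant k + 5 k^{4/5} - 4,
\]
which is already stronger than the claimed $\badclauseConstant k + 5 k^{4/5} - 3$. The companion lower bound then follows immediately from $|\vblg(c)| = |\vbl(c)| - |\vblb(c)|$ and the clause-size lower bound $|\vbl(c)| \geq k-2$ from \Cref{property:clause-size}, giving $|\vblg(c)| \geq (1-\badclauseConstant)k - 5 k^{4/5} + 1$. There is no real obstacle here; the only thing to watch is using $c \notin \intersectionclause$ (rather than just $c \in \sgoodclauses$) when bounding the overlap with $\vbl(c^*)$, and applying $|\vbl(c)| \geq k-2$ rather than $|\vbl(c)| = k$ in the final subtraction.
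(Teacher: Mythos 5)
Your proposal is correct and follows essentially the same argument as the paper: the paper likewise decomposes $\badvars$ into $\sbadvars$, $\vbl(\intersectionclause \cup \{c_0\})$, and $\{v^*_1,\ldots,v^*_{i-1}\}$, bounds the pieces by $\badclauseConstant k$, $3(k^{4/5}-1)$, and $2k^{4/5}$ respectively, and obtains the good-variable lower bound by subtraction. The only caveat is your claimed improvement to $\badclauseConstant k + 5k^{4/5}-4$: from $c\notin\intersectionclause$ one only gets $\abs{\vbl(c)\cap\vbl(c^*)} < 2k^{4/5}$, which yields $\le 2k^{4/5}-1$ only when $2k^{4/5}$ is an integer, though in all cases the stated bound $\badclauseConstant k + 5k^{4/5}-3$ (and hence the observation) still follows.
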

  \begin{proof}
    To verify the upper bound of $\abs{\vblb(c)}$, 
    note that $c$ contains at most $\badclauseConstant k$ variables from $\sbadvars$; 
    there are at most $k^{4/5} - 1$ clauses in $\intersectionclause \cup \set{c_0}$ 
    and each of them shares at most $3$ variables with $c$; 
    and $c \notin \intersectionclause$ contains at most $2k^{4/5}$ variables from 
    $\set{v^*_1, \ldots, v^*_{i-1}}$. The upper bound is
    \begin{equation*}
      \badclauseConstant k + 3(k^{4/5} - 1) + 2k^{4/5} = \badclauseConstant k + 5k^{4/5} - 3.
    \end{equation*}
    The lower bound of $\abs{\vblg(c)}$ can be verified using $\goodvars = V \setminus \badvars$.
\end{proof}

\paragraph{Associated component}

The revealing results $(\Lambda,\tau_\Lambda)$ can be viewed as a partial pinning on $\Lambda$. 
To define the process $\Pinning$, we need to classify different types of clauses given a partial pinning.

Let $\sigma$ be an arbitrary partial pinning. 
We use $\Gamma (\sigma)$ to denote the set of variables that $\sigma$ is \emph{not} defined on. 
In other words, $\sigma \in \{\true, \false\}^{V \setminus \Gamma(\sigma)}$.
We say $c(\sigma) = \true$ iff clause $c$ is satisfied by the pinning $\sigma$.
Given a pinning $\sigma$, we are mainly interested in the unpinned variables in $\Gamma(\sigma)$.
For a clause $c \in \+C$,
let $\vbls(c) \triangleq \vbl(c) \cap \Gamma(\sigma), \vblsg(c) \triangleq \vbls(c) \cap \goodvars, \vblsb(c) \triangleq \vbls(c) \cap \badvars$ 
be the set of unpinned variables, good variables, and bad variables in $c$ under $\sigma$, respectively.
Define
$$N^{\sigma}(c) \triangleq \set{c^\prime \in \+C \mid c \neq c'  \land  \vbls(c^\prime) \cap \vbls(c) \neq \emptyset}$$
be the set of $c$'s neighbors through unpinned variables under $\sigma$.
For a subset of clauses $\+C' \subseteq \+C$, 
\begin{align*}
  N^{\sigma}(\+C') \defeq \set{c' \in \+C \setminus \+C' \mid \exists c \in \+C', \vbls(c) \cap \vbls(c') \neq \emptyset}.
\end{align*}
By definition, two clauses are viewed as connected if they share unpinned variables.

Next, we classify the clauses under the pinning $\sigma$.
For any clause $c$, we write $c(\sigma) = \true$ iff $c$ is satisfied by the pinning $\sigma$.
We are mainly interested in clauses with $c(\sigma) \ne \true$ because all satisfied clauses can be viewed as \emph{removed} under the pinning $\sigma$.
We first define the \emph{frozen} and \emph{blocked} clauses.
Intuitively, a clause is frozen if
it is a good clause but currently has only a small number of unpinned good variables.  
A clause is blocked means that 
although it has many unpinned good variables, 
all of them are ``frozen'' by some frozen clauses.  
Hence, this clause is said to be blocked by the frozen clauses.
The formal definitions are as follows.

\begin{definition}[Frozen and blocked clauses]
  \label{definition:frozen-and-blocked-clauses}
  For the parameter $\frozenConstant \in (0,1)$ in \eqref{eq:parameters} and a pinning $\sigma$, 
  we say a clause $c \in \goodclauses$ is \emph{frozen} if it satisfies that
  $c(\sigma) \ne \true$ and $|\vblsg(c)| \le \frozenConstant k$.
  Formally, let
  \begin{equation*}
    \frozenclauses \triangleq \set{c \in \goodclauses \mid \tp{c(\sigma) \ne \true} \land \tp{\abs{\vblsg(c)} \le \frozenConstant k}}.
  \end{equation*}
  A clause $c \in \goodclauses\setminus \frozenclauses$ is \emph{blocked} if it satisfies that
  $c(\sigma) \ne \true$ and
  for every $v \in \vblsg(c)$, there exists $c' \in \frozenclauses$ such that $v \in \vbl(c')$.
  Formally, let
  \begin{equation*}
    \blockedclause \triangleq 
    \set{c \in \goodclauses \setminus \frozenclauses \mid \tp{c(\sigma) \ne \true} \land
     \tp{\forall v \in \vblsg(c), \exists c' \in \frozenclauses \text{ s.t. } v \in \vbl(c')}}.
  \end{equation*}
\end{definition}
\noindent

The following quick observation follows from the definition.
\begin{observation}
  The three sets $\blockedclause, \frozenclauses$, and $\badclauses$ are pairwise disjoint.
\end{observation}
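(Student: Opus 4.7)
The plan is to verify pairwise disjointness by reading off the set containments directly from the definitions, with no real calculation required. The key observation is that each of the three sets carries an explicit membership constraint tying it to either $\goodclauses$ or $\badclauses$, and these constraints immediately rule out overlap.

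First I would note that $\goodclauses$ and $\badclauses$ are defined in \eqref{eq:modify-bad} as a partition of $\+C$, so $\goodclauses \cap \badclauses = \emptyset$. From the definition of $\frozenclauses$, every $c \in \frozenclauses$ lies in $\goodclauses$, hence $\frozenclauses \cap \badclauses = \emptyset$. Similarly, every $c \in \blockedclause$ lies in $\goodclauses \setminus \frozenclauses \subseteq \goodclauses$, hence $\blockedclause \cap \badclauses = \emptyset$.

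Finally, the disjointness of $\blockedclause$ and $\frozenclauses$ is built into the definition of $\blockedclause$, since it is defined as a subset of $\goodclauses \setminus \frozenclauses$. Combining these three observations gives the pairwise disjointness. The only possible pitfall is a reader missing that the qualifier ``$c \in \goodclauses \setminus \frozenclauses$'' in \Cref{definition:frozen-and-blocked-clauses} is essential rather than cosmetic, so I would make that explicit in the one-line proof. No nontrivial obstacle is anticipated; the statement is purely bookkeeping used to justify treating the three families independently in the subsequent analysis of the revealing process.
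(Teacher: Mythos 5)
Your proposal is correct and matches the paper, which states this observation without proof as an immediate consequence of the definitions: $\frozenclauses \subseteq \goodclauses$, $\blockedclause \subseteq \goodclauses \setminus \frozenclauses$, and $\goodclauses = \+C \setminus \badclauses$ give all three disjointness claims directly.
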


Now, for a bad clause $c$,
we use the following procedure to construct a connected component $\badInterior(c)$
of clauses that consist of all frozen, blocked, and bad clauses  
that are connected to $c$ through unpinned variables in $\Gamma(\sigma)$.
Moreover, let $\currentComponent(c)\triangleq \badInterior(c) \cup N^{\sigma}(\badInterior(c))$ 
be the set of clauses that
contains $\badInterior(c)$ together with all clauses that are one-step neighbors of $\badInterior(c)$  
through unpinned variables in $\Gamma(\sigma)$.

\begin{definition}[Associated component and its exterior]
    Given a pinning $\sigma$ and a bad clause $c$,
    its associated component $\badInterior(c)$ is constructed iteratively as follows:
    \begin{enumerate}
        \item Initialize $\badInterior(c) = \{c\}$.
        \item If a clauses $c \in \frozenclauses\cup \blockedclause\cup \badclauses$ satisfying $c \in N^{\sigma}(\badInterior(c))$, add $c$ into $\badInterior(c)$.
        \item Repeat this process until there is no such $c$.
    \end{enumerate}
    Let $\currentComponent(c)\triangleq \badInterior(c)\cup N^{\sigma}(\badInterior(c))$.
\end{definition}

Finally, we define a set of \emph{alive} variables.
Intuitively, a variable $v$ is alive means that after pinning $v$, 
each unsatisfied good clause still contains many unpinned good variables in $\goodvars$. 

\begin{definition}[Alive variables]
  \label{definition:alive-variables}
  For a pinning $\sigma$, we say a variable $v \in \Gamma(\sigma)$ is \emph{alive} if it satisfies that
  $v \in \goodvars$ and
  for every clause $c \in \goodclauses$ with $v \in \vbl(c)$,
  either $c(\sigma) = \true$, or
  $|\vblsg(c) \setminus \set{v}| > \frozenConstant k - 1$.
  Denote the set of alive variables by $\aliveVars$.
\end{definition}

We are now ready to present our specific revealing process $\Pinning$. Recall that $\Pinning$ is a deterministic process such that given any full assignment $\tau \in \set{\true, \false}^V$ on $V$, it outputs a subset $S \subseteq V$ and the partial assignment $\tau_S$ on $S$. In the following algorithm, we further assume $\tau$ is consistent with $\sigma^*_{\le i - 1}$, i.e., $\tau_{v_j^*} = \sigma_j^*$ for all $1 \leq j \leq i-1$.
The process is given in \Cref{alg:reveal}.

\vspace{0.2cm}

\begin{algorithm}[ht]
  \caption{\Pinning$(\tau, v_i^*)$}
  \label{alg:reveal}
  \SetKwInOut{Input}{Input}
  \SetKwInOut{Output}{Output}
  \Input{an assignment $\tau\in\set{\true, \false}^V$ consistent with $\sigma^*_{\le i - 1}$, a variable $v_i^*\in V$;}
  \Output{a set $S$ of variables, the partial assignment $\tau_S$ on $S$;}

  Initialize $S = \set{v_1^*,\ldots, v_{i-1}^*}$;

  \If{$\alpha < 1 / k^3$ \textnormal{ or there is no clause containing} $v_i^*$ \textnormal{that has not been satisfied by} $\tau_{S}$}{
      \KwRet{$\tp{S, \tau_S}$;}
  }

  Let $c_0$ be the minimum-index clause containing $v_i^*$ that has not been satisfied by $\tau_{S}$;

  Define bad (good) variables $\badvars (\goodvars)$ and clauses $\badclauses (\goodclauses)$ with $c_0$ as in \eqref{eq:modify-bad};

  Let $v = \nextVar$, which is defined as
  \begin{equation*}
      \nextVar \triangleq \begin{cases}
        v \in V_{\text{alive}}^{\tau_S} \cap \vbl\tp{\mathcal{C}_{\mathrm{ext}}^{\tau_S}(c_0)}
        & \text{if } V_{\text{alive}}^{\tau_S}\cap \vbl\tp{\mathcal{C}_{\mathrm{ext}}^{\tau_S}(c_0)}\neq \emptyset,\\
        \bot &\text{otherwise};
      \end{cases}
  \end{equation*}
  \tcc{Pick the vertex with the smallest index to break the tie.}

  \While{$v \neq \bot$}{
      $S \gets S \cup \set{v}$;

      $v\gets \nextVar$;
  }
  \KwRet{$\tp{S, \tau_S}$;}
\end{algorithm}

\vspace{0.2cm}
We first prove that \Cref{alg:reveal} is indeed a conditional Gibbs revealing process.

\begin{lemma}
  \label{lemma:random-cnf-reveal-is-gibbs}
  The revealing process \emph{$\Pinning$} in \Cref{alg:reveal} is a conditional Gibbs revealing process with respect to $ \pi = \mu^{\sigma^*_{\le i - 1}}_{\Phi}$.
\end{lemma}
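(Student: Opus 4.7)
The plan is to verify in turn the two properties of \Cref{def:conditional-gibbs-revealing-process}. The first property is immediate from inspection of \Cref{alg:reveal}: the set $S$ is initialized to $\{v_1^*,\ldots,v_{i-1}^*\}$ and, in the non-trivial branch, only grows by appending variables returned by $\nextVar$, which by \Cref{definition:alive-variables} must lie in $\goodvars$. Since $v_i^*\in\vbl(c_0)\subseteq\badvars$ by~\eqref{eq:modify-bad}, the variable $v_i^*$ is never alive, and hence $v_i^*\notin S$. The consistency $\tau_{v_j^*}=\sigma_j^*$ for $j<i$ follows from the standing assumption that the input $\tau$ agrees with $\sigma^*_{\le i-1}$.

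For the conditional Gibbs property~\eqref{eq:conditional-gibbs-property}, the approach is to show that \Cref{alg:reveal} is \emph{adaptive}, in the sense that every decision it makes depends only on the currently revealed partial assignment $\tau_S$ (together with the fixed inputs $\Phi$ and $v_i^*$), and not on any unrevealed coordinate of $\tau$. I would check this line by line: the early-termination test involves only the initial $\tau_S=\sigma^*_{\le i-1}$; the smallest-index choice of $c_0$ is a function of $\tau_S$; the partition into bad/good variables and clauses in~\eqref{eq:modify-bad} depends on $\Phi$ and $c_0$ alone; and, at each iteration, the sets $\frozenclauses[\tau_S]$, $\blockedclause$ (with $\sigma=\tau_S$), the associated component $\badInterior[\tau_S](c_0)$ together with its exterior $\currentComponent(c_0)$, and $\aliveVars$ (with $\sigma=\tau_S$) are all defined directly from $\tau_S$ and $\Phi$. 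Consequently $\nextVar=\textnormal{NextVar}(\tau_S,c_0)$ is a deterministic function of $\tau_S$ alone.

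With adaptivity in hand, the proof closes by the standard sequential-revelation argument. Fix $(\Lambda,\tau_\Lambda)\in\+P$ and let $w_1,\ldots,w_\ell$ be the order in which the variables of $\Lambda\setminus\{v_1^*,\ldots,v_{i-1}^*\}$ are appended to $S$ on some $Y\in\supp(\pi)$ that yields this output. Adaptivity implies that $w_1$ is a fixed function of $\sigma^*_{\le i-1}$, and inductively that $w_{j+1}$ and the eventual termination condition $\nextVar=\bot$ are fixed functions of $\sigma^*_{\le i-1}$ and $\tau_{w_1},\ldots,\tau_{w_j}$. Hence the entire execution trace, and in particular the output set $\Lambda$, is determined by $\tau_\Lambda$, so
\[
\{\,Y\in\supp(\pi):\Pinning(Y,v_i^*)=(\Lambda,\tau_\Lambda)\,\} \;=\; \{\,Y\in\supp(\pi): Y_\Lambda=\tau_\Lambda\,\},
\]
which immediately gives~\eqref{eq:conditional-gibbs-property}.

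The main obstacle is not conceptual but bookkeeping: one must verify that every auxiliary object accessed by \Cref{alg:reveal} — in particular $\badInterior$, $\currentComponent$, the frozen/blocked classifications, and $\aliveVars$ — depends purely on the pinning $\sigma=\tau_S$ and on $\Phi$, without any reference to unrevealed coordinates of $\tau$. I would dispose of this by a short, definition-chasing verification directly from \Cref{definition:frozen-and-blocked-clauses} and \Cref{definition:alive-variables}, after which adaptivity holds uniformly across all iterations and the Gibbs property follows as above.
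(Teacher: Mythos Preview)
Your proposal is correct and follows essentially the same approach as the paper: both reduce the conditional Gibbs property to the set equality $\{\Pinning(Y,v_i^*)=(\Lambda,\tau_\Lambda)\}=\{Y_\Lambda=\tau_\Lambda\}$, which is established via the adaptivity observation that every decision in \Cref{alg:reveal} depends only on the already-revealed partial assignment $\tau_S$. Your write-up is more explicit in verifying the first property and in the definition-chasing for adaptivity, but the underlying argument is the same.
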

\begin{proof}

Let $Y \sim \pi$. Let $(S, Y_S) = \Pinning(Y, v_i^*)$. 
We need to show that, conditional on $(S, Y_S) = (\Lambda, \tau_{\Lambda})$, 
$Y_{V \setminus \Lambda}$ follows the law of $\pi$ 
conditioned on the assignment of $\Lambda$ being fixed as $\tau_{\Lambda}$. 
Let $(\Lambda, \tau_{\Lambda})$ be a possible output of the algorithm.
To this end,
we only need to show that $\Pinning(Y, v_i^*)$ outputs $(\Lambda, \tau_{\Lambda})$ if and only if $Y_{\Lambda} = \tau_{\Lambda}$. 

Note that once \Cref{alg:reveal} needs to reveal the value of $\tau_w$ for some vertex $w$, 
it must hold that $w \in S$. 
Therefore, if $Y_{\Lambda} = \tau_{\Lambda}$, 
then although $Y_{V \setminus \Lambda}$ remains random, 
the entire execution of the algorithm becomes deterministic 
and outputs $(\Lambda, \tau_{\Lambda})$.
Conversely, if the algorithm outputs $(\Lambda, \tau_{\Lambda})$, 
it is straightforward to verify that $Y_{\Lambda} = \tau_{\Lambda}$.
\end{proof}

Recall that our goal is to construct a specific revealing process  
$\Pinning$ such that the revealing result is nice with high probability  
(\Cref{lemma:random-cnf-marginal-lower-bound-nice-probability}).  
We can now prove this lemma for the easy case where  
$\alpha < 1 / k^3$ or there is no clause containing $v_i^*$ that has not been satisfied by $\sigma^*_{\le i - 1}$.

\begin{proof}[Proof of \Cref{lemma:random-cnf-marginal-lower-bound-nice-probability} for easy case]
  \label{proof:random-cnf-marginal-lower-bound-nice-probability-easy-case}
  Note that if $\alpha < 1 / k^3$ or there is no clause containing $v_i^*$,
  the returned revealing result $(\Lambda, \tau_{\Lambda})$ in the above procedure 
  is simply $(\{v_1^*, \ldots, v_{i-1}^*\},\sigma^*_{\le i - 1})$. 
  We show that this revealing result is nice. 
  For both cases, 
  the conditions that $v_i^* \notin \Lambda$, $\{v_1^*, \ldots, v_{i-1}^*\} \subseteq \Lambda$, 
  $\tau_{v_j^*} = \sigma_j^*$ for all $1 \leq j \leq i-1$ hold directly. 
  It suffices to verify \Cref{item:nice-pinning-main} in \Cref{def:nice-pinning}. 
  Let $\Phi' = (V', \+C')$ be the CNF formula simplified by $\tau_{\Lambda}$, i.e., removing all clauses satisfied by $\tau_{\Lambda}$ and removing all variables in $\Lambda$ from the remaining clauses.

  On the one hand,  
  if $v^*_i$ is not contained in any clause that has not been satisfied by $\sigma^*_{\le i - 1}$, 
  then $v_i^*$ is an isolated variable in $\Phi'$.  
  Hence, the returned revealing result is nice.  

  On the other hand, if $\alpha < 1 / k^3$,  
  we assume that there exists a clause $c' \in \+C'$ containing $v_i^*$  
  that is not satisfied by $\sigma^*_{\le i - 1}$  
  (otherwise, $v_i^*$ would also be an isolated variable in $\Phi'$, 
  and the returned revealing result would again be nice).
  Since $\Phi$ is well-behaved, any two clauses share at most three variables by \Cref{property:clause-intersection}.  
  This implies that there is at most one clause in $\+C$ containing more than $\frac{2}{3}k$ variables from the set $\set{v_1^*, \ldots, v_{i-1}^*}$.  
  Consequently, \Cref{item:nice-pinning-1} in \Cref{def:nice-pinning} holds,  
  as after simplification, every other clause in $\Phi'$ contains 
  at least $k - \frac{2}{3}k - 2 = \frac{k}{3} - 2 > \frozenConstant k - 1$ variables.

  For the (at most one) exceptional clause $c' \in \+C'$, if $\vbl(c') = \set{v_i^*}$,  
  we claim that $c'$ is satisfied by $\sigma_i^*$.  
  Suppose otherwise. Since $\vbl(c') \subseteq \set{v_1^*, \ldots, v_i^*}$  
  and $c'$ is not removed during the simplification process,  
  it follows that $c'$ is not satisfied by $\sigma^*_{\le i-1}$.  
  Moreover, under the assumption that $c'$ is not satisfied by $\sigma_i^*$,  
  the assignment $\sigma^*_{\le i}$ fixes all variables in $\vbl(c')$ but still fails to satisfy $c'$.  
  This contradicts the assumption on $\sigma^*$ in \Cref{condition:clause-no-forbid}.  
  Hence, \Cref{item:nice-pinning-2} in \Cref{def:nice-pinning} holds.  
  Finally, by \Cref{property:bounded-growth-rate} and the fact that $\alpha < 1/k^3$,  
  no connected component in $G_{\Phi'}$ has size larger than $\log n$.  
  Therefore, \Cref{item:nice-pinning-3} in \Cref{def:nice-pinning} also holds.
\end{proof}

In the following,  
we assume that $\alpha \ge 1 / k^3$ and that  
there exists a clause containing $v_i^*$ that is not satisfied by $\sigma^*_{\le i - 1}$.  
Let $c_0$ denote the clause with the smallest index among such clauses.  
Moreover, since $\Phi$ is well-behaved,  
\Cref{property:bounded-bad-clauses} holds with appropriate parameters.

\subsubsection{Lower bound of good variables}\label{sec:lower-bound-good-variables}

We now state an observation about the invariant property of $\Pinning$,  
namely, that it preserves the number of unpinned good variables in every good clause.
This property is used to verify the \Cref{item:nice-pinning-1} of \Cref{def:nice-pinning}.
First, we have the following observation.
\begin{observation}\label{obs:initial-goodclauses-bound}
  $|\vblsg(c)| > \frozenConstant k$
  holds for any $c\in \goodclauses$ 
  under the initial pinning $\sigma = \sigma^*_{\le i-1}$.
\end{observation}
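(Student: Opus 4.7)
The plan is to reduce the claim to the combination of a structural fact (the variables pinned under $\sigma^*_{\le i-1}$ are by construction bad, so they do not sit in $\vblg(c)$) and a direct parameter computation using the values chosen in \eqref{eq:parameters}.

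First, I would unpack the definitions of the sets involved. Under the initial pinning $\sigma=\sigma^*_{\le i-1}$, the set $\Gamma(\sigma)$ of unpinned variables is exactly $V\setminus\{v_1^*,\ldots,v_{i-1}^*\}$. The redefinition in \eqref{eq:modify-bad} puts $\{v_1^*,\ldots,v_{i-1}^*\}\subseteq\badvars$, so every pinned variable is a bad variable. Consequently, for any clause $c\in\goodclauses$,
\begin{equation*}
  \vblsg(c)\;=\;\vbl(c)\cap\Gamma(\sigma)\cap\goodvars\;=\;\vbl(c)\cap\goodvars\;=\;\vblg(c),
\end{equation*}
since removing a set of bad variables from $\vbl(c)$ does not affect its intersection with $\goodvars$. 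This reduces the observation to a lower bound on $|\vblg(c)|$.

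Next, I would invoke \Cref{obs:goodclauses-bounds}, which gives the uniform lower bound
\begin{equation*}
  |\vblg(c)|\;\ge\;\goodlowerbound\;=\;(1-\badclauseConstant)k-5k^{4/5}
\end{equation*}
for every good clause $c$. Substituting the parameters from \eqref{eq:parameters}, namely $\badclauseConstant=k^{-1/5}$ and $\frozenConstant=2k^{-1/5}$, the target inequality $|\vblsg(c)|>\frozenConstant k$ becomes
\begin{equation*}
  k-k^{4/5}-5k^{4/5}\;>\;2k^{4/5},\qquad\text{i.e.,}\qquad k\;>\;8k^{4/5},
\end{equation*}
which is equivalent to $k^{1/5}>8$, i.e., $k>8^{5}=32768$. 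This is comfortably implied by the hypothesis $k\ge 10^{5}$ of \Cref{lemma:valiant-algorithm-marginal-lower-bound}, finishing the proof.

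There is no real obstacle: the only thing to be careful about is to recognize that the enlarged bad set in \eqref{eq:modify-bad} (rather than $\sbadvars$ itself) is what makes the pinned variables $v_1^*,\ldots,v_{i-1}^*$ irrelevant to $\vblg(c)$; once this is noted the bound is a one-line substitution into \Cref{obs:goodclauses-bounds} and a routine comparison of $k$ with $k^{4/5}$ under the parameter choices in \eqref{eq:parameters}.
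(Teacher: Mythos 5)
Your proof is correct, and it takes a slightly cleaner route than the paper's. The paper also starts from \Cref{obs:goodclauses-bounds}, but instead of noting that the pinned variables are already bad, it subtracts them a second time: it uses $c \notin \intersectionclause$ to bound $\abs{\vbl(c) \cap \set{v_1^*,\ldots,v_{i-1}^*}} \le \abs{\vbl(c)\cap\vbl(c^*)} \le 2k^{4/5}$ and concludes $\abs{\vblsg(c)} \ge \abs{\vblg(c)} - 2k^{4/5} \ge k - 8k^{4/5} > \frozenConstant k = 2k^{4/5}$ for $k \ge 10^5$. Your observation that \eqref{eq:modify-bad} places $\set{v_1^*,\ldots,v_{i-1}^*}$ inside $\badvars$, so that under the initial pinning $\Gamma(\sigma) \supseteq \goodvars$ and hence $\vblsg(c) = \vblg(c)$, makes that subtraction redundant; you get the marginally tighter bound $k - 6k^{4/5}$ and only need $k^{1/5} > 8$ rather than the paper's $k^{1/5} > 10$, both of which are comfortably within the hypothesis $k \ge 10^5$. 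The only thing your argument forgoes is robustness to later steps: the paper's bookkeeping (subtracting pinned variables explicitly) is the template reused in \Cref{obs:random-ksat-frozen-clause}, where revealed variables are genuinely good and the identity $\vblsg(c) = \vblg(c)$ no longer holds, but for this initial-pinning statement your shortcut is perfectly valid.
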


\begin{proof}
  For any $c \in \goodclauses$, by \Cref{obs:goodclauses-bounds}, we have  
  $\abs{\vblg(c)} \ge (1 - \badclauseConstant)k - 5k^{4/5}$.  
  Furthermore, since $c \notin \badclauses$, in particular $c \notin \intersectionclause$, we have  
  \begin{equation*}
    \abs{\vbl(c) \cap \set{v_1^*, \ldots, v_{i-1}^*}}
    \le \abs{\vbl(c) \cap \vbl(c^*)}
    \le 2k^{4/5}.
  \end{equation*}
  Therefore,
  \begin{equation*}
    \abs{\vblsg(c)}
    \ge \abs{\vblg(c)} - \abs{\vbl(c) \cap \set{v_1^*, \ldots, v_{i-1}^*}}
    \ge (1 - \badclauseConstant)k - 5k^{4/5} - 2k^{4/5}
    \overset{\eqref{eq:parameters}}{=} k - 8k^{4/5},
  \end{equation*}
  where the last equality follows from the parameter setting $\badclauseConstant = k^{-1/5}$ in \eqref{eq:parameters}.
  Meanwhile, since $\frozenConstant = 2k^{-1/5}$ in \eqref{eq:parameters}, we also have $\frozenConstant k = 2k^{4/5}$.
  Therefore, when $k \ge 10^5$ (as assumed in \Cref{theorem:learning-random-cnf}),  
  it follows that $|\vblsg(c)| > \frozenConstant k$.
\end{proof}

The procedure $\Pinning(\tau,v_i^*)$ maintains a pinning $\tau_S$ on a subset $S$. 
For simplicity, we denote the pinning $\tau_S$ as $\sigma$. 
According the procedure, the initial set $S = \set{v_1^*, \ldots, v_{i-1}^*}$ 
and the initial pinning $\sigma =\tau_S = \sigma^*_{\le i-1}$. 
Then, the procedure expands the set $S$ 
by adding one variable at a time and 
the pinning $\sigma$ maintained by the procedure is updated to $\tau_S$ on new $S$ accordingly.

\begin{observation}\label{obs:random-ksat-frozen-clause}
    $|\vblsg(c)| > \frozenConstant k - 1$
    always holds for any $c\in \goodclauses$ during the whole procedure $\emph{\Pinning}(\tau,v_i^*)$, 
    where $\sigma = \tau_S$ is the pinning maintained by the procedure.
\end{observation}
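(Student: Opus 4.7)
The plan is a straightforward induction on the number of variables that have been added to $S$ during the execution of $\Pinning(\tau,v_i^*)$. The base case is handled by Observation~\ref{obs:initial-goodclauses-bound}: at the start $S=\set{v_1^*,\ldots,v_{i-1}^*}$ and $\sigma=\sigma^*_{\le i-1}$, which gives $|\vblsg(c)|>\frozenConstant k>\frozenConstant k-1$ for every $c\in\goodclauses$.

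For the inductive step, I would fix a good clause $c\in\goodclauses$ and analyze the effect of the next pin $v=\nextVar$, which by construction lies in $\aliveVars\cap\vbl(\currentComponent(c_0))$. If $v\notin\vbl(c)$, the set of unpinned good variables of $c$ is untouched, so the invariant trivially persists. If $v\in\vbl(c)$, then the alive condition (Definition~\ref{definition:alive-variables}) forces $v\in\goodvars$, so pinning $v$ removes exactly one element from the unpinned good variables of $c$; the alive condition further asserts that either $c(\sigma)=\true$ already or $|\vblsg(c)\setminus\set{v}|>\frozenConstant k-1$, and in the latter case the invariant is preserved immediately at the updated pinning. The key quantitative point is the calibration between Observation~\ref{obs:initial-goodclauses-bound} and Definition~\ref{definition:alive-variables}: the initial slack of exactly one variable between the thresholds $\frozenConstant k$ and $\frozenConstant k-1$ is what allows the alive condition to carry the strict inequality through a single pin.

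The main obstacle is the case $c(\sigma)=\true$: here the alive condition does not itself bound the drop in $|\vblsg(c)|$, so one must separately argue that the strict inequality is not eroded by the cumulative pins of variables of $c$ chosen after $c$ was satisfied. The natural route is to trace back to the earlier pin $u\in\vbl(c)$ that first satisfied $c$; at that step the alive condition forced $|\vblsg(c)\setminus\set{u}|>\frozenConstant k-1$, providing an initial cushion, and one then shows via the structural description of $\badInterior(c_0)$ and the restriction $v\in\vbl(\currentComponent(c_0))$ that only a controlled number of subsequent pins can lie inside $\vbl(c)$. This delicate bookkeeping for already-satisfied clauses is the one piece that requires more than a line-by-line application of the definitions; the rest of the induction follows immediately from Observation~\ref{obs:initial-goodclauses-bound} and the defining property of $\aliveVars$.
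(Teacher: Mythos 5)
Your induction coincides with the paper's proof in every case the paper actually argues: the base case via \Cref{obs:initial-goodclauses-bound}, the trivial case $v \notin \vbl(c)$, and the case $v \in \vbl(c)$ with $c(\sigma) \neq \true$, where the definition of $\aliveVars$ gives $|\vblsg(c) \setminus \{v\}| > \frozenConstant k - 1$ in one line (indeed no induction hypothesis is even needed there, since the alive condition at the moment of the pin directly yields the bound for the updated pinning). The problem is the case you single out as the main obstacle, $c(\sigma) = \true$, and the repair you sketch for it. Tracing back to the pin that first satisfied $c$ and then arguing that ``only a controlled number of subsequent pins can lie inside $\vbl(c)$'' cannot be carried out: once $c$ is satisfied, the alive condition exempts $c$ entirely, so a variable of $c$ may later be pinned whenever it is alive with respect to the \emph{other} clauses containing it and lies in $\vbl(\currentComponent(c_0))$, and nothing in the construction of $\badInterior(c_0)$ or its exterior limits how many of $c$'s variables this happens to. The one-variable cushion you invoke does not survive more than a single further pin, so this branch of your proof would fail as described.

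The paper does not attempt this bookkeeping either: its justification of the key inequality is purely ``by the definition of $\nextVar$ and $\aliveVars$,'' which only covers clauses with $c(\sigma) \neq \true$, and that is all that is ever used downstream. In \Cref{lem:random-kcnf-neighbor-cut} the observation is applied to a clause assumed unsatisfied, and in verifying \Cref{item:nice-pinning-1} of \Cref{def:nice-pinning} it is applied to clauses that survive in the simplified formula $\Phi'$, i.e., are not satisfied by the pinning. So the correct resolution of your obstacle is not delicate bookkeeping but reading (or restating) the invariant for unsatisfied good clauses only—a clause that becomes satisfied is effectively removed, and its count of unpinned good variables is irrelevant thereafter. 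With that restriction, your first two paragraphs already constitute a complete proof and match the paper's argument.
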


\begin{proof}
    We prove this observation by induction.  
    Initially, the observation holds directly by \Cref{obs:initial-goodclauses-bound}.
    For the induction step, assume that after revealing $t$ variables,
    $|\vblsg(c)| > \frozenConstant k - 1$ holds for any $c\in \goodclauses$.
    We now reveal the $(t + 1)$-th variable $v$,  
    and denote the updated pinning by $\sigma'$.
    On the one hand,
    for any $c\in \goodclauses$ with $v\in \vbl(c)$, 
    \begin{equation*}
      \abs{\vbl^{\sigma'}_{\-g}(c)} 
      = \abs{\vblsg(c) \setminus \set{v}}  
      > \frozenConstant k - 1,
    \end{equation*}
    where the last inequality follows from the definition of $\nextVar$ and $\aliveVars$.
    On the other hand,
    for any $c\in \goodclauses$ with $v\notin \vbl(c)$, 
    the update of the pinning does not affect the clause,  
    and thus the condition continues to hold.
\end{proof}

\subsubsection{Conditional independence}\label{sec:conditional-independence}

In the following, let $\sigma = \tau_S$ denote the output of  
$\Pinning(\tau,v_i^*)$.
We establish the following property, which states that conditioned on the pinning $\sigma$,  
the marginal distribution of $v_i^*$ depends only on the sub-CNF formula induced by the clauses in $\badInterior(c_0)$. Recall that notations: for any $\+C' \subseteq \+C$,
\begin{align*}
    N^{\sigma}(\+C') &\defeq \set{c' \in \+C \setminus \+C' \mid \exists c \in \+C', \vbls(c) \cap \vbls(c') \neq \emptyset}, \\
    N^{}(\+C') &\defeq \set{c' \in \+C \setminus \+C' \mid \exists c \in \+C', \vbl(c) \cap \vbl(c') \neq \emptyset}.
\end{align*}

\begin{lemma}\label{lem:random-kcnf-neighbor-cut}
  For any $c\in N(\badInterior(c_0))$,
  either $c(\sigma) = \true$, 
  or $c\notin N^{\sigma}(\badInterior(c_0))$.
\end{lemma}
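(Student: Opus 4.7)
The plan is to argue by contradiction using the two stopping criteria built into \Cref{alg:reveal}: the halting condition of the while loop (which exhausts all alive variables in $\vbl(\currentComponent(c_0))$) and the termination of the iterative construction of $\badInterior(c_0)$ (which absorbs every frozen, blocked, or bad neighbor). Let $\sigma$ denote the final pinning returned by the algorithm. If $c$ shares no unpinned variable with $\badInterior(c_0)$, then $c \notin N^{\sigma}(\badInterior(c_0))$ and the conclusion holds trivially, so I may assume $c \in N^{\sigma}(\badInterior(c_0))$, i.e.\ $c \in \currentComponent(c_0) \setminus \badInterior(c_0)$, and suppose for contradiction that $c(\sigma) \ne \true$.

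The first step is to invoke the termination rule for $\badInterior(c_0)$: since $c$ is a $\sigma$-unpinned neighbor of $\badInterior(c_0)$ that was not absorbed during the iterative construction, it cannot belong to $\frozenclauses \cup \blockedclause \cup \badclauses$. Combined with $c(\sigma) \ne \true$, this forces $c$ to be a good clause that is unsatisfied, non-frozen, and non-blocked. From $c \notin \blockedclause$ I extract a witness variable $v \in \vblsg(c)$ that is contained in no clause of $\frozenclauses$.

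Next I would show that this $v$ is alive under $\sigma$ and lies in $\vbl(\currentComponent(c_0))$, contradicting the halting of the while loop. The former is a short verification: $v \in \goodvars \cap \Gamma(\sigma)$ by the choice of $v$; for any $c' \in \goodclauses$ with $v \in \vbl(c')$, the property that no frozen clause contains $v$ rules out $c' \in \frozenclauses$, so either $c'(\sigma) = \true$ or $\abs{\vblsg(c')} > \frozenConstant k$, and in the latter case $\abs{\vblsg(c') \setminus \set{v}} > \frozenConstant k - 1$ since $v \in \vblsg(c')$. Hence $v \in \aliveVars$, and $v \in \vbl(c) \subseteq \vbl(\currentComponent(c_0))$ gives the contradiction with $\nextVar = \bot$ at termination.

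The only delicate point is making sure the strict and non-strict inequalities in the three relevant definitions align correctly: frozen uses $\abs{\vblsg(\cdot)} \le \frozenConstant k$, alive uses $\abs{\vblsg(\cdot) \setminus \set{v}} > \frozenConstant k - 1$, and ``not blocked'' supplies a witness $v$ avoiding $\frozenclauses$. The computation above shows that these thresholds match up exactly, so an unsatisfied non-frozen good clause that is moreover not blocked automatically produces an alive variable inside $\vbl(\currentComponent(c_0))$, and the proof then reduces to bookkeeping once the two termination criteria have been invoked.
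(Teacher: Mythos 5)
Your proposal is correct and follows essentially the same route as the paper's proof: assume $c(\sigma)\neq\true$ and $c\in N^{\sigma}(\badInterior(c_0))$, use the termination of the iterative construction to rule out $c\in\frozenclauses\cup\blockedclause\cup\badclauses$, extract from non-blockedness a witness $v\in\vblsg(c)$ avoiding $\vbl(\frozenclauses)$, and show $v\in\aliveVars\cap\vbl(\currentComponent(c_0))$, contradicting the halting condition of the while loop. The only cosmetic difference is that you verify aliveness of $v$ directly from the definition of frozen clauses, whereas the paper argues the same point contrapositively via \Cref{obs:random-ksat-frozen-clause}; both versions are valid.
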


\begin{proof}
  If $c(\sigma) = \true$, then the lemma follows immediately.  
  Hence, we assume that $c(\sigma) \ne \true$.  
  Moreover, if $c \in \frozenclauses \cup \blockedclause \cup \badclauses$,  
  then, since $c \notin \badInterior(c_0)$,  
  the construction of $\badInterior(c_0)$ ensures that  
  $c\notin N^{\sigma}(\badInterior(c_0))$,
  which also proves the lemma.
  In the following, 
  we further assume that $c \notin \frozenclauses \cup \blockedclause \cup \badclauses$.
  
  Suppose, for the sake of contradiction, that  
  $c \in N^{\sigma}(\badInterior(c_0))$ (and thus $c \in \currentComponent(c_0)$).
  Since $c(\sigma) = \false$ and  
  $c \notin \frozenclauses \cup \blockedclause \cup \badclauses$,  
  the definition of $\blockedclause$ implies that there exists a variable
  $v\in \vblsg(c)$ 
  such that $v\notin \vbl(\frozenclauses)$.
  Because $v \in \vbl(c) \subseteq \vbl(\currentComponent(c_0))$,
  if $v\in \aliveVars$, then $v\in \aliveVars\cap \vbl(\currentComponent(c_0))$,
  contradicting the termination condition $\aliveVars\cap \vbl(\currentComponent(c_0)) = \emptyset$. 
    
  We now show that $v \in \aliveVars$ indeed holds.  
  By the definition of $\aliveVars$ and the fact that $v \in \Gamma(\sigma) \setminus \badvars$,  
  it suffices to verify that for every good clause $c' \in \goodclauses$ containing $v$,  
  either ${c'(\sigma) = \true}$ or  
  $|\vblsg(c') \setminus \set{v}| > \frozenConstant k - 1$.
  We argue this by contradiction.  
  Suppose there exists a good clause $c' \in \goodclauses$ containing $v$ such that  
  $c'(\sigma) \neq \true$ and  
  $|\vblsg(c') \setminus \set{v}| \le \frozenConstant k - 1$.  
  By \Cref{obs:random-ksat-frozen-clause} and the fact that $c' \in \goodclauses$,  
  we have  
  $|\vblsg(c')| > \frozenConstant k - 1$.  
  This implies that  
  $v \in \vblsg(c')$  
  and hence  
  $|\vblsg(c')| \le \frozenConstant k$.  
  Therefore, $c' \in \frozenclauses$,  
  which contradicts with $v \notin \vbl(\frozenclauses)$.  
  This completes the proof.
\end{proof}

Intuitively, we explain why this lemma implies the conditional independence under the pinning $\sigma$. 
Note that $v_i^* \in c_0$, and hence $v_i^* \in \vbl(\badInterior(c_0))$.  
For any clause $c \in N(\badInterior(c_0))$,  
the lemma ensures that one of the following two conditions must hold:
(i) $c(\sigma) = \true$, which means that conditional on $\sigma$,  
the clause $c$ is already satisfied and can therefore be removed;
(ii) $c \notin N^{\sigma}(\badInterior(c_0))$,  
which means that all remaining un-revealed variables $\vbls(c)$ are outside $\vbl(\badInterior(c_0))$. Hence, these clauses are disconnected from $v_i^*$ after pinning $\sigma$.

\begin{remark}
  \label{remark:random-cnf-associated-component-size}
  As a remark, the above can be rephrased as follows. 
  Consider the CNF formula $\Phi'$ simplified by the pinning $\sigma$ 
  (i.e., remove variables in $\vbl(\sigma)$ and all clauses that are satisfied by $\sigma$). 
  Recall that $c_0$ is the smallest-index clause containing $v_i^*$ that has not been satisfied by $\sigma$. 
  Observe that all other variables in $\vbl(c_0) \setminus \set{v_1^*, \ldots, v_{i-1}^*}$ are bad and are not pinned by $\sigma$ during the revealing process. 
  Therefore, $v_i^*$ is contained in some clause in $\Phi'$. Let $C'$ be the maximal connected component in the dependency graph $G_{\Phi'}$ such that $v_i^* \in \vbl(C')$. On the other hand, consider a new simplified CNF formula $\Phi''$ that only contains clauses in $\badInterior(c_0)$ and variables in $\vbl(\badInterior(c_0))$, where all variables in $\vbl(\sigma)$ are removed and all clauses that are satisfied by $\sigma$ are also removed. Let $C''$ be the maximal connected component in the dependency graph $G_{\Phi''}$ such that $v_i^* \in \vbl(C'')$. By \Cref{lem:random-kcnf-neighbor-cut}, it holds that $C' = C''$ and thus $\abs{C'} \le \abs{\badInterior(c_0)}$.
\end{remark}
\color{black}

\subsubsection{Size of associated component}\label{sec:size-of-associated-component}

Recall that $c_0$ is the minimum-index clause containing $v_i^*$ that has not been satisfied by $\sigma^*_{\le i - 1}$.
Let $(S,Y_S)$ be the output of $\Pinning(Y,v_i^*)$, where $Y \sim \mu^{\sigma^*_{\le i - 1}}_{\Phi}$.
In this subsection, we show that with moderate probability,  
the size of $\badInterior[Y_s](c_0)$ is small by establishing a tail bound.

\begin{restatable}{lemma}{TailBound}
  \label{lemma:random-cnf-bad-interior-tail-bound}
  Assume that the conditions in \Cref{lemma:valiant-algorithm-marginal-lower-bound} are satisfied for the CNF formula $\Phi=(V,\+C)$. Let $(S,Y_S)$ be the output of $\emph{$\Pinning$}(Y,v_i^*)$, where $Y \sim \mu^{\sigma^*_{\le i - 1}}_{\Phi}$. We have the following upper bound on the probability that the size of $\badInterior[Y_S](c_0)$ is at least $\log n$:
  $$\sum_{\ell = \ceil{\log n}}^{\ceil{\frozenRho\cdot \alpha n}} \alpha n \cdot n^3 \tp{\mathrm{e}k^2\alpha}^\ell \cdot \tp{20 k \cdot 2^{10 k^{4 / 5} \log k}}^{\ell} \cdot \tp{\frac{1}{2}\exp\tp{\frac{1}{k}}}^{(1 - \frozenEta)\cdot \revealedLowerbound\cdot \varrho \cdot \ell}.$$
  where
  $$\revealedLowerbound = (1 - \badclauseConstant - \frozenConstant) k - 5 k^{4 / 5}, \quad \varrho = \frac{1 - \frac{24k^5}{(1 - \frozenEta)(\badclauseConstant - \frozenEta)\degreeConstant}}{1 + \frac{\frozenEta + 2/k - 2\frozenEta/k}{\frozenConstant - \frozenEta - 2/k + 2\frozenEta/k}}.$$
\end{restatable}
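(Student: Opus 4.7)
The bound has the flavour of a standard connected-subset enumeration in the local lemma regime, so my plan is a three-layer union bound followed by a local-uniformity probability estimate. First, I will re-express the event $|\badInterior[Y_S](c_0)| \ge \ell$ as the existence of a connected set $\+C' \subseteq \+C$ in $G_\Phi$ of size exactly $\ell$ containing $c_0$, together with a valid labeling of each $c \in \+C'$ as either bad, frozen, or blocked under the pinning $Y_S$. Summing the resulting bound over $\ell \in [\lceil \log n \rceil, \lceil \frozenRho \alpha n \rceil]$ will yield the displayed sum; the upper limit $\lceil \frozenRho \alpha n \rceil$ is dictated by the range of $\ell$ on which \Cref{property:edge-expansion} can be applied.

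The two combinatorial factors come from counting \emph{structure} and \emph{labeling}. For structure, the number of connected subgraphs of $G_\Phi$ of size $\ell$ containing a given clause is at most $n^3(\mathrm{e} k^2 \alpha)^\ell$ by \Cref{property:bounded-growth-rate}, and the extra $\alpha n$ factor pre-enumerates the root clause. For labeling, I will use \Cref{property:bounded-bad-clauses} to show at most a $\frac{12 k^5}{(1-\frozenEta)(\badclauseConstant - \frozenEta)\degreeConstant}$ fraction of clauses in $\+C'$ can be bad, and encode each non-bad clause by specifying (i) whether it is frozen or blocked and (ii) for a blocked clause, the mapping from its unpinned good variables to the frozen clauses that cover them. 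Using \Cref{property:clause-intersection} (intersection $\le 3$) together with the size parameter $\badclauseConstant k + O(k^{4/5})$ of the ``bad zone'' of each clause from \Cref{obs:goodclauses-bounds}, this encoding costs at most $(20k\cdot 2^{10 k^{4/5} \log k})^\ell$ total configurations.

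The probability estimate is the crux. Each non-bad clause $c \in \+C'$ that is frozen or blocked must have at least $\revealedLowerbound = (1 - \badclauseConstant - \frozenConstant)k - 5k^{4/5}$ good variables pinned to their forbidden values by $Y_S$. By \Cref{property:edge-expansion} instantiated with parameter $B_2$, the union of these pinned variable-sets across all non-bad clauses has cardinality at least $(1-\frozenEta)\,\revealedLowerbound\,\varrho\,\ell$, where $\varrho$ absorbs both the bad-clause fraction (numerator) and the fact that blocked clauses must share their pinned good variables with frozen clauses, causing their pinning contribution to be partially double-counted (denominator). Since $\Pinning$ is a conditional Gibbs revealing process by \Cref{lemma:random-cnf-reveal-is-gibbs}, the values of $Y_S$ on these distinct variables follow the marginal of $\mu_\Phi^{\sigma^*_{\le i-1}}$, and \Cref{lem:local-uniformity} applied to the sub-formula induced by $\goodclauses$ yields the per-variable bound $\frac{1}{2}\exp(1/k)$; taking the product across the $(1-\frozenEta)\,\revealedLowerbound\,\varrho\,\ell$ distinct pinned variables gives the final factor.

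The main obstacle will be the probability step: verifying that \Cref{lem:local-uniformity} can indeed be invoked for the marginal probabilities under $\mu_\Phi^{\sigma^*_{\le i-1}}$ restricted to the good sub-formula. Because we are conditioning on the fixed values of $v_1^*,\ldots,v_{i-1}^*$ while simultaneously deleting all clauses in $\badclauses$, I need the local-lemma condition to survive both operations with enough slack to support the $\exp(1/k)$ factor. This is precisely where the hypothesis $\alpha \le 2^{k - \widetilde{O}(k^{4/5})}$ is consumed: together with \Cref{obs:goodclauses-bounds} guaranteeing that every good clause retains at least $\goodlowerbound$ good variables, it provides the exponential gap needed to instantiate \Cref{lem:local-uniformity} on the restricted formula with $t$ tuned against $\frozenEta$, and the standard argument for why conditioning on bad-variable values can be absorbed into the Lov\'asz local lemma bookkeeping then closes the estimate.
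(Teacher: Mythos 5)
Your high-level skeleton (connected-witness enumeration via \Cref{property:bounded-growth-rate}, a labeling cost, and a per-variable $\tfrac12 e^{1/k}$ bound via local uniformity conditioned on bad variables) matches the paper, but two essential ingredients are missing, and one step as stated is false. First, your reduction of the event $\abs{\badInterior[Y_S](c_0)}\ge\log n$ to ``there exists a connected set of size exactly $\ell\le\frozenRho\alpha n$ with a bad/frozen/blocked labeling'' does not work as written: the whole argument hinges on the property that every blocked clause in the \emph{witness} has all of its unpinned good variables covered by frozen clauses that also lie \emph{inside the witness} (this is what lets one bound the number of blocked clauses by the number of frozen clauses and hence lower-bound the frozen count by $\varrho\ell$). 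This covering property holds for the full component $\badInterior(c_0)$, but an arbitrary connected subset of size exactly $\ell$ can cut away exactly the covering frozen clauses, leaving a witness consisting mostly of blocked and bad clauses for which no revealed forbidden values can be extracted. The paper resolves this tension with a dedicated pruning procedure (\Cref{lemma:random-cnf-pruning-lemma}) producing $\prunedBadInterior(c_0)$ that is simultaneously connected, of size in $[\log n,\frozenRho\alpha n]$ (\Cref{prop:pruned-bad-interior-size}), and closed under the blocked-to-frozen covering; you acknowledge that the upper summation limit comes from the applicability range of \Cref{property:edge-expansion}, but you give no mechanism that delivers a witness in that range while preserving the covering property, and this is a genuinely nontrivial step, not bookkeeping.

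Second, your probability estimate asserts that every non-bad (frozen \emph{or} blocked) clause in the witness has at least $\revealedLowerbound$ good variables revealed to forbidden values. That is false for blocked clauses: by definition a blocked clause still has more than $\frozenConstant k$ \emph{unpinned} good variables, so it may contribute no revealed variables at all; only frozen clauses carry the $\revealedLowerbound$ revealed forbidden values. The correct route (as in \Cref{lemma:random-cnf-lowerbound-frozen-clauses} and \Cref{lemma:random-cnf-lowerbound-sampled-vars}) is to (i) bound the blocked clauses in terms of the frozen ones via a separate counting argument that uses the covering property together with the edge-expansion instantiation at $B_1=k-2$ (\Cref{lemma:random-cnf-bounded-blocked-clauses}) --- an application you never invoke, since you only mention $B_2$ --- and then (ii) apply expansion at $B_2=\revealedLowerbound$ to the frozen clauses alone. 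Relatedly, your labeling enumeration (frozen/blocked labels plus blocked-to-frozen maps) does not identify \emph{which} variables are revealed in each frozen clause, yet the per-witness probability bound requires fixing those variable sets so that the event forces specific variables to take specific forbidden values; the paper's $\bigl(20k\cdot 2^{10k^{4/5}\log k}\bigr)^{\ell}$ factor comes precisely from enumerating the frozen subset ($2^{\ell}$) and, per frozen clause, the at most $k-\revealedLowerbound$ unrevealed positions. Your idea for the conditional local uniformity (absorbing the conditioning on $\sigma^*_{\le i-1}$ and bad variables into the LLL on the good-clause subformula) is sound in spirit and corresponds to \Cref{lemma:local-uniformity-new}, but without the pruning step and the frozen-only accounting the exponent $(1-\frozenEta)\revealedLowerbound\varrho\ell$ cannot be justified.
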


The rest of the proof is organized as follows. We first prove \Cref{lemma:random-cnf-bad-interior-tail-bound} by the standard witness argument. 
Then, in \Cref{sec:proof-random-cnf-marginal-lower-bound-nice-probability}, we use the tail bound in \Cref{lemma:random-cnf-bad-interior-tail-bound} to prove \Cref{lemma:random-cnf-marginal-lower-bound-nice-probability}.

We first prove this tail bound in \Cref{lemma:random-cnf-bad-interior-tail-bound} by the standard witness argument. To apply the standard properties of random CNF formulas, which are only applicable for not so large clause sets, we include the pruning method that originates from~\cite[Lemma 7.8]{heImprovedBoundsSampling2023}. 
{Recall the definition of $G_{\Phi}$. The vertex set of $G_{\Phi}$ is $\+C$ and two vertex $c_1, c_2$ are adjacent iff $c_1\neq c_2$ and $\vbl(c_1)\cap \vbl(c_2)\neq \emptyset$.}
Given a set of clauses $K \subseteq \+C$, which is a set of vertices in $G_{\Phi}$, we use $G[K]$ to denote the induced subgraph of $G_{\Phi}$ on $K$.

\begin{lemma}
  \label{lemma:random-cnf-pruning-lemma}
  For any $(S, \sigma)$ generated by \emph{$\Pinning$},
  there exists $\prunedBadInterior(c_0) \subseteq \badInterior(c_0)$ such that
  \begin{enumerate}
    \item $G_{\Phi}[{\prunedBadInterior(c_0)}]$ is a connected subgraph of $G_{\Phi}$.
    \item If $\abs{\badInterior(c_0)}\le \frozenRho\cdot \alpha n$, then $\prunedBadInterior(c_0) = \badInterior(c_0)$. Otherwise we have $\frac{\frozenRho\cdot \alpha n}{k}\le |\prunedBadInterior(c_0)|\le \frozenRho\cdot \alpha n$.
    \item For any clause $c\in \prunedBadInterior(c_0)\cap \blockedclause$ and $v\in \tp{\vbl(c)\cap \Gamma(\sigma)} \setminus \badvars$, there exists some $c^\sharp\in \prunedBadInterior(c_0) \cap \frozenclauses$ such that $v\in \vbl(c^\sharp)$.
  \end{enumerate}
\end{lemma}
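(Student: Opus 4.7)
The plan is to build $\prunedBadInterior(c_0)$ by an adaptive BFS starting from $c_0$ inside the induced subgraph $G_{\Phi}[\badInterior(c_0)]$; whenever a blocked clause is popped, the BFS additionally pulls in one frozen witness for each of its unpinned good variables. Connectivity (item 1) and the witness property (item 3) become automatic from this construction, and only the size window in item (2) requires delicate bookkeeping.

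If $|\badInterior(c_0)| \le \frozenRho \cdot \alpha n$, I would simply set $\prunedBadInterior(c_0) = \badInterior(c_0)$. Connectivity of $G_{\Phi}[\badInterior(c_0)]$ follows from the iterative definition of the associated component, which always attaches a new clause to the current set through a shared unpinned variable. For item (3), fix any $c \in \badInterior(c_0) \cap \blockedclause$ and any $v \in \vblsg(c)$; the definition of $\blockedclause$ gives some $c^\sharp \in \frozenclauses$ with $v \in \vbl(c^\sharp)$, and because $v$ is unpinned and shared, $c^\sharp$ lies in $N^{\sigma}(\badInterior(c_0))$ and belongs to $\frozenclauses \cup \blockedclause \cup \badclauses$, so the iterative construction forces $c^\sharp \in \badInterior(c_0)$.

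For the hard case $|\badInterior(c_0)| > \frozenRho \cdot \alpha n$, I would initialize $\prunedBadInterior(c_0) := \{c_0\}$ together with a FIFO queue of the neighbors of $c_0$ in $G_{\Phi}[\badInterior(c_0)]$. Then repeatedly pop a clause $c$, add it to $\prunedBadInterior(c_0)$, and, if $c \in \blockedclause$, for each $v \in \vblsg(c)$ lacking a witness in the current pruned set, add one witness $c^\sharp \in \frozenclauses \cap \badInterior(c_0)$ (which must exist by the easy-case argument); finally, enqueue all unseen $\badInterior(c_0)$-neighbors of the clauses just added. Terminate the instant $|\prunedBadInterior(c_0)| \ge \frozenRho \cdot \alpha n / k$. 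Each iteration grows the pruned set by at most $1 + |\vblsg(c)| \le k + 1$ clauses, so the terminal size lies in $[\frozenRho \cdot \alpha n / k,\; \frozenRho \cdot \alpha n / k + k + 1] \subseteq [\frozenRho \cdot \alpha n / k,\; \frozenRho \cdot \alpha n]$ for $n$ large. Connectivity is preserved inductively: each newly added clause is adjacent in $G_{\Phi}$ either to its BFS parent or to the blocked clause that pulled it as a witness. Since $\badInterior(c_0)$ is connected in $G_{\Phi}$ and has size strictly greater than $\frozenRho \cdot \alpha n$, the traversal cannot stall before reaching the threshold.

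The hardest part will be the size bookkeeping: one must verify that the overshoot past $\frozenRho \cdot \alpha n / k$ is controlled by $|\vblsg(c)| \le k$, which is exactly what accounts for the $1/k$ factor in the lower bound of item (2). Everything else is routine graph traversal once we have the key observation---already needed in the easy case---that every required witness already sits inside $\badInterior(c_0)$.
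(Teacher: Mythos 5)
Your proposal is correct, but it reaches the lemma by the opposite route from the paper. The paper constructs $\prunedBadInterior(c_0)$ \emph{top-down}: it starts from all of $\badInterior(c_0)$ and repeatedly deletes clauses until the size drops to $\frozenRho\cdot\alpha n$, removing a blocked clause and keeping the largest resulting component whenever one is present (this is where the $1/k$ in item (2) comes from: a removed clause has at most $k$ variables, so the set splits into at most $k$ components), and otherwise deleting a non-cut clause; the delicate step there is that the witness property of item (3) survives deletion, which uses the disjointness of $\frozenclauses$ and $\blockedclause$ together with the fact that a frozen witness shares the unpinned variable with its blocked clause and hence cannot land in a different component. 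You instead grow the set \emph{bottom-up} by BFS from $c_0$ inside $G_{\Phi}[\badInterior(c_0)]$ and insert the frozen witnesses explicitly, so item (3) holds by construction and item (1) is immediate; item (2) follows because you stop at threshold $\frozenRho\cdot\alpha n/k$ with overshoot at most $k+1$ per iteration, which is at most $\frozenRho\cdot\alpha n(1-1/k)$ once $n$ is large (legitimate here, since the lemma is only invoked under the hypotheses of \Cref{lemma:valiant-algorithm-marginal-lower-bound}, where $c_0$ exists only if $\alpha\ge 1/k^3$ and $n\ge n_0(k,\alpha)$). Both arguments hinge on the same key fact, which you prove correctly in your easy case: every frozen witness of a blocked clause of $\badInterior(c_0)$ already lies in $\badInterior(c_0)$, since otherwise it would be a frozen clause in $N^{\sigma}(\badInterior(c_0))$ and the construction of the associated component would not have terminated. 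Two small points to make precise if you write this up: the stopping condition must be tested only at the end of a full iteration (popped clause plus all its witnesses), exactly as your $k+1$ overshoot bound presupposes, or else the last blocked clause could be left missing a witness; and your remark that the overshoot "accounts for the $1/k$ factor" is not quite right---in your construction $1/k$ is simply the threshold you chose, whereas in the paper's pruning it is forced by the at-most-$k$-components bound. Neither point affects correctness, and your set is a connected subset of $\badInterior(c_0)$ of the required size, which is all the subsequent lemmas use.
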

\begin{proof}
  We introduce the following pruning process to construct $\prunedBadInterior(c_0)$ from $\badInterior(c_0)$.
  Initialize $\prunedBadInterior(c_0) \gets \badInterior(c_0)$, we prune $\prunedBadInterior(c_0)$ by the following process until $|\prunedBadInterior(c_0)| \le \frozenRho\cdot \alpha n$.
  \begin{itemize}
    \item If there exists $c\in \prunedBadInterior(c_0)\cap \blockedclause$, then let $\+S_1, \+S_2,\dots, \+S_t$ be the maximal connected components of $\prunedBadInterior(c_0)$ in $G_{\Phi}$ after removing $c$, i.e., $\+S_i$'s are maximal connected components in $G_{\Phi}[{\prunedBadInterior(c_0) \setminus \set{c}}]$. Assume that $\+S_1$ has the maximal size. We update $\prunedBadInterior(c_0) \gets \+S_1$.
    \item Otherwise, $\prunedBadInterior(c_0) \cap \blockedclause = \emptyset$. Then let $c\in \prunedBadInterior(c_0)$ be an arbitrary clause such that removing $c$ does not disconnect $\prunedBadInterior(c_0)$ in $G_\Phi$, i.e., $G_{\Phi}[{\prunedBadInterior(c_0)\setminus \set{c}}]$ is a connected component. We update $\prunedBadInterior(c_0)\gets \prunedBadInterior(c_0)\setminus \set{c}$.
  \end{itemize}

  We begin to verify the properties of $\prunedBadInterior(c_0)$.

  The first item holds directly by the construction of $\prunedBadInterior(c_0)$.

  For the second item, if $\abs{\badInterior(c_0)}\le \frozenRho\cdot \alpha n$, the first item holds trivially.
  So we assume that $|\badInterior(c_0)| > \frozenRho\cdot \alpha n$.
  We first show that if $|\prunedBadInterior(c_0)|> \frozenRho\cdot \alpha n$, then after one-step pruning, we have $|\prunedBadInterior(c_0)| \ge \frac{\frozenRho\cdot \alpha n}{k}$. For the case that $\prunedBadInterior(c_0) \cap \blockedclause = \emptyset$, it holds that $|\prunedBadInterior(c_0)| \ge \frozenRho\cdot \alpha n - 1 \ge \frac{\frozenRho\cdot \alpha n}{k}$. Next, we consider the case that $\prunedBadInterior(c_0) \cap \blockedclause \neq \emptyset$. Note that after one-step pruning, there are at most $k$ maximal connected components, so by the averaging argument, $|\+S_1|\ge \frac{|\prunedBadInterior(c_0)|}{k} \ge \frac{\frozenRho\cdot \alpha n}{k}$. To verify that the components are at most $k$, since each clause $c$ contains at most $k$ variables, each variable in $\vbl(c)$ belongs to at most one $\cup_{e \in \+S_i} \vbl(e)$ for some $i \in [t]$. Hence, the number of components $t \leq k$.

  Finally, we verify that for any clause $c\in \prunedBadInterior(c_0)\cap \blockedclause$ and $v\in \vbl(c)\cap \Gamma(\sigma)$, there exists some $c^\sharp \in \prunedBadInterior(c_0)$ such that $v\in \vbl(c^\sharp)$ and $c^\sharp\in \frozenclauses\cup \badclauses$.

  To begin with, we prove that this condition holds initially. To see this, fix any clause $c\in \badInterior(c_0)\cap \blockedclause$ and $v\in \tp{\vbl(c)\cap \Gamma(\sigma)} \setminus \badvars$ (the lemma holds trivially if $c$ does not exist or $v$ does not exist for $c$). By the definition of blocked clauses, there exists some $c^\sharp\in \frozenclauses$ such that $v\in \vbl(c^\sharp)$. We show that $c^\sharp\in \badInterior(c_0)$ through contradiction. Suppose $c^\sharp\notin \badInterior(c_0)$. By the definition of frozen clauses $\frozenclauses$, $c^\sharp(\sigma)\neq \true$. Combining with \Cref{lem:random-kcnf-neighbor-cut} and the fact that $\vbl(c^\sharp) \cap \vbl(c) \neq \emptyset$, it holds that $\vbl(c^\sharp)\cap \vbl(c)\cap \Gamma(\sigma) = \emptyset$ which reaches a contradiction with the assumption $v\in \vbl(c^\sharp)\cap \vbl(c)\cap \Gamma(\sigma)$.

  Next, we show that after one-step pruning, this condition still holds. Note that by definition, $\blockedclause$ and $\frozenclauses$ are two disjoint sets. For the case that $\prunedBadInterior(c_0) \cap \blockedclause = \emptyset$ before pruning, this condition holds trivially after pruning.
  For the case that $\prunedBadInterior(c_0) \cap \blockedclause \neq \emptyset$ before pruning, it holds that $\+S_1,\dots, \+S_t$ are disconnected in $G_\Phi$ after removing the chosen blocked clause. So this condition still holds; otherwise, they are not disconnected.
\end{proof}

Fix an arbitrary $(S, \sigma)$ generated by $\Pinning$.
We include the following lemma showing that $|\prunedBadInterior(c_0)\cap \blockedclause|$ can be upper bounded using $|\prunedBadInterior(c_0)\cap \frozenclauses|$. This property is useful in later proofs. We remark that this lemma is implicit in the proof of~\cite[Lemma 7.9]{heImprovedBoundsSampling2023}.

{
  \begin{lemma}
    \label{lemma:random-cnf-bounded-blocked-clauses}
    Assume that the conditions in \Cref{lemma:valiant-algorithm-marginal-lower-bound} are satisfied.
    For any $(S, \sigma)$ generated by \Pinning, we have
    \[
    \abs{\prunedBadInterior(c_0)\cap \blockedclause}\le \frac{\frozenEta + 2/k - 2\frozenEta/k}{\frozenConstant - \frozenEta - 2/k + 2\frozenEta/k}\cdot \abs{\prunedBadInterior(c_0)\cap \frozenclauses}.\]
  \end{lemma}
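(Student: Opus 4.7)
The plan is to combine the structural containment provided by item~(3) of \Cref{lemma:random-cnf-pruning-lemma}---namely, $\bigcup_{c \in B} \vblsg(c) \subseteq U_F := \bigcup_{c^\sharp \in F} \vblsg(c^\sharp)$, where $F := \prunedBadInterior(c_0) \cap \frozenclauses$ and $B := \prunedBadInterior(c_0) \cap \blockedclause$---with the edge-expansion property of well-behaved formulas (\Cref{property:edge-expansion}). The precise form of the target constant, $\frozenEta + 2/k - 2\frozenEta/k = \frozenEta(1 - 2/k) + 2/k$, strongly suggests using the threshold $B_1 = k - 2$ rather than $B_2$, since dividing an expansion bound of order $(1 - \frozenEta)(k - 2)$ by $k$ naturally produces the $2/k$ and $2\frozenEta/k$ corrections.

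First I would sharpen the local structure. Combining \Cref{obs:random-ksat-frozen-clause} with the definition of $\frozenclauses$, $|\vblsg(c^\sharp)| = \frozenConstant k$ exactly for every $c^\sharp \in F$, so $|U_F| \leq \frozenConstant k |F|$; dually, since each $c \in B$ is good but not frozen, $|\vblsg(c)| \geq \frozenConstant k + 1$. Because $|\prunedBadInterior(c_0)| \leq \frozenRho \alpha n = \frozenRho |\+C|$ by item~(2) of the pruning lemma, I would apply the $(\frozenRho, \frozenEta, B_1)$-edge-expansion property with $S_c = \vbl(c)$ separately to $F$ and to $F \cup B$, yielding
\begin{align*}
  |\vbl(F)| > (1 - \frozenEta)(k - 2)|F|, \quad |\vbl(F \cup B)| > (1 - \frozenEta)(k - 2)(|F| + |B|).
\end{align*}
Taking the difference and using the trivial upper bound $|\vbl(F)| \leq k|F|$ would give a lower bound $|\vbl(F \cup B) \setminus \vbl(F)| > (1 - \frozenEta)(k - 2)|B| - (2 + \frozenEta(k - 2))|F|$.

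The crucial observation is that the containment $U_B \subseteq U_F \subseteq \vbl(F)$ forbids any newly appearing unpinned good variable from lying in the set difference $\vbl(F \cup B) \setminus \vbl(F)$; such a variable must therefore be either pinned or unpinned bad. Bounding each contribution clause-by-clause using \Cref{obs:goodclauses-bounds} (for bad variables) and the sharpened sizes of $\vblsg(c)$ on $F$ and $B$ (for pinned-good variables), and matching against the lower bound above, I would arrive at $\frozenConstant k |B| \leq [\frozenEta(k - 2) + 2](|F| + |B|)$; dividing by $k$ and rearranging gives precisely the claimed ratio.

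The main obstacle is obtaining sufficiently tight upper bounds on the variables in $\vbl(B) \setminus \vbl(F)$: a naive clause-by-clause summation introduces extraneous $k^{4/5}$-scale terms that would swamp the $\frozenEta(k - 2)$ slack from edge expansion and render the inequality vacuous. The resolution will require carefully separating pinned-good variables that already appear in some clause of $F$ (and are therefore automatically cancelled by the set difference) from those appearing only in blocked clauses, and leveraging the fact that each pinned-good variable was \emph{alive} at the moment it was pinned---so there existed at that moment a good clause containing it with strictly more than $\frozenConstant k$ unpinned good variables, which couples the distribution of pinned variables to the neighbourhood structure of frozen clauses in $F$ and delivers the missing tightness.
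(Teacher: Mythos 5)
Your proposal is essentially the paper's proof: you combine item~(3) of \Cref{lemma:random-cnf-pruning-lemma} with the $(\frozenRho,\frozenEta,k-2)$-edge-expansion bound applied to the frozen and blocked clauses of $\prunedBadInterior(c_0)$, and charge each blocked clause at most $(1-\frozenConstant)k$ variables outside the frozen clauses' variable set; the paper phrases this as matching upper and lower bounds on $\abs{\vbl(\prunedBadInterior(c_0)\cap\frozenclauses)\cup\vbl(\prunedBadInterior(c_0)\cap\blockedclause)}$ rather than as a set difference, which is the same computation. The ``obstacle'' in your last paragraph is illusory: for a blocked clause $c$ one bounds $\abs{\vbl(c)\setminus\vbl(\prunedBadInterior(c_0)\cap\frozenclauses)}\le\abs{\vbl(c)}-\abs{\vblsg(c)}\le(1-\frozenConstant)k$ directly (no separate accounting of bad versus pinned variables), so no extraneous $k^{4/5}$ terms arise and the ``alive at the moment of pinning'' machinery is unnecessary.
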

  \begin{proof}
    Let $V_1 = \vbl(\prunedBadInterior(c_0)\cap \frozenclauses)$ be the variables in all frozen clauses in the pruned correlated component, and let $V_2 = \vbl(\prunedBadInterior(c_0)\cap\blockedclause)$ be the variables in all blocked clauses in the pruned correlated component. We remark that $V_1$ and $V_2$ may contain bad variables.

    Next, we give an upper bound of $\abs{V_1\cup V_2}$. We claim that
    \begin{align}\label{eq:random-cnf-bounded-blocked-clauses-upper-bound}
      \abs{V_1 \cup V_2} \le k\abs{\prunedBadInterior(c_0)\cap \frozenclauses} + (1 - \frozenConstant) k \cdot \abs{\prunedBadInterior(c_0)\cap \blockedclause}.
    \end{align}
    To see this, we first count all variables in $\prunedBadInterior(c_0)\cap \frozenclauses$ and include other missing variables in $\prunedBadInterior(c_0)\cap \blockedclause$. It holds that for any clause, there are at most $k$ variables, and this gives the first term. On the other hand, and for any blocked clause $c\in \prunedBadInterior(c_0)\cap \blockedclause$, we have $\abs{\vbl(c)\cap \Gamma(\sigma)\setminus \badvars} > \frozenConstant\cdot k$ and each of these variables is contained in some frozen clause in $\prunedBadInterior(c_0)\cap \frozenclauses$ by \Cref{lemma:random-cnf-pruning-lemma}. So there are at most $(1 - \frozenConstant) k$ variables in $c$ that are not counted yet. This gives the second term.

    Then, we give a lower bound of $\abs{V_1\cup V_2}$. We claim that
    \begin{align}\label{eq:random-cnf-bounded-blocked-clauses-lower-bound}
      \abs{V_1 \cup V_2}\ge (1-\frozenEta)\cdot (k-2)\cdot \tp{\abs{\prunedBadInterior(c_0)\cap \frozenclauses} + \abs{\prunedBadInterior(c_0)\cap \blockedclause}}.
    \end{align}
    To see this, $\Phi$ satisfies \Cref{property:clause-size} and \Cref{property:edge-expansion} with parameters $\rho = \frozenRho$, $\eta = \frozenEta$ and $B_1 = k - 2$ by \Cref{def:well-behaved-random-cnf}.
    Since every clause has size at least $k - 2$, let $c_1, \ldots, c_{\ell}$ be all clauses in $\prunedBadInterior(c_0)\cap (\frozenclauses\cup \blockedclause)$ and $S_i$ be all variables in $c_i$, note that $\ell \le \frozenRho \cdot m$ by the second item of \Cref{lemma:random-cnf-pruning-lemma}, we have
    \begin{align*}
      \abs{V_1 \cup V_2} \ge \abs{\bigcup_{i = 1}^{\ell} S_i} & \ge (1 - \frozenEta)\cdot (k-2)\cdot \ell \\
      & = (1-\frozenEta)\cdot (k-2)\cdot \tp{\abs{\prunedBadInterior(c_0)\cap \frozenclauses} + \abs{\prunedBadInterior(c_0)\cap \blockedclause}},
    \end{align*}
    where the last equation holds because $\frozenclauses$ and $\blockedclause$ are disjoint sets.
    This lemma follows by by combining \eqref{eq:random-cnf-bounded-blocked-clauses-upper-bound}, \eqref{eq:random-cnf-bounded-blocked-clauses-lower-bound}, the fact that $(1-\frozenEta)(k-2) > (1-\frozenConstant)k$ (due to the definitions of parameters in~\eqref{eq:parameters}) and rearranging the terms.
  \end{proof}
    
}

The following result is a direct consequence of \Cref{lemma:random-cnf-pruning-lemma}.
\begin{proposition}
  \label{prop:pruned-bad-interior-size}
  For any $(S, \sigma)$ generated by \emph{$\Pinning$},
  if $\abs{\badInterior(c_0)} \ge \log n$, then $\log n \le |\prunedBadInterior(c_0)| \le \frozenRho \cdot \alpha n$.
\end{proposition}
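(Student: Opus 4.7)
The plan is to derive the proposition as an immediate consequence of \Cref{lemma:random-cnf-pruning-lemma}, by splitting on whether or not $\abs{\badInterior(c_0)} \le \frozenRho \cdot \alpha n$. There is essentially no new content beyond the pruning lemma; the statement is a repackaging of item~2 of that lemma specialized to the regime $\abs{\badInterior(c_0)} \ge \log n$.

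First I would handle the easy case $\abs{\badInterior(c_0)} \le \frozenRho \cdot \alpha n$. Item~2 of \Cref{lemma:random-cnf-pruning-lemma} then gives $\prunedBadInterior(c_0) = \badInterior(c_0)$, and combining this equality with the hypothesis $\abs{\badInterior(c_0)} \ge \log n$ and with the case assumption directly yields $\log n \le \abs{\prunedBadInterior(c_0)} \le \frozenRho \cdot \alpha n$.

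Next I would handle the case $\abs{\badInterior(c_0)} > \frozenRho \cdot \alpha n$. Here item~2 gives $\frac{\frozenRho \cdot \alpha n}{k} \le \abs{\prunedBadInterior(c_0)} \le \frozenRho \cdot \alpha n$, so it suffices to verify $\frac{\frozenRho \cdot \alpha n}{k} \ge \log n$. Recalling the choice $\frozenRho = 2^{-k}$ from \eqref{eq:parameters}, and that we are in the regime $\alpha \ge 1/k^3$ with $k$ and $\alpha$ treated as constants, the quantity $\frac{\frozenRho \cdot \alpha n}{k} = \frac{\alpha n}{k \cdot 2^k}$ grows linearly in $n$ while $\log n$ is merely logarithmic; choosing the threshold $n_0(k,\alpha)$ large enough forces $\frac{\frozenRho \cdot \alpha n}{k} \ge \log n$, which completes the bound.

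There is no real obstacle: all of the structural work (connectedness of $\prunedBadInterior(c_0)$, the $1/k$ loss when pruning, and the invariance of the blocked-frozen covering property) has already been done in \Cref{lemma:random-cnf-pruning-lemma}. The only bookkeeping point is to ensure the constant-in-$n$ dependence on $k$ and $\alpha$ is absorbed into $n_0(k,\alpha)$, and to note that the conclusions of this proposition will be used downstream (together with \Cref{lemma:random-cnf-bad-interior-tail-bound}) to argue that a ``nice'' pinning occurs with constant probability, via \Cref{remark:random-cnf-associated-component-size}.
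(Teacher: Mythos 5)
Your proposal is correct and follows essentially the same route as the paper: the same case split on whether $\abs{\badInterior(c_0)} \le \frozenRho \cdot \alpha n$, with item~2 of \Cref{lemma:random-cnf-pruning-lemma} giving either $\prunedBadInterior(c_0) = \badInterior(c_0)$ or the bound $\frac{\frozenRho \cdot \alpha n}{k} \le \abs{\prunedBadInterior(c_0)} \le \frozenRho \cdot \alpha n$, and the linear-vs-logarithmic observation $\frac{\frozenRho \cdot \alpha n}{k} \ge \log n$ for $n \ge n_0(k,\alpha)$. No gaps.
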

\begin{proof}
  By \Cref{lemma:random-cnf-pruning-lemma}, it holds that $|\prunedBadInterior(c_0)| \le \frozenRho \cdot \alpha n$ and it suffices to show that $|\prunedBadInterior(c_0)| \ge \log n$.
  If $\abs{\badInterior(c_0)} \le \frozenRho \cdot \alpha n$, then by the definition of $\prunedBadInterior(c_0)$, it holds that $\prunedBadInterior(c_0) = \badInterior(c_0)$ and this proposition holds directly.
  So we assume that $\abs{\badInterior(c_0)} > \frozenRho \cdot \alpha n$. By \Cref{lemma:random-cnf-pruning-lemma}, it holds that $\frac{\frozenRho\cdot \alpha n}{k}\le |\prunedBadInterior(c_0)|\le \frozenRho\cdot \alpha n$ and it holds that $\frac{\frozenRho\cdot \alpha n}{k} \ge \log n$ for any $n$ sufficiently large. The proposition then follows.
\end{proof}

\color{black}

We then show that $|\prunedBadInterior(c_0) \cap \frozenclauses|$ has a lower bound in terms of $|\prunedBadInterior(c_0)|$.
Note that by the definition of frozen clauses, for any clause $c\in \prunedBadInterior(c_0)\cap \frozenclauses$, it holds that $c$ is not satisfied by the partial assignment $\sigma$, i.e. $c(\sigma)\neq \true$. There are at most $\frozenConstant k$ good variables that are not revealed. Meanwhile, by \Cref{obs:goodclauses-bounds}, note that there are at least $\goodlowerbound \defeq (1 - \badclauseConstant) k - 5k^{4 / 5}$ good variables in total, so there are at least $\revealedLowerbound = \goodlowerbound - \frozenConstant\cdot k = (1 - \badclauseConstant - \frozenConstant) k - 5 k^{4 / 5}$ good variables that have been revealed. Note that this matches the setting of $B_2$ in \Cref{def:well-behaved-random-cnf}.

We first lower bound the number of frozen clauses in the pruned associated component.

{
  \begin{lemma}
    Assume that the conditions in \Cref{lemma:valiant-algorithm-marginal-lower-bound} are satisfied.
    \label{lemma:random-cnf-lowerbound-frozen-clauses}
    For any $(S, \sigma)$ generated by \emph{$\Pinning$},
    if $\log n\le |\prunedBadInterior(c_0)| \le \frozenRho\cdot \alpha n$, it holds that
    \begin{align*}
      \abs{\prunedBadInterior(c_0)\cap \frozenclauses} \ge \varrho \cdot \abs{\prunedBadInterior(c_0)}, \quad \text{where } \varrho = \frac{1 - \frac{24k^5}{(1 - \frozenEta)(\badclauseConstant - \frozenEta)\degreeConstant}}{1 + \frac{\frozenEta + 2/k - 2\frozenEta/k}{\frozenConstant - \frozenEta - 2/k + 2\frozenEta/k}}.
    \end{align*}
  \end{lemma}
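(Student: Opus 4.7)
The plan is to decompose $\prunedBadInterior(c_0)$ into its three disjoint types of clauses (frozen, blocked, and bad) and then upper-bound the blocked and bad portions in terms of the frozen portion and the size of $\prunedBadInterior(c_0)$. Concretely, write
\begin{align*}
  \abs{\prunedBadInterior(c_0)}
  = \abs{\prunedBadInterior(c_0) \cap \frozenclauses}
  + \abs{\prunedBadInterior(c_0) \cap \blockedclause}
  + \abs{\prunedBadInterior(c_0) \cap \badclauses},
\end{align*}
since by construction every clause in the associated component (and hence in its pruned version) is frozen, blocked, or bad, and these three classes are pairwise disjoint.

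The blocked part is handled by \Cref{lemma:random-cnf-bounded-blocked-clauses}, which directly gives
$\abs{\prunedBadInterior(c_0) \cap \blockedclause} \le \lambda \cdot \abs{\prunedBadInterior(c_0) \cap \frozenclauses}$ with $\lambda = \frac{\frozenEta + 2/k - 2\frozenEta/k}{\frozenConstant - \frozenEta - 2/k + 2\frozenEta/k}$. For the bad part, I split $\badclauses$ according to its definition in \eqref{eq:modify-bad} into the original $\sbadclauses$ and the auxiliary set $\intersectionclause \cup \{c_0\}$. Because $G_\Phi[\prunedBadInterior(c_0)]$ is connected (Property~1 of \Cref{lemma:random-cnf-pruning-lemma}) and has size at least $\log n$, I invoke \Cref{property:bounded-bad-clauses} with parameter $\eta=\frozenEta$ to obtain
\begin{align*}
  \abs{\prunedBadInterior(c_0) \cap \sbadclauses}
  \le \frac{12 k^5}{(1-\frozenEta)(\badclauseConstant - \frozenEta)\degreeConstant} \cdot \abs{\prunedBadInterior(c_0)}.
\end{align*}
For the auxiliary piece, \eqref{eq:bounded-intersection-clause} gives $|\intersectionclause \cup \{c_0\}| \le k^{4/5} - 1$, an absolute constant in $n$; hence for $n$ large enough (so that $k^{4/5}/\log n$ is at most $\frac{12 k^5}{(1-\frozenEta)(\badclauseConstant - \frozenEta)\degreeConstant}$) this piece is also dominated by the same fraction of $|\prunedBadInterior(c_0)|$. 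Adding the two gives the total bad bound of $\frac{24 k^5}{(1-\frozenEta)(\badclauseConstant - \frozenEta)\degreeConstant} \cdot \abs{\prunedBadInterior(c_0)}$, which is exactly the numerator term appearing in $\varrho$.

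Combining these bounds, let $f = \abs{\prunedBadInterior(c_0) \cap \frozenclauses}$ and $N = \abs{\prunedBadInterior(c_0)}$; then
\begin{align*}
  N \;\le\; f + \lambda f + \frac{24 k^5}{(1-\frozenEta)(\badclauseConstant - \frozenEta)\degreeConstant} N,
\end{align*}
which rearranges to $f \ge \varrho \cdot N$, proving the lemma. The only mildly delicate point is handling the non-$\sbadclauses$ bad clauses in $\intersectionclause \cup \{c_0\}$: they do not satisfy any ``fraction of size'' bound on their own, but since their total count is a $k$-dependent constant and $N \ge \log n$, they are absorbed into the existing bad-clause fraction for all $n \ge n_0(k,\alpha)$. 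This is the step that uses the $N \ge \log n$ assumption in an essential way, whereas the blocked-clause bound and the bounded-bad-clauses property are both already formulated to apply at this scale.
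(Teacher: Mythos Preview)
Your proposal is correct and follows essentially the same approach as the paper: decompose $\prunedBadInterior(c_0)$ into its frozen, blocked, and bad parts, bound the bad part via \Cref{property:bounded-bad-clauses} plus the constant-size auxiliary set $\intersectionclause \cup \{c_0\}$ (absorbed using $|\prunedBadInterior(c_0)| \ge \log n$), bound the blocked part by \Cref{lemma:random-cnf-bounded-blocked-clauses}, and rearrange. The paper's proof is organized identically, with the same doubling of the $12k^5$ constant to $24k^5$ to swallow the $O(k^{4/5})$ extra bad clauses.
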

  \begin{proof}
    By \Cref{property:bounded-bad-clauses}, the assumption that $|\prunedBadInterior(c_0)| \ge \log n$ and the fact that $\badclauses$ is a union of $\sbadclauses$ and at most $k^{4/5}$ clauses, we have
    \begin{align*}
      \abs{\prunedBadInterior(c_0)\cap \badclauses} \le & ~ \frac{12k^5}{(1 - \frozenEta)(\badclauseConstant - \frozenEta)\degreeConstant}\abs{\prunedBadInterior(c_0)} + k^{4 / 5} \\
      \text{(by $n$ is sufficiently large)}\quad\le & ~ \frac{24k^5}{(1 - \frozenEta)(\badclauseConstant - \frozenEta)\degreeConstant}\abs{\prunedBadInterior(c_0)}.
    \end{align*}
    By the definition of $\badInterior(c_0)$, it only contains clauses in $\frozenclauses\uplus \blockedclause \uplus \badclauses$ and $\prunedBadInterior(c_0)$ is a subset of $\badInterior(c_0)$. Hence
    \[
      \abs{\prunedBadInterior(c_0)\cap \tp{\frozenclauses\cup \blockedclause}} \ge \tp{1 - \frac{24k^5}{(1 - \frozenEta)(\badclauseConstant - \frozenEta)\degreeConstant}}\abs{\prunedBadInterior(c_0)}.
    \]

    By \Cref{lemma:random-cnf-bounded-blocked-clauses}, we have that
    \[
      \abs{\prunedBadInterior(c_0)\cap \blockedclause}\le \frac{\frozenEta + 2/k - 2\frozenEta/k}{\frozenConstant - \frozenEta - 2/k + 2\frozenEta/k}\cdot \abs{\prunedBadInterior(c_0)\cap \frozenclauses}.\]
    Finally, by combining the above two inequalities and rearranging the terms, we have
    \begin{align*}
      \abs{\prunedBadInterior(c_0)\cap \frozenclauses} \ge \varrho \cdot \abs{\prunedBadInterior(c_0)}, \quad \text{where } \varrho = \frac{1 - \frac{24k^5}{(1 - \frozenEta)(\badclauseConstant - \frozenEta)\degreeConstant}}{1 + \frac{\frozenEta + 2/k - 2\frozenEta/k}{\frozenConstant - \frozenEta - 2/k + 2\frozenEta/k}}. & \qedhere
    \end{align*}
  \end{proof}
}

{
The following lemma originates from {\cite[Lemma 4.11]{chenCountingRandomSAT2025}}. We slightly modify its statement to fit our setting, which helps us show the diminishing of large associated components.
Recall that by \Cref{obs:goodclauses-bounds}, there are at least $\goodlowerbound \defeq (1 - \badclauseConstant) k - 5k^{4 / 5}$ good variables in total.
\begin{lemma}
  \label{lemma:local-uniformity-new}
  Assume that $\goodlowerbound \ge 10$ and $2^{\goodlowerbound} \ge 2\mathrm{e} k\cdot \degreeConstant\alpha$.
  Let $\varsigma \in \set{\true, \false}^S$ be a feasible partial assignment over $S$, where $S \subseteq \badvars$ is a subset of bad variables. 
  For any subset of good variables $T \subseteq \tp{V \setminus S} \cap \goodvars$, the following holds:
  $$\forall \tau \in \set{\true, \false}^T, \quad \Pr[X \sim \mu_{\Phi}]{X_T = \tau \mid X_S = \varsigma} \le \tp{\frac{1}{2} \exp\tp{\frac{1}{k}}}^{\abs{T}}.$$
\end{lemma}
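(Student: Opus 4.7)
The plan is to reduce the claim to a local-uniformity statement on only the good variables by conditioning further on the values of bad variables outside $S$, and then invoke the Lov\'asz local lemma in its conditional form (\Cref{thm:lovasz-local-lemma}). The structural observation that enables this reduction is that \emph{every} bad clause has its entire variable set contained in $\badvars$: the \textsf{IdentifyBad} procedure always adds $\vbl(c)$ to $\sbadvars$ whenever it adds $c$ to $\sbadclauses$, and the augmentation in~\eqref{eq:modify-bad} puts $\vbl(\intersectionclause)$ and $\vbl(c_0)$ into $\badvars$ as well. Consequently the satisfaction of any bad clause depends only on assignments to $\badvars$, and any clause containing even a single good variable must itself be a good clause.

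With this in hand, I would decompose the target by conditioning further on the value $\omega$ of the still-unpinned bad variables:
\begin{equation*}
  \Pr_{\mu_\Phi}[X_T = \tau \mid X_S = \varsigma] = \sum_\omega \Pr_{\mu_\Phi}[X_{\badvars \setminus S} = \omega \mid X_S = \varsigma]\cdot \Pr_{\mu_\Phi}[X_T = \tau \mid X_S = \varsigma,\ X_{\badvars \setminus S} = \omega],
\end{equation*}
so that it suffices to bound the inner probability uniformly over feasible $\omega$. Fix such an $\omega$ and let $\Phi_\omega$ be the CNF formula on $\goodvars$ obtained by substituting $(\varsigma,\omega)$ into $\Phi$ and deleting every already-satisfied clause. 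By the structural observation above, each surviving clause of $\Phi_\omega$ originates from some $c \in \goodclauses$ and consists exactly of the variables in $\vblg(c)$; \Cref{obs:goodclauses-bounds} gives $\goodlowerbound \le |\vblg(c)| \le k$, while \Cref{observationL:good_variable_degree} gives that each good variable has degree at most $\degreeConstant \alpha$ in $\Phi_\omega$. Since the conditional distribution on $\goodvars$ is exactly $\mu_{\Phi_\omega}$, the task reduces to showing $\mu_{\Phi_\omega}[X_T = \tau] \le (\tfrac{1}{2}\mathrm{e}^{1/k})^{|T|}$.

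For this last step I would apply the conditional bound in \Cref{thm:lovasz-local-lemma}. Let $\pi$ be the uniform product distribution on $\{\true,\false\}^{\goodvars}$, and for each clause $c$ of $\Phi_\omega$ let $B_c$ be the event that $c$ is violated. Then $\Pr_\pi[B_c] \le 2^{-\goodlowerbound}$ and each $B_c$ has at most $k\degreeConstant\alpha - 1$ dependent neighbors. With the uniform weight $x(B_c) = 1/(k\degreeConstant\alpha)$, the LLL feasibility condition $\Pr_\pi[B_c] \le x(B_c)\prod_{c'\in \Gamma(B_c)}(1 - x(B_{c'}))$ reduces, up to lower-order terms, to $2^{\goodlowerbound} \ge \mathrm{e}\,k\,\degreeConstant\alpha$, which follows from the hypothesis $2^{\goodlowerbound} \ge 2\mathrm{e}\,k\,\degreeConstant\alpha$. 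Applying the conditional bound to the event $E = \{X_T = \tau\}$, which has $\Pr_\pi[E] = 2^{-|T|}$ and $|\Gamma(E)| \le |T|\degreeConstant\alpha$, yields
\begin{equation*}
  \mu_{\Phi_\omega}[X_T = \tau] \le 2^{-|T|}\cdot (1 - x(B_c))^{-|T|\degreeConstant\alpha} \le 2^{-|T|}\exp\!\left(\frac{|T|}{k}\right),
\end{equation*}
and averaging over $\omega$ completes the proof.

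The main obstacle is to absorb all constants into the single factor-of-two slack in the hypothesis $2^{\goodlowerbound} \ge 2\mathrm{e}\,k\,\degreeConstant\alpha$: one needs this slack to certify both (i) the LLL feasibility for every $B_c$, and (ii) the product estimate $(1 - x(B_c))^{-|T|\degreeConstant\alpha} \le \mathrm{e}^{|T|/k}$ simultaneously. A small perturbation of $x(B_c)$ (for instance $x = 1/(k\degreeConstant\alpha + 1)$), together with the technical assumption $\goodlowerbound \ge 10$ ensuring $k\degreeConstant\alpha$ is large, is enough to push both inequalities through cleanly.
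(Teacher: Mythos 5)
Your proposal is correct and follows essentially the same route as the paper's proof: decompose via the law of total probability over the assignment $\omega$ of the unpinned bad variables (using that every bad clause has all its variables in $\badvars$, so only good clauses survive, each with at least $\goodlowerbound$ free good variables and variable degree at most $\degreeConstant\alpha$), then apply the conditional bound of \Cref{thm:lovasz-local-lemma}. The only difference is the choice of LLL weight — you take $x = 1/(k\degreeConstant\alpha+1)$ while the paper takes $x(c) = \mathrm{e}\cdot 2^{-\goodlowerbound}$ — which is immaterial, though note that $\goodlowerbound \ge 10$ does not by itself make $k\degreeConstant\alpha$ large; it is instead what certifies the LLL feasibility condition when $k\degreeConstant\alpha$ happens to be small.
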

\begin{proof}
  By the law of total probability, it suffices to show that for any $\omega \in \set{\true, \false}^{\tp{V \setminus S} \cap \badvars}$ with $\Pr[X \sim \mu_{\Phi}]{X_{\badvars \setminus S} = \omega \mid X_S = \varsigma} > 0$, it holds that
  $$\Pr[X \sim \mu_{\Phi}]{X_T = \tau \mid X_S = \varsigma, X_{{\badvars \setminus S}} = \omega} \le \tp{\frac{1}{2} \exp\tp{\frac{1}{k}}}^{\abs{T}}.$$
  To see the above, note that conditioned on $X_S = \varsigma$ and $X_{{\badvars \setminus S}} = \omega$, all bad clauses are satisfied and the simplified CNF formula only contains good clauses. Each remaining good clause has at least $\goodlowerbound$ good variables that are not fixed, and each variable has degree at most $\degreeConstant \alpha$. Then we can apply \Cref{thm:lovasz-local-lemma} by setting the parameter $x(c) = \e \cdot 2^{-\goodlowerbound}$. Note that the condition holds by verifying that
  $$2^{-\goodlowerbound} \le x(c) \prod_{\substack{c' \in \goodclauses \\ \vbl(c) \cap \vbl(c') \neq \emptyset}} (1 - x(c')),$$
  which holds since $\goodlowerbound \ge 10$ and $2^{\goodlowerbound} \ge 2\mathrm{e} k\cdot \degreeConstant\alpha$.
  Thus, by \Cref{thm:lovasz-local-lemma}, let $A$ be the event that $X_T = \tau$ and $\vbl(A)$ be the set of variables that $A$ is defined on, we have
  \begin{align*}
    &\Pr[X \sim \mu_{\Phi}]{X_T = \tau \mid X_S = \varsigma, X_{\badvars \setminus S} = \omega} \\
    \le& 2^{-\abs{T}} \cdot \prod_{\substack{c' \in \goodclauses \\ \vbl(c') \cap \vbl(A) \neq \emptyset}} (1 - x(c'))^{-1}
    \le 2^{-\abs{T}} \cdot \tp{1 - \e \cdot 2^{-\goodlowerbound}}^{-\abs{T} \cdot \degreeConstant \alpha} \\
    \le& \tp{2 \exp\tp{-2\e \cdot 2^{-\goodlowerbound} \cdot \degreeConstant \alpha}}^{-\abs{T}}
    \le \tp{\frac{1}{2} \exp\tp{\frac{1}{k}}}^{\abs{T}}. \qedhere
  \end{align*}
\end{proof}
}

To show the diminishing of large associated components, we are going to apply the local uniformity on these revealed variables.
Fix an arbitrary $(S, \sigma)$ generated by $\Pinning$.
We give a lower bound of $|\tp{\vbl(\sigma) \setminus \badvars} \cap \vbl({\prunedBadInterior(c_0) \cap \frozenclauses})|$ which is a subset of variables in $\sigma$.
We remark here that $\vbl(\sigma) \setminus \badvars$ is the set of revealed variables during the execution of \Pinning, excluding the initial pinning $\sigma^*_{\le i - 1}$, and we actually lower bound the number of revealed good variables in frozen clauses of the pruned associated component.
Recall that $\revealedLowerbound = \goodlowerbound - \frozenConstant\cdot k = (1 - \badclauseConstant - \frozenConstant) k - 5 k^{4 / 5}$ is the minimal number of variables that are revealed in each clause in $\prunedBadInterior(c_0)\cap \frozenclauses$. This matches the setting of $B_2$ of \Cref{property:edge-expansion} in each clause of \Cref{def:well-behaved-random-cnf}.

\begin{lemma}
  \label{lemma:random-cnf-lowerbound-sampled-vars}
  Assume that the conditions in \Cref{lemma:valiant-algorithm-marginal-lower-bound} are satisfied.
  For any $(S, \sigma)$ generated by \emph{$\Pinning$}, it holds that
  $$\abs{\tp{\vbl(\sigma) \setminus \badvars} \cap \vbl\tp{\prunedBadInterior(c_0) \cap \frozenclauses}} \ge (1 - \frozenEta)\cdot \revealedLowerbound\cdot \varrho \cdot \abs{\prunedBadInterior(c_0)},$$
  where
  $$\revealedLowerbound = (1 - \badclauseConstant - \frozenConstant) k - 5 k^{4 / 5}, \quad \varrho = \frac{1 - \frac{24k^5}{(1 - \frozenEta)(\badclauseConstant - \frozenEta)\degreeConstant}}{1 + \frac{\frozenEta + 2/k - 2\frozenEta/k}{\frozenConstant - \frozenEta - 2/k + 2\frozenEta/k}}.$$
\end{lemma}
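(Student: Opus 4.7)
The plan is to (i) show that each frozen clause in the pruned associated component contributes at least $\revealedLowerbound$ revealed good variables, (ii) sum these contributions via the edge expansion property to obtain a lower bound on the union, and (iii) invoke \Cref{lemma:random-cnf-lowerbound-frozen-clauses} to control the number of such frozen clauses. The regime of interest, consistent with the use of this lemma inside \Cref{lemma:random-cnf-bad-interior-tail-bound}, is $\log n \le |\prunedBadInterior(c_0)| \le \frozenRho \cdot \alpha n$ (guaranteed by \Cref{prop:pruned-bad-interior-size} whenever the tail event $|\badInterior(c_0)| \ge \log n$ occurs); outside this range the claim is either vacuous or follows from the same argument applied directly.

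First, fix an arbitrary $c \in \prunedBadInterior(c_0) \cap \frozenclauses$. Since $c$ is a good clause, \Cref{obs:goodclauses-bounds} gives $|\vblg(c)| \ge \goodlowerbound = (1-\badclauseConstant)k - 5 k^{4/5}$. Since $c$ is frozen, the definition yields $|\vblsg(c)| \le \frozenConstant k$. Define
\[
  R_c \triangleq \vblg(c) \setminus \Gamma(\sigma) = \vblg(c) \cap \vbl(\sigma).
\]
Then $|R_c| \ge \goodlowerbound - \frozenConstant k = \revealedLowerbound$. Moreover, since $R_c \subseteq \goodvars = V \setminus \badvars$ and $R_c \subseteq \vbl(\sigma)$, we have
\[
  R_c \subseteq (\vbl(\sigma) \setminus \badvars) \cap \vbl(c) \subseteq (\vbl(\sigma) \setminus \badvars) \cap \vbl(\prunedBadInterior(c_0) \cap \frozenclauses).
\]

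Next, I would apply the edge expansion property. By \Cref{def:well-behaved-random-cnf}, $\Phi$ satisfies $(\frozenRho, \frozenEta, B_2)$-edge expansion with $B_2 = \revealedLowerbound$. Listing the clauses of $\prunedBadInterior(c_0) \cap \frozenclauses$ as $c_1, \ldots, c_\ell$ with $\ell \le |\prunedBadInterior(c_0)| \le \frozenRho \cdot \alpha n$, and choosing $S_i = R_{c_i} \subseteq \vbl(c_i)$ of size at least $\revealedLowerbound$ as above, \Cref{property:edge-expansion} gives
\[
  \Bigl| \bigcup_{i=1}^\ell R_{c_i} \Bigr| \ge (1-\frozenEta) \cdot \revealedLowerbound \cdot \ell = (1-\frozenEta) \cdot \revealedLowerbound \cdot |\prunedBadInterior(c_0) \cap \frozenclauses|.
\]
Since $\bigcup_i R_{c_i} \subseteq (\vbl(\sigma) \setminus \badvars) \cap \vbl(\prunedBadInterior(c_0) \cap \frozenclauses)$, this lower bounds the quantity we want in terms of $|\prunedBadInterior(c_0) \cap \frozenclauses|$.

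Finally, I would apply \Cref{lemma:random-cnf-lowerbound-frozen-clauses} (valid in the size regime above) to convert the lower bound into one involving $|\prunedBadInterior(c_0)|$, namely $|\prunedBadInterior(c_0) \cap \frozenclauses| \ge \varrho \cdot |\prunedBadInterior(c_0)|$, yielding the stated inequality. No single step is a serious obstacle; the only place requiring care is verifying that the sets $R_c$ genuinely lie in $\vbl(\sigma) \setminus \badvars$ (which uses that they are defined through $\vblg$, i.e.\ good variables only, so the bad variables used to extend the initial pinning $\sigma^*_{\le i-1}$ and the bad variables appearing via $\vbl(c_0) \cup \vbl(\intersectionclause)$ are correctly excluded), and that the size condition $\ell \le \frozenRho m$ needed for edge expansion is (essentially, up to the floor in $m = \lfloor \alpha n \rfloor$) implied by the pruning bound of \Cref{lemma:random-cnf-pruning-lemma}.
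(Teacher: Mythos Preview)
Your proposal is correct and follows essentially the same approach as the paper: define for each frozen clause $c$ in the pruned component the set of revealed good variables $R_c = \vblg(c) \cap \vbl(\sigma)$ (the paper writes it equivalently as $(\vbl(\sigma)\setminus \badvars)\cap \vbl(c)$), note $|R_c| \ge \revealedLowerbound$, apply the $(\frozenRho,\frozenEta,B_2)$-edge expansion to the union, and finish with \Cref{lemma:random-cnf-lowerbound-frozen-clauses}. Your remarks about the size regime and the verification that $R_c \subseteq \vbl(\sigma)\setminus \badvars$ are exactly the minor points the paper handles implicitly.
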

\begin{proof}
  Due to \Cref{def:well-behaved-random-cnf}, \Cref{property:edge-expansion} holds with parameters $\rho = \frozenRho$, $\eta = \frozenEta$ and $B = \revealedLowerbound$.
  Let $c_1, \ldots, c_{\ell}$ be all clauses in $\prunedBadInterior(c_0)\cap \frozenclauses$ and $S_i = \tp{\vbl(\sigma) \setminus \badvars} \cap \vbl(c_i)$ be all revealed variables in $c_i$, we have
  \begin{align*}
    \abs{\tp{\vbl(\sigma) \setminus \badvars} \cap \vbl\tp{\prunedBadInterior(c_0) \cap \frozenclauses}} & = \abs{\bigcup_{i = 1}^{\ell} S_i} \ge (1 - \frozenEta)\cdot \revealedLowerbound\cdot \ell \\
    & = (1 - \frozenEta)\cdot \revealedLowerbound\cdot \abs{\prunedBadInterior(c_0)\cap \frozenclauses} \\
    & \ge (1 - \frozenEta)\cdot \revealedLowerbound\cdot \varrho \cdot \abs{\prunedBadInterior(c_0)},
  \end{align*}
  where the last inequality follows from \Cref{lemma:random-cnf-lowerbound-frozen-clauses}.
\end{proof}

Now, we are ready to prove \Cref{lemma:random-cnf-bad-interior-tail-bound}. Recall that $(S,Y_S)$ is the output of $\Pinning(Y,v_i^*)$, where $Y \sim \mu^{\sigma^*_{\le i - 1}}_{\Phi}$.
\TailBound*
\begin{proof}
  By \Cref{prop:pruned-bad-interior-size}, if $|\badInterior[Y_S](c_0)|\ge \log n$, then $\log n \le |\prunedBadInterior[Y_s](c_0)| \le  \frozenRho\cdot \alpha n$. So we have 
  \[
    \Pr{\abs{\badInterior[Y_S](c_0)}\ge \log n} = \Pr{\log n \le \abs{\prunedBadInterior[Y_S](c_0)} \le \frozenRho\cdot \alpha n}.
  \]
Fix an arbitrary subset of clause $C^\sharp$ and an arbitrary subset of variable $V^\sharp$, the probability of the event satisfying that $\frozenclauses[Y_S] \cap \prunedBadInterior[Y_S](c_0) = C^\sharp$ and $\vbl(C^\sharp) \cap S = V^\sharp$ can be upper bounded by $\tp{\frac{1}{2} \exp\tp{\frac{1}{k}}}^{|V^\sharp|}$. To see this, note that variables in $V^\sharp$ are all revealed variables during the execution of \Pinning, excluding the initial pinning $\sigma^*_{\le i - 1}$. 
Hence, the event happens only if all revealed variables in $V^\sharp$ take the values that forbid the clauses in $C^\sharp$. The upper bound follows from \Cref{lemma:local-uniformity-new}.

Next, we consider the number of possible $C^\sharp$ and $V^\sharp$. Then, this lemma follows from a union bound over all possible $C^\sharp$ and $V^\sharp$.
Recall that $\prunedBadInterior[Y_S](c_0)$ has the following properties:
\begin{enumerate}
\item $\prunedBadInterior[Y_S](c_0)$ is a connected component in $G_\Phi$;
\item For any frozen clause in $\prunedBadInterior[Y_S](c_0)$, at least $\revealedLowerbound$ variables have been revealed in $S$;
\item \Cref{lemma:random-cnf-lowerbound-sampled-vars} holds: the total number of revealed variables in the frozen clauses of the pruned associated component has a lower bound.
\end{enumerate}

Fix an arbitrary size $\ell$ with $\log n \le \ell \le \frozenRho\cdot \alpha n$.
There are at most $\alpha n\cdot n^3(\mathrm{e}k^2\alpha)^\ell$ choices of possible connected components of size $\ell$. For each connected component, we enumerate all possible choices of frozen clauses and revealed variables in these frozen clauses. We have the following upper bound on the number of choices for possible frozen clauses and revealed variables:
$$\alpha n \cdot n^3(\mathrm{e}k^2\alpha)^\ell \cdot 2^{\ell} \cdot \tp{\sum_{i = 0}^{\ceil{k - \revealedLowerbound}} \binom{k}{i}}^{\ell} \le \alpha n \cdot n^3(\mathrm{e}k^2\alpha)^\ell \cdot \tp{20 k \cdot 2^{10 k^{4 / 5} \log k}}^{\ell}.$$
Note that the number of revealed variables is at least $(1 - \frozenEta)\cdot \revealedLowerbound\cdot \varrho \cdot \ell$ for a fixed $\ell$.

Finally, by the union bound, the probability that $\abs{\badInterior(c_0)}\ge \log n$ is upper bounded by
\begin{equation*}
    \sum_{\ell = \ceil{\log n}}^{\ceil{\frozenRho\cdot \alpha n}} \alpha n \cdot n^3 \tp{\mathrm{e}k^2\alpha}^\ell \cdot \tp{20 k \cdot 2^{10 k^{4 / 5} \log k}}^{\ell} \cdot \tp{\frac{1}{2}\exp\tp{\frac{1}{k}}}^{(1 - \frozenEta)\cdot \revealedLowerbound\cdot \varrho \cdot \ell}. \qedhere
\end{equation*}
\end{proof}

\subsubsection{Putting everything together}\label{sec:proof-random-cnf-marginal-lower-bound-nice-probability}

\begin{proof}[Proof of \Cref{lemma:random-cnf-marginal-lower-bound-nice-probability}]
  As discussed in the \hyperref[proof:random-cnf-marginal-lower-bound-nice-probability-easy-case]{proof for the easy case}, the lemma holds when $\alpha < 1 / k^3$ or there is no clause containing $v^*_i$. It then suffices to prove the lemma when $\alpha \ge 1 / k^3$ and there is at least one clause containing $v^*_i$.
  We then show that the random process given in \Cref{alg:reveal} outputs a nice pinning as defined in \Cref{def:nice-pinning} with probability at least $1/2$.

  By \Cref{lemma:random-cnf-reveal-is-gibbs}, the process is indeed a conditional Gibbs revealing process, and the first two properties always hold. 
  It suffices to show that the returned pinning satisfies \Cref{item:nice-pinning-main} with probability at least $1 / 2$.
  Let $(S, Y_S)$ be the output of $\Pinning(Y, v_i^*)$ where $Y \sim \mu_\Phi^{\sigma^*_{\le i - 1}}$. As discussed in \Cref{remark:random-cnf-associated-component-size}, with probability $1$, the following definitions yield well-defined objects.
  Let $\Phi'$ be the simplified formula after applying the simplification process on $\Phi$ given $Y_S$. Let $C'$ be the maximal connected component of $G_{\Phi'}$ such that $v_i^* \in \vbl(C')$. 
  By going through the proof for the easy case and taking \Cref{obs:random-ksat-frozen-clause} into consideration, $C'$ satisfies \Cref{item:nice-pinning-1} and \Cref{item:nice-pinning-2} with probability $1$.

  To verify \Cref{item:nice-pinning-3}, it suffices to show that the probability that the size of the associated component with respect to the pinning $(S, Y_S)$ is at least $\log n$ is at most $1/2$.
  By \Cref{remark:random-cnf-associated-component-size} and \Cref{lemma:random-cnf-bad-interior-tail-bound}, we have
  \begin{align*}
    \Pr{C' \ge \log n} &\le \Pr{\abs{\badInterior[Y_S](c')}\ge \log n} \\
    &\le \sum_{\ell = \ceil{\log n}}^{\ceil{\frozenRho\cdot \alpha n}} \alpha n \cdot n^3 \tp{\mathrm{e}k^2\alpha}^\ell \cdot \tp{20 k \cdot 2^{10 k^{4 / 5} \log k}}^{\ell} \cdot \tp{\frac{1}{2}\exp\tp{\frac{1}{k}}}^{(1 - \frozenEta)\cdot \revealedLowerbound\cdot \varrho \cdot \ell} \\
    &\le \alpha n^4 \sum_{\ell = \ceil{\log n}}^{\ceil{\frozenRho\cdot \alpha n}} \sqb{\mathrm{e}k^2\alpha \cdot 20 k \cdot 2^{10 k^{4 / 5} \log k} \cdot \tp{\frac{1}{2}\exp\tp{\frac{1}{k}}}^{(1 - k^{-2 / 5}) \cdot (1 - 2k^{-1/5}) \cdot (k - 8 k^{4 / 5})}}^{\ell} \\
    &\le \alpha n^4 \sum_{\ell = \ceil{\log n}}^{\ceil{\frozenRho\cdot \alpha n}} \sqb{\mathrm{e}k^2\alpha \cdot 20 k \cdot 2^{10 k^{4 / 5} \log k} \cdot 2^{-(k - 16 k^{4 / 5})}}^{\ell} \le \alpha n^4 \sum_{\ell = \ceil{\log n}}^{+\infty} 2^{-8 \ell} \le 1 / 2,
  \end{align*}
  where the third inequality follows by plugging the parameters for $\varrho$:
  $$\varrho = \frac{1 - \frac{24k^5}{(1 - k^{-2/5})(k^{-1/5} - k^{-2/5})\cdot 12 k^7}}{1 + \frac{k^{-2/5} + 2k^{-1} - 2k^{-7/5}}{k^{-1/5} - k^{-2/5} - 2k^{-1} + 2k^{-7/5}}} \ge \frac{1 - 8k^{-9/5}}{1 + 2k^{-1/5}} \ge 1 - 2k^{-1/5}.$$
  Combining the above, the lemma holds.
\end{proof}

\subsection{Proof of well-behavedness of random CNF formulas}\label{sec:proof-well-behavedness-random-cnf}

\begin{fact}
    \label{lemma:wel-behaved-distinct-vbl}
    Let $k$ and $\alpha$ be two constants.
    For $n$ large enough, with probability $1 - o(1 / n)$ over the random formula $\Phi = \Phi(k, n, m = \floor{\alpha n})$, $\abs{\vbl(c)} \ge k - 2$ holds for every $c \in \+C$.
  \end{fact}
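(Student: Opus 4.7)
The plan is a direct first-moment computation followed by a union bound. Since each clause $c$ is generated as $k$ literals drawn independently and uniformly with replacement from $\{v_1,\neg v_1,\ldots,v_n,\neg v_n\}$, the underlying variables (ignoring signs) are $k$ iid uniform samples from $\{v_1,\ldots,v_n\}$. The event $|\vbl(c)| \le k-3$ is precisely the event that these $k$ draws take at most $k-3$ distinct values, which I will show has probability $O_k(n^{-3})$ for each individual clause.

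The key step is to classify the outcome by the partition of $[k]$ induced on positions by equality of variables. If exactly $r$ distinct variables appear, the number of such partitions is the Stirling number $S(k,r)$, and for any fixed partition with $r$ blocks the probability the $k$ draws are constant on each block and distinct across blocks equals $n(n-1)\cdots(n-r+1)/n^k \le 1/n^{k-r}$. Summing,
$$\Pr\bigl[|\vbl(c)| \le k-3\bigr] \;\le\; \sum_{r=1}^{k-3} \frac{S(k,r)}{n^{k-r}} \;=\; O_k\!\left(\frac{1}{n^3}\right),$$
where the dominant contribution comes from $r = k-3$ (and gives the critical three extra factors of $1/n$).

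Finally, a union bound over the $m = \lfloor \alpha n \rfloor$ independently generated clauses in $\+C$ yields
$$\Pr\bigl[\exists\, c \in \+C : |\vbl(c)| \le k-3\bigr] \;\le\; m \cdot O_k(n^{-3}) \;=\; O_{k,\alpha}(n^{-2}) \;=\; o(1/n),$$
which is exactly the claim. There is no real obstacle here: the proof is a routine calculation, and the only subtlety is bookkeeping the partition structure carefully so that one observes that having at most $k-3$ distinct variables forces at least three "collisions," producing the saving of $n^{-3}$ needed to beat the $m = \Theta(n)$ loss from the union bound.
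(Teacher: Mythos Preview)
Your proposal is correct and follows essentially the same approach as the paper: bound the probability that a single clause has at most $k-3$ distinct variables by $O_k(n^{-3})$, then apply a union bound over the $m = \lfloor \alpha n \rfloor$ clauses. The only cosmetic difference is bookkeeping---the paper enumerates over the set of $j \le k-3$ variables that could appear (using $\binom{n}{j}(j/n)^k$) rather than over partitions of $[k]$ via Stirling numbers, but both computations yield the same $O_k(n^{-3})$ bound and the argument is otherwise identical.
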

  \begin{proof}
    For a fixed clause $c \in \+C$, the probability that $\abs{\vbl(c)} < k - 2$ is at most
    $$\sum_{j = 1}^{k - 3} \binom{n}{j} \tp{\frac{j}{n}}^k \le k \tp{\frac{\mathrm{e}n}{k - 3}}^{k - 3} \tp{\frac{k - 3}{n}}^k \le \frac{k^4 \mathrm{e}^{k - 3}}{n^3}.$$
    The lemma follows from a union bound over all $m \le \alpha n$ clauses.
  \end{proof}

  \begin{fact}
    \label{lemma:well-behaved-shared-vbl}
    Let $k$ and $\alpha$ be two constants.
    For $n$ large enough, with probability $1 - o(1 / n)$ over the random formula $\Phi = \Phi(k, n, m = \floor{\alpha n})$, $\abs{\vbl(c) \cap \vbl(c')} \le 3$ holds for every two distinct clauses $c, c' \in \+C$.
  \end{fact}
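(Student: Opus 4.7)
The plan is to prove the statement by a straightforward union bound over pairs of clauses, following the same template used in \Cref{lemma:wel-behaved-distinct-vbl}. Fix two distinct clauses $c, c' \in \+C$. The $k$ variables in each of $c, c'$ are generated as i.i.d.\@ uniform samples from the $n$ variables (ignoring the signs of the literals). I first upper bound the probability that a specific $4$-element subset of variables is contained in both $\vbl(c)$ and $\vbl(c')$, then take a union bound over the $\binom{n}{4}$ choices of such subsets, and finally a union bound over the $\binom{m}{2}$ pairs of clauses.

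For the first step, fix a set of $4$ distinct variables $U = \set{u_1, u_2, u_3, u_4}$ and a single clause $c$. The event $U \subseteq \vbl(c)$ occurs only if there exists an injection $\pi \colon [4] \to [k]$ such that the $\pi(i)$-th literal of $c$ is drawn on variable $u_i$ for every $i \in [4]$. For any fixed $\pi$, this occurs with probability exactly $1/n^4$ by independence of the positions of $c$, so a union bound over the $k(k-1)(k-2)(k-3) \le k^4$ ordered injections gives
\begin{equation*}
  \Pr{U \subseteq \vbl(c)} \le \frac{k^4}{n^4}.
\end{equation*}
Since $c$ and $c'$ are generated independently, the probability that $U \subseteq \vbl(c) \cap \vbl(c')$ is at most $k^8 / n^8$.

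Summing over all $\binom{n}{4} \le n^4 / 24$ choices of $U$, the probability that $\abs{\vbl(c) \cap \vbl(c')} \ge 4$ for any fixed pair $c \neq c'$ is at most $k^8 / (24 n^4)$. Taking a final union bound over the at most $\binom{m}{2} \le \alpha^2 n^2 / 2$ pairs of distinct clauses yields
\begin{equation*}
  \Pr{\exists\, c \neq c' \in \+C \colon \abs{\vbl(c) \cap \vbl(c')} \ge 4} \le \frac{\alpha^2 k^8}{48 n^2} = o\tp{\frac{1}{n}},
\end{equation*}
which is the desired bound. No step in this argument presents a real obstacle; the only mild subtlety is to phrase the first-step calculation in a way that correctly accounts for repeated literals within a single clause (handled here by upper bounding via ordered injections rather than trying to argue about distinct positions).
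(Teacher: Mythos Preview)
Your proof is correct and follows essentially the same approach as the paper: bound the probability that a fixed pair of clauses shares at least four variables via a union bound over the $\binom{n}{4}$ candidate $4$-sets, then union-bound over the $O_\alpha(n^2)$ pairs of clauses. Your constant is in fact a bit sharper than the paper's, but the structure of the argument is identical.
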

  \begin{proof}
    For a pair of distinct clauses $c, c' \in \+C$, the probability that $\abs{\vbl(c) \cap \vbl(c')} > 3$ is at most
    $$\frac{\binom{n}{4} \cdot (4k)^4 \cdot (4k)^4 \cdot n^{k - 4} \cdot n^{k - 4}}{n^k \cdot n^k} \le \frac{4^8 \cdot k^8}{n^4}.$$
    The lemma follows from a union bound over all $\binom{m}{2} = O_\alpha(n^2)$ pairs of clauses.
  \end{proof}
  
  \begin{lemma}[{\cite[Lemma A.6]{chenCountingRandomSAT2025}}]
    \label{lemma:random-cnf-growth-rate-graph}
    Let $k$ and $\alpha$ be two constants.
    Suppose $\alpha \le 2^k$. With probability $1 - o(1/n)$ over the random formula $\Phi = \Phi(k, n, m = \floor{\alpha n})$ with fixed density $\alpha$, $H_\Phi$ satisfies that for every clause $c$ in $\Phi$ and $\ell \ge 1$, there are at most $n^3(\mathrm{e}k^2\alpha)^\ell$ connected sets of clauses in $G_\Phi$ that contain $c$ and have size $\ell$.
  \end{lemma}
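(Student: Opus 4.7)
The plan is a first-moment argument. For a fixed clause $c_0 \in \+C$ and a fixed size $\ell \ge 1$, let $N_\ell(c_0)$ denote the number of connected sets of size $\ell$ in $G_\Phi$ containing $c_0$. I will show $\E{N_\ell(c_0)} \le (ek^2\alpha)^\ell$, and then conclude via Markov's inequality and a union bound.

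To bound $\E{N_\ell(c_0)}$, observe that every connected set $T$ of size $\ell$ containing $c_0$ admits a spanning tree $\tau$ of $G_\Phi[T]$. Summing over the $\binom{m-1}{\ell-1}$ choices of the other $\ell-1$ clauses in $T$ and the $\ell^{\ell-2}$ labeled trees on these $\ell$ vertices (Cayley's formula), it suffices to bound, for a fixed labeled tree $\tau$ on $\ell$ i.i.d.\ random clauses, the probability that $\tau \subseteq G_\Phi$. I prove this by induction on $\ell$ via leaf-peeling: pick a leaf $c_{i_0}$ of $\tau$ with parent $c_{p_0}$, and condition on all other clauses. Since $c_{i_0}$ is independent of the rest, for any fixed $c_{p_0}$,
\[
  \Pr{\vbl(c_{i_0}) \cap \vbl(c_{p_0}) \neq \emptyset \mid c_{p_0}} \le |\vbl(c_{p_0})| \cdot \frac{k}{n} \le \frac{k^2}{n},
\]
using a union bound over $\vbl(c_{p_0})$ together with the fact that each slot of $c_{i_0}$ is uniform over $2n$ literals, so any fixed variable lands in $c_{i_0}$ with probability at most $k/n$. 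Thus $\Pr{\tau \subseteq G_\Phi} \le (k^2/n) \cdot \Pr{(\tau \setminus c_{i_0}) \subseteq G_\Phi}$, and iterating yields $\Pr{\tau \subseteq G_\Phi} \le (k^2/n)^{\ell-1}$. Plugging in $\binom{m-1}{\ell-1} \le (em/(\ell-1))^{\ell-1}$ with $m = \floor{\alpha n}$, and using $\ell^{\ell-2}/(\ell-1)^{\ell-1} = O(1/\ell)$, I obtain $\E{N_\ell(c_0)} \le (ek^2\alpha)^\ell / \ell$.

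Markov's inequality then gives $\Pr{N_\ell(c_0) > n^3(ek^2\alpha)^\ell} \le 1/(\ell n^3)$. Summing over $\ell \in \{1, \ldots, m\}$ yields failure probability $O(\log n / n^3)$ per clause, and a final union bound over the $m = O(n)$ possible choices of $c_0 \in \+C$ gives total failure probability $O(\log n / n^2) = o(1/n)$, as required.

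The main technical step is bounding $\Pr{\tau \subseteq G_\Phi}$ for a fixed spanning tree on specific i.i.d.\ random clauses, since the edge events $\{\vbl(c_i) \cap \vbl(c_j) \neq \emptyset\}$ are correlated across edges sharing a clause. A naive witness-variable-per-edge union bound either over- or under-counts when witness variables collide at a shared clause; leaf-peeling sidesteps this cleanly by reducing each step to a single-edge conditional probability, each bounded by $k^2/n$ with no accounting needed for collisions. The resulting rate $(ek^2\alpha)^\ell$ matches the natural expected degree $\E{|N(c_0)|} \le k^2\alpha$ in $G_\Phi$, and the $n^3$ slack in the claimed bound comfortably absorbs the union bound over $O(n^2)$ pairs $(c_0, \ell)$.
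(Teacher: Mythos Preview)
The paper does not prove this lemma; it is quoted verbatim from \cite[Lemma A.6]{chenCountingRandomSAT2025} and used as a black box. So there is no ``paper's own proof'' to compare against.

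Your argument is the standard first-moment/spanning-tree approach and is correct. A couple of minor remarks: the intermediate claim $\E{N_\ell(c_0)} \le (ek^2\alpha)^\ell/\ell$ is slightly loose as stated, since what you actually get from $\binom{m-1}{\ell-1}\ell^{\ell-2}(k^2/n)^{\ell-1}$ is $\frac{\ell^{\ell-2}}{(\ell-1)^{\ell-1}}(ek^2\alpha)^{\ell-1} \le \frac{e}{\ell-1}(ek^2\alpha)^{\ell-1}$, and turning this into $(ek^2\alpha)^\ell/\ell$ would need $k^2\alpha \ge 2$. But this does not matter: plugging $\frac{e}{\ell-1}(ek^2\alpha)^{\ell-1}$ directly into Markov against the threshold $n^3(ek^2\alpha)^\ell$ gives failure probability $O(1/((\ell-1)n^3 k^2\alpha))$, and since $k,\alpha$ are fixed positive constants the sum over $\ell$ and the union bound over $m$ clauses still yield $O(\log n/n^2) = o(1/n)$. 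The case $\ell=1$ is trivial. The leaf-peeling step correctly handles the dependence between edge events sharing a clause, and your bound $\Pr{\vbl(c_{i_0})\cap\vbl(c_{p_0})\ne\emptyset \mid c_{p_0}}\le k^2/n$ holds for any realization of $c_{p_0}$, including when $c_{p_0}$ is the designated clause $c_0$, so conditioning on $c_0$ causes no trouble.
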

  
  \begin{lemma}[{\cite[Lemma A.14]{chenCountingRandomSAT2025}}]
      \label{lemma:random-cnf-bounded-fraction-bad-clauses-graph}
      For any fixed $k$ and $\alpha$, assume $\eta, \rho, \degreeConstant, \badclauseConstant$ are parameters satisfying that\footnote{The statement here is slightly different from that in the original paper, where the condition $\tp{\mathrm{e}(\rho k \alpha)^\eta}^k \le \rho^2$ assumed here is stronger than $\mathrm{e}(\rho k \alpha)^\eta \le 1$ in the original paper, since $\rho<1$. Hence, we can use the same result because we assume a stronger condition.}
      \begin{enumerate}
        \item $\eta k \ge 4$, $\rho < 1$, $\badclauseConstant \ge \eta + 1/k$;
        \item $6k^5 \le \degreeConstant \le \-e^{k-2}\alpha$;
        \item $\tp{\mathrm{e}\tp{\rho k \alpha}^\eta}^k\le \rho^2$.
      \end{enumerate}
      Then, with probability $1 - o(1/n)$ over the random formula $\Phi = \Phi(k, n, m = \floor{\alpha n})$, for any $\+C' \subseteq \+C$ of size $\abs{\+C'} \ge \log n$ connected in the line graph of $H_\Phi = (V, \+C)$ (namely, connected in $G_\Phi$), it holds that 
      $$\abs{\+C' \cap \sbadclauses} \le \frac{12k^5}{(1-\eta)(\badclauseConstant - \eta)\degreeConstant} \abs{\+C'}.$$
  \end{lemma}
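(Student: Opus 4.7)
The plan is to adapt the witness-based argument of \cite[Lemma A.14]{chenCountingRandomSAT2025}, which establishes the same statement under the same assumptions. The high-level strategy is to expose the chain structure inside \textsf{IdentifyBad}, build a ``closure'' that bundles $\+C'$ together with all additional bad clauses and initially high-degree vertices needed to certify the violation, and then union-bound over all possible closures. The starting observation is: when a clause $c$ is added to $\sbadclauses$, at that moment $|\vbl(c) \cap \sbadvars| > \badclauseConstant k$, where $\sbadvars$ is the union of the initially high-degree vertices (those with $\deg(v) > \degreeConstant \alpha$) and the variable sets of previously added bad clauses. Consequently every bad variable can be traced, via a sequence of earlier bad clauses, back to an initially high-degree vertex.

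Suppose for contradiction that a connected $\+C' \subseteq \+C$ with $|\+C'| \ge \log n$ violates the claimed bound. I would construct a closure $\overline{\+C}' \supseteq \+C'$ by adjoining every bad clause needed to explain why each $c \in \+C' \cap \sbadclauses$ was added, together with a set $V^\star$ of initially high-degree vertices at the roots of these chains. A careful amortization should show (i) $|\overline{\+C}'| = O(|\+C'|)$ (each chain step introduces at most $k$ new variables), (ii) $\overline{\+C}'$ remains connected in $G_\Phi$, and (iii) $|V^\star| = \Omega\big((\badclauseConstant - \eta)\degreeConstant \cdot |\+C' \cap \sbadclauses| / k^5\big)$, since each bad clause beyond the threshold must be charged to genuinely new high-degree vertices rather than to ones already accounted for earlier in the chain.

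I would then bound the probability by a union bound. By \Cref{lemma:random-cnf-growth-rate-graph}, the number of connected sets of size $\ell$ containing a fixed clause is at most $n^3(\mathrm{e}k^2\alpha)^\ell$. For a fixed closure and a candidate $V^\star \subseteq \vbl(\overline{\+C}')$, the probability over the random formula that every $v \in V^\star$ has degree at least $\degreeConstant \alpha$ decays geometrically in $|V^\star|$, since each variable's degree is stochastically dominated by a $\mathrm{Binomial}(km, k/n)$ whose tail beyond $\degreeConstant \alpha$ is doubly exponentially small once $\degreeConstant \ge 6k^5$. This per-vertex factor overwhelms the growth-rate factor $(\mathrm{e}k^2\alpha)^\ell$ by the choice of $|V^\star|$ established in the previous paragraph, and summing over $\ell \ge \log n$ and over the starting clause yields total failure probability $o(1/n)$.

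The main obstacle lies in the closure construction. I need $\overline{\+C}'$ to be simultaneously (a) connected in $G_\Phi$, so that the growth-rate lemma applies; (b) only linearly larger than $\+C'$, so that the factor $(\mathrm{e}k^2\alpha)^\ell$ is controlled; and (c) equipped with enough distinct high-degree vertices to absorb the counting factor. Balancing (b) and (c) is delicate because short chains expose few high-degree vertices while long chains inflate $|\overline{\+C}'|$. The assumption $(\mathrm{e}(\rho k \alpha)^\eta)^k \le \rho^2$ enters here to show that, with high probability, new bad clauses in the chain cannot reuse too many already-bad variables and must contribute fresh ones, which is precisely what produces the $k^5$ factor in the denominator of the final bound.
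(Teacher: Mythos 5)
The paper never proves this lemma: it is imported wholesale from \cite[Lemma A.14]{chenCountingRandomSAT2025}, and the only verification the authors supply is the footnote observation that their hypothesis $(\mathrm{e}(\rho k\alpha)^{\eta})^{k}\le\rho^{2}$, together with $\rho<1$, implies the weaker hypothesis $\mathrm{e}(\rho k\alpha)^{\eta}\le 1$ of the original statement, so the cited result applies verbatim. Your proposal instead tries to reprove the external lemma from scratch, and it never performs the one check that actually constitutes the paper's argument (the implication between the two hypotheses). Reproving the lemma would of course be acceptable in principle, but what you have written is a plan, not a proof, and its decisive steps are exactly the ones left open.

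Concretely: (i) the amortized bounds $|\overline{\mathcal{C}}'|=O(|\mathcal{C}'|)$ and $|V^{\star}|=\Omega\bigl((\badclauseConstant-\eta)\degreeConstant\,|\mathcal{C}'\cap\sbadclauses|/k^{5}\bigr)$ are the entire quantitative content of the lemma, and you assert them with ``a careful amortization should show'' rather than exhibiting the charging scheme; nothing in the sketch explains how the factor $(1-\eta)$ or the specific constant $12k^{5}/\bigl((1-\eta)(\badclauseConstant-\eta)\degreeConstant\bigr)$ would emerge, and you yourself flag the balancing of connectivity, linear closure size, and many distinct high-degree roots as an unresolved ``delicate'' obstacle. (ii) The probabilistic step is also incomplete: you treat the events $\{\deg(v)>\degreeConstant\alpha\}$ for $v\in V^{\star}$ as if their probabilities simply multiply against the union bound over closures, but these degree events are correlated with one another and with the event that the closure's clauses are present; a correct argument must count explicit witnesses (e.g.\ concrete clause occurrences certifying each high degree) in a first-moment computation, and the ``doubly exponentially small'' tail claim is not justified as stated. (iii) Your account of where the hypothesis $(\mathrm{e}(\rho k\alpha)^{\eta})^{k}\le\rho^{2}$ enters (forcing fresh bad variables and ``producing the $k^{5}$ factor'') is speculation rather than argument; the $k^{5}$ in the bound is tied to the hypothesis $6k^{5}\le\degreeConstant$, and an $\eta$-type condition of this shape is the kind used for expansion of connected clause sets, which your sketch never actually invokes. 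So there is a genuine gap: the proposal identifies the main difficulty of a from-scratch proof but does not overcome it, and it also omits the short hypothesis-comparison step that is the paper's actual justification.
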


\begin{lemma}\label{lemma:random-cnf-expansion}
    For any fixed $k$ and $\alpha$, assume $\eta, \rho, B$ are parameters satisfying that
    \begin{enumerate}
      \item $\eta B \ge 4$, $\rho < 1$;
      \item $2^k \cdot \mathrm{e}^{2 \cdot B} \cdot \tp{\rho \cdot B \cdot \alpha}^{\eta \cdot B} \le \rho^2$.
    \end{enumerate}
    Then, for any $n$ sufficiently large, with probability $1 - o(1/n)$ over the random formula $\Phi = \Phi(k, n, m = \floor{\alpha n})$, for any $\ell \le \rho \abs{\+C_{\Phi}}$, any $\ell$ clauses $c_1, c_2, \ldots, c_\ell \in \+C_{\Phi}$, and any variable sets $S_1, S_2, \ldots, S_\ell$ where $\forall i\in [\ell]$, $S_i\subseteq \vbl(c_i)$ and
    $\abs{S_i} \ge B$,
    it holds that
    \[\abs{\bigcup_{i\in [\ell]} S_i} > (1 - \eta)\cdot B\cdot \ell. \]
  \end{lemma}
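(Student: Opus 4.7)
The plan is a standard first-moment (union bound) argument.  Failure of the stated expansion property means the existence of $\ell$, clauses $c_1,\dots,c_\ell\in\+C_{\Phi}$, and sets $S_i\subseteq\vbl(c_i)$ with $|S_i|\ge B$ such that $T:=\bigcup_i S_i$ has $|T|\le (1-\eta)B\ell$.  Since $S_i\subseteq T\cap \vbl(c_i)$, this event is contained in the event that there exist $\ell$ clauses $c_1,\dots,c_\ell$ and a variable set $T\subseteq V$ of size $t$, for some $B\le t\le (1-\eta)B\ell$, such that $|\vbl(c_i)\cap T|\ge B$ for every $i\in[\ell]$.  I would union bound over $\ell$, $T$, and the choice of the $\ell$ clauses.

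For a fixed $T$ with $|T|=t$ and a single random clause $c$ (whose $k$ literals are \iid uniform), union bounding over which $B$ of the $k$ literal-positions land on variables in $T$ yields $\Pr[|\vbl(c)\cap T|\ge B]\le \binom{k}{B}(t/n)^B\le 2^k(t/n)^B$.  Since distinct clauses are mutually independent, the probability that $\ell$ fixed clauses all have this property is at most $[2^k(t/n)^B]^\ell$.  Using $\binom{m}{\ell}\le (\mathrm{e}\alpha n/\ell)^\ell$ and $\binom{n}{t}\le (\mathrm{e}n/t)^t$, a short calculation shows the $t$-dependent factor $(\mathrm{e}n/t)^t(t/n)^{B\ell}$ is increasing in $t$ on $[B,(1-\eta)B\ell]$ (its $\log$-derivative equals $\ln(n/t)+B\ell/t>0$), so the worst case is $t^\ast=(1-\eta)B\ell$; after plugging in and simplifying, the per-$\ell$ bound becomes
$$\Pr[\text{bad at }\ell]\ \le\ B\ell\cdot\Bigl[\,2^k \mathrm{e}^{B+1}\alpha\, B^{\eta B}\,(\ell/n)^{\eta B-1}\,\Bigr]^\ell.$$

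The remaining task is to sum this over $\ell\in[1,\rho\alpha n]$ and obtain $o(1/n)$.  Using $\ell/n\le \rho\alpha$ together with hypothesis~(2) $2^k\mathrm{e}^{2B}(\rho B\alpha)^{\eta B}\le\rho^2$, a direct manipulation shows the bracketed quantity is bounded uniformly in $\ell$ by $\rho\cdot \mathrm{e}^{1-B}\le 1/\mathrm{e}$ (using $B\ge 4$, which follows from $\eta B\ge 4$ and $\eta\le 1$), providing $\mathrm{e}^{-\ell}$-type geometric decay.  The real obstacle is upgrading this $o(1)$ decay to the required $o(1/n)$, and this is where I exploit the slack $\eta B\ge 4$: splitting the sum at $\ell_0=n^{1/3}$, for $\ell\le\ell_0$ the sharper estimate $(\ell/n)^{\eta B-1}\le n^{-2(\eta B-1)/3}\le n^{-2}$ makes the bracket $O(n^{-2})$, so the head sums to $O(n^{-2})=o(1/n)$ by geometric tail; for $\ell>\ell_0$ the factor $\mathrm{e}^{-\ell}\le \mathrm{e}^{-n^{1/3}}$ is super-polynomially small and easily absorbs the polynomial union-bound overhead $B\ell\le B\rho\alpha n$, contributing at most $(B\rho\alpha)^2 n^2\mathrm{e}^{-n^{1/3}}=o(1/n)$.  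Adding the two contributions yields the required $o(1/n)$ failure probability and completes the proof.
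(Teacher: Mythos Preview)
Your proposal is correct and follows essentially the same approach as the paper's own proof: both perform a first-moment union bound over $\ell$, a target set of size $\approx (1-\eta)B\ell$, and the $\ell$ clauses; both arrive at a per-$\ell$ bound of the form $\bigl[2^k\mathrm{e}^{O(B)}\alpha B^{\eta B}(\ell/n)^{\eta B-1}\bigr]^\ell$; and both split the sum at $\ell\asymp n^{1/3}$, using $\eta B\ge 4$ for the head and hypothesis~(2) for the tail. The only cosmetic differences are that the paper fixes the container size at $r=\lfloor(1-\eta)B\ell\rfloor$ rather than summing over $t$ (which spares your harmless $B\ell$ prefactor), and in the tail bounds the bracket by $\rho$ rather than $\rho\,\mathrm{e}^{1-B}$; also, your appeal to $\eta\le 1$ to deduce $B\ge 4$ should be accompanied by the one-line remark that the lemma is vacuous for $\eta\ge 1$.
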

  \begin{proof}
    For $\ell \le \rho \abs{\+C_{\Phi}}$, let $r = \floor{(1 - \eta)\cdot B\cdot \ell}$.
    Define the bad event $\+B_\ell$ as follows:
    there exists a subset $U \subseteq V_{\Phi}$ of size $r$, $\ell$ clauses $c_1, c_2, \ldots, c_\ell \in \+C_{\Phi}$ and $\ell$ subsets of variables $S_1, S_2, \ldots, S_\ell$ where $\forall i\in [\ell]$, $S_i\subseteq \vbl(c_i)$ and $\abs{S_i} \ge B$, satisfying that $S_i \subseteq U$ for all $i\in [\ell]$.
    We then bound the probability of $\+B_\ell$.
    \begin{align*}
      \Pr{\+B_\ell} & \le \binom{n}{r} \cdot \binom{m}{\ell} \cdot 2^{k \ell} \cdot \tp{\frac{r}{n}}^{B\cdot \ell} \le 2^{k \ell} \tp{\frac{\mathrm{e} n}{r}}^r \tp{\frac{\mathrm{e} m}{\ell}}^\ell \tp{\frac{r}{n}}^{B\cdot \ell} \\
      & \le 2^{k \ell} \tp{\frac{\mathrm{e} n}{(1 - \eta)\cdot B\cdot \ell}}^{(1 - \eta)\cdot B\cdot \ell} \tp{\frac{\mathrm{e} \alpha n}{\ell}}^{\ell} \tp{\frac{(1 - \eta)\cdot B\cdot  \ell}{n}}^{B\cdot \ell}    \\
      & = \tp{2^k \cdot \alpha \cdot \mathrm{e}^{(1 - \eta) \cdot B + 1} \cdot \tp{(1 - \eta) \cdot B}^{\eta \cdot B} \cdot \tp{\frac{\ell}{n}}^{\eta \cdot B - 1}}^{\ell}                                           \\
      & \le \tp{\alpha \cdot \tp{2^k \cdot \mathrm{e}^{2\cdot B} \cdot B^{\eta \cdot B}} \cdot \tp{\frac{\ell}{n}}^{\eta \cdot B - 1}}^{\ell}.
    \end{align*}
    On one hand, if $\ell < n^{1 / 3}$,
    $$\Pr{\+B_\ell} \le \alpha \cdot \tp{2^k \cdot \mathrm{e}^{2 \cdot B} \cdot B^{\eta \cdot B}} \cdot n^{-\frac{2}{3} (\eta \cdot B - 1)} \le \alpha \cdot \tp{2^k \cdot \mathrm{e}^{2 \cdot B} \cdot B^{\eta \cdot B}} \cdot n^{-2}.$$
    where the last inequality holds since $\eta \cdot B \ge 4$.
    On the other hand, if $n^{1 / 3} \le \ell \le \rho m$,
    \begin{align*}
      \Pr{\+B_\ell} & \le \tp{\alpha \cdot \tp{2^k \cdot \mathrm{e}^{2 \cdot B} \cdot B^{\eta \cdot B}} \cdot \tp{\alpha\cdot \rho}^{\eta \cdot B - 1}}^{\ell} \\
                    & = \tp{\rho^{-1} \cdot \tp{2^k \cdot \mathrm{e}^{2 \cdot B} \cdot \tp{\rho \cdot B \cdot \alpha}^{\eta \cdot B}}}^{\ell} \le \rho^{n^{1 / 3}} \le n^{-3},
    \end{align*}
    where we apply the assumption that $2^k \cdot \mathrm{e}^{2 \cdot B} \cdot \tp{\rho \cdot B \cdot \alpha}^{\eta \cdot B} \le \rho^2$ and the last inequality $\rho^{n^{1 / 3}} \le n^{-3}$ holds because $\rho < 1$ is a constant.
  
    By a union bound over all $\ell \le \rho m$, we have $\sum_{\ell = 1}^{\floor{\rho m}} \Pr{\+B_\ell} \le o(1 / n)$ and the lemma follows.
  \end{proof}
  
  Finally, we prove \Cref{lemma:well-behaved-random-cnf}, which is a direct consequence of the above lemmas.
\begin{proof}[Proof of \Cref{lemma:well-behaved-random-cnf}]
We first consider the case $\alpha>1/k^3$.
By \Cref{lemma:wel-behaved-distinct-vbl}, \Cref{lemma:well-behaved-shared-vbl} and \Cref{lemma:random-cnf-growth-rate-graph}, with probability $1 - o(1 / n)$, $\Phi$ satisfies \Cref{property:clause-size} (Bounded clause size), \Cref{property:clause-intersection} (Bounded intersection) and \Cref{property:bounded-growth-rate} (Bounded growth rate).
Plugging in the parameters in \eqref{eq:parameters} into \Cref{lemma:random-cnf-bounded-fraction-bad-clauses-graph}, with probability $1 - o(1/n)$, $\Phi$ satisfies \Cref{property:bounded-bad-clauses} (Bounded bad clauses) with the desired parameters.
Plugging in the parameters in \eqref{eq:parameters} into \Cref{lemma:random-cnf-expansion} and \Cref{lemma:random-cnf-degree-one-variable}, we conclude that with probability $1 - o(1 / n)$, $\Phi$ satisfies \Cref{property:edge-expansion} (Edge expansion) and \Cref{property:degree-one-variable} (Degree-one variable property) with the desired parameters.
The lemma follows by a union bound. 
For the case $\alpha \le 1/k^3$, the proof is the same, except that we do not need to show \Cref{property:bounded-bad-clauses} (Bounded bad clauses).
\end{proof}
\section{Information-theoretic lower bounds of sample complexity}

\subsection{Preliminaries of information theory}

Let $X\in\+X$ be a discrete random variable over a finite set $\+X$ with $|\+X|\ge 2$.
Define the \emph{entropy} of $X$ as $H(X) \triangleq -\sum_{x \in \+X} \Pr{X = x} \ln \Pr{X = x}$. 
Let $(X,Y) \in \+X \times \+Y$ be a joint random variable. 
Defined the \emph{conditional entropy} of $X$ given $Y$ as 
$H(X|Y) \triangleq -\sum_{x \in \+X, y \in \+Y} \Pr{X = x, Y = y} \ln \Pr{X = x | Y = y}$.
The \emph{mutual information} of $X$ and $Y$ is defined as $I(X; Y) \triangleq H(X) - H(X|Y)$.
Define the \emph{binary entropy} function $H_b: [0, 1] \to \mathbb{R}$ as $H_b(p) \triangleq -p\ln p - (1-p)\ln(1-p)$. 

\begin{lemma}[Fano's inequality~\cite{cover2006elements}]
  \label{lemma:fano-inequality}
  For any Markov chain $X\to Y\to \widehat{X}$,
  \begin{equation*}
    H_b\tp{\Pr{\widehat{X}\neq X}} + \Pr{\widehat{X}\neq X}\ln(|\+X|  - 1) \ge H(X| \widehat{X}).
  \end{equation*}
  In particular, if $X$ is uniformly distributed over the set $\+X$ and hence $H(X) = \ln |\+X|$, then
  \begin{equation}
    \label{eq:classic-fano}
    \Pr{\widehat{X}\neq X}\ge 1 - \frac{I(X; Y) + \ln2}{\ln |\+X|}.
  \end{equation}
\end{lemma}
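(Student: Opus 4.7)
The plan is to prove Fano's inequality by the standard entropy-chain-rule argument, then derive the uniform-case corollary via the data processing inequality. I would not reinvent the wheel here, since this is a textbook statement; the proposal simply lays out the steps cleanly.

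First I would introduce the binary indicator $E \triangleq \mathbf{1}[\widehat{X} \neq X]$, so that $E$ is a deterministic function of the pair $(X, \widehat{X})$. The key identity is to expand $H(E, X \mid \widehat{X})$ in two different ways using the chain rule:
\begin{equation*}
H(E, X \mid \widehat{X}) = H(X \mid \widehat{X}) + H(E \mid X, \widehat{X}) = H(X \mid \widehat{X}),
\end{equation*}
since $H(E \mid X, \widehat{X}) = 0$, and also
\begin{equation*}
H(E, X \mid \widehat{X}) = H(E \mid \widehat{X}) + H(X \mid E, \widehat{X}).
\end{equation*}
Combining these gives $H(X \mid \widehat{X}) = H(E \mid \widehat{X}) + H(X \mid E, \widehat{X})$.

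Next I would bound each piece. For the first term, conditioning reduces entropy, so $H(E \mid \widehat{X}) \leq H(E) = H_b(\Pr{E=1}) = H_b(\Pr{\widehat{X} \neq X})$. For the second term, split by the value of $E$: when $E=0$ we have $X = \widehat{X}$ deterministically, so $H(X \mid E=0, \widehat{X}) = 0$; when $E=1$, the random variable $X$ ranges over at most $|\mathcal{X}|-1$ values (all elements of $\mathcal{X}$ other than $\widehat{X}$), so $H(X \mid E=1, \widehat{X}) \leq \ln(|\mathcal{X}|-1)$. Therefore $H(X \mid E, \widehat{X}) \leq \Pr{\widehat{X} \neq X} \ln(|\mathcal{X}|-1)$, and assembling these yields the main inequality.

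For the uniform-distribution corollary, I would apply the data processing inequality to the Markov chain $X \to Y \to \widehat{X}$ to obtain $I(X; \widehat{X}) \leq I(X; Y)$, hence $H(X \mid \widehat{X}) = H(X) - I(X; \widehat{X}) \geq \ln|\mathcal{X}| - I(X; Y)$. Substituting into the main inequality and using the trivial bound $H_b(\cdot) \leq \ln 2$ gives
\begin{equation*}
\ln|\mathcal{X}| - I(X; Y) \leq \ln 2 + \Pr{\widehat{X} \neq X} \ln(|\mathcal{X}|-1) \leq \ln 2 + \Pr{\widehat{X} \neq X} \ln|\mathcal{X}|,
\end{equation*}
which rearranges immediately to \eqref{eq:classic-fano}. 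There is no real obstacle here; the only mild subtlety is remembering to invoke data processing to pass from $I(X; \widehat{X})$ to $I(X; Y)$ so that the bound is stated in terms of the observable $Y$ rather than the estimator.
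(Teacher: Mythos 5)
Your proposal is correct: it is the standard textbook argument (error indicator, chain-rule expansion of $H(E,X\mid\widehat{X})$, then the data processing inequality for the uniform-case corollary), which is exactly the proof underlying the result the paper cites from Cover and Thomas without reproving. No gaps — the only implicit assumption, that $\widehat{X}$ takes values in $\mathcal{X}$ so that conditioned on an error $X$ ranges over at most $|\mathcal{X}|-1$ values, is the standard convention for this lemma and is how it is used in the paper.
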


Let $\rho: \+X \times \+X \to \=R$ be a symmetric function. For any scalar $t\ge 0$, 
define the maximum and minimum neighborhood sizes around a point at radius $t$ as follows:
\begin{equation*}
  N^{\max}_t \triangleq \max_{x\in \+X} | \{x'\in \+X \,|\, \rho(x, x')\le t\} |,
  \quad N^{\min}_t \triangleq \min_{x\in \+X} | \{x'\in \+X \,|\, \rho(x, x')\le t\} |.
\end{equation*}

\begin{lemma}[{Distance-based Fano's inequality~\cite{duchi2013distance}}]
  \label{lemma:distance-fano-inequality}
  For any Markov chain $X\to Y \to \widehat{X}$, let $P_t = \Pr{\rho(\widehat{X}, X)\ge t}$, it holds that
  \[
    H_b(P_t) + P_t \ln\left( \frac{|\+X| - N_t^{\min}}{N_t^{\max}}\right) + \ln (N_t^{\max}) \ge H(X | \widehat{X}).
  \]
  In particular, if $X$ is uniformly distributed over the set $\+X$ and $|\+X| - N_t^{\min} > N_t^{\max}$, then
  \begin{equation}
    \label{eq:distance-based-fano}
    \Pr{\rho(\widehat{X}, X) > t} \ge 1 - \frac{I(X; Y) + \ln2}{\ln\left( {(|\+X| - N_t^{\min})}/{N_t^{\max}} \right)}.
  \end{equation}
\end{lemma}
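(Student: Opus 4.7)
The plan is to apply the entropy chain rule together with a case analysis on whether $\widehat{X}$ lies within distance $t$ of $X$, in the standard Duchi--Wainwright style. First I would introduce the indicator $E \triangleq \^1[\rho(\widehat{X}, X) \ge t]$, so that $\Pr{E = 1} = P_t$ and $E$ is a deterministic function of $(X, \widehat{X})$. Expanding $H(X, E \mid \widehat{X})$ in two ways by the chain rule gives
\begin{align*}
  H(X \mid \widehat{X}) \;=\; H(X, E \mid \widehat{X}) \;=\; H(E \mid \widehat{X}) + H(X \mid E, \widehat{X}),
\end{align*}
since $H(E \mid X, \widehat{X}) = 0$. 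I would then bound $H(E \mid \widehat{X}) \le H(E) = H_b(P_t)$ because conditioning reduces entropy.

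Next, I would control $H(X \mid E, \widehat{X})$ by splitting on the value of $E$: when $E = 0$, i.e.\ $\rho(\widehat{X}, X) < t$, the variable $X$ lies in a ball of radius $t$ around $\widehat{X}$ which contains at most $N_t^{\max}$ points; when $E = 1$, $X$ lies in the complement of such a ball, which has at most $|\+X| - N_t^{\min}$ points since every ball of radius $t$ contains at least $N_t^{\min}$ points. Taking a weighted average of these pointwise uniform-support bounds,
\begin{align*}
  H(X \mid E, \widehat{X}) \;\le\; P_t \ln\!\bigl(|\+X| - N_t^{\min}\bigr) + (1 - P_t)\ln N_t^{\max},
\end{align*}
and combining with the chain-rule expansion above proves the first (general) inequality of the lemma after collecting terms.

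For the uniform case, I would invoke the data-processing inequality along the Markov chain $X \to Y \to \widehat{X}$ to obtain $I(X; \widehat{X}) \le I(X; Y)$, and hence $H(X \mid \widehat{X}) = \ln|\+X| - I(X; \widehat{X}) \ge \ln|\+X| - I(X; Y)$. Plugging this lower bound into the general inequality, using $H_b(P_t) \le \ln 2$, and rearranging yields the claimed bound on $P_t$, which in turn lower-bounds $\Pr{\rho(\widehat{X}, X) > t}$ since $\{\rho > t\} \subseteq \{\rho \ge t\}$. I do not expect a substantive obstacle: the whole argument is a textbook combination of the chain rule, conditioning-reduces-entropy, and data processing. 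The only minor bookkeeping point is keeping strict versus non-strict inequalities consistent in the definitions of $P_t$ and of $N_t^{\min}, N_t^{\max}$ (equivalently, open versus closed balls); this affects at most a sphere of boundary points and is absorbed without loss since all the estimates used are inequalities.
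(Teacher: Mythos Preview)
The paper does not give its own proof of this lemma: it is quoted from \cite{duchi2013distance} and used as a black box (see the remark immediately following the statement). Your proposal is the standard Duchi--Wainwright argument and is correct; the chain-rule expansion with the indicator $E$, the support-size bounds on $H(X\mid E,\widehat{X})$ via $N_t^{\max}$ and $|\+X|-N_t^{\min}$, and the data-processing step in the uniform case are exactly how this inequality is proved. The only residual wrinkle is the one you already flagged: the paper defines $P_t$ with $\rho\ge t$ but $N_t^{\min},N_t^{\max}$ with $\rho\le t$, so when $E=1$ the relevant set is $\{x':\rho(\widehat{X},x')\ge t\}$, whose complement is the \emph{open} ball, whereas $N_t^{\min}$ bounds the closed ball. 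This is a cosmetic mismatch in the statement rather than a flaw in your argument, and it is harmless in the paper's application (where $\rho=d_b$ is integer-valued and $t$ can be taken off the range of $\rho$).
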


\begin{remark}
  \cite[Corollary 1]{duchi2013distance} claimed a slightly stronger lower bound of $\mathbb{P}[\rho(\widehat{X}, X) > t]$. The weaker version stated in~\eqref{eq:distance-based-fano} suffices for our purposes.
\end{remark}

\subsection{Sample complexity of exact learning CNF formulas with disjoint clauses}
We prove the lower bound in \Cref{theorem:lower-bound-simple-thm}.
The proof is a simple application of \Cref{lemma:fano-inequality}.

\lowerBoundSublinearIntersection*

\begin{proof}
  We construct a simple CNF with $n$ variables, 
where all variables are partitioned into $k$ groups and each group has exactly $n/k$ variables. 
Say the $i$-th group contains all variables with label between $(i-1)n/k + 1$ and $in/k$.
We construct $k$ disjoint clauses, where each clause picks one variable from each group.
Formally, let $(p_i)_{i\in [k-1]}$ be $k-1$ permutations, 
where each $p_i$ is a permutation over the set $[n/k]$. 
We construct $n/k$ clauses, 
for each $i\in [n/k]$, let variables $\{i, n/k+p_1(i), 2n/k+p_2(i), \dots, n - n/k + p_{k-1}(i)\}$ be 
a clause which forbids all-False assignments. 
Note that these $n/k$ clauses are disjoint, which implies $d = 1$ and $s = 0$. 
Note that the above construction is uniquely determined by the set of permutations $(p_i)_{i\in [k-1]}$. 
To prove the lower bound, we consider the following random simple CNF formulas $\Phi$:
\begin{itemize}
  \item independently sample $k-1$ permutations $(p_i)_{i\in [k-1]}$ uniformly at random;
  \item construct the CNF formula $\Phi$ as described above using the permutations $(p_i)_{i\in [k-1]}$.
\end{itemize}

Hence, the random variable $\Phi$ is drawn from a uniform distribution. Let $\+X$ denote the support of $\Phi$. 
It holds that $|\+X| = \tp{\frac{n}{k}!}^{k-1}$ and
\begin{align}\label{eq:X-size}
  \ln|\+X|\ge (k-1)\frac{n}{k}\ln\tp{\frac{n}{\mathrm{e}k}}.
\end{align}
Let $\mu_{\Phi}$ denote the uniform distribution over all satisfying assignments of $\Phi$.

Let $X_1, X_2, \dots, X_T \sim \mu_{\Phi}$ be $T$ samples from $\mu_{\Phi}$.
Let $\Phi'$ be the CNF formula returned by a learning algorithm given samples $X_1, X_2, \dots, X_T$. 
The following process forms a Markov chain:
\begin{equation*}
  \Phi \to (X_1, X_2, \dots, X_T) \to \Phi'
\end{equation*}
We use \Cref{lemma:fano-inequality} to show that 
$\Pr{\Phi\neq \Phi'} \geq \frac{9}{10}$ if $T\le \frac{k-1}{10k\ln2}\ln(\frac{N}{\mathrm{e}k}) - \frac{1}{N}$, 
which implies that any algorithm that exactly learns the product CNF formulas with probability 
at least $\frac{1}{3}$ requires at least $\Omega(\log n)$ samples.
This proves the theorem.

By \Cref{lemma:fano-inequality}, 
it suffices to show that $\frac{ I(X_1, X_2,\dots,X_T; \Phi) + \ln2}{\ln|\+X|}\le \frac{1}{10}$.
Using the chain rule,
\begin{align}\label{eq:I-X-Phi}
  I(X_1, X_2,\dots,X_T; \Phi) = \sum_{i=1}^T I(X_i; \Phi~|~X_1, X_2,\dots,X_{i-1})\le \sum_{i=1}^T I(X_i; \Phi) \leq T\cdot n\ln2,
\end{align}
where the last inequality is due to $I(X_i; \Phi)\le H(X_i)\le n\ln2$ because $X_i$ is an $n$-bit string. 
Combining~\eqref{eq:X-size} and \eqref{eq:I-X-Phi}, 
it holds that if $T\le \frac{k-1}{10k\ln2}\ln(\frac{n}{\mathrm{e}k}) - \frac{1}{n}$, 
then $\frac{ I(X_1, X_2,\dots,X_T; \Phi) + \ln2}{\ln|\+X|}\le \frac{1}{10}$.
\end{proof}

\subsection{Sample complexity of approx. learning CNF formulas in the local lemma regime}\label{sec:sample-complexity-of-approx-learning-cnf-formulas-in-the-local-lemma-regime}

We prove the lower bound in \Cref{theorem:lower-bound-general-thm}. We need to use the following gadgets.
\begin{definition}[Unrestricted gadgets and restricted gadgets]
  \label{definition:lower-bound-gadgets}
  Let $k \geq 2,\ell \geq 1$ be two integers. 
  Given a variable set $U\triangleq \set{v_{i,j}: i \in [\ell], j \in [k]}$ of size $k\ell$, 
  we construct two types of $(k,k,k-1)$-CNF formulas: 
  unrestricted gadgets $\Phi_{un}^U=(U,\+C_{un}^U)$ and restricted gadgets $\Phi_{res}^U=(U,\+C_{res}^U)$.
  \begin{itemize}
    \item Arbitrarily arrange $k\ell$ variables into $\ell$ layers and each layer contains exactly $k$ variables. 
    Let $v_{i,j}$ denote the $j$-th variable in the $i$-th layer. Let $\+C$ be an empty set at the beginning.
    \item For each $i$ from $1$ to $\ell - 1$, 
    we construct $k$ clauses $c_{ij}$ for $j$ from $1$ to $k$ and add them into the set $\+C$. 
    The clause $c_{ij}$ is constructed as follows: 
    it contains all variables in the $i$-th layer except for the $j$-th variable and 
    it contains the $j$-th variable in the $(i+1)$-th layer. 
    Formally, $\vbl(c_{ij}) = \set{v_{i,r} : r \neq j} \cup \set{v_{i+1,j}}$. 
    If $i$ is an odd number, $c_{ij}$ forbids all-True assignments of $\vbl(c_{ij})$. 
    Otherwise, $c_{ij}$ forbids all-False assignments of $\vbl(c_{ij})$.
    \item We create an additional clause $c$. 
    It contains all variables in the first layer, and it forbids all-True assignments of $\vbl(c)=\{v_{1,j}: j \in [k]\}$. We remark that $c \notin \+C$.
  \end{itemize}

  So far, 
  we have constructed a clause set $\+C$ with $k(\ell-1)$ clauses and an additional clause $c$. 
  The \emph{unrestricted depth-$\ell$ gadget} is defined by $\Phi_{un}^U= (U,\+C_{un}^U)$, 
  where $\+C_{un}^U = \+C$. 
  The \emph{restricted depth-$\ell$ gadget} is defined by $\Phi_{res}^U= (U,\+C_{res}^U)$, 
  where $\+C_{res}^U = \+C \cup \set{c}$.
  See \Cref{fig:gadgets} for an illustration.

  In both $\Phi_{un}^U$ and $\Phi_{res}^U$, 
  the degree of each variable is most $d = k,$ and two clauses share at most $s = k - 1$ variables. 
  Hence, both of them are $(k, k, k-1)$-CNF formulas.
\end{definition}

\begin{figure}[ht]
  \centering
  \includegraphics[trim={0 2.5cm 0 2.5cm},clip,width=0.5\textwidth]{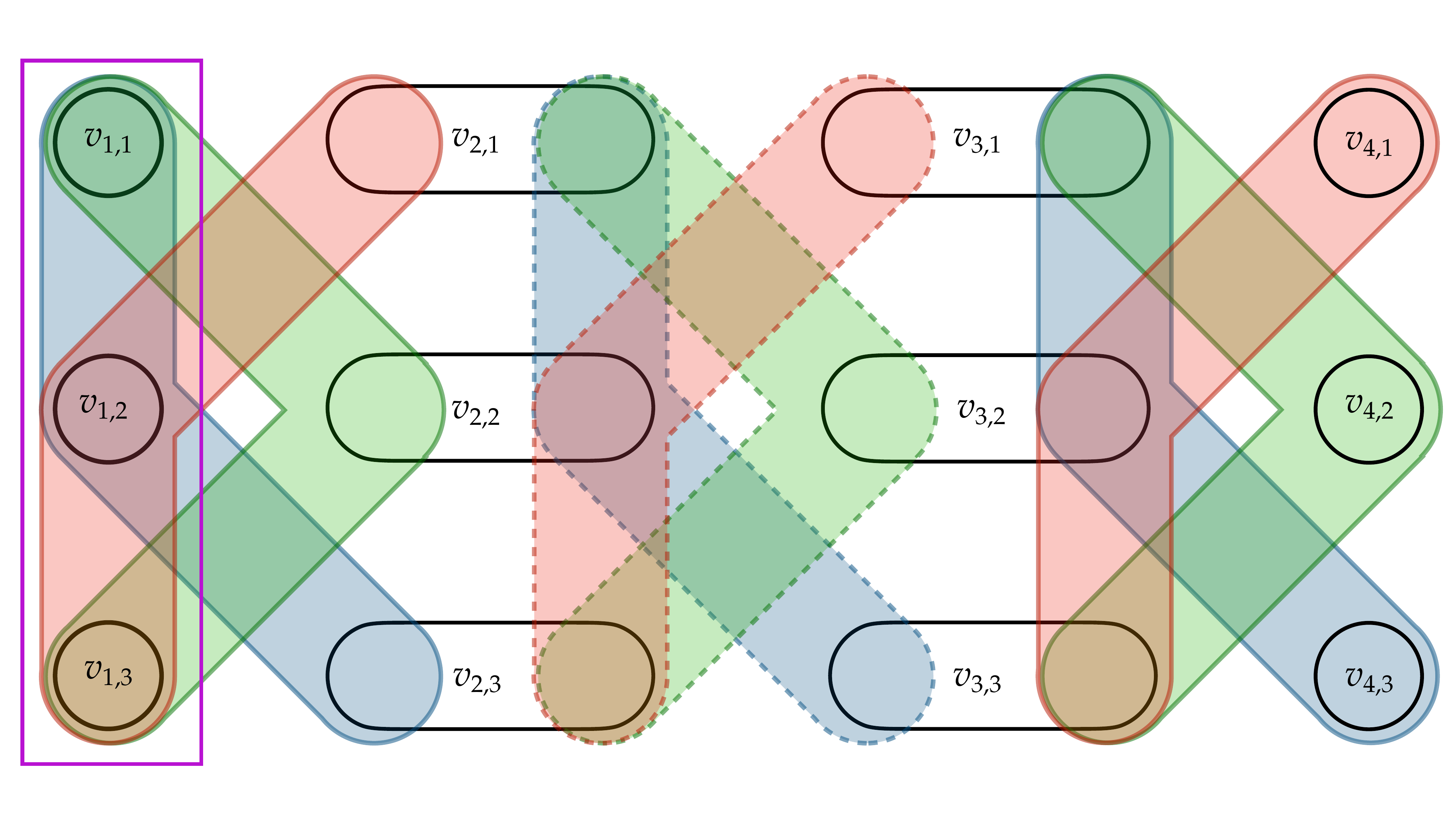}
  \caption{An illustration of the depth-$4$ gadgets for $k = 3$, 
  where black-bordered shapes denote variables $v_{i,j}$ and colored shapes denote clauses. Clauses with solid borders forbid all-True assignments 
  and clauses with dashed borders forbid all-False assignments. 
  The leftmost clause with a purple boundary is the restricted clause $c$.
  For clarity, variables $v_{2,\cdot}$ and $v_{3,\cdot}$ in the second and third layers are intentionally widened to better display the hyperedges.}
  \label{fig:gadgets}
\end{figure}

Next, we provide some basic properties about the unrestricted gadgets and restricted gadgets.

\begin{lemma}
  \label{lemma:lower-bound-properties-gadgets}
  Given an integer $\ell > 0$, 
  let $\Omega_u$ be the set of satisfying assignments of an unrestricted depth-$\ell$ gadget 
  and $\Omega_r$ be the set of satisfying assignments of a restricted depth-$\ell$ gadget. 
  We have that $\Omega_r\subseteq \Omega_u$ and the following bounds hold:
  \begin{equation*}
    1 - 2^{-(k-2) \ell}\le \frac{|\Omega_r|}{|\Omega_u|} \le 1 - 2^{-k\ell}.
  \end{equation*}
\end{lemma}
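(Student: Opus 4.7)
The containment $\Omega_r \subseteq \Omega_u$ is immediate because $\+C_{un}^U \subseteq \+C_{res}^U$. Writing $A \triangleq \Omega_u \setminus \Omega_r$ for the assignments satisfying every clause in $\+C_{un}^U$ but violating the extra clause $c$, the identity $|\Omega_r|/|\Omega_u| = 1 - |A|/|\Omega_u|$ reduces both inequalities to sandwiching $|A|/|\Omega_u|$ between $2^{-k\ell}$ and $2^{-(k-2)\ell}$. The first step will be to show $|A|=1$: since $c$ forbids the all-True assignment on layer $1$, every element of $A$ must set $v_{1,j}=\true$ for all $j$. Then each odd-layer clause $c_{1j}$, which forbids all-True on $\{v_{1,r}:r\neq j\}\cup\{v_{2,j}\}$, forces $v_{2,j}=\false$, so layer $2$ is uniformly False. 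A straightforward induction on $i$, alternating between the two clause types with the parity of $i$, propagates the pattern ``layer $i$ all True when $i$ is odd, all False when $i$ is even'' through layer $\ell$, exhibiting the unique element of $A$.

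The upper bound on the ratio then follows from the trivial count $|\Omega_u|\le 2^{k\ell}$: combined with $|A|\ge 1$ it gives $|A|/|\Omega_u|\ge 2^{-k\ell}$ and hence $|\Omega_r|/|\Omega_u|\le 1-2^{-k\ell}$.

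For the lower bound I would lower bound $|\Omega_u|$ by an explicit subfamily. For each $i<\ell$, pin the two variables $v_{i,1}$ and $v_{i,2}$ to $\false$ when $i$ is odd and to $\true$ when $i$ is even, and leave all remaining variables free: the $k-2$ remaining positions in each of the first $\ell-1$ layers and all $k$ variables of layer $\ell$. For any clause $c_{ij}$, the set $\vbl(c_{ij})=\{v_{i,r}:r\ne j\}\cup\{v_{i+1,j}\}$ contains at least one of $v_{i,1},v_{i,2}$ (since $j$ can coincide with at most one of the two indices), and that pinned variable has the polarity that satisfies $c_{ij}$: a False literal when $c_{ij}$ forbids all-True, a True literal when it forbids all-False. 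Thus the whole subfamily lies in $\Omega_u$, which yields $|\Omega_u|\ge 2^{(k-2)(\ell-1)+k} \ge 2^{(k-2)\ell}$ and hence $|A|/|\Omega_u|\le 2^{-(k-2)\ell}$, closing the lemma.

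I do not foresee a substantive obstacle: the proof is elementary counting plus the alternating-layer cascade that forces $|A|=1$. The only point worth a sanity check is the boundary $k=2$, where pinning $v_{i,1},v_{i,2}$ exhausts layer $i$'s variables entirely; in that case the explicit subfamily still has size $2^{k}=4$, which comfortably beats the required bound $2^{(k-2)\ell}=1$, so no separate case analysis is needed.
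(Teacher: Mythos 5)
Your proposal is correct and follows essentially the same route as the paper: the alternating-layer cascade shows $\abs{\Omega_u\setminus\Omega_r}=1$, the trivial bound $\abs{\Omega_u}\le 2^{k\ell}$ gives the upper inequality, and an explicit pinning of two variables per layer (whose polarity satisfies every clause $c_{ij}$) gives $\abs{\Omega_u}\ge 2^{(k-2)\ell}$ for the lower one. The only cosmetic difference is which two variables per layer you pin (first two, leaving layer $\ell$ free) versus the paper's first and last in every layer; both yield the same bound.
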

\begin{proof}
  Recall that in \Cref{definition:lower-bound-gadgets}, 
  the clause set of the restricted depth-$\ell$ gadget is a superset of the clause set of the unrestricted one,
  which implies that $\Omega_r \subseteq \Omega_u$.
  Moreover, we claim that $|\Omega_u \setminus \Omega_r| = 1$.
  To verify this, observe that for any $\sigma \in \Omega_u \setminus \Omega_r$,
  all variables in the first layer are assigned True. 
  By construction, this forces all variables in the second layer to be assigned False. 
  One can verify that all odd layers are assigned True,
  and all even layers are assigned False. 
  Therefore, there is exactly one satisfying assignment in $\Omega_u \setminus \Omega_r$.

  Then, by the fact that $|\Omega_u|\le 2^{k\ell}$, 
  $\frac{|\Omega_r|}{|\Omega_u|} \le 1 - 2^{-k\ell}$ holds directly.
  We next lower bound $|\Omega_u|$. 
  For each layer $i$, if $i$ is an odd number, 
  we assign False to the first variable and last variable in the $i$-th layer; 
  otherwise, 
  we assign True to the first variable and last variable in the $i$-th layer. 
  Note that after fixing these $2\ell$ variables, 
  all clauses in the unrestricted gadget are satisfied. 
  Therefore, we have $|\Omega_u| \ge 2^{(k-2)\ell}$, 
  which implies that  $1 - 2^{-(k-2)\ell} \le \frac{|\Omega_r|}{|\Omega_u|}$.
\end{proof}

We use the gadgets in \Cref{definition:lower-bound-gadgets} to construct a set $\+X$ of $(k,k,k-1)$-CNF formulas. 
Then we can define the uniform distribution over all CNF formulas in $\+X$
to use Fano's inequality.

\begin{definition}[Set of hard CNF formulas $\+X$]\label{definition:lower-bound-set-of-hard-cnf-formulas}
  Let $k,\ell,m \geq 1$ be three integers. Let $V$ be a set of variables with size $mk\ell$. 
  Let $U_1\uplus U_2\uplus \dots \uplus U_m$ be a partition of $V$ into $m$ subsets, 
  where each $U_i$ has size $k\ell$. 
  The set $\+X \triangleq \set{\Phi_i=(V,\+C_i)\mid 0\leq i < 2^m}$ is a set of $(k,k,k-1)$-CNF formulas, 
  where for each $0 \leq i < 2^m$, the CNF formula $\Phi_i = (V,\+C_i)$ is constructed as follows:
  \begin{itemize}
    \item write the integer $i$ as a binary string of length $m$, 
    let $i_j \in \{0,1\}$ be the $j$-th bit of $i$;
    \item for any $1 \leq j \leq m$, if $i_j = 0$, 
    then construct an unrestricted depth-$\ell$ gadget on the variables in $U_j$; 
    otherwise, construct a restricted depth-$\ell$ gadget on the variables in $U_j$.
  \end{itemize}
\end{definition}
\begin{figure}[ht]
  \centering
  \includegraphics[trim={0 12cm 0 12cm},clip,width=0.9\textwidth]{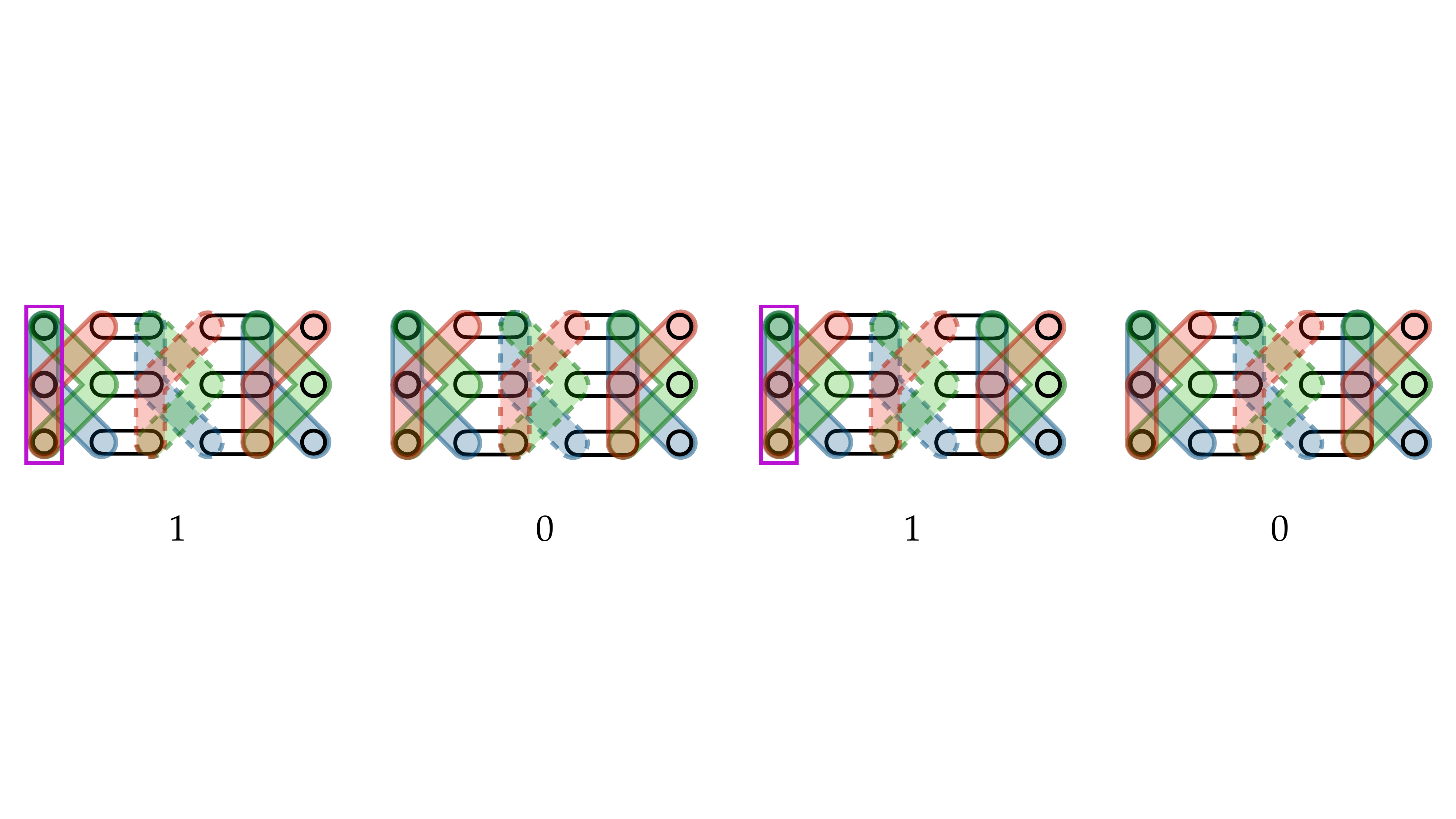}
  \caption{An illustration of hard CNF formulas $\Phi_i$ for $k = 3$, $\ell = 4$, $m=4$ and $i=(1010)_2$.}
  \label{fig:hard-cnf-formulas}
\end{figure}

Let $\Phi_i$ be a CNF formula in $\+X$, 
$\Omega_i$ be the set of all satisfying assignments of $\Phi_i$,
and $\mu_i \triangleq \mu_{\Phi_i}$ be the uniform distribution on $\Omega_i$.
For two integer $i, j \ge 0$, let $\binaryDistance(i, j)$ be 
the number of different locations for the binary representation of $i$ and $j$. 
Define $\gamma \triangleq \frac{|\Omega_r|}{|\Omega_u|}$, 
where $\Omega_r$ and $\Omega_u$ in \Cref{lemma:lower-bound-properties-gadgets} 
is the set of satisfying assignments of the restricted and unrestricted depth-$\ell$ gadget, respectively.

We have the following lemma about the total variation distance between $\mu_i$ and $\mu_j$.
\begin{lemma}
  \label{lemma:lower-bound-tv-distance-binary-distance}
  For $0 \leq i, j < 2^m$, if $m\cdot 2^{-k\ell}<\frac{1}{2}$, 
  then it holds that $\dtv(\mu_i, \mu_j) \ge \binaryDistance(i, j)\cdot 2^{-k \ell - 2}$.
\end{lemma}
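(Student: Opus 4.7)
My plan exploits the product structure of $\mu_i$ and $\mu_j$. Each measure decomposes as $\mu_i = \bigotimes_{r=1}^m \mu_{i,r}$ over the $m$ disjoint gadgets, where $\mu_{i,r}$ is uniform on $\Omega_u$ if $i_r = 0$ and on $\Omega_r$ if $i_r = 1$ (and symmetrically for $\mu_j$). On every coordinate $r \notin D$ the two measures coincide, so only the $d = \binaryDistance(i,j)$ coordinates in $D \triangleq \{r : i_r \neq j_r\}$ contribute. I would partition $D = D_0 \uplus D_1$ with $D_b \triangleq \{r \in D : i_r = b\}$, and by swapping the roles of $i$ and $j$ assume without loss of generality that $|D_0| \geq d/2$.

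The heart of the argument is a distinguishing event tailored to $D_0$: let
\[
  E \triangleq \{\sigma : \sigma|_{U_r} \in \Omega_u \setminus \Omega_r \text{ for some } r \in D_0\}.
\]
The proof of \Cref{lemma:lower-bound-properties-gadgets} shows that $\Omega_u \setminus \Omega_r$ consists of a single assignment, so $1 - \gamma = 1/|\Omega_u| \geq 2^{-k\ell}$. Under $\mu_j$, for each $r \in D_0$ we have $j_r = 1$, and therefore $\sigma|_{U_r} \in \Omega_r$ almost surely; hence $\Pr_{\mu_j}[E] = 0$. Under $\mu_i$, the components are independent and each satisfies $\Pr_{\mu_{i,r}}[\sigma|_{U_r} \in \Omega_u \setminus \Omega_r] = 1 - \gamma$, giving $\Pr_{\mu_i}[E] = 1 - \gamma^{|D_0|}$. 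The variational characterization of $\dtv$ then yields $\dtv(\mu_i, \mu_j) \geq 1 - \gamma^{|D_0|}$.

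To complete the bound I would use elementary inequalities. Writing $x \triangleq |D_0| \cdot 2^{-k\ell}$, the chain $\gamma^{|D_0|} \leq (1 - 2^{-k\ell})^{|D_0|} \leq e^{-x}$ combined with $1 - e^{-x} \geq x/2$ on $[0,1]$ gives $1 - \gamma^{|D_0|} \geq x/2 = |D_0| \cdot 2^{-k\ell - 1} \geq d \cdot 2^{-k\ell - 2}$. The hypothesis $m \cdot 2^{-k\ell} < 1/2$ enters exactly here, via $x \leq m \cdot 2^{-k\ell} < 1/2 \leq 1$, keeping us in the ``small-$x$'' regime.

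The only subtle point is the choice of distinguishing event. A naive subadditivity-style bound $\dtv(\mu_i, \mu_j) \geq \max_r \dtv(\mu_{i,r}, \mu_{j,r})$ only yields $\dtv \geq 1 - \gamma \geq 2^{-k\ell}$, missing the crucial linear-in-$d$ factor. Aggregating across the $|D_0| \geq d/2$ independent coordinates recovers this factor, because $1 - \gamma^{|D_0|}$ scales linearly in $|D_0|$ in the regime $|D_0|(1-\gamma) \leq 1$, which is precisely the regime the hypothesis guarantees.
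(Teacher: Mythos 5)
Your proof is correct, and it takes a mildly different route from the paper's. The paper partitions the disagreement into $p=\binaryDistance(i,j)$ disjoint events $S_t$ (``the $t$-th gadget's first layer is all-True and no earlier gadget's first layer is''), notes that on each $S_t$ exactly one of $\mu_i,\mu_j$ vanishes, lower bounds $\max\{\mu_i(S_t),\mu_j(S_t)\}\ge\gamma^{t-1}(1-\gamma)$, and sums the geometric series to get $\dtv(\mu_i,\mu_j)\ge\tfrac{1-\gamma^{p}}{2}$; you instead exploit the product structure directly with a single one-sided distinguishing event $E$ supported on the (WLOG at least $d/2$) differing coordinates where $\Phi_i$ is unrestricted and $\Phi_j$ restricted, getting $\Pr_{\mu_j}[E]=0$ and $\Pr_{\mu_i}[E]=1-\gamma^{|D_0|}$, hence $\dtv(\mu_i,\mu_j)\ge 1-\gamma^{|D_0|}$ by the event characterization of total variation. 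Your version avoids the per-event case analysis and the geometric-series bookkeeping (the exact zero under $\mu_j$ makes the TV bound immediate), at the cost of a factor $2$ from restricting to $|D_0|\ge d/2$; the paper keeps all $p$ coordinates but pays the factor $\tfrac12$ from the TV definition, so the two intermediate bounds ($1-\gamma^{\lceil d/2\rceil}$ versus $\tfrac{1-\gamma^{d}}{2}$) lead to the same final constant. The closing steps coincide: $1-\gamma\ge 2^{-k\ell}$ (your identity $1-\gamma=1/|\Omega_u|$ from $|\Omega_u\setminus\Omega_r|=1$ is fine, and is also what \Cref{lemma:lower-bound-properties-gadgets} states), then $1-e^{-x}\ge x/2$ on $[0,1]$, with the hypothesis $m\cdot 2^{-k\ell}<\tfrac12$ guaranteeing the small-$x$ regime exactly as in the paper.
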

\begin{proof}
  Let $p = \binaryDistance(i, j)$, and we assume that $i$ and $j$ differ at the first $p$ bits without loss of generality.
  Let $\Omega = \Omega_i \cup \Omega_j$. Next, we define $p$ disjoint subsets $(S_t)_{t\in [p]}$ of $\Omega$, 
  where $S_t \subseteq \Omega$ is the subset of assignments $\sigma \in \Omega$ satisfying the following two conditions.
  \begin{itemize}
    \item The first layer variables for the previous $t-1$ gadgets are not fully assigned True. 
    Formally, for any $s \leq t- 1$, in the gadget constructed on $U_s$, 
    let $U_{s1} \subseteq U_s$ be the first layer variables where $|U_{s1}| = k$. 
    Then, there exists $v \in U_{s1}$ such that $\sigma(v) = \text{False}$.
    \item The first layer variables of the $t$-th gadget are assigned True.
    Formally,  for any $v \in U_{t1}$, where $U_{t1} \subseteq U_t$ is the first layer variables, 
    it holds that $\sigma(v) = \text{True}$ for all $v \in U_{t1}$.
  \end{itemize}

  By the definition of total variation distance, 
  $\dtv(\mu_i, \mu_j) \ge \frac{1}{2}\sum_{t=1}^p \sum_{\sigma \in S_t} |\mu_i(\sigma) - \mu_j(\sigma)|$.
  Note that for each $t\in [p]$, due to the second step of the above construction, 
  it holds that either $\mu_i(S_t) = 0$ or $\mu_j(S_t)=0$ 
  because the all-True assignment in the first layer violates the restricted gadget.
  So $\sum_{\sigma\in S_t}|\mu_i(\sigma) - \mu_j(\sigma)| = \max\{\mu_i(S_t), \mu_j(S_t)\}$.
  To lower bound $\dtv(\mu_i, \mu_j)$, it suffices to lower bound $\max\{\mu_i(S_t), \mu_j(S_t)\}$ for each $t\in [p]$.
  Recall that $\gamma = \frac{|\Omega_r|}{|\Omega_u|}$.
  We claim that
  \[\max\{\mu_i(S_t), \mu_j(S_t)\} \ge \gamma^{t-1}(1 - \gamma).\]
  To verify the above inequality, suppose $\mu_i(S_t) > 0$, 
  then the $t$-th gadget in $\Phi_i$ must be unrestricted (which contributes a factor of $1-\gamma$) 
  and the worst case is that all first $t-1$ gadgets are unrestricted (which contributes a factor of $\gamma^{t-1}$).
  Therefore,
  \begin{equation*}
    \dtv(\mu_i, \mu_j) \ge \frac{1}{2} \sum_{t=1}^p \gamma^{t-1}(1-\gamma) 
    = \frac{1 - \gamma}{2} \cdot \frac{1 - \gamma^{p}}{1 - \gamma} = \frac{1-\gamma^p}{2}.
  \end{equation*}
  By \Cref{lemma:lower-bound-properties-gadgets}, it holds that  $\gamma \le 1 - 2^{-k\ell}$.
  It follows that
  $1 - \gamma^p \ge 1 - (1 - 2^{-k\ell})^p \ge 1 - \exp\tp{-p 2^{-k\ell}}$.
  By the assumption $m 2^{-k \ell} < 1 / 2$ in the lemma, 
  we have $p 2^{-k \ell} < 1 / 2$. Hence $\exp(-p 2^{-k \ell}) \le 1 - p 2^{-k \ell} / 2$ and 
  $1 - \gamma^p \ge p 2^{-k \ell - 1}$. 
  Therefore, $\dtv(\mu_i, \mu_j) \ge p 2^{-k\ell - 2}$.
\end{proof}

Now, we are ready to prove the lower bound in \Cref{theorem:lower-bound-general-thm}.
Fix a constant integer $k \geq 2$. Fix a constant error bound $\varepsilon_0 \in (0, \frac{1}{200\cdot 2^k})$.
For any sufficiently large integer $m \geq m_0(k,\varepsilon_0)$, define
\begin{align}\label{eq:def-m-ell}
  \ell \triangleq \floor{\frac{1}{k} \log \frac{m}{100 \varepsilon_0}} \geq 2.
\end{align}
Note that $\varepsilon_0 \leq \frac{1}{200\cdot 2^k}$. 
It holds that $m\cdot 2^{-k\ell} \leq m \cdot 2^{-\log \frac{m}{100 \varepsilon_0} + k} 
= 2^k \cdot 100 \varepsilon_0 < \frac{1}{2}$,
which satisfies the condition in \Cref{lemma:lower-bound-tv-distance-binary-distance}.

Using \Cref{definition:lower-bound-set-of-hard-cnf-formulas} with parameter $m$, $\ell$, and $k$, we construct a set $\+X = \{\Phi_i: 0 \leq i < 2^m\}$ of $(k,k,k-1)$-CNFs. Note that the number of variables in each CNF formula is
\begin{align*}
  n = m\cdot k\cdot \ell, \quad\text{where } n \to \infty \text{ as } m \to \infty.
\end{align*}

We prove the following lower bound on learning CNF formulas in $\+X$.

\begin{lemma}\label{lemma:lower-bound-general-thm}
  Fix a constant integer $k \geq 2$ and 
  a constant error bound $\varepsilon_0 \in (0, \frac{1}{200 \cdot 2^k})$. 
  For any sufficiently large  $m \geq m_0(k,\varepsilon_0)$, 
  let $\ell$ be defined in \eqref{eq:def-m-ell}, 
  the following results hold for $\+X$ in \Cref{definition:lower-bound-set-of-hard-cnf-formulas}.

  Let $0 \leq K < 2^m$ be a uniform random integer. 
  Let $X_1,X_2,\ldots,X_T$ be $T$ \iid samples from $\mu_{\Phi_K}$. 
  Any algorithm such that given $X_1,X_2,\ldots,X_T$, 
  outputs a CNF formula $\Phi_{\tilde{K}} \in \+X$ satisfying $\dtv(\mu_{\Phi_K}, \mu_{\Phi_{\tilde{K}}}) \le \varepsilon_0$ 
  with probability at least $\frac{1}{3}$ requires at least 
  $T = \frac{1}{25\cdot 2^k}\cdot (\frac{n}{100 \varepsilon_0 \log(\frac{n}{100 \cdot \varepsilon_0})})^{\frac{k-2}{k}}$ samples, 
  where $n = mk\ell$ is the number of variables for $(k,k,k-1)$-CNF formulas in $\+X$.
\end{lemma}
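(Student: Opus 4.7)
The plan is to apply the distance-based Fano inequality (\Cref{lemma:distance-fano-inequality}) with $X=K$ uniform on $\{0,1,\ldots,2^m-1\}$, equipped with the Hamming distance $\binaryDistance$ on binary representations, and $Y=(X_1,\ldots,X_T)$, $\widehat X=\tilde K$. First, I would invoke \Cref{lemma:lower-bound-tv-distance-binary-distance} (whose hypothesis $m\cdot 2^{-k\ell}<1/2$ is satisfied by the choice of $\ell$ in \eqref{eq:def-m-ell}) to convert the TV success criterion into a Hamming statement: whenever $\dtv(\mu_{\Phi_K},\mu_{\Phi_{\tilde K}})\le \varepsilon_0$, we have $\binaryDistance(K,\tilde K)\le t$ with $t\triangleq 4\varepsilon_0\cdot 2^{k\ell}$. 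Because $2^{k\ell}\le m/(100\varepsilon_0)$, this forces $t\le m/25$, so by translation invariance the Hamming-ball volumes satisfy $N_t^{\max}=N_t^{\min}=\sum_{j\le t}\binom{m}{j}\le (\mathrm{e}m/t)^{t}$ and $\ln((2^m-N_t)/N_t)\ge c\cdot m$ for some absolute constant $c>0$. Consequently, to force the algorithm to fail with probability at least $2/3$ it suffices to establish $I(K;X_1,\ldots,X_T)\le \frac{cm}{3}-\ln 2$.

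The main work is a sharp upper bound on this mutual information, exploiting the product structure of the construction. Since $U_1,\ldots,U_m$ are variable-disjoint and the gadget on $U_j$ depends only on the bit $K_j$, the distribution of each sample factorizes across $j$, and $K_1,\ldots,K_m$ are iid uniform bits. Writing $X_i^{(j)}$ for the restriction of $X_i$ to $U_j$, the tuples $(K_j,X_1^{(j)},\ldots,X_T^{(j)})$ for $j\in[m]$ are mutually independent; combining with the standard subadditivity of mutual information under conditionally iid sampling gives
\begin{align*}
I(K;X_1,\ldots,X_T) \;=\; \sum_{j=1}^m I\bigl(K_j;X_1^{(j)},\ldots,X_T^{(j)}\bigr) \;\le\; mT\cdot I(K_1;X_1^{(1)}).
\end{align*}
I would then compute $I(K_1;X_1^{(1)})$ as the Jensen--Shannon divergence between $\mu_{un}$ (uniform on $\Omega_u$) and $\mu_{res}$ (uniform on $\Omega_r$). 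The crucial identity $1-\gamma=1/|\Omega_u|$ (from $|\Omega_u\setminus\Omega_r|=1$, as noted in the proof of \Cref{lemma:lower-bound-properties-gadgets}) makes the $\log 2/|\Omega_u|$ contribution from the singular atom cancel the leading negative term from the absolutely-continuous part, and a second-order expansion around $\delta\triangleq 1-\gamma$ yields $I(K_1;X_1^{(1)})=O(\delta)$. Since \Cref{lemma:lower-bound-properties-gadgets} gives $\delta\le 2^{-(k-2)\ell}$, this provides the sample-per-bit decay needed for the sample complexity to scale polynomially.

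Plugging these bounds into distance-based Fano forces $mT\cdot O(2^{-(k-2)\ell})=\Omega(m)$ to be violated, i.e., $T=O(2^{(k-2)\ell})$ samples are insufficient. With $\ell=\lfloor k^{-1}\log(m/(100\varepsilon_0))\rfloor$ this becomes $T=\Omega(2^{-k}(m/\varepsilon_0)^{1-2/k})$, and converting via $n=mk\ell\asymp m\log(m/\varepsilon_0)$ produces the announced bound $T\ge \frac{1}{25\cdot 2^k}\bigl(n/(100\varepsilon_0\log(n/(100\varepsilon_0)))\bigr)^{1-2/k}$. The main obstacle is the tight mutual-information estimate: the trivial $I(K_1;X_1^{(1)})\le \ln 2$ would yield only a useless $T=\Omega(1/n)$, so one must genuinely show that $\mu_{un}$ and $\mu_{res}$ are $O(\delta)$-close in the Jensen--Shannon sense despite being mutually singular on $\Omega_u\setminus\Omega_r$. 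It is this $O(\delta)$ bound, together with the careful calibration of $\ell$ in~\eqref{eq:def-m-ell} that balances ``each bit of $K$ is nearly impossible to read from a single sample'' against ``many bits must be read to approximate $\mu_{\Phi_K}$,'' which produces the $n^{1-2/k}$ scaling.
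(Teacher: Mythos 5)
Your proposal is correct and follows essentially the same route as the paper's proof: distance-based Fano's inequality over the Hamming metric on the $m$ gadget bits, the TV-to-Hamming conversion via \Cref{lemma:lower-bound-tv-distance-binary-distance}, a tensorized mutual-information bound of the form $mT$ times the per-gadget, per-sample divergence between the restricted and unrestricted gadget distributions (which the paper bounds by $O(2^{-(k-2)\ell})$ via the two KL terms to the mixture, exactly your Jensen--Shannon estimate), and the same calibration of $\ell$ and conversion to $n = mk\ell$. The only cosmetic differences are the order of the decomposition (you split over independent gadget blocks first, then over samples, while the paper does samples first) and that no cancellation is actually needed in the divergence bound, since each KL term to the mixture is individually $O(1-\gamma)$.
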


Assuming the correctness of \Cref{lemma:lower-bound-general-thm}, 
we can already prove \Cref{theorem:lower-bound-general-thm}.

\lowerApproxPoly*

\begin{proof}
  Note that the algorithm in this theorem can output an arbitrary CNF formula $\tilde{\Phi}$, 
  rather than a CNF formula in $\+X$.

  Fix an integer $k \ge 2$ and an error bound $\varepsilon_0$ satisfying 
  $0 < 2\varepsilon_0 < \frac{1}{200 \cdot 2^k}$.
  For any sufficiently large integer $m \ge m_0(k, 2\varepsilon_0)$,
  let $\ell$ be defined in~\eqref{eq:def-m-ell}, and set $n = mk\ell$.
  Suppose the algorithm $\+A$ in \Cref{theorem:lower-bound-general-thm} exists 
  and $\+A$ uses less than 
  $T = \frac{1}{25\cdot 2^k}\cdot 
  (\frac{n}{100 \cdot 2 \varepsilon_0 \log(\frac{n}{100 \cdot 2 \varepsilon_0})})^{\frac{k-2}{k}}$ samples 
  to learn a $(k,k,k-1)$-CNF formula within total variation distance at most $\varepsilon_0$, 
  then we show a contradiction to \Cref{lemma:lower-bound-general-thm}.
  Let $\Phi_{K}$ be a uniform random CNF formula in $\+X$. 
  We run $\+A$ with \iid samples $X_1,X_2,\ldots,X_T$ from $\mu_{\Phi_K}$. 
  Let $\tilde{\Phi}$ be the CNF formula that $\+A$ outputs. 
  Next, we enumerate all $2^m$ CNF formulas in $\+X$ 
  and find a $\Phi_{\tilde{K}}$ that minimizes 
  $\dtv(\mu_{\tilde{\Phi}}, \mu_{\Phi_{\tilde{K}}})$ among all $\Phi_{\tilde{K}} \in \+X$. 
  Finally, we output $\Phi_{\tilde{K}}$. 
  By the assumption of $\+A$, 
  given any $\Phi_K \in \+X$, with probability at least $\frac{1}{3}$, 
  $\dtv(\mu_{\Phi_K}, \mu_{\tilde{\Phi}}) \le \varepsilon_0$. 
  Since $\Phi_{\tilde{K}} \in \+X$, 
  we have $\dtv(\mu_{\tilde{\Phi}}, \mu_{\Phi_{\tilde{K}}}) 
  \leq \dtv(\mu_{\tilde{\Phi}}, \mu_{\Phi_{K}}) \leq \varepsilon_0$. 
  By the triangle inequality, it holds that
  \begin{align*}
    \dtv(\mu_{\Phi_K}, \mu_{\Phi_{\tilde{K}}}) 
    \le \dtv(\mu_{\Phi_K}, \mu_{\tilde{\Phi}}) + \dtv(\mu_{\tilde{\Phi}}, \mu_{\Phi_{\tilde{K}}}) \le 2\varepsilon_0.
  \end{align*}
  This contradicts to \Cref{lemma:lower-bound-general-thm} with the error bound $2\varepsilon_0$, 
  which proves the $\Omega((\frac{n}{\log n})^{\frac{k-2}{k}})$  
  sample complexity lower bound.
\end{proof}

Finally, we use \Cref{lemma:distance-fano-inequality} to prove \Cref{lemma:lower-bound-general-thm}.

\begin{proof}[Proof of \Cref{lemma:lower-bound-general-thm}]
  Suppose there exists an algorithm $\+A$ that given $X_1,X_2,\ldots,X_T$, 
  outputs a CNF formula $\Phi_{\tilde{K}} \in \+X$. 
  Consider the following Markov chain:
  \begin{align*}
    \Phi_K \to \tp{X_1,X_2,\ldots,X_T }\to \Phi_{\tilde{K}} , 
    \text{ equivalently } K \to \tp{X_1,X_2,\ldots,X_T } \to \tilde{K}.
  \end{align*}
  We show that if 
  $T = \frac{1}{25\cdot 2^k}\cdot (\frac{n}{100 \varepsilon_0 \log(\frac{n}{100 \cdot \varepsilon_0})})^{\frac{k-2}{k}}$, 
  then with probability at least $\frac{9}{10}$, 
  $\dtv(\mu_{\Phi_K}, \mu_{\Phi_{\tilde{K}}}) > \varepsilon_1$, 
  where $\varepsilon_1 = \frac{m\cdot 2^{-k\ell}}{100}$. 
  By definition in~\eqref{eq:def-m-ell}, 
  we have $\varepsilon_0 \leq \varepsilon_1 \leq 2^k \varepsilon_0$ and $m \cdot 2^{-k \ell} < 1 / 2$. 
  By \Cref{lemma:lower-bound-tv-distance-binary-distance}, 
  we know that if $\binaryDistance(K, \tilde{K}) > \frac{m}{25}$, 
  then it must hold that $\dtv(\mu_{\Phi_K}, \mu_{\Phi_{\tilde{K}}}) > \varepsilon_1$.
  Hence, it suffices to show
  \begin{align}\label{eq:lowerdis}
    \Pr{\binaryDistance(K, \tilde{K}) > \frac{m}{25}} > \frac{9}{10}.
  \end{align}

  We use distance-based Fano's inequality in \Cref{lemma:distance-fano-inequality} to prove the claim.
  We set up all parameters for the distance-based Fano's inequality.
  Let the function $\rho(\cdot,\cdot)$ be $\binaryDistance(\cdot,\cdot)$. We set the threshold $t = \frac{m}{25}$.
  To use the inequality, we need to verify $|\+X| - N_t^{\min} > N_t^{\max}$, give a lower bound on $\ln\left( \frac{|\+X|-N_t^{\min}}{N_t^{\max}}\right)$ and upper bound on $I(\sigma_1,\sigma_2,\dots, \sigma_T; K)$.

  We claim that $N^{\max}_t \le \sum_{j=0}^{m/25} \binom{m}{j}\le e^{m H_b(1/25)}$ where $H_b(x) = -x\ln(x) - (1-x)\ln(1-x)$.
  To verify the bound, let $X$ be the sum of $m$ \iid Bernoulli random variables with parameter $1 / 2$.
  Then $\sum_{j = 0}^{\alpha m} 2^{-m} \binom{m}{j} \le \Pr{X \le \alpha m}$.
  By the Chernoff bound, we have $\Pr{X \leq \alpha m} \le \exp\sqb{-m\tp{\alpha \ln (2\alpha) + (1 - \alpha) \ln(2(1 - \alpha))}}$.
  Hence $\sum_{j=0}^{\alpha m} \binom{m}{j} \le \exp(m H_b(\alpha))$ follows by rearranging the terms.
  Note that $N^{\min}_t \leq N^{\max}_t$.
  It can be verified that $2^m - e^{m H_b(1/25)} > e^{m H_b(1/25)}$ for $m \ge 2$, which implies that $|\+X| - N^{\min}_t > N^{\max}_t$.
  To give a lower bound on $\ln\left( \frac{|\+X|-N_t^{\min}}{N_t^{\max}}\right)$, we have that 
  $\frac{|\+X| - N_t^{\min}}{N_t^{\max}} \ge \frac{2^m}{\exp(m H_b(1/25))} - 1 \ge 2^{0.757 m} - 1 \ge 2^{0.75 m}$ where the last inequality holds when $m\ge 7$. Hence, $\ln\left( \frac{|\+X| - N_t^{\min}}{N_t^{\max}}\right) \ge 0.75m \ln 2$.

  Next, we upper bound $I(\sigma_1,\sigma_2,\dots, \sigma_T; K)$. By the chain rule, we have that $I(\sigma_1,\sigma_2,\dots, \sigma_T; K)= \sum_{i=1}^T I(\sigma_i; K ~|~ \sigma_1,\sigma_2,\dots, \sigma_{i-1})\le \sum_{i=1}^T I(\sigma_i; K)$. And by symmetry, it suffices to bound $I(\sigma; K)$, where $\sigma \sim \mu_{\Phi_K}$. Recall that for any $\Phi_i$, it consists of $m$ disjoint gadgets. For $j\in[m]$, we use $\sigma^{(j)}$ to denote the random assignment of variables in the $j$-th gadgets projected from $\sigma$. Also by the chain rule, we have $I(\sigma; K)= \sum_{j=1}^m I(\sigma^{(j)}; K~|~ \sigma^{(1)}, \sigma^{(2)},\dots, \sigma^{(j-1)})\le \sum_{j=1}^m I(\sigma^{(j)}; K)$.
  By symmetry, it suffices to bound $I(\sigma^{(1)}; K)$. For any fixed
  $0\leq j < 2^m$, let $p_j$ be the distribution of $\mu_{\Phi_j}$ projected on the variables in the first gadget. Let $\bar{p}$ be the averaged distribution, i.e., $\bar{p} = \frac{1}{2^m} \sum_{j=0}^{2^m-1} p_j$ when $0 \leq j < 2^m$ is sampled uniformly at random. Note that the mutual information can be written as the KL divergence between the joint distribution and the product of the marginal distributions.
  A simple calculation shows that
\begin{align*}
    I(\sigma^{(1)}; K) = \sum_{j=0}^{2^m - 1}\sum_{x \in \{\text{True,False}\}^{U_1}}\frac{p_j(x)}{2^m} \ln \frac{p_j(x)/2^m}{\bar{p}(x)/2^m} = \=E_{K}[\dkl{p_K}{\bar{p}}].
  \end{align*}
  Consider two cases: the first gadget is restricted or unrestricted, depending on the value of $K$. When the first gadget is restricted, let the distribution on variables in the first gadget be $p_r$. Similarly, let $p_u$ be the distribution when the first gadget is unrestricted. Then $p_K$ is either $p_r$ or $p_u$. We have
  \[\dkl{p_r}{\bar{p}} = \sum_{x\in \Omega_r} p_r(x)\ln\tp{\frac{p_r(x)}{\frac{1}{2}p_r(x) + \frac{1}{2}p_{u}(x)}} \le \ln \tp{\frac{1}{\frac{1}{2} + \frac{1}{2}\frac{|\Omega_r|}{|\Omega_u|}}},\]
  where $\Omega_r$ denotes the support of $p_r$ and note that the support of $\bar{p}$ is $\Omega_u \supseteq \Omega_r$.
  By \Cref{lemma:lower-bound-properties-gadgets}, we have $\dkl{p_r}{\bar{p}}\le \ln\tp{1 + \frac{|\Omega_u| - |\Omega_r|}{|\Omega_u| + |\Omega_r|}}\le \frac{|\Omega_u| - |\Omega_r|}{|\Omega_u| + |\Omega_r|} \leq 1 - \frac{|\Omega_r|}{|\Omega_u|}\le 2^{-(k-2)\ell}$.
  Similarly, we have
  \begin{align*}
    \dkl{p_u}{\bar{p}} =                                     & ~ \sum_{x\in \Omega_u} p_u(x)\ln\tp{\frac{p_u(x)}{\frac{1}{2}p_r(x) + \frac{1}{2}p_u(x)}}                                                                                                 \\
    =                                                        & ~ \frac{|\Omega_r|}{|\Omega_u|}\ln\tp{\frac{\frac{|\Omega_r|}{|\Omega_u|}}{\frac{1}{2} + \frac{1}{2}\frac{|\Omega_r|}{|\Omega_u|}}} + \frac{|\Omega_u| - |\Omega_r|}{|\Omega_u|}\ln\tp{2} \\
    \tp{\text{by }\frac{|\Omega_r|}{|\Omega_u|} < 1}\quad\le & ~ \tp{1 - \frac{|\Omega_r|}{|\Omega_u|}}\ln2.
  \end{align*}
  Also by \Cref{lemma:lower-bound-properties-gadgets}, it holds that $\dkl{p_u}{\bar{p}}\le 2^{-(k-2)\ell}\ln2$.
  Combining everything, we have the following bound on the mutual information
  \[I(\sigma_1, \sigma_2, \dots, \sigma_T; K)\le T\cdot m \cdot 2^{-(k-2)\ell}\ln2.\]

  Using distance-based Fano's inequality in \Cref{lemma:distance-fano-inequality}, we have 
  \begin{align*}
    \Pr{d_b(K,\tilde{K}) > \frac{m}{25}} \geq 1 - \frac{I(X; Y) + \ln2}{\ln\left( {(|\+X|-N_t^{\min})}/{N_t^{\max}} \right)} \geq 1 - \frac{T\cdot m \cdot 2^{-(k-2)\ell}\ln2 + \ln 2}{0.75m \ln 2}.
  \end{align*}
  Assume that $m$ is large enough. Then, if $T\le 0.01 \cdot 2^{(k-2)\ell}$, then $\Pr{d_b(K,\tilde{K}) > \frac{m}{25}} > \frac{9}{10}$.
  By our choices of parameter, $\ell \geq \frac{1}{k}\log\tp{\frac{m}{100\cdot \varepsilon_0}} - 1$ and then $2^{(k-2)\ell} \geq \frac{4}{2^k}(\frac{m}{100\varepsilon_0})^{\frac{k-2}{k}} =  \frac{4}{2^k}(\frac{n}{100\varepsilon_0k\ell})^{\frac{k-2}{k}}$, where we use the definition that $n = m k \ell$. Since $\ell \leq \frac{1}{k}\log\tp{\frac{m}{100\cdot \varepsilon_0}} \leq \frac{1}{k}\log\tp{\frac{n}{100\cdot \varepsilon_0}}$, we have if
  \begin{align*}
    T \leq \frac{1}{25\cdot 2^k}\cdot \tp{\frac{n}{100 \varepsilon_0 \log(\frac{n}{100 \cdot \varepsilon_0})}}^{\frac{k-2}{k}} \leq 0.01 \cdot 2^{(k-2)\ell},
  \end{align*}
  then $\Pr{d_b(K,\tilde{K}) > \frac{m}{25}} > \frac{9}{10}$.
  This verifies~\eqref{eq:lowerdis} and proves the lemma.
\end{proof}

\subsection{Sample complexity of exact learning CNF formulas in the local lemma regime}
Using the gadgets in \Cref{definition:lower-bound-gadgets}, 
we can also establish an exponential lower bound on sample complexity 
of exact learning CNF formulas in the local lemma regime.
\lowerExactExp*

\begin{proof}
  Fix $k \geq 2$. 
  For any $\ell$, construct restricted and unrestricted depth-$\ell$ gadgets $\Phi_r$ and $\Phi_u$. 
  The number of variables is $n = k \ell  = \Theta(\ell)$. 
  By \Cref{lemma:lower-bound-properties-gadgets}, 
  the total variation distance between $\mu_{\Phi_r}$ and $\mu_{\Phi_u}$ is at most $2^{-\Omega_k(n)}$. 
  If an algorithm can exact learn $\Phi_r$ and $\Phi_u$, 
  then it can distinguish between $\mu_{\Phi_r}$ and $\mu_{\Phi_u}$ from $T$ samples. 
  The total variation distance between $T$ \iid samples 
  from $\mu_{\Phi_r}$ and $T$ \iid samples from $\mu_{\Phi_u}$ is at most $T\cdot 2^{-\Omega_k(n)}$. 
  Hence, exact learning $(k,k,k-1)$-CNF formulas with constant probability 
  requires $\exp(\Omega_k(n))$ samples.
\end{proof}

\ifthenelse{\boolean{doubleblind}}{}{
\section*{Acknowledgements}
We thank Xue Chen, Zhe Hou, Eric Vigoda, and Yitong Yin for helpful discussions.
Weiming Feng acknowledges the support of ECS grant 27202725 from Hong Kong RGC.
Yixiao Yu acknowledges the support of the National Natural Science Foundation of China under Grant No. 62472212.
}

\printbibliography

\appendix
\section{A counterexample to the correlation lower bound}\label{sec:counterexample-correlation-lower-bound}

For any CNF formula $\Phi$ with the uniform distribution $\mu_{\Phi}$ on its satisfying assignments, 
in order to apply the techniques in~\cite[Theorem~4]{BreslerMS13}, 
we have to ensure that the following quantity has a positive lower bound for any $u, v \in V$ 
with $\{u, v\} \subseteq \vbl(c)$ for some clause $c \in \+C$:
\[
d_{\mathrm{C}}(u, v) 
  \defeq 
  \sum_{x_u, x_v \in \{\true, \false\}}
  \abs{
    \Pr[X \sim \mu_{\Phi}]{X(u)=x_u, X(v)=x_v}
    - \Pr[X \sim \mu_{\Phi}]{X(u)=x_u} \Pr[X \sim \mu_{\Phi}]{X(v)=x_v}
  }.
\]

Consider the CNF formula $\Phi$ that contains only two clauses:
\[
c_1 = v_1 \lor v_2 \lor \dots \lor v_{k-1} \lor v_k,
\qquad
c_2 = v_1 \lor v_2' \lor \dots \lor v_{k-1}' \lor \lnot v_k,
\]
where the variable sets $\{v_2, \dots, v_{k-1}\}$ and $\{v_2', \dots, v_{k-1}'\}$ are disjoint.
We claim that $d_{\mathrm{C}}(v_1, v_k) = 0$.

\paragraph{Counting argument.}
Let $a = 2^{k-2}$.
We enumerate all satisfying assignments of $\Phi$ and obtain:
$$N_{\true,\true} = a^2, \quad N_{\true,\false} = a^2, \quad N_{\false,\true} = a(a-1), \quad N_{\false,\false} = a(a-1).$$
where $N_{x_1, x_k}$ denotes the number of satisfying assignments 
with $v_1 = x_1$ and $v_k = x_k$.
Hence, the total number of satisfying assignments is $N = 2a^2 + 2a(a-1) = 2a(2a-1)$.
The corresponding marginal probabilities are $
\Pr{X(v_k) = \true}
  = \frac{a^2 + a(a-1)}{N}
  = \frac{1}{2}$, $
\Pr{X(v_1) = \true}
  = \frac{2a^2}{N}
  = \frac{a}{2a-1}$.
  
In particular,
\[
\Pr{X(v_1) = \true, X(v_k) = \true}
  = \frac{a^2}{N}
  = \frac{a}{2(2a-1)}
  = \Pr{X(v_1) = \true} \Pr{X(v_k) = \true},
\]
and by symmetry, the same equality holds for all other $(x_1, x_k) \in \{\true, \false\}^2$.
Therefore, $v_1$ and $v_k$ are independent under $\mu_{\Phi}$, 
and we conclude that $d_{\mathrm{C}}(v_1, v_k) = 0$.

Finally, we remark that this counterexample can be naturally extended into a large counterexample with $m$ clauses by adding symmetric structures on $c_1$ and $c_2$. Also by symmetry, one can verify that $d_{\mathrm{C}}(v_1, v_k) = 0$.

\end{document}